\documentclass[12pt]{amsart}
\usepackage{amssymb,amsthm,amsmath,enumerate}
\usepackage[numbers,sort&compress]{natbib}
\usepackage{color,hyperref,url}
\usepackage{graphicx}
\usepackage{tikz,tikz-qtree,float}

\title[]{Optimal bounds for embedded eigenvalues  of one-dimensional discrete Schr\"odinger operators with decaying potentials}
\author[]
{Jifeng Chu$^{1}$, \quad Wencai Liu$^2$, \quad Kang Lyu$^{1}$}
\address{$^1$ School of Mathematics, Hangzhou Normal University, Hangzhou 311121, China}
\address{$^2$ Department of Mathematics, Texas A\&M University, College Station, TX 77843-3368, USA}

\email{jifengchu@126.com (J. Chu)}
\email{liuwencai1226@gmail.com; wencail@tamu.edu (W. Liu)}
\email{lvkang201905@outlook.com (K. Lyu)}

\thanks{{\em 2020 Mathematics Subject Classification.} Primary: 81Q10. Secondary: 47B36, 35J10.}
\keywords{Schr\"odinger operators, absolutely continuous spectrum, embedded eigenvalues, optimal bound, partition, slowly varying sequences}

\newcommand{\abs}[1]{\left\lvert #1 \right\rvert}

\theoremstyle{plain}
\newtheorem{theorem}{Theorem}[section]
\newtheorem{corollary}[theorem]{Corollary}
\newtheorem{lemma}[theorem]{Lemma}
\newtheorem{proposition}[theorem]{Proposition}

\newcommand{\R}{\mathbb{R}}
\newcommand{\Z}{\mathbb{Z}}
\theoremstyle{definition}

\newtheorem{example}[theorem]{Example}
\newtheorem{remark}[theorem]{Remark}

\begin{document}


	\begin{abstract}
	In this paper, we consider one-dimensional discrete Schr\"odinger operators
	\begin{align}
		Hu(n)=(\Delta+V)u(n)\nonumber
	\end{align}
	on  $\ell^2(\mathbb{N})$ with a self-adjoint boundary condition at $n=0$, where $\Delta$ denotes the discrete Laplacian and $V(n)$ is a real-valued perturbation satisfying 
	$$V(n)=\frac{O(1)}{1+n}.$$ 
We determine the sharp transition for the asymptotic coefficient of \(V\) governing the existence and nonexistence of embedded eigenvalues.
\end{abstract}
\maketitle

\tableofcontents

\section{Introduction}\label{sec1}

Consider the one-dimensional discrete Schr\"odinger operator acting on $\ell^2(\mathbb N)$, given by
\begin{align}\label{maineq}
	(Hu)(n)
	=(\Delta +V)u(n)
	=u(n+1)+u(n-1)+V(n)u(n),
	\qquad n\in\mathbb N,
\end{align}
where
\[
(\Delta u)(n)=u(n+1)+u(n-1)
\]
denotes the discrete Laplacian and $V$ is a real-valued potential. We impose a self-adjoint
boundary condition at the origin, written as
\begin{align}\label{bc}
	u(1)\cos\varphi-u(0)\sin\varphi=0.
\end{align}
The essential and absolutely continuous spectra of $\Delta$ are both
equal to $[-2,2]$. Throughout this paper, we consider potentials at the
critical Coulomb scale
\begin{align}
	V(n)=\frac{O(1)}{1+n},
	\qquad n\to\infty.
	\nonumber
\end{align}
Such perturbations preserve the absolutely continuous spectrum $[-2,2]$,
but may produce eigenvalues embedded in its interior \cite{RemlingCMP,LiuDiscrete,DeiftKillip,KillipSimon}.
Our first objective is to determine the sharp asymptotic constant required to
create the prescribed embedded eigenvalue. For a potential $V$, set
\begin{align}
	a(V):=\limsup_{n\to\infty}|nV(n)|.
	\nonumber
\end{align}
Given $E\in(-2,2)$, we seek the critical value $\mathcal C(E)$ characterized by
the following two properties:
\begin{description}
	\item[(i)] if $a(V)<\mathcal C(E)$, then $E$ is not an eigenvalue of
	$\Delta+V$;
	
	\item[(ii)] for every $\varepsilon>0$ and every prescribed boundary
	condition \eqref{bc}, there exists a potential $V$ satisfying
	\[
	a(V)<\mathcal C(E)+\varepsilon
	\]
	such that $E$ is an eigenvalue of $\Delta+V$.
\end{description}
Thus, $\mathcal C(E)$ is the sharp threshold for the asymptotic Coulomb coefficient governing the existence and nonexistence of an embedded eigenvalue at the energy $E$.

The corresponding problem for continuous Schr\"odinger operators is
classical. Consider
\begin{align}
	H^c=-\frac{d^2}{dx^2}+V^c(x)
	\nonumber
\end{align}
on $L^2(\mathbb R_+)$, and set
\[
a_c(V^c):=\limsup_{x\to\infty}|xV^c(x)|.
\]
The Wigner--von Neumann construction \cite{vonNeumannWigner} first showed that
an oscillatory potential at the critical scale $x^{-1}$ can produce a
positive embedded eigenvalue. Kato's classical estimate implies, in
particular, that $V^c(x)=o(x^{-1})$ rules out positive embedded
eigenvalues \cite{Kato}. At the critical scale, Atkinson and
Everitt proved the sharp bound
\begin{align}
	E\leq \left(\frac{2a_c(V^c)}{\pi}\right)^2
	\nonumber
\end{align}
for every positive embedded eigenvalue and showed that the constant $\frac{2}{\pi}$ is
optimal \cite{AtkinsonEveritt}, where Wigner--von Neumann  type oscillatory potentials play a crucial role. Equivalently,
the continuous critical coefficient is
\begin{align}
	\mathcal C_c(E)=\frac{\pi}{2}\sqrt{E},
	\qquad E>0.
	\nonumber
\end{align}
Several directions have subsequently developed from the Wigner--von Neumann mechanism.  Simon constructed
examples with dense embedded point spectrum
\cite{SimonDense}.  For individual oscillatory terms, one can
obtain eigenfunction
asymptotics, spectral density information, and a
description of exceptional resonant energies; see
\cite{Behncke,HintonKlausShaw,LukicJST,LukicCMP,SimonovZeros}. Wigner--von Neumann and sharp transition questions have also been
considered for periodic Schr\"odinger operators
\cite{NabokoSimonov,LukicOng,LyuYang},
for Stark type operators
\cite{LiuStarkJFA}, and for Dirac operators
\cite{KhapreLyuYu}.  Related work includes 
embedded eigenvalues for periodic graph operators \cite{Shipman},
and absence or uniqueness results for higher-dimensional
operators \cite{LiuFermi,LiYangNLS}. See \cite{SimonJacobiDense,JudgeNabokoWood,Kruger,RemlingDiscrete,LiuLyuResonant} for more results.

The critical decay scale also separates several other spectral
phenomena.  Christ and Kiselev proved optimal stability results for
the absolutely continuous spectrum under slowly decaying
one-dimensional perturbations \cite{ChristKiselev}, and related
results were obtained by Kiselev \cite{KiselevDuke}, Remling
\cite{RemlingCMP}, and Deift and Killip \cite{DeiftKillip}.  Modified
Pr\"ufer and EFGP variables provide a common framework for many of
these arguments \cite{KiselevLastSimonPruefer,LastSimonDense}; effective perturbation
methods and periodic backgrounds were treated in
\cite{KiselevRemlingSimon,Stolz}.  Transfer matrix and subordinacy
criteria give complementary descriptions of the absolutely
continuous and singular parts of the spectrum
\cite{GilbertPearson,JitomirskayaLast,LastSimonInvent}.
At slower or more irregular decay, singular continuous spectrum becomes
possible.  Pearson constructed decaying potentials with singular
continuous spectrum \cite{Pearson}, and Kiselev constructed singular
continuous spectrum embedded in the positive spectrum
\cite{KiselevJAMS}.  Quantitative restrictions and counterexamples
were studied in \cite{RemlingProc,RemlingDuke}.  More generally, sharp constants and extremal eigenvalue problems
arise in a variety of spectral problems; see, for example,
\cite{AshbaughBenguria,FeffermanSecoEnergy,FeffermanSecoODE,ChuMeng,ChuMengZhang}.

The discrete problem is more delicate because of its dependence on the arithmetic properties of the energy. To describe this dependence, for any $E\in (-2,2)$, write
\begin{align}
	E=2\cos\pi k,
	\qquad k=k(E)\in(0,1),
	\nonumber
\end{align}
and, when convenient, write $\mathcal C(k)$ for $\mathcal C(E)$. We say that $E$ is of irrational type if $k\notin\mathbb Q$ and of rational type if $k\in\mathbb Q$. In the rational case, we write
\[
k=\frac pq,
\qquad 1\leq p\leq q-1,
\qquad (p,q)=1.
\]
A striking feature of the discrete problem is that the critical coefficient $\mathcal C(k)$ is continuous at every irrational $k$ and discontinuous at every rational $k$; see \cite{LiuDiscrete}. We explain below how this phenomenon arises from the different averaging mechanisms associated with irrational and rational quasi-momenta.

The continuous problem is closely related to the irrational discrete problem. In the continuous setting, the leading logarithmic contribution, for example on the interval $(x_1,x_2)$, to the Pr\"ufer radius is governed by
\begin{align}
	\int_{x_1}^{x_2}
	\frac{|\sin(2\sqrt E\,x+\varphi)|}{x}\,dx,
	\nonumber
\end{align}
and the long scale average of its oscillatory factor is
\[
\frac{1}{2\pi}\int_0^{2\pi}|\sin t|\,dt
=\frac{2}{\pi},
\]
independently of the initial phase $\varphi$. In the discrete problem, the corresponding expression is
\begin{align}
	\sum_{n=n_1}^{n_2}
	\frac{|\sin(2\pi nk+2\pi\varphi)|}{n}.
	\nonumber
\end{align}
When $k$ is irrational, the rotation
\[
\varphi\longmapsto\varphi+k\pmod{\mathbb Z}
\]
is uniquely ergodic. Consequently,
\begin{align}
	\lim_{N\to\infty}\frac1N
	\sum_{n=0}^{N-1}
	|\sin(2\pi nk+2\pi\varphi)|
	=\frac{2}{\pi}
	\nonumber
\end{align}
uniformly in $\varphi$. Thus, the irrational discrete problem has the same phase-independent averaging mechanism as the continuous problem. An adaptation of Remling's argument gives the corresponding nonexistence estimate \cite{RemlingCMP}, and its sharpness was established in \cite{LiuDiscrete}.

The situation changes fundamentally when $k=p/q$ is rational. The rotation then has a finite orbit, and the average over one period is
$P(\varphi)/q$, where
\begin{align}
	P(\varphi):=
	\sum_{j=0}^{q-1}
	\left|
	\sin\left(\frac{2\pi}{q}j+2\pi\varphi\right)
	\right|.
	\nonumber
\end{align}
Unlike the irrational average, this quantity depends on both the phase $\varphi$ and the denominator $q$. Moreover, in the spectral problem, $\varphi$ is the evolving Pr\"ufer angle rather than a fixed external parameter. The {radial} decay and angular evolution are therefore coupled, making the rational problem substantially more difficult.

The parity of $q$ introduces a further distinction. When $q$ is even, the points in the finite orbit can be paired through
\[
j\longleftrightarrow j+\frac q2,
\]
using the identity
\[
\sin(t+\pi)=-\sin t.
\]
This half-period symmetry allows the leading angular displacements to cancel while the {radial} contributions add. In \cite{LiuDiscrete}, this observation was combined with an almost sign-type potential and a finite-dimensional inverse argument to keep the Pr\"ufer phase near a maximizer of $P(\varphi)$. This yields the sharp critical coefficient in the even denominator case.

For odd $q$, no analogous half-period pairing is available. The potential values that improve the decay of the Pr\"ufer radius generally also move the Pr\"ufer angle, so a favorable phase cannot simply be preserved from one period to the next. Resolving this interaction between radial decay and angular motion is the main difficulty of the odd denominator problem and the principal contribution of the present paper.

Our main theorem gives the complete sharp transition in the odd denominator case.
For odd $q\geq3$, define
\begin{align}\label{bqb}
	B_q:=\frac{1+\cos\frac{\pi}{q}}{q\sin\frac{\pi}{q}}.
\end{align}
\begin{theorem}\label{main}	Suppose that
	\[
	k=\frac pq\in(0,1),
	\qquad (p,q)=1,
	\]
	where $q\geq3$ is odd, and let $E=2\cos\pi k$. Then the following statements hold:
	\begin{enumerate}
		\item[(i)]  If
		\[
		\limsup_{n\to\infty}|nV(n)|
		<
		\frac{\sin\pi k}{B_q},
		\]
		then $E$ is not an eigenvalue of $\Delta+V$.
		
		\item[(ii)]  For every $\varepsilon>0$ and any prescribed boundary condition \eqref{bc}, there exists a potential $V$ satisfying
		\[
		\limsup_{n\to\infty}|nV(n)|
		<
		\frac{\sin\pi k}{B_q}+\varepsilon
		\]
		such that $E$ is an eigenvalue of $\Delta+V$.
	\end{enumerate}
	Consequently,
	\begin{align}
		\mathcal C(k)
		=
		\frac{\sin\pi k}{B_q},
		\qquad q\geq3\ \text{odd}.
		\nonumber
	\end{align}
\end{theorem}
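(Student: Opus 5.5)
We work in discrete Prüfer variables at the energy $E=2\cos\pi k$, $k=p/q$. For a solution $u$ of $(\Delta+V)u=Eu$ satisfying \eqref{bc}, write
\[
u(n)=R(n)\sin(\pi k n+\theta(n)),\qquad u(n-1)=R(n)\sin(\pi k(n-1)+\theta(n)).
\]
The standard expansions then give
\[
\log\frac{R(n+1)}{R(n)}=-\frac{V(n)}{2\sin\pi k}\sin 2(\pi kn+\theta(n))+O(V(n)^2),
\]
\[
\theta(n+1)-\theta(n)=\frac{V(n)}{2\sin\pi k}\bigl(1-\cos 2(\pi kn+\theta(n))\bigr)+O(V(n)^2).
\]
Since $q$ is odd, the phases $2\pi kn \pmod{\pi}$ over one period $n=mq,\dots,mq+q-1$ run through all of $\{\pi j/q\}_{j=0}^{q-1}$. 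The one-period averages are therefore governed by the $q$ equally spaced points $\pi j/q+2\theta$ modulo $\pi$. Both parts of Theorem \ref{main} reduce to a one-period optimization in which the radial gain and the angular drift are coupled. The constant $B_q=\cot(\pi/(2q))/q$ is exactly $\frac1q\sum_{j=0}^{q-1}\sin(\pi j/q)$, the \emph{smallest} value of $\frac1q\sum_j|\sin(\pi j/q+\phi)|$. The plan is to show that this smallest value, rather than the largest value $\frac{1}{q\sin(\pi/(2q))}$, is the effective rate once angular motion is taken into account.

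\textbf{Part (i).} I will look for a bounded periodic correction $G(\theta)$ such that the modified quantity $L(n)=\log R(n)+G(\theta(n))$ satisfies, over each block of length $q$ and for every $|V(n)|\le a/n$ with $a<\sin\pi k/B_q$, the estimate
\[
L(mq+q)-L(mq)\ \ge\ -\frac{a B_q}{2\sin\pi k}\cdot\frac{q}{mq}\,(1+o(1)).
\]
The intuition for this bound is as follows. Whenever $V$ is aligned to make all $q$ radial terms decrease, the phase $\theta$ is pushed monotonically. This drift carries the configuration towards the minimizing phase of the average, and the minimizing phase is where the $q$ points $\pi j/q+2\theta$ include a zero of $\sin$. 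The correction $G$ is chosen to account for this drift, and I would construct it from the function $\phi\mapsto\sum_j|\sin(\pi j/q+\phi)|$ by solving a linear program over the $q$ values $V(mq+j)\in[-a/n,a/n]$. The key algebraic check is that
\[
\max_{V}\Bigl(-\Delta\log R-\Delta G\Bigr)\le \frac{a\,\cot(\pi/(2q))}{2\sin\pi k}\cdot\frac1n
\]
per block, using that $q$ is odd. Summing over blocks gives $R(n)\gtrsim n^{-aB_q/(2\sin\pi k)}$. Since $aB_q/\sin\pi k<1$, the exponent $aB_q/(2\sin\pi k)$ is below $1/2$, so $\sum_n R(n)^2=\infty$. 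This rules out an $\ell^2$ solution, because $|u(n)|^2+|u(n-1)|^2\asymp R(n)^2$.

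\textbf{Part (ii).} I will build $V$ block by block on dyadic-type scales $[N_\ell,N_{\ell+1})$. On each block, $V(n)=\frac{c}{n}\,s(n)$ with $c=\sin\pi k/B_q+\varepsilon/2$, where $s(n)$ is a periodic pattern of signs and magnitudes $\le1$. The pattern is chosen, for the current phase, to realize the extremal configuration of the linear program from part (i), which yields radial decay at rate $cB_q/\sin\pi k>1$. Since $\theta$ drifts under this pattern, I will let it drift to the stable extremal phase and show that this phase is an attracting fixed point of the averaged angular map. Small $O(\varepsilon/n)$ corrections, obtained through a finite-dimensional inverse/degree argument as in \cite{LiuDiscrete}, then hold the phase there and absorb the $O(V^2)$ and boundary errors. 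Starting from the solution determined by \eqref{bc} gives $R(n)\lesssim n^{-(1+\delta)/2}$, hence $u\in\ell^2$.

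\textbf{Main obstacle.} The crux is the sharp one-period inequality with the angular correction $G$ in part (i). Its equality case, which drives the construction in (ii), must identify $B_q$ exactly. I would attack it first by reducing to the piecewise-linear structure in $\phi$ of $\sum_j\pm\sin(\pi j/q+\phi)$ and checking the vertices of the linear program.
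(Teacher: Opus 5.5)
Your route for (i) is different from the paper's and it can be completed. As written, though, there is a gap exactly at the crux: you never produce $G$, and a bounded periodic $G$ does not come for free from a phase-by-phase linear program.

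In the paper's normalization, take $V(n)=\frac{a\,a_n}{1+n}$ with $|a_n|\le1$, $s_j(\varphi)=\sin(2\pi\varphi+\frac{2\pi j}{q})$ and $c_j(\varphi)=\sin^2\pi(\varphi+\frac jq)$. For $\Lambda(n)=\ln R(n)^2+H(\theta(n))$, \eqref{pqthetapathetan} and \eqref{relationoftheta} give the one-period increment $-\frac{a}{(1+n)\sin\pi k}\sum_j a_{n+l_j}\bigl(s_j-\frac1\pi H'c_j\bigr)$ to first order, evaluated at $\varphi=\theta(n)$. So what you need is
\[
\sum_{j=0}^{q-1}\Bigl|s_j(\varphi)-\tfrac{1}{\pi}H'(\varphi)\,c_j(\varphi)\Bigr|\le qB_q\qquad\text{for all }\varphi .
\]
Since $s_j/c_j=2\cot\pi(\varphi+\frac jq)$, the optimal value of $\frac1\pi H'(\varphi)$ is a weighted median of these ratios with weights $c_j$. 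On $(0,\frac1q)$ this median is $F(\varphi)=2\tan(\frac{\pi}{2q}-\pi\varphi)$, the ratio at $j=\frac{q-1}{2}$. By \eqref{qiuhe}--\eqref{qiuhe2}, $\mathrm{sgn}(s_j)(s_j-Fc_j)\ge0$ for every $j$, so the left side equals $P-FT$, and \eqref{ghs0} says this is $qB_q$ at every phase.

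So the per-phase optimum is exactly $qB_q$, and $H'=\pi F$ is forced. A bounded periodic $H$ then exists only because $\int_0^{1/q}F=0$. This holds because $F(\frac1q-\varphi)=-F(\varphi)$, the reflection under which $P$ is even and $T$ is odd. Explicitly, $H(\varphi)=2\ln\cos(\frac{\pi}{2q}-\pi\varphi)$ on $[0,\frac1q]$, extended $\frac1q$-periodically.

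This $H$ has corners at $\frac1q\mathbb Z$, where $H'$ jumps up by $4\pi\tan\frac{\pi}{2q}$. These corners are convex, so $H$ is semiconvex and $H(\theta+\Delta)-H(\theta)\ge\xi\Delta-C\Delta^2$ for any subgradient $\xi$. That one-sided bound is all you need, including for steps that cross a corner; at a corner, use a one-sided derivative, where the inequality persists by continuity. You then get $\Lambda(n+q)-\Lambda(n)\ge-\frac{aqB_q}{(1+n)\sin\pi k}-Cn^{-2}$ for every admissible $V$, and your telescoping argument finishes (i).

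Note that $H$ comes from the threshold ratio in the angular coefficients, not from $\varphi\mapsto\sum_j|\sin(\pi j/q+\varphi)|$ as you suggest. Had the forced $H'$ not had mean zero, no bounded corrector would exist.

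The paper never integrates $F$ along the orbit. It uses \eqref{ghs0} and \eqref{qiuhe}--\eqref{qiuhe2} only on phase windows of width $O(\delta)$ where $F$ is essentially constant (Lemma \ref{lemmaduichencancel}, Corollary \ref{zhongyaoyinli}). That requires bounded net angular displacement on each window. Producing such windows for an arbitrary orbit is the job of the partition theory in Sections \ref{sec4}--\ref{secparpru}: Cases A--D, with reflected-window pairing in the full-circle case.

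Your corrector packages the same compensation globally. Boundedness of $H$ says exactly that any radial gain beyond $qB_q$ is paid for by angular drift, up to a bounded total. The paper's route yields a partition theorem for slowly varying sequences that is of independent interest, and it needs no closed-form primitive. The corrector route collapses (i) to one telescoping estimate, but only once $H$ and its mean-zero property are in hand, and your proposal does not supply them.

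For (ii), your phase-pinning construction is unnecessary. Since $P(\varphi)\ge qB_q$ at every phase by \eqref{Bq}, the sign potential $V(n)=\frac{c}{1+n}\mathrm{sgn}(\sin2\pi\theta(n))$ with $cB_q>\sin\pi k$ already gives $R(n)^2\lesssim n^{-cB_q/\sin\pi k}$, however $\theta$ moves. This is the odd-$q$ construction in Section \ref{sec9}. The inverse/degree argument of \cite{LiuDiscrete} is needed only for even $q$, where the optimal rate is the maximum of $P$ and the phase must be held at a maximizer.
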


{\begin{remark}
Before the present work, the odd denominator estimates obtained in \cite{LiuDiscrete} were
\begin{align}
\frac{\sin\pi k}{A_q}
\leq \mathcal C(k)
\leq
\frac{\sin\pi k}{B_q},
\qquad q\geq3\ \text{odd},
\nonumber
\end{align}
where $A_q$ is defined by \eqref{aqa} below.
Since $A_q\neq B_q$ for odd $q$, these estimates did not determine the sharp coefficient. Theorem \ref{main} replaces the lower bound by the matching value $\frac{\sin\pi k}{B_q}$ and closes this gap.
Assertion $(ii)$ of Theorem \ref{main} was proved by Liu in \cite{LiuDiscrete} and is included here for completeness. Assertion $(i)$ of Theorem \ref{main}, which provides the matching nonexistence bound and therefore completes the sharp transition, is the new contribution of the present paper.
\end{remark}

Together with \cite{LiuDiscrete}, this completes the optimal
single eigenvalue coefficient problem for every $E\in(-2,2)$: 
\begin{align}
	\mathcal C(k)=
	\begin{cases}
		\dfrac{\sin\pi k}{A_0},
		& \quad k\notin\mathbb Q,\\[12pt]
		\dfrac{\sin\pi k}{A_q},
		& \quad q\geq2\ \text{even},\\[12pt]
		\dfrac{\sin\pi k}{B_q},
		& \quad q\geq3\ \text{odd},
	\end{cases}
	\label{demck}
\end{align}
where
\begin{align}\label{da0}
	A_0=\frac{2}{\pi}
\end{align}
and
\begin{align}\label{aqa}
	A_q:=
	\begin{cases}
		\dfrac{2}{q\sin\frac{\pi}{q}},
		& q\geq2\ \text{even},\\[15pt]
		\dfrac{2\cos\frac{\pi}{2q}}{q\sin\frac{\pi}{q}},
		& q\geq3\ \text{odd}.
	\end{cases}
\end{align}
This formula makes the continuity contrast precise. At an irrational
$k$, rational approximants necessarily have denominators tending to
infinity, and both rational coefficients converge to $A_0$. At a
rational $k=\frac{p}{q}$, however, we have
\[
\lim_{\substack{k'\to p/q\\ k'\neq p/q}}\mathcal C(k')
=\frac{\sin\pi k}{A_0},
\]
whereas
\[
A_q>A_0
\quad(q\ \text{even}),
\qquad
B_q<A_0
\quad(q\ \text{odd}),
\]
from which we can see that $\mathcal C(k)$ is continuous precisely at irrational
quasi-momenta and discontinuous at every rational quasi-momentum.
The same statement holds for $\mathcal C(E)$ at energies of
irrational and rational type, respectively.}

Theorem \ref{main} cannot be obtained by optimizing the radial Pr\"ufer equation pointwise. The potential values that improve the
decay of the Pr\"ufer radius also alter the angle and therefore change
the phases encountered in later periods. The proof must control this
feedback for every admissible potential and for every possible
long time behavior of the angle. We describe the main
ingredients.

\medskip
\noindent\textbf{1. A dynamical substitute for the missing
	half-period symmetry.}
Let $R(n)$ and $\theta(n)$ be the modified Pr\"ufer variables of a
solution of \eqref{maineq}.  For a sign-type potential, the leading
decrease of $\ln R^2$ over a rational period is governed by
$P(\varphi)$, while the same potential also produces a leading
displacement of the angle.  The distinction between even and odd
denominators first appears in the relation between these two effects.

When $q$ is even, the points in one orbit can be paired by
$j\leftrightarrow j+q/2$.  The corresponding sine terms have opposite
signs, whereas the corresponding sine squares agree.  Consequently,
the two halves of a period can be balanced so that their leading
angular displacements cancel while their radial contributions add.
After one initial adjustment, the angle can be returned after every
period to the position 
$$\theta(nq)=\frac{1}{2q}\pmod\Z,$$
which is an
optimal phase for the even denominator trigonometric sum.  Thus, in
the even case, the phase giving the best decay can also be made
dynamically stable.  The same favorable configuration can then be
used period after period.

For odd $q$, there is no half-period pairing.  The sign choice that
improves the decay of $R$ generally moves $\theta$ away from its
current position, so the optimal phase cannot simply be fixed and
reused.  This is the basic instability of the odd denominator
problem: the potential that is favorable for the radius is not
neutral for the angle.
To quantify this interaction, we introduce, beside the radial
function $P(\varphi)$, a second trigonometric function $T(\varphi)$ measuring the
first order angular drift during one period.  Thus $P(\varphi)$ records how
much decay is gained at a given phase, while $T(\varphi)$ records how far that
phase is displaced by the same choice of potential.  The pointwise
bound $P(\varphi)\geq qB_q$ (see \eqref{Bq}) may at first look unfavorable for a
nonexistence theorem with the constant $B_q$: away from the
minimizers, the radial term is strictly larger.  The angle, however,
is not a freely chosen parameter.  Exactly where this apparent extra
decay occurs, the drift encoded by $T(\varphi)$ forces the orbit to move.
The key trigonometric identity can be summarized as
$P(\varphi)-F(\varphi)T(\varphi)=qB_q$ for an explicit coefficient
$F(\varphi)$ (see \eqref{ghs0} for details).  It says that the excess of the instantaneous radial
term over the sharp constant is precisely tied to angular motion.
Moreover, reflection preserves $P(\varphi)$ and reverses the sign of $T(\varphi)$.
Hence periods with opposite drifts may be paired without losing their
radial information.  The exact half-period symmetry used for even
$q$ is therefore replaced, when $q$ is odd, by a dynamical
compensation principle along the orbit.  This principle explains at
a conceptual level why the sharp odd denominator constant is $B_q$
rather than $A_q$.

This changes the nature of the optimization problem.  In the even
case, one can optimize at a single phase and then preserve that phase.
In the odd case, the relevant object is an entire trajectory: a
temporary gain at one phase must be evaluated together with the
future motion it creates.  The sharp estimate emerges only after the
radial gain and the accumulated drift are considered simultaneously.
This viewpoint is also the reason that a direct period-by-period
estimate gives the wrong impression.  Such an estimate sees the value
of $P(\varphi)$ at the current phase but does not yet see the compensating
motion forced upon the subsequent phases.

\medskip
\noindent\textbf{2. Uniform local estimates for arbitrary
	Coulomb scale potentials.}
To pass from the extremal sign choice to an arbitrary potential, write
\[
V(n)=\frac{a_n\,a}{1+n},
\qquad |a_n|\leq1,
\]
and introduce the defect quantities $P_V$ and $T_V$. The period
expansions show that the radial equation involves $P-P_V$, whereas the
angular equation involves $T+T_V$. Since both defects are generated by
the same coefficients $a_n$, they cannot vary independently.
Coefficientwise trigonometric inequalities convert this dependence
into sharp local compensation estimates.

Near the special phases $\frac{j}{2q},j=0,1,\cdots,2q$, one has
$P(\varphi)=qB_q+O(\delta)$, so a direct estimate is sufficient. Away
from these phases, a sharp estimate follows on
sets of periods having bounded net angular displacement \eqref{bqjia0}. When a one sided balance is unavailable, the estimate
is applied simultaneously near reflected phases $\varphi$ and
$1-\varphi$, whose drift terms have opposite signs  \eqref{lemmaduichenfinal}. These estimates
are uniform over all sequences $|a_n|\leq1$.

\medskip
\noindent\textbf{3. A partition result for slowly varying sequences.}
We consider the slowly varying sequence which satisfies, as $n\to\infty,$
\[
x_{n+1}-x_n=\frac{O(1)}{1+n}.
\]
Its orbit need not converge or be monotone. We prove that the
accumulation set of $[x_n]$ on $\mathbb R/\mathbb Z$ has one of four
forms: a point, an ordinary closed arc, a closed arc crossing the
identification point $0$, or the whole circle, where 
\begin{align}
	[x_n]:=x_n-\lfloor x_n\rfloor
\end{align}
denotes the fractional part of $x_n$ and $\lfloor \cdot\rfloor$ is the floor function.
These four cases
classify possible angle dynamics; they should not be confused with
the two arithmetic regimes for $k$ described above.
In the first three cases, long sets of indices can be partitioned into
small phase windows so that each useful piece has bounded total
increment. In the whole circle
case, the indices are instead paired in windows near $\varphi$ and
$1-\varphi$, and the difference of the two accumulated increments is
bounded. These are exactly the hypotheses required by the local
one-window and reflected-window compensation estimates.

The discrete nature of this partition is essential. A sequence may
jump across a level, producing an overshoot of order $\frac{1}{n}$ at every
crossing, and the indices belonging to different phase windows may be
interlaced. We remove crossings caused only by shallow oscillations
and retain the endpoints of genuine excursions. The remaining
entrance--exit pattern produces telescoping cancellation, while slow
variation forces successive unmatched deep excursions to occur at
geometrically separated times, making the residual overshoot errors
summable. Applied to the Pr\"ufer angle, this partition reduces every
possible orbit to one of the local compensation mechanisms and yields
the global bound in Theorem \ref{main}.

We also study the \textbf{critical problem }$a(V)=\mathcal C(E)(=\mathcal{C} (k))$, which is not contained in the above
 noncritical
constructions. Away from the
threshold, the Pr\"ufer radius decays with a power strictly stronger
than the harmonic rate, and square summability follows directly. At
the critical coefficient, the leading decay has exactly harmonic
order, so lower order terms become decisive. At the same time, a
construction may not use any fixed excess in the asymptotic size of
the potential. A separate argument is therefore required to obtain an
$\ell^2$ solution while retaining the exact limiting coefficient.

For reference, we recall the result of \cite{LiuDiscrete} for
irrational quasi-momenta and for rational quasi-momenta with even
denominator  in the language of the present paper. Define
\begin{align}
	S_0:=
	\left\{
	E=2\cos\pi k:
	k\in(0,1)\setminus\mathbb Q
	\right\},
	\nonumber
\end{align}
and, for $q\geq2$, define
\begin{align}
	S_q:=
	\left\{
	E=2\cos\pi k:
	k=\frac pq,\ 
	1\leq p\leq q-1,\ 
	(p,q)=1
	\right\}.
	\nonumber
\end{align}
\begin{theorem}\cite{LiuDiscrete}\label{ccdd}
	Suppose that $q\geq 0$ is even and $E\in S_q$. Then the following statements hold:
	\begin{enumerate}
		\item[(i)] If
		\[
		\limsup_{n\to\infty}|nV(n)|
		<
		\frac{\sin\pi k}{A_q},
		\]
		then $E$ is not an eigenvalue of $\Delta+V$;
		
		\item[(ii)]  For every $\varepsilon>0$ and any  prescribed boundary condition
		\eqref{bc}, there exists a potential $V$ satisfying
		\[
		\limsup_{n\to\infty}|nV(n)|
		<
		\frac{\sin\pi k}{A_q}+\varepsilon,
		\]
		such that $E$ is an eigenvalue of $\Delta+V$.
	\end{enumerate}
\end{theorem}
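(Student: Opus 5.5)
Since Theorem~\ref{ccdd} is quoted from \cite{LiuDiscrete}, the plan is to reproduce its argument in the language of the present paper, with $q=0$ standing for the irrational case. Write $E=2\cos\pi k$ and introduce modified Pr\"ufer variables $(R(n),\theta(n))$ for the solution $u$ satisfying \eqref{bc}. Concretely,
\[
u(n)=R(n)\sin(\pi\theta(n)),\qquad u(n-1)\sin\pi k=R(n)\sin\bigl(\pi\theta(n)-\pi k\bigr),
\]
up to normalization. A direct computation gives
\[
\ln\frac{R(n+1)^2}{R(n)^2}=-\frac{V(n)}{\sin\pi k}\sin 2\pi\bigl(\theta(n)+k\bigr)+O(V(n)^2)
\]
and
\[
\theta(n+1)=\theta(n)+k+O(|V(n)|).
\]
With $V(n)=\frac{O(1)}{n}$, the quadratic error is summable. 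Moreover $\theta(n)-nk$ is a slowly varying sequence in the sense of the introduction, since its increments are $O(\frac1n)$.

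\textbf{Part (i) (nonexistence).} Suppose $a(V)<\frac{\sin\pi k}{A_q}$, and fix $a(V)<a<\frac{\sin\pi k}{A_q}$.
\begin{enumerate}
\item Sum the radial equation over blocks $[n_1,n_2]$ and bound it crudely by
\[
\ln\frac{R(n_2)^2}{R(n_1)^2}\ \ge\ -\frac{a}{\sin\pi k}\sum_{n=n_1}^{n_2}\frac{|\sin 2\pi(\theta(n)+k)|}{n}-C.
\]
\item Split into cases according to $k$.
\begin{itemize}
\item \emph{Irrational $k$.} Group the indices into blocks $[N,(1+\eta)N]$. On such a block, $\theta(n)-nk$ varies by $O(\eta)$, so unique ergodicity of the rotation by $k$ gives
\[
\sum\frac{|\sin 2\pi(\theta(n)+k)|}{n}\le (A_0+o(1))\ln(1+\eta).
\]
\item \emph{$k=p/q$ with $q$ even.} Group the indices into periods of length $q$. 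Over one period the phase $\varphi$ is essentially frozen, and the sum is $P(\varphi)/n+O(n^{-2})$. For even $q$, the symmetry $j\leftrightarrow j+q/2$ shows that $\max_\varphi P(\varphi)=\frac{2}{\sin(\pi/q)}=qA_q$.
\end{itemize}
\item In either case, $\ln R(n)^2\ge -\frac{aA_q}{\sin\pi k}\ln n-C$. The exponent $\frac{aA_q}{\sin\pi k}$ is strictly less than $1$, so $R(n)^2\gtrsim n^{-1+\delta}$ for some $\delta>0$.
\item Since $|u(n)|^2+|u(n-1)|^2\gtrsim R(n)^2$, it follows that $u\notin\ell^2$.
\end{enumerate}
This direction is routine. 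The only care needed is making the averaging uniform in the slowly drifting phase, which follows from continuity of $P$ and from uniform convergence in the ergodic theorem.

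\textbf{Part (ii) (construction).} This is where the real work lies. For irrational $k$, take the Wigner--von Neumann type potential
\[
V(n)=-\frac{a\,\mathrm{sgn}\bigl(\sin 2\pi(\theta(n)+k)\bigr)}{n},
\]
defined recursively along the solution with the prescribed boundary condition, and with $a=\frac{\sin\pi k}{A_0}+\varepsilon$. By construction, this choice makes every radial increment maximally negative. The angle drift it produces is $O(\frac1n)$ with oscillating sign. The averaging of step (2) then applies to the realized orbit and gives the decay rate $n^{-aA_0/\sin\pi k}$, whose exponent exceeds $1$, hence an $\ell^2$ solution.

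For even $q$ this sign choice is insufficient, because $P(\varphi)$ attains its maximum $qA_q$ only at the phases $\varphi\equiv\frac1{2q}\pmod{\frac1q}$. The plan is therefore to lock the phase.
\begin{enumerate}
\item After a finite initial segment of compactly supported potential, which steers $\theta$ to $\frac{1}{2q}$, use a sign-type potential on each period.
\item Use the pairing $j\leftrightarrow j+\frac q2$. The paired terms $\sin$ are opposite while the paired $\sin^2$ agree, so the first-order angular displacements cancel over a period while the radial gains add.
\item The residual second-order drift, of size $O(n^{-2})$, is corrected by a finite-dimensional inverse-function argument. Perturb $a_n$ slightly on each period, within $|a_n|\le 1+\varepsilon$, so that $\theta(nq)=\frac1{2q}\pmod{\mathbb Z}$ exactly after every period.
\end{enumerate}

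The main obstacle is exactly this phase-locking step: showing that the perturbed sign pattern can return the angle to the optimal phase every period with a correction of size $o(1)$ in $|nV(n)|$. The approach I would try is to check that the Jacobian of the one-period map with respect to two chosen coefficients $a_n$ is nondegenerate at the sign configuration. The implicit function theorem then absorbs the $O(n^{-2})$ errors uniformly in $n$.

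Once the phase is locked, $\ln R(nq)^2\le -\frac{aA_q}{\sin\pi k}\ln n+C$ with $a=\frac{\sin\pi k}{A_q}+\varepsilon$. The exponent exceeds $1$, so $u\in\ell^2$, and $E$ is an eigenvalue with the prescribed boundary condition.
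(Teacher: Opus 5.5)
Your plan follows the argument of \cite{LiuDiscrete} as this paper summarizes and reuses it. For (i) you bound the radial sum crudely, by $\max_\varphi P(\varphi)=qA_q$ when $q$ is even (Lemma~\ref{l44}) and by the uniform ergodic average $A_0$ when $k$ is irrational. For (ii) you take a sign-type potential when $k$ is irrational. For even $q$ you steer the angle to $\theta=\frac1{2q}$ in one step, then use an almost sign-type potential whose $O(\frac1n)$ relative correction returns the angle exactly to $\frac1{2q}$ after every period, as in the even-$q$ construction of Section~\ref{sec9}. There a single scalar $\varepsilon_{n,m}$ on one half of each period suffices, so one free coefficient is enough, not two. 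The proposal is correct up to three slips.

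\textbf{Sign of the potential.} With your own radial formula $\ln\frac{R(n+1)^2}{R(n)^2}=-\frac{V(n)}{\sin\pi k}\sin2\pi(\theta(n)+k)+O(V(n)^2)$, the choice $V(n)=-\frac{a}{n}\,\mathrm{sgn}(\sin2\pi(\theta(n)+k))$ makes every increment equal to $+\frac{a\,|\sin2\pi(\theta(n)+k)|}{n\sin\pi k}$. Thus $R$ grows; the sign must be $+$, as in \eqref{defvcriirra}.

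\textbf{Why the first-order drift cancels.} Your justification is not right as stated. Under $j\mapsto j+\frac q2$ the angular weight becomes $\sin^2\pi(x+\frac12)=\cos^2\pi x$, so the weights do not agree pairwise. For example, with $q=4$ and $\varphi=\frac18$, $\sin^2\frac{\pi}{8}\neq\sin^2\frac{5\pi}{8}$. The half-period shift is what computes $\max P$. The vanishing of the first-order drift at the locked phase comes instead from the reflection $j\leftrightarrow q-1-j$, under which $\sin^2\pi x$ is preserved and $\sin2\pi x$ changes sign. This gives $\sum_{j=0}^{q/2-1}\sin^2\pi(\frac1{2q}+\frac jq)=\sum_{j=q/2}^{q-1}\sin^2\pi(\frac1{2q}+\frac jq)=\frac q4$, which is the identity used in Section~\ref{sec9}. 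It is a property of the phase $\frac1{2q}$, not of the half-period pairing. The fact is indispensable: a nonzero first-order drift could only be cancelled by an $O(1)$ change in the coefficients. That would cost either the bound on $\limsup|nV(n)|$ or a fixed fraction of the radial gain.

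\textbf{Strict inequality.} Take $a=\frac{\sin\pi k}{A_q}+\frac{\varepsilon}{2}$. With $a=\frac{\sin\pi k}{A_q}+\varepsilon$ you get equality instead of the strict inequality required in (ii).
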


Our second main result proves that the critical coefficient is attained
at every energy.

\begin{theorem}\label{main2}
	Suppose that $q=0$ or $q\geq 2$ and $E\in S_q$. Let $\mathcal{C}(E)=\mathcal{C}(k)$ be defined by \eqref{demck}.
	Then for any prescribed boundary condition
	\eqref{bc}, there exists a potential $V$ satisfying
	\[
	\limsup_{n\to\infty}|nV(n)|
	=\mathcal C(E),
	\]
	such that $E$ is an eigenvalue of $\Delta+V$.
	
\end{theorem}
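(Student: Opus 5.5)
The plan is to build $V$ by gluing together, along a sequence of scales $N_1<N_2<\cdots$, the near-critical constructions behind Theorem \ref{main}(ii) and Theorem \ref{ccdd}(ii) (for $q=0$, the construction in \cite{LiuDiscrete}). Those constructions are applied with excess parameters $\varepsilon_j\downarrow 0$. Fix $E\in S_q$ and write $C=\mathcal C(E)$. On the block $[N_j,N_{j+1})$ we take a potential of the form $V(n)=\frac{(C+\varepsilon_j)a_n}{1+n}$, with $|a_n|\le 1$ and $a_n$ chosen by the sign-type rule of the relevant construction. In the even case this is the half-period pairing that returns $\theta(nq)$ to $\frac1{2q}$. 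In the odd case it is the choice that keeps the Pr\"ufer phase near a minimizer of $P$ attaining $qB_q$. In the irrational case it is the Wigner--von Neumann-type sign of $\sin(2\pi nk+2\pi\theta)$. This gives $\limsup|nV(n)|=C$ automatically, since the coefficients $C+\varepsilon_j$ tend to $C$ and are attained along subsequences. The boundary condition is met by solving with the prescribed initial data at $n=0$. The potential on a finite initial segment is irrelevant to the limsup and can be used, as in \cite{LiuDiscrete}, to place the angle at the desired phase.

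The quantitative input is the period estimate underlying those constructions. When the angle sits in the favorable window, the leading term over $[N_j,N_{j+1})$ is
\[
\ln\frac{R(N_{j+1})}{R(N_j)}\le -\Bigl(\tfrac{(C+\varepsilon_j)A}{\sin\pi k}\cdot\frac{1}{2}\Bigr)\cdot 2\ln\frac{N_{j+1}}{N_j}+O\Bigl(\sum_{N_j}^{N_{j+1}}n^{-2}\Bigr)+\text{(phase-correction error)}.
\]
Here $A$ is $A_0$, $A_q$ or $B_q$ as appropriate. The normalization is such that at exact criticality the decay is $R(n)^2\lesssim n^{-1}$. With excess $\varepsilon_j$ this becomes $R(n)^2\lesssim n^{-1-c\varepsilon_j}$, with $c>0$ depending only on $E$. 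So on block $j$ we gain an extra factor $(N_{j+1}/N_j)^{-c\varepsilon_j}$ beyond the harmonic rate. We choose $N_{j+1}$ so large that $\varepsilon_j\ln(N_{j+1}/N_j)\ge j$, say. Then
\[
R(n)^2\le K\,n^{-1}\exp\Bigl(-c\sum_{i<j}\varepsilon_i\ln\tfrac{N_{i+1}}{N_i}\Bigr)\,(n/N_j)^{-c\varepsilon_j},\qquad n\in[N_j,N_{j+1}).
\]
Summing over block $j$ gives at most $K'e^{-c(j-1)}/\varepsilon_j$. Taking, e.g., $\varepsilon_j=1/j$, the total is $\sum_j j e^{-cj}<\infty$, so $u\in\ell^2$ and $E$ is an eigenvalue.

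The main obstacle is making the constants $K$ uniform across blocks. Two error sources must be summable. The first is the transition between blocks: switching $\varepsilon$ perturbs the angle, and in the odd and even cases a phase re-adjustment may be needed. The second is the $O(1/n)$ lower-order terms in the period expansions of $\ln R$, which are no longer dominated by a fixed power gain. I would first check that the constructions of \cite{LiuDiscrete} control the angle with an error $O(\sum n^{-2})+O(\varepsilon_j)$ that depends continuously on the coefficient. Under that control, changing $\varepsilon_j\to\varepsilon_{j+1}$ needs only a phase correction of size $O(1/N_{j+1})$. That correction costs $O(1/N_{j+1})$ in $\ln R$, which is summable. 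The remaining lower-order terms are then absorbed into an error $\sum_n O(n^{-2})$. If the odd case instead needs a nonuniform stabilization time $M(\varepsilon)$, I would take $N_j\gg M(\varepsilon_j)$ and treat $[N_j,N_j+M]$ as a bounded-cost transient. The point to verify is that this transient costs only $O(M/N_j)$ in $\ln R$, which is made summable by the choice of $N_j$.
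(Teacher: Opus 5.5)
Your global scheme is the paper's: vanishing excess on blocks whose endpoints grow super-exponentially. Your block summation is correct: a prefactor at most $e^{-c(j-1)}$ times a block sum $O(1/\varepsilon_j)$, with $\varepsilon_j=1/j$ and $\ln(N_{j+1}/N_j)\ge j^2$. It is a slightly cleaner version of Lemma~\ref{lemmabasiccritical}, where the paper uses the exponent $1+\frac{1}{n}$ on $[T_n,T_{n+1})$ with $T_n\approx e^{n^3}$. (Your first display has a factor-$2$ slip: it is $\ln R^2$, not $\ln R$, that drops by $\frac{(C+\varepsilon_j)A}{\sin\pi k}\ln\frac{N_{j+1}}{N_j}$. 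The normalization $R^2\lesssim n^{-1-c\varepsilon_j}$ you then use is the right one.)

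The gap is in what you defer, which is the actual content of the theorem: inside every block, $\ln R^2$ must decrease at rate $\frac{(C+\varepsilon_j)A}{\sin\pi k}$ in $\ln n$, up to errors that are summable without spending any fixed excess. In the even case you place the difficulty at the block transitions. But with the paired sign pattern and a common coefficient on both halves, the first-order drift at the phase $\frac{1}{2q}$ vanishes for every value of the coefficient, so switching $\varepsilon_j\to\varepsilon_{j+1}$ costs nothing.

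The real issue is that $\frac{1}{2q}$ is a repelling fixed point of the period map. If $\theta(s)=\frac{1}{2q}+\eta$ at the start $s$ of a period, the first-order drift over that period is $\frac{a}{(1+s)\pi\sin\pi k}D(\theta(s))$. Here $D(\varphi)=\sum_{j<q/2}\sin^2\pi(\varphi+\frac{j}{q})-\sum_{j\ge q/2}\sin^2\pi(\varphi+\frac{j}{q})$ satisfies $D(\frac{1}{2q})=0$ and $D'(\frac{1}{2q})=\pi qA_q$ by Lemma~\ref{l44}. Near criticality $\eta$ is therefore multiplied by about $1+\frac{q}{s}$ per period, so deviations grow linearly in $s$. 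The unavoidable $O(s^{-2})$ per-period errors then accumulate, from a start at $N_j$, to a deviation of order $s/N_j^2$. That is far more than the $O(1/N_{j+1})$ correction you budget for.

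So the control you need is inside the blocks, and the point to prove is that it can be exercised with coefficient changes tending to zero. The paper does this in every period. On the negative half $\{l_{q/2},\dots,l_{q-1}\}$ it uses the coefficient $a+\delta_n+\varepsilon_{n,m}$ with $\varepsilon_{n,m}=O(1/(T_n+m))$. This coefficient is chosen via $\sum_{j<q/2}\sin^2\pi(\frac{1}{2q}+\frac{j}{q})=\sum_{j\ge q/2}\sin^2\pi(\frac{1}{2q}+\frac{j}{q})=\frac{q}{4}$, so that $2\pi\theta(T_n+mq)=\frac{\pi}{q}\bmod 2\pi$ exactly for all $m$. This costs $O((T_n+m)^{-2})$ per period in $\ln R^2$ and leaves $\limsup|nV(n)|$ unchanged.

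The other two cases also need inputs different from the ones you describe. For odd $q$, no phase control, stabilization time or transition correction is needed at all. Since $P(\varphi)\ge qB_q$ for every $\varphi$ (Lemma~\ref{lsscc}), the pure sign potential $\frac{a+\delta_n}{1+n}\mathrm{sgn}\sin 2\pi\theta(n)$ gives a per-period decrement of $\ln R^2$ of at least $\frac{(a+\delta_n)qB_q}{(1+n)\sin\pi k}$ along any orbit, with $O(n^{-2})$ errors.

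For irrational $k$ there is no favorable window. One uses the sign of the actual $\sin2\pi\theta(n)$ together with the uniform averaging Lemma~\ref{l43}: there are windows of length $N_n$ on which the mean of $|\sin2\pi\theta|$ is within $\frac{1}{n}$ of $A_0$, for every large enough starting point $e_n$. The blocks must satisfy $T_n\ge e_n$ and $N_n/T_n\le n^{-3}$. Then the averaging error and the variation of the weight $\frac{1}{1+n}$ across a window are relative errors $O(\frac{1}{n})$, absorbed by the choice $\delta_n=O(\frac{1}{n})$, and $R$ varies by a bounded factor within each window.

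Once these per-case mechanisms are supplied, your summation finishes the proof exactly as Lemma~\ref{lemmabasiccritical} does.
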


The proof of Theorem \ref{main2} turns the borderline decay described
above into a summable one without introducing a fixed excess in the numerator of the
potential.  The half line is divided into rapidly growing blocks, and
the coefficient on the $m$th block is changed only by a quantity
tending to zero.  This correction makes the local decay exponent
slightly larger than one, with the excess itself tending to zero.
Meanwhile, the block endpoints grow fast enough to make all
transition losses summable.  A separate comparison lemma then
upgrades these vanishing blockwise gains to the global conclusion
$R\in\ell^2(\mathbb N)$.  Thus square summability is obtained not from
a uniform power gain, but from an infinite sequence of increasingly
small gains placed on increasingly long scales.
Three scales must consequently be coordinated.  Inside each basic
period or averaging block, the potential has to produce the desired
Pr\"ufer decay.  Across a much longer construction block, the small
gain in the exponent has to accumulate to a visible improvement.  At
the global scale, the blocks must grow fast enough that the errors
from their endpoints do not rebuild a divergent harmonic tail.  A
choice that works on any one of these scales may fail on another:
short blocks do not accumulate enough gain, while poorly separated
long blocks can make the transition errors uncontrollable.  The
comparison lemma is designed to encode the precise balance among
these competing requirements.

The rational and irrational cases require different implementations of this principle.  For odd $q$, the phase remains unstable, but the uniform
inequality $P(\varphi)\geq qB_q$ holds at every phase.  This makes it
possible to choose the vanishing coefficient corrections uniformly
along a drifting orbit, provided all intra-period errors are retained.
For even $q$, the optimal coefficient $A_q$ is attained at the phase
$\frac{1}{2q}$, so the construction must keep the angle at that exact phase
after every period.  A second, still smaller parameter is inserted
into one half of the period and adjusted to cancel the terminal angle
error without changing the limiting size of the potential.  For
irrational $k$, no finite period is available.  One must instead
choose block lengths on which the oscillatory average of the actual
Pr\"ufer phase is close to $A_0=\frac{2}{\pi}$, while simultaneously keeping
the block length small relative to its starting point and satisfying
the growth conditions of the comparison lemma.  Coordinating these
three requirements is the main additional difficulty in the critical
part of the paper.

\subsection*{Notation}

For $s,t\in\mathbb R$ with $s<t$, let
\begin{align}
	[s]:=s-\lfloor s\rfloor
	\label{demodx}
\end{align}denote the fractional part of $s$,
and
\begin{align}
	(s,t)_{\mathrm{mod}}:=\{[x]:x\in(s,t)\}.
	\label{demodint}
\end{align}
We identify $\mathbb R/\mathbb Z$ with the unit circle $S^1$ via
$[x]\mapsto e^{2\pi i x}$ and use the same notation $[x]$ in both
spaces. For $x\in[0,1)$, one has $x\in(s,t)_{\mathrm{mod}}$ if and only
if there exists $n\in\mathbb Z$ such that
\begin{align}
	x+n\in(s,t).
	\label{equstmod}
\end{align}
Equivalently, for $x\in\mathbb R$, one has
$[x]\in(s,t)_{\mathrm{mod}}$ if and only if there exists
$n\in\mathbb Z$ such that
\begin{align}
	x+n\in(s,t).
	\label{equmodr}
\end{align}
From the viewpoint of $S^1$, then $(s,t)_{\rm{mod}}$ covers the entire $S^1$ if $t-s>1$, $(s,t)_{\rm{mod}}$ is strictly contained in $S^1$ if $t-s\leq 1$.
\begin{figure}[H]
	\centering  
	\begin{tikzpicture}[x=0.75pt,y=0.75pt,yscale=-1,xscale=1]
		
		\draw   (255,129.5) .. controls (255,104.37) and (275.37,84) .. (300.5,84) .. controls (325.63,84) and (346,104.37) .. (346,129.5) .. controls (346,154.63) and (325.63,175) .. (300.5,175) .. controls (275.37,175) and (255,154.63) .. (255,129.5) -- cycle ;
		\draw [color={rgb, 255:red, 208; green, 2; blue, 27 }  ,draw opacity=1 ][line width=3]    (276,168.5) .. controls (246,150.5) and (252,102.5) .. (275,91.5) ;
		\draw   (426,127.5) .. controls (426,102.37) and (446.37,82) .. (471.5,82) .. controls (496.63,82) and (517,102.37) .. (517,127.5) .. controls (517,152.63) and (496.63,173) .. (471.5,173) .. controls (446.37,173) and (426,152.63) .. (426,127.5) -- cycle ;
		\draw  [color={rgb, 255:red, 208; green, 2; blue, 27 }  ,draw opacity=1 ][line width=2.25]  (426,127.5) .. controls (426,102.37) and (446.37,82) .. (471.5,82) .. controls (496.63,82) and (517,102.37) .. (517,127.5) .. controls (517,152.63) and (496.63,173) .. (471.5,173) .. controls (446.37,173) and (426,152.63) .. (426,127.5) -- cycle ;
		\draw  [color={rgb, 255:red, 255; green, 255; blue, 255 }  ,draw opacity=1 ][line width=0.75] [line join = round][line cap = round] (514,129) .. controls (517,129) and (520,129) .. (523,129) ;
		\draw   (64,134.5) .. controls (64,109.37) and (84.37,89) .. (109.5,89) .. controls (134.63,89) and (155,109.37) .. (155,134.5) .. controls (155,159.63) and (134.63,180) .. (109.5,180) .. controls (84.37,180) and (64,159.63) .. (64,134.5) -- cycle ;
		\draw  [color={rgb, 255:red, 208; green, 2; blue, 27 }  ,draw opacity=1 ][line width=3]  (63,133.63) .. controls (63,108.01) and (83.76,87.25) .. (109.38,87.25) .. controls (134.99,87.25) and (155.75,108.01) .. (155.75,133.63) .. controls (155.75,159.24) and (134.99,180) .. (109.38,180) .. controls (83.76,180) and (63,159.24) .. (63,133.63) -- cycle ;
		
		\draw (255,173) node [anchor=north west][inner sep=0.75pt]   [align=left] {$[s]$};
		\draw (255,70) node [anchor=north west][inner sep=0.75pt]   [align=left] {$[t]$};
		\draw   (513.75,128.5) .. controls (513.75,126.71) and (515.21,125.25) .. (517,125.25) .. controls (518.79,125.25) and (520.25,126.71) .. (520.25,128.5) .. controls (520.25,130.29) and (518.79,131.75) .. (517,131.75) .. controls (515.21,131.75) and (513.75,130.29) .. (513.75,128.5) -- cycle ;
		\draw (445,210) node [anchor=north west][inner sep=0.75pt]   [align=left] {$t-s=1$};
		\draw (80,210) node [anchor=north west][inner sep=0.75pt]   [align=left] {$t-s>1$};
		\draw (335,80) node [anchor=north west][inner sep=0.75pt]   [align=left] {$S^1$};
		\draw (510,80) node [anchor=north west][inner sep=0.75pt]   [align=left] {$S^1$};
		\draw (140,80) node [anchor=north west][inner sep=0.75pt]   [align=left] {$S^1$};
		\draw (270,210) node [anchor=north west][inner sep=0.75pt]   [align=left] {$t-s<1$};
	\end{tikzpicture}
	\caption{}
	\label{f1k}
\end{figure}

The remainder of the paper is organized as follows. We begin in Section \ref{sec2} by introducing the modified Prüfer transformation and the associated trigonometric functions that govern both the radial and angular evolution. Section \ref{sec3} then turns to the proof of the local one-window and symmetric cancellation estimates. In Section \ref{sec4}, we develop the general partition theory for slowly varying sequences, with the two most intricate partition constructions deferred to Sections \ref{pcb} and \ref{pcd}. The implications of these results for the Prüfer angle are established in Section \ref{secparpru}. The proof of the main result, Theorem \ref{main}, is presented in Section \ref{sec8}, while the critical and boundary constructions for Theorem \ref{main2} are treated separately in Section \ref{sec9}. Finally, the appendices supply the continuous analogue of the partition theorem and a discussion of the behavior under sign-type potentials.

\vspace{2mm}\noindent \textbf{Acknowledgments} The authors thank ChatGPT for its assistance with English-language
editing in the Introduction. Jifeng Chu was supported by the National Natural Science Foundation of China (No. 12571168), the Science and Technology Innovation Plan of Shanghai (No. 23JC1403200) and
Zhejiang Provincial Natural Science Foundation of China (No. LZ26A010006).
Part of this work was completed while Wencai Liu was visiting the
Simons Institute for the Theory of Computing.
	\section{Some Basic Lemmas}\label{sec2}

	We first introduce the modified Pr\"ufer transformation (see \cite{KiselevLastSimonPruefer} and \cite{RemlingCMP} for more details). For any $E\in (-2,2)$, write $E=2\cos \pi k$. Suppose that $u(n)=u(n,E)$ is a real solution of \eqref{maineq}. Let
	\begin{align}
		Y(n,E)=\frac{1}{\sin\pi k}\begin{pmatrix}
			\sin \pi k &  0\\
			-\cos \pi k & 1
		\end{pmatrix}
		\begin{pmatrix}
			u(n-1,E)\\
			u(n,E)
		\end{pmatrix}.\label{yne}
	\end{align}
	Define the Pr\"ufer variables $R(n,E)$ and $\theta(n,E)$ as 
	\begin{align}
		Y(n,E)=R(n,E)\begin{pmatrix}
			\sin (\pi\theta(n,E)-\pi k)\\
			\cos (\pi\theta(n,E)-\pi k)
		\end{pmatrix}.\label{rne}
	\end{align}
In the following arguments, we leave the dependence on $E$ implicit if there is no confusion.	Then $R(n)=R(n,E)$ and $\theta(n)=\theta(n,E)$ obey
	\begin{eqnarray}\label{lr}
		\frac{R(n+1)^2}{ R(n)^2}=1-\frac{V(n)}{\sin\pi k}\sin2\pi\theta(n)+\frac{V(n)^2}{\sin^2\pi k}\sin^2\pi\theta(n)
	\end{eqnarray}
	and
	\begin{eqnarray}\label{ctt}
		\cot (\pi \theta(n+1)-\pi k)=\cot \pi \theta(n)-\frac{V(n)}{\sin\pi k}.
	\end{eqnarray}
In this paper, all potentials $V$ satisfy
\begin{align}
	V(n)=\frac{O(1)}{1+n},\ n\to\infty.\label{vno1}
\end{align}

{\begin{lemma}\cite[Lemma 2.2]{LiuLyuResonant}\label{nn}
	If $\abs{\frac{V(n)}{\sin\pi k}}<\frac{1}{10}$, then one has
	\begin{eqnarray}\label{n1n}
		\theta(n+1)=\theta(n)+k+\sin^2\pi\theta(n)\frac{V(n)}{\pi\sin\pi k}+O\left(\frac{V(n)^2}{\sin^2\pi k}\right),\ \ n\to\infty.
	\end{eqnarray}
	 
\end{lemma}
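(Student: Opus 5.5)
Since the lemma is stated for $\abs{V(n)/\sin\pi k}<\frac1{10}$, I will work directly from the angular recursion \eqref{ctt} and expand it to first order in the small quantity $\epsilon:=V(n)/\sin\pi k$. Write $x=\pi\theta(n)$ and $y=\pi\theta(n+1)-\pi k$. The recursion reads
\[
\cot y=\cot x-\epsilon .
\]
Both $y$ and $x$ are defined only modulo $\pi$. I fix the branch of $\theta(n+1)$ by requiring that $y$ be the solution lying within $O(\epsilon)$ of $x$. This is the convention implicit in the lemma, and it is legitimate because the map $\cot$ is a bijection from $(0,\pi)$ onto $\mathbb R$.

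\textbf{Cotangent-free form.} To avoid the singularities of $\cot$ near $x\in\pi\mathbb Z$, I rewrite the relation with $\sin$ and $\cos$. Clearing denominators gives
\[
\cos y\,\sin x=\sin y\,(\cos x-\epsilon\sin x).
\]
Equivalently,
\[
\sin(x-y)=-\epsilon\,\sin x\,\sin y .
\]
This identity is exact and has no singularity. Put $\delta:=y-x$. Then
\[
\sin\delta=\epsilon\,\sin x\,\sin(x+\delta).
\]
Since $|\epsilon|<1/10$, the right-hand side has modulus at most $1/10$. Hence there is a unique small solution with $|\delta|\le C|\epsilon|$: the map $\delta\mapsto\arcsin\bigl(\epsilon\sin x\sin(x+\delta)\bigr)$ is a contraction on $[-\frac{\pi}{4},\frac{\pi}{4}]$, with Lipschitz constant at most $\frac{1}{10}\cdot\frac{1}{\sqrt{1-1/100}}<1$.

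\textbf{Expansion.} Substituting back,
\[
\delta=\epsilon\sin^2 x+\epsilon\sin x\,\bigl(\sin(x+\delta)-\sin x\bigr)+O(\delta^3).
\]
The bracket is $O(\delta)$, which is $O(\epsilon)$. Therefore
\[
\delta=\epsilon\sin^2x+O(\epsilon^2),
\]
with an absolute implied constant. Dividing by $\pi$ and recalling $y=\pi\theta(n+1)-\pi k$, this is exactly \eqref{n1n}:
\[
\theta(n+1)=\theta(n)+k+\frac{\sin^2\pi\theta(n)\,V(n)}{\pi\sin\pi k}+O\!\left(\frac{V(n)^2}{\sin^2\pi k}\right).
\]
The phrase ``$n\to\infty$'' only matters in that, by \eqref{vno1}, the smallness hypothesis holds for all large $n$.

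\textbf{Main obstacle.} The only delicate point is the branch choice together with uniformity of the error near $\sin x=0$, where the naive $\cot$-expansion blows up. Passing to the exact identity $\sin(x-y)=-\epsilon\sin x\sin y$ and using the fixed-point bound handles both issues simultaneously, so no case split is needed.
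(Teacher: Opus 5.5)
Your proof is correct, and it takes a different route simply because the paper gives no proof of this lemma: it is quoted from \cite{LiuLyuResonant}. Your argument therefore stands as a self-contained proof.

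\textbf{Comparison with the usual route.} A standard way to get this expansion is to write $\pi\theta(n+1)-\pi k=\operatorname{arccot}(\cot x-\epsilon)$ for $x\in(0,\pi)$ and Taylor expand in $\epsilon$. The first derivative at $\epsilon=0$ is $\sin^2x$. The second derivative is $2u/(1+u^2)^2$ with $u=\cot x-\epsilon$, which is bounded independently of $x$. The points $x\in\pi\mathbb Z$ then need separate treatment, and the lift is fixed by continuity in $\epsilon$. Your singularity-free identity $\sin\delta=\epsilon\sin x\sin(x+\delta)$ gives uniformity at $\sin x=0$ for free and makes the branch explicit.

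\textbf{You can drop the contraction argument.} Expanding $\sin(x+\delta)$ makes the identity linear in $(\sin\delta,\cos\delta)$:
\[
\tan\delta=\frac{\epsilon\sin^2x}{1-\epsilon\sin x\cos x},\qquad 1-\epsilon\sin x\cos x\ge 1-\tfrac{|\epsilon|}{2}>0 .
\]
Consequences:
\begin{itemize}
\item There is exactly one solution in each interval $(j\pi-\frac{\pi}{2},j\pi+\frac{\pi}{2})$.
\item The solution with $j=0$ is $\delta=\epsilon\sin^2x+O(\epsilon^2)$ directly.
\item Your lift agrees with any convention of the form $\theta(n+1)-\theta(n)-k\in(-\frac12,\frac12]$. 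This is the lift the paper tacitly uses when it treats $\theta$ as a real sequence in \eqref{bhgx} and in $x_n=\theta(nq)-nkq$.
\end{itemize}

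\textbf{Two details to record.}
\begin{itemize}
\item State $|\delta|\le\arcsin|\epsilon|\le\frac{\pi}{2}|\epsilon|$ explicitly. You only noted that the right-hand side is at most $\frac1{10}$, but the $O(\epsilon)$ bound on $\delta$ is what the expansion step uses.
\item Note that when $\sin x\neq 0$ the identity forces $\sin y\neq0$. Otherwise $\sin(x-y)=0$ would give $\sin x=0$. So the small solution of the identity really solves \eqref{ctt}.
\end{itemize}
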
}
By \eqref{n1n}, for any fixed $j\in\mathbb{N}$, one has
\begin{eqnarray}\label{bhgx}
	\theta(n+j)=\theta(n)+kj+\frac{O(1)}{1+n},\ n\to\infty.\label{thetan+jn}
\end{eqnarray}

In the following part of this section, we assume that
$k=\frac{p}{q}$
with coprime $p,q$, where $p<q$. Since $$\left\{0,\frac{1}{q},\frac{2}{q},\cdots,\frac{q-1}{q}\right\}=\left\{0,\frac{p}{q},\frac{2p}{q},\cdots,\frac{q-1}{q}p\right\} \pmod \Z,$$
we may choose $(l_0,\cdots, l_{q-1})$ to be a permutation of $(0,1,\cdots,q-1)$ so that for any $j=0,\cdots,q-1,$ we have
\begin{align}
	\left[\frac{p}{q}l_j\right]=\frac{j}{q}.\label{permupqlj}
\end{align}

Substituting \eqref{thetan+jn} into \eqref{n1n}, we obtain
\begin{align}
	\theta(n+q)&=\theta(n)+kq+\sum_{j=0}^{q-1}\sin^2\pi\left(\theta(n)+\frac{p}{q}j\right)\frac{V(n+j)}{\pi\sin\pi k}+\frac{O(1)}{1+n^2}\nonumber\\
	&=\theta(n)+kq+\sum_{j=0}^{q-1}\sin^2\pi\left(\theta(n)+\frac{p}{q}l_j\right)\frac{V(n+l_j)}{\pi\sin\pi k}+\frac{O(1)}{1+n^2}\nonumber\\
	\label{bhgx2}&=\theta(n)+kq+\sum_{j=0}^{q-1}\sin^2\pi\left(\theta(n)+\frac{j}{q}\right)\frac{V(n+l_j)}{\pi\sin\pi k}+\frac{O(1)}{1+n^2},\ n\to\infty.
\end{align}
Define
\begin{align}
	P(\varphi)=\sum_{j=0}^{q-1}\abs{\sin\left(\frac{2\pi}{q}j+2\pi\varphi\right)}\label{definitionofP_qtheta}
\end{align}
and
\begin{align}
	T(\varphi)=\sum_{j=0}^{q-1}\sin^2\pi\left(\varphi+\frac{j}{q}\right){\rm{sgn}}\left(\sin\left(2\pi\varphi+\frac{2\pi}{q}j\right)\right).\label{definitionofT_qtheta}
\end{align}
It is obvious that for any $\varphi\in\R$,
\begin{align}
	P(\varphi)=P\left(\varphi+\frac{1}{q}\right)\label{perioPvarphi}
\end{align}
and
\begin{align}
		T(\varphi)=T\left(\varphi+\frac{1}{q}\right)\label{perioTvarphi}.
\end{align}
Therefore,  we only study them on the interval $\left[0,\frac{1}{q}\right]$.
	
	\begin{lemma}\label{lsscc}
		Assume that $q\geq 3$ is odd. Then for any $\varphi\in \mathbb{R}$, one has
		\begin{align}\label{pqduichen}
			P(\varphi)=P\left(\frac{1}{q}-\varphi\right).
		\end{align} 
		Specifically, we have
		\begin{align}
			\label{sscc}P(\varphi)=\begin{cases}
				\frac{\cos2\pi\varphi+\cos\left(\frac{\pi}{q}-2\pi\varphi\right)}{\sin\frac{\pi}{q}},&\varphi\in\Big[0,\frac{1}{2q}\Big),\\
				\frac{\cos\left(\frac{\pi}{q}-2\pi\varphi\right)+\cos\left(\frac{2\pi}{q}-2\pi\varphi\right)}{\sin\frac{\pi}{q}},& \varphi\in\left[\frac{1}{2q},\frac{1}{q}\right].
			\end{cases}
		\end{align}
		In particular,
\begin{align}
\min_{\varphi\in\left[0,\frac{1}{q}\right]}P(\varphi)=P(0)=P\left(\frac{1}{2q}\right)=P\left(\frac{1}{q}\right)=qB_q,\label{Bq}\\
P(\varphi)>qB_q,\ \mathrm{for\ any\ } \varphi\in \left(0,\frac{1}{2q}\right)\cup\left(\frac{1}{2q},\frac{1}{q}\right),\label{Bqleq1toqPqtheta}
\end{align}
		where  $B_q$ is defined by \eqref{bqb}. 
	\end{lemma}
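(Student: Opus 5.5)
The plan is to reduce $P$ to a geometric sum by exploiting the $\pi$-periodicity of $|\sin|$ together with the oddness of $q$. Write $\psi=2\varphi$. Since $|\sin x|$ has period $\pi$, each term of \eqref{definitionofP_qtheta} equals $|\sin\pi(\psi+\frac{2j}{q})|$, and it depends only on $[\psi+\frac{2j}{q}]$. Because $q$ is odd, $2$ is invertible modulo $q$. Hence $\{[\frac{2j}{q}]:0\le j\le q-1\}=\{\frac{j}{q}:0\le j\le q-1\}$, and
\begin{align}
P(\varphi)=\sum_{j=0}^{q-1}\Big|\sin\pi\Big(\psi+\frac{j}{q}\Big)\Big|.\nonumber
\end{align}
This already gives \eqref{pqduichen}. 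Replacing $\varphi$ by $\frac1q-\varphi$ sends $\psi$ to $\frac2q-\psi$. The multiset $\{[\frac2q-\psi+\frac jq]\}$ is the reflection $x\mapsto[-x]$ of $\{[\psi+\frac jq]\}$, because $\{[\frac{2-j}{q}]\}=\{[-\frac jq]\}$. Finally, $|\sin\pi x|$ is even and $1$-periodic. The same reindexing shows that the multiset $\{[\psi+\frac jq]\}$ is unchanged under $\psi\mapsto\psi+\frac1q$, so $P(\varphi)=P(\varphi-\frac1{2q})$.

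Next I compute $P$ on $\varphi\in[0,\frac1{2q})$, that is, $\psi\in[0,\frac1q)$. There all the points $\psi+\frac jq$ lie in $[0,1)$, so the absolute values can be dropped. Then
\begin{align}
P(\varphi)=\operatorname{Im}\Big(e^{i\pi\psi}\sum_{j=0}^{q-1}e^{i\pi j/q}\Big)
=\operatorname{Im}\frac{2e^{i\pi\psi}}{1-e^{i\pi/q}}
=\frac{\cos\big(2\pi\varphi-\frac{\pi}{2q}\big)}{\sin\frac{\pi}{2q}},\nonumber
\end{align}
where I used $1-e^{i\alpha}=-2i\sin\frac\alpha2\,e^{i\alpha/2}$. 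Applying sum-to-product and $\sin\frac\pi q=2\sin\frac\pi{2q}\cos\frac\pi{2q}$ turns this into the first line of \eqref{sscc}. For $\varphi\in[\frac1{2q},\frac1q]$, the shift identity $P(\varphi)=P(\varphi-\frac1{2q})$ gives $\cos(2\pi\varphi-\frac{3\pi}{2q})/\sin\frac{\pi}{2q}$, which matches the second line in the same way. At $\varphi=\frac1q$ one should use continuity of $P$ rather than the half-open formula.

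For the extremal statements \eqref{Bq}--\eqref{Bqleq1toqPqtheta}, note that on $[0,\frac1{2q}]$ the argument $2\pi\varphi-\frac\pi{2q}$ ranges over $[-\frac\pi{2q},\frac\pi{2q}]\subset(-\frac\pi2,\frac\pi2)$. There cosine is strictly larger in the interior than at the endpoints, so the minimum is $\cot\frac{\pi}{2q}$, attained only at $\varphi=0$ and $\varphi=\frac1{2q}$. By the shift symmetry, the same holds on $[\frac1{2q},\frac1q]$. It remains to check that $\cot\frac{\pi}{2q}=\frac{1+\cos\frac\pi q}{\sin\frac\pi q}=qB_q$, which follows from the half-angle formulas and \eqref{bqb}.

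No step is a serious obstacle. The only point needing care is the bookkeeping of fractional parts at the endpoints, namely that $\psi+\frac{q-1}{q}<1$ exactly when $\psi<\frac1q$, so that dropping the absolute values is legitimate on the stated half-open intervals.
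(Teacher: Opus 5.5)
Your proof is correct, but it takes a different route from the paper.

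\textbf{The paper's route.} The paper keeps the original indexing. For $\varphi\in[0,\frac{1}{2q})$ it reads off which terms are positive ($j\le\frac{q-1}{2}$) and which are negative. It then uses $\sum_{j=0}^{q-1}\sin(\frac{2\pi}{q}j+2\pi\varphi)=0$ to write $P$ as twice a partial sum, and evaluates that sum with \eqref{sintheta+ix}. It repeats the computation with a different sign split on $[\frac{1}{2q},\frac{1}{q}]$. The reflection symmetry \eqref{pqduichen} and the minimality statements are left as direct computations from the piecewise formula \eqref{sscc}.

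\textbf{What your route gives.} Your reindexing through the invertibility of $2$ modulo $q$ replaces the sign bookkeeping with a full-period sum of nonnegative terms, and it yields more structure directly:
\begin{itemize}
\item It proves \eqref{pqduichen} for all real $\varphi$ without any explicit formula.
\item It shows that for odd $q$ the function $P$ is actually $\frac{1}{2q}$-periodic, which is stronger than \eqref{perioPvarphi}. So the second branch of \eqref{sscc} is a translate of the first, and a single geometric sum suffices.
\item The compact form $P(\varphi)=\cos(2\pi\varphi-\frac{\pi}{2q})/\sin\frac{\pi}{2q}$ on $[0,\frac{1}{2q}]$ makes the minimum value $\cot\frac{\pi}{2q}=qB_q$, and its strictness, immediate.
\end{itemize}

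\textbf{What the paper's route gives.} Its sign-split template matches Lemma \ref{lssccs}, where your shortcut is unavailable. The function $T$ has no extra $\frac{1}{2q}$-periodicity: it jumps from $1$ to $-1$ at $\frac{1}{2q}$. Running $P$ and $T$ through the same sign split keeps their closed forms aligned, which is what feeds the identity \eqref{ghs0}.
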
 
	\begin{proof}
		Once we have \eqref{sscc}, direct computations will show  \eqref{Bq}, \eqref{Bqleq1toqPqtheta}, and \eqref{pqduichen} for any $\varphi\in\left[0,\frac{1}{q}\right]$, then by \eqref{perioPvarphi}
one can obtain		
		 \eqref{pqduichen} for any $\varphi\in\mathbb{R}$.
		We only prove \eqref{sscc}.
		
		Since 
		\begin{align}
			\sum_{i=0}^n\sin ix=\frac{\cos\frac{x}{2}-\cos\left(nx+\frac{x}{2}\right)}{2\sin\frac{x}{2}},\ \sum_{i=0}^n\cos ix=\frac{\sin\frac{x}{2}+\sin\left(nx+\frac{x}{2}\right)}{2\sin\frac{x}{2}},\label{sinixcosix}
		\end{align}
		one can obtain that for any $\varphi\in\mathbb{R}$,
		\begin{align}
			\sum_{i=0}^n\sin (\varphi+ix)=\frac{\cos\left(\varphi-\frac{x}{2}\right)-\cos\left(\varphi+\left(n+\frac{1}{2}\right)x\right)}{2\sin\frac{x}{2}}.\label{sintheta+ix}
		\end{align}
		Hence, one has
		\begin{align}
			\sum_{j=0}^{q-1}\sin\left(\frac{2\pi}{q}j+2\pi\varphi\right)=\frac{\cos\left(2\pi\varphi-\frac{\pi}{q}\right)-\cos\left(2\pi\varphi+(q-\frac{1}{2})\frac{2\pi}{q}\right)}{2\sin\frac{\pi}{q}}=0.\nonumber
		\end{align}
		Therefore, for any $\varphi\in\Big[0,\frac{1}{2q}\Big)$, by \eqref{definitionofP_qtheta} we have
		\begin{align}
			P(\varphi)=\sum_{j=0}^{\frac{q-1}{2}}\sin\left(\frac{2\pi}{q}j+2\pi\varphi\right)-\sum_{j=\frac{q+1}{2}}^{q-1}\sin\left(\frac{2\pi}{q}j+2\pi\varphi\right)
			=2\sum_{j=0}^{\frac{q-1}{2}}\sin\left(\frac{2\pi}{q}j+2\pi\varphi\right).\nonumber
		\end{align}
		Then applying \eqref{sintheta+ix} one can obtain \eqref{sscc} for $\varphi\in\Big[0,\frac{1}{2q}\Big)$.
		
		For any $\varphi\in\left[\frac{1}{2q},\frac{1}{q}\right]$, by \eqref{definitionofP_qtheta} we have
		\begin{align}
			P(\varphi)=&\sum_{j=0}^{\frac{q-3}{2}}\sin\left(\frac{2\pi}{q}j+2\pi\varphi\right)-\sum_{j=\frac{q-1}{2}}^{q-1}\sin\left(\frac{2\pi}{q}j+2\pi\varphi\right)
			=2\sum_{j=0}^{\frac{q-3}{2}}\sin\left(\frac{2\pi}{q}j+2\pi\varphi\right).\nonumber
		\end{align}
		Then applying \eqref{sintheta+ix} we can obtain \eqref{sscc} for $\varphi\in\left[\frac{1}{2q},\frac{1}{q}\right]$.
	\end{proof}

	\begin{lemma}\label{lssccs}Assume that $q\geq3$ is odd. Then for any $\varphi\in \mathbb{R}$, one has
		\begin{align}\label{tqduichen}
			T(\varphi)=-T\left(\frac{1}{q}-\varphi\right).
		\end{align}
		Specifically,	we have
		\begin{align}
			T(\varphi)=\begin{cases}
				\label{ssccs}\frac{1}{2}+\frac{\sin2\pi\varphi-\sin\left(\frac{\pi}{q}-2\pi\varphi\right)}{2\sin\frac{\pi}{q}}, & \varphi\in[0,\frac{1}{2q}),\\
				0, & \varphi=\frac{1}{2q},\\
				-\frac{1}{2}-\frac{\sin\left(\frac{2\pi}{q}-2\pi\varphi\right)+\sin\left(\frac{\pi}{q}-2\pi\varphi\right)}{2\sin\frac{\pi}{q}}, & \varphi\in  (\frac{1}{2q}, \frac{1}{q}],
			\end{cases}
		\end{align}
		and
		\begin{align}
			\label{mp1}&\lim_{\varphi\to\frac{1}{2q}\pm0}T(\varphi)=\mp1,\\
			\label{tqpdy0}T^{\prime}(\varphi)>0&, \mathrm{\ for\ any\ }\varphi\in\left(0,\frac{1}{2q}\right)\cup\left(\frac{1}{2q},\frac{1}{q}\right).
		\end{align}
	\end{lemma}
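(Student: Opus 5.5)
We follow the template of the proof of Lemma \ref{lsscc}: first establish the explicit formula \eqref{ssccs} on $[0,\frac1q]$, and then derive \eqref{tqduichen}, \eqref{mp1} and \eqref{tqpdy0} from it. The periodicity \eqref{perioTvarphi} carries the symmetry from $[0,\frac1q]$ to all of $\mathbb R$.

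\textbf{Step 1: rewrite $T$.} Using $\sin^2 x=\frac{1-\cos 2x}{2}$, we write
\[
T(\varphi)=\frac12\sum_{j=0}^{q-1}\mathrm{sgn}\Big(\sin\big(2\pi\varphi+\tfrac{2\pi}{q}j\big)\Big)-\frac12\sum_{j=0}^{q-1}\cos\big(2\pi\varphi+\tfrac{2\pi j}{q}\big)\,\mathrm{sgn}\Big(\sin\big(2\pi\varphi+\tfrac{2\pi j}{q}\big)\Big).
\]
The sign pattern is the one used in the proof of Lemma \ref{lsscc}. For $\varphi\in[0,\frac1{2q})$ the positive indices are $j=0,\dots,\frac{q-1}{2}$ and the negative ones are $j=\frac{q+1}{2},\dots,q-1$. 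At $\varphi=0$ the term $j=0$ vanishes; it is harmless there because the summand contains the factor $\sin^2(0)=0$. For $\varphi\in(\frac1{2q},\frac1q]$ the positive indices are $j=0,\dots,\frac{q-3}{2}$, the negative ones are $j=\frac{q-1}{2},\dots,q-1$, and there is no zero term. The count of signs therefore contributes $+\frac12$ in the first range and $-\frac12$ in the second.

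For the cosine sums, $\sum_{j=0}^{q-1}\cos(2\pi\varphi+2\pi j/q)=0$, just as for the sine sum. Hence each signed sum equals twice the sum over the positive block. We evaluate that block with the cosine analogue of \eqref{sintheta+ix},
\[
\sum_{i=0}^n\cos(\varphi+ix)=\frac{\sin(\varphi+(n+\tfrac12)x)-\sin(\varphi-\tfrac x2)}{2\sin\frac x2}.
\]
With $n=\frac{q-1}{2}$ (respectively $\frac{q-3}{2}$) and $x=\frac{2\pi}{q}$, this should reduce, after using $\sin(\pi+y)=-\sin y$, to the two nonconstant expressions in \eqref{ssccs}.

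\textbf{Step 2: the point $\varphi=\frac1{2q}$.} Here the index $j=\frac{q-1}{2}$ gives $\sin(\pi)=0$, so $\mathrm{sgn}=0$ and that term drops out. The remaining terms come in pairs $j\leftrightarrow q-1-j$: their angles are reflections $\pi/q+2\pi j/q$ and $2\pi-(\pi/q+2\pi j/q)$, with equal $\sin^2$ and opposite signs. They cancel, so $T(\frac1{2q})=0$.

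\textbf{Step 3: consequences.} The one-sided limits \eqref{mp1} follow by substituting $2\pi\varphi\to\pi/q$ into the two formulas. From the left this gives $\frac12+\frac{\sin(\pi/q)}{2\sin(\pi/q)}=1$, and from the right it gives $-\frac12-\frac12=-1$.

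For \eqref{tqpdy0} we differentiate. On the first range the derivative is $\frac{\pi(\cos2\pi\varphi+\cos(\pi/q-2\pi\varphi))}{\sin(\pi/q)}=\pi P(\varphi)>0$. On the second range it is $\frac{\pi(\cos(2\pi/q-2\pi\varphi)+\cos(\pi/q-2\pi\varphi))}{\sin(\pi/q)}=\pi P(\varphi)>0$ by \eqref{sscc}. Alternatively, positivity is direct, since all cosine arguments lie in $(-\pi/2,\pi/2)$ when $q\ge3$.

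For the antisymmetry \eqref{tqduichen} on $[0,\frac1q]$, substitute $\varphi\mapsto\frac1q-\varphi$. This swaps the two ranges, and the formula on the second range becomes exactly minus the formula on the first; the point $\frac1{2q}$ is fixed and $T$ vanishes there. Periodicity then extends the identity to all $\varphi$. A conceptual alternative is to use the reindexing $j\mapsto -1-j\pmod q$, which sends $\sin^2\pi(\frac1q-\varphi+\frac jq)$ to $\sin^2\pi(\varphi+\frac{j'}q)$ and flips each sign.

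The main obstacle is bookkeeping rather than any conceptual difficulty. One must get the sign sets exactly right at the endpoints $0$, $\frac1{2q}$ and $\frac1q$, where one sine vanishes. One must also carry out the trigonometric simplification of the closed-form cosine sums correctly so that the stated normalization $\frac{1}{2\sin(\pi/q)}$ appears.
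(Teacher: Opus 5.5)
Your proposal is correct and follows essentially the paper's route: rewrite $\sin^2 x=\frac{1-\cos 2x}{2}$, use that the full cosine sum vanishes, evaluate the positive block with the closed-form cosine sum on each half of $[0,\frac1q]$, cancel the pairs $j\leftrightarrow q-1-j$ at $\varphi=\frac1{2q}$, and obtain the remaining claims by direct computation. Your observation $T'=\pi P$ and the reindexing $j\mapsto -1-j \pmod q$ are tidy ways to handle that last part. One small slip: at $\varphi=\frac1q$ there is in fact a zero term, namely $j=q-1$ with $\sin 2\pi=0$, but, exactly as at $\varphi=0$, it is harmless because its factor $\sin^2\pi$ vanishes.
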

	\begin{proof}
		Once we have \eqref{ssccs}, direct computations will show \eqref{mp1}, \eqref{tqpdy0}, and \eqref{tqduichen} for any $\varphi\in\left[0,\frac{1}{q}\right]$. Then by \eqref{perioTvarphi} one can obtain \eqref{tqduichen} for any $\varphi\in\mathbb{R}$.
		We only prove \eqref{ssccs}.
		
		Applying \eqref{sintheta+ix}, one can obtain that for any $\varphi\in\mathbb{R}$,
		\begin{align}
			\sum_{i=0}^n\cos(\varphi+ix)=\frac{\sin\left(\varphi+\left(n+\frac{1}{2}\right)x\right)-\sin\left(\varphi-\frac{x}{2}\right)}{2\sin\frac{x}{2}}.\label{sumcostheta+ix}
		\end{align}
		Hence, one has
		\begin{align}
			\sum_{j=0}^{q-1}\cos\left(2\pi\varphi+\frac{2\pi}{q}j\right)=0.\label{sumj=0q-1cos=0}
		\end{align}
	By \eqref{definitionofT_qtheta}, for any $\varphi\in\Big[0,\frac{1}{2q}\Big)$, 
		\begin{align}
			T(\varphi)
			=&\sum_{j=0}^{\frac{q-1}{2}}\sin^2\pi\left(\frac{j}{q}+\varphi\right )-\sum_{j=\frac{q+1}{2}}^{q-1}\sin^2\pi\left(\frac{j}{q}+\varphi\right)\nonumber\\
			=&\sum_{j=0}^{\frac{q-1}{2}}\frac{1-\cos\left(2\pi\varphi+\frac{2\pi}{q}j\right)}{2}-\sum_{j=\frac{q+1}{2}}^{{q-1}}\frac{1-\cos\left(2\pi\varphi+\frac{2\pi}{q}j\right)}{2}.\nonumber
		\end{align}
Using \eqref{sumcostheta+ix} and \eqref{sumj=0q-1cos=0}, one has
\begin{align}
	T(\varphi)=\frac{1}{2}-\sum_{j=0}^{\frac{q-1}{2}}\cos\left(2\pi\varphi+\frac{2\pi}{q}j\right)=\frac{1}{2}+\frac{\sin2\pi\varphi-\sin\left(\frac{\pi}{q}-2\pi\varphi\right)}{2\sin\frac{\pi}{q}}.\nonumber
\end{align}
Thus, we obtain \eqref{ssccs} for  $\varphi\in [0,\frac{1}{2q})$.

By \eqref{definitionofT_qtheta},  for $\varphi=\frac{1}{2q}$,
\begin{align}
	T\left(\frac{1}{2q}\right)
	=&\sum_{j=0}^{\frac{q-3}{2}}\frac{1-\cos\left(\frac{\pi}{q}+\frac{2\pi}{q}j\right)}{2}-\sum_{j=\frac{q+1}{2}}^{{q-1}}\frac{1-\cos\left(\frac{\pi}{q}+\frac{2\pi}{q}j\right)}{2}\nonumber\\
	=&-\frac{1}{2}\left(\sum_{j=0}^{\frac{q-3}{2}}\cos\left(\frac{\pi}{q}+\frac{2\pi}{q}j\right)-\sum_{j=\frac{q+1}{2}}^{{q-1}}\cos\left(\frac{\pi}{q}+\frac{2\pi}{q}j\right)\right).\nonumber
\end{align}
Since for any $j\in\left\{0,1,\cdots,\frac{q-3}{2}\right\}$ and any $l\in\left\{\frac{q+1}{2},\cdots,q-1\right\}$ with $j+l=q-1$, 
\begin{align}
\cos\left(\frac{\pi}{q}+\frac{2\pi}{q}j\right)=\cos\left(\frac{\pi}{q}+\frac{2\pi}{q}l\right),\nonumber
\end{align}
we have $T\left(\frac{1}{2q}\right)=0$.

		 By \eqref{definitionofT_qtheta}, for any $\varphi\in \Big(\frac{1}{2q},\frac{1}{q}\Big]$,
		\begin{align}
			T(\varphi)=&\sum_{j=0}^{\frac{q-3}{2}}\sin^2\pi\left(\frac{j}{q}+\varphi\right )-\sum_{j=\frac{q-1}{2}}^{q-1}\sin^2\pi\left(\frac{j}{q}+\varphi\right)\nonumber\\
			=&\sum_{j=0}^{\frac{q-3}{2}}\frac{1-\cos\left(2\pi\varphi+\frac{2\pi}{q}j\right)}{2}-\sum_{j=\frac{q-1}{2}}^{{q-1}}\frac{1-\cos\left(2\pi\varphi+\frac{2\pi}{q}j\right)}{2}\nonumber.
		\end{align}
		Hence, applying \eqref{sumcostheta+ix} and \eqref{sumj=0q-1cos=0}, one can obtain 
		\begin{align}
			T(\varphi)=-\frac{1}{2}-\sum_{j=0}^{\frac{q-3}{2}}\cos\left(2\pi\varphi+\frac{2\pi}{q}j\right)=-\frac{1}{2}-\frac{\sin\left(\frac{2\pi}{q}-2\pi\varphi\right)+\sin\left(\frac{\pi}{q}-2\pi\varphi\right)}{2\sin\frac{\pi}{q}}.\nonumber
		\end{align}
		Thus, we obtain \eqref{ssccs} for $\varphi\in\Big(\frac{1}{2q},\frac{1}{q}\Big].$
	\end{proof}
	
		\begin{lemma} Assume that $q\geq3$ is odd. For any $\varphi\in\left[0,\frac{1}{q}\right]$, we have
		\begin{align}\label{ghs0}
			P(\varphi)-\frac{2\sin\left(\frac{\pi}{q}-2\pi\varphi\right)}{1+\cos\left(\frac{\pi}{q}-2\pi\varphi\right)}T(\varphi)=\frac{1+\cos\frac{\pi}{q}}{\sin\frac{\pi}{q}}.
		\end{align}
	\end{lemma}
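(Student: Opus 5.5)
The plan is a direct verification using the closed forms for $P$ and $T$ from Lemmas \ref{lsscc} and \ref{lssccs}, split according to the two pieces of \eqref{sscc} and \eqref{ssccs}. Throughout I write
\[
\alpha=\frac{\pi}{q}-2\pi\varphi,\qquad c=\frac{\pi}{q},\qquad s=\sin\frac{\pi}{q},
\]
so that $2\pi\varphi=c-\alpha$ and $\frac{2\pi}{q}-2\pi\varphi=\alpha+c$. Since $\varphi\in[0,\frac1q]$, we have $|\alpha|\le c<\pi$. Hence $1+\cos\alpha>0$, the coefficient $\frac{2\sin\alpha}{1+\cos\alpha}$ is well defined, and I may multiply \eqref{ghs0} through by $s(1+\cos\alpha)$. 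The claim then becomes the polynomial trigonometric identity
\[
s(1+\cos\alpha)P-2s\sin\alpha\, T=(1+\cos\alpha)(1+\cos c).
\]

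\textbf{Case $\varphi\in[0,\frac1{2q})$.} Here $sP=\cos(c-\alpha)+\cos\alpha$ and $2sT=s+\sin(c-\alpha)-\sin\alpha$. Expanding the left side gives
\[
\cos(c-\alpha)+\cos\alpha+\cos\alpha\cos(c-\alpha)+\cos^2\alpha-s\sin\alpha-\sin\alpha\sin(c-\alpha)+\sin^2\alpha .
\]
Using $\cos^2\alpha+\sin^2\alpha=1$, and $\cos\alpha\cos(c-\alpha)-\sin\alpha\sin(c-\alpha)=\cos c$, this equals
\[
1+\cos\alpha+\cos c+\cos(c-\alpha)-s\sin\alpha .
\]
The right side is $1+\cos c+\cos\alpha+\cos\alpha\cos c$. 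The two agree because $\cos(c-\alpha)-\sin c\sin\alpha=\cos c\cos\alpha$.

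\textbf{Case $\varphi\in[\frac1{2q},\frac1q]$.} Here $sP=\cos\alpha+\cos(\alpha+c)$ and $2sT=-s-\sin(\alpha+c)-\sin\alpha$. Note that $\varphi=\frac1{2q}$ falls under the second formula of \eqref{sscc} and has $T=0$. This is consistent, since there $\alpha=0$, so the coefficient of $T$ vanishes anyway. Expanding as before, the left side becomes
\[
1+\cos\alpha+\cos(\alpha+c)+\cos c+s\sin\alpha,
\]
after using $\cos\alpha\cos(\alpha+c)+\sin\alpha\sin(\alpha+c)=\cos c$. This matches the right side because $\cos(\alpha+c)+\sin\alpha\sin c=\cos\alpha\cos c$.

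There is no real obstacle; the only care needed is bookkeeping. I would make sure the correct branch of \eqref{ssccs} is used at the endpoint $\frac1{2q}$, where $T$ is discontinuous but is multiplied by $\sin\alpha=0$. I would also record the positivity of $1+\cos\alpha$, which justifies clearing denominators. As a sanity check, at $\varphi=0$ one has $T(0)=0$ and $P(0)=qB_q$, which matches \eqref{Bq}.
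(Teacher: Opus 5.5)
Your proposal is correct and takes the same route as the paper, which says only that the identity follows by direct computation from the closed forms in Lemmas \ref{lsscc} and \ref{lssccs}. Your two-case expansion carries out that computation, including the endpoint $\varphi=\frac{1}{2q}$, where the factor $\sin\alpha=0$ makes the value of $T$ there irrelevant.
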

	\begin{proof} Since we have Lemma \ref{lsscc} and Lemma \ref{lssccs}, the result follows from direct computations.
	\end{proof}

	\section{Technical Preparations}\label{sec3}
	In this section, we assume that $k=\frac{p}{q}$ with odd $q\geq 3$. 
	
	\begin{lemma}
		Let $\varphi\in\left(0,\frac{1}{q}\right)$ and $j\in\mathbb{Z}$. If
		\begin{align}
			\sin\left(2\pi\varphi+\frac{2\pi}{q}j\right)\geq 0,\label{lem3.1sin>0}
		\end{align}
then
			\begin{align}\label{qiuhe}
				\frac{\sin\left(2\pi\varphi+\frac{2\pi}{q}j\right)}{\sin^2\pi\left(\varphi+\frac{j}{q}\right)}\geq \frac{2\sin\left(\frac{\pi}{q}-2\pi\varphi\right)}{1+\cos\left(\frac{\pi}{q}-2\pi\varphi\right)}.
			\end{align}
			If 
			\begin{align}
				\sin\left(2\pi\varphi+\frac{2\pi}{q}j\right)\leq 0,\label{lem3.1sin<0}
			\end{align}
		then
			\begin{align}\label{qiuhe2}
				\frac{\sin\left(2\pi\varphi+\frac{2\pi}{q}j\right)}{\sin^2\pi\left(\varphi+\frac{j}{q}\right)}\leq \frac{2\sin\left(\frac{\pi}{q}-2\pi\varphi\right)}{1+\cos\left(\frac{\pi}{q}-2\pi\varphi\right)}.
			\end{align}
	\end{lemma}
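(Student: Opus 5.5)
The plan is to reduce both inequalities to a monotonicity statement for the cotangent on $(0,\pi)$. Put $x:=\pi\left(\varphi+\frac{j}{q}\right)$. Since $\varphi\in\left(0,\frac1q\right)$, the number $\varphi+\frac{j}{q}$ is never an integer, so $\sin x\neq0$ and the left-hand side of \eqref{qiuhe} is well defined. By $\sin 2x=2\sin x\cos x$ it equals
\[
\frac{\sin 2x}{\sin^2 x}=2\cot x .
\]
For the right-hand side put $a:=\frac{\pi}{q}-2\pi\varphi\in\left(-\frac{\pi}{q},\frac{\pi}{q}\right)$. The half-angle formula $\frac{\sin a}{1+\cos a}=\tan\frac a2$ turns it into
\[
2\tan\left(\frac{\pi}{2q}-\pi\varphi\right)=2\cot y,\qquad y:=\frac{\pi}{2}-\frac{\pi}{2q}+\pi\varphi .
\]
Because $\varphi\in\left(0,\frac1q\right)$, we have $y\in\left(\frac{\pi}{2}-\frac{\pi}{2q},\frac{\pi}{2}+\frac{\pi}{2q}\right)\subset(0,\pi)$. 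Thus \eqref{qiuhe} becomes $\cot x\geq\cot y$ and \eqref{qiuhe2} becomes $\cot x\leq\cot y$.

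Since $\cot$ has period $\pi$, we may replace $x$ by its representative $\tilde x\in(0,\pi)$. Modulo $\pi$, the possible values are $\tilde x=\pi\varphi+\frac{\pi m}{q}$ with $m\in\{0,1,\dots,q-1\}$ and $m\equiv j \pmod q$; these all lie in $(0,\pi)$ because $0<\varphi<\frac1q$. As $\cot$ is strictly decreasing on $(0,\pi)$, it suffices to prove:
\begin{itemize}
\item[(a)] under \eqref{lem3.1sin>0}, $\tilde x\leq y$;
\item[(b)] under \eqref{lem3.1sin<0}, $\tilde x\geq y$.
\end{itemize}
Both reduce to a simple integer comparison.

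For (a): the condition $\sin 2\tilde x\geq0$ with $\tilde x\in(0,\pi)$ means $\tilde x\leq\frac{\pi}{2}$, that is, $\varphi+\frac mq\leq\frac12$. Since $\varphi>0$, this forces $\frac mq<\frac12$. Because $q$ is odd and $m$ is an integer, this gives $m\leq\frac{q-1}{2}$. Then
\[
\tilde x=\pi\varphi+\frac{\pi m}{q}\leq\pi\varphi+\frac{\pi}{2}-\frac{\pi}{2q}=y .
\]

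For (b): now $\tilde x\geq\frac{\pi}{2}$, that is, $\varphi+\frac mq\geq\frac12$. Since $\varphi<\frac1q$, this gives $\frac mq>\frac12-\frac1q$, so $m>\frac{q-2}{2}$. Oddness of $q$ then yields $m\geq\frac{q-1}{2}$, and hence $\tilde x\geq\pi\varphi+\frac{\pi}{2}-\frac{\pi}{2q}=y$.

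There is no real obstacle in this argument. The only points needing care are the two integer-rounding steps, where the oddness of $q$ is exactly what makes the threshold $\frac{q-1}{2}$ sharp. One should also check that the endpoint cases, where $\sin 2\tilde x=0$ (that is, $\tilde x=\frac{\pi}{2}$), fall under both (a) and (b) consistently. They do: such an $\tilde x$ cannot equal $y$ unless $\varphi=\frac{1}{2q}$, and in that case equality holds on both sides.
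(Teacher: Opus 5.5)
Your proof is correct and is essentially the paper's argument: the paper also reduces to $j\in\{0,\dots,q-1\}$ and reads off $j\le\frac{q-1}{2}$ or $j\ge\frac{q-1}{2}$ from the sign condition. It then uses that $\frac{\sin x}{\sin^2(x/2)}=2\cot\frac{x}{2}$ is decreasing on $(0,2\pi)$ and compares with the value at $j=\frac{q-1}{2}$, which is exactly your $2\cot y$. Your closing endpoint remark is not needed, since (a) and (b) are each derived from their own hypothesis; note also that $\tilde x=\frac{\pi}{2}$ actually forces $m=\frac{q-1}{2}$ and $\varphi=\frac{1}{2q}$, so $\tilde x=y$ in that case.
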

\begin{proof} Since $\sin\left(2\pi\varphi+\frac{2\pi}{q}j\right)$ and $\sin^2\pi\left(\varphi+\frac{j}{q}\right)$ are both $q$-periodic in $j$, we only need to prove the result for $j\in \{0,1,\cdots,q-1\}$. 
	
	For any $\varphi\in \left(0,\frac{1}{q}\right)$, if we have \eqref{lem3.1sin>0},
	then  $0\leq j\leq \frac{q-1}{2}$, if we have \eqref{lem3.1sin<0},
	then $\frac{q-1}{2}\leq j \leq q-1.$
	For any $x\in (0,2\pi)$, one has
	\begin{align}
		\left(\frac{\sin x}{\sin^2\frac{x}{2}}\right)^{\prime}=\frac{2(\cos x-1)}{(1-\cos x)^2}<0.\nonumber
	\end{align}
Therefore, for any $0\leq j\leq \frac{q-1}{2}\leq \tilde{j}\leq q-1$,
\begin{align}
	\frac{\sin\left(2\pi\varphi+\frac{2\pi}{q}j\right)}{\sin^2\pi\left(\varphi+\frac{j}{q}\right)}\geq \frac{\sin\left(
		2\pi\varphi+\frac{2\pi}{q}\frac{q-1}{2}\right)}{\sin^2\pi\left(\varphi+\frac{q-1}{2q}\right)}=\frac{2\sin\left(\frac{\pi}{q}-2\pi\varphi\right)}{1+\cos \left(\frac{\pi}{q}-2\pi\varphi\right)}\geq \frac{\sin\left(2\pi\varphi+\frac{2\pi}{q}\tilde{j}\right)}{\sin^2\pi\left(\varphi+\frac{\tilde{j}}{q}\right)}.\nonumber
\end{align}
	Hence, we can obtain \eqref{qiuhe} and \eqref{qiuhe2}. 
\end{proof}

	To simplify the following formulae, we introduce some notation here. Recall that $(l_0,\cdots, l_{q-1})$ is a permutation of $(0,1,\cdots,q-1)$ so that for any $j=0,\cdots,q-1,$ we have
	\begin{align}
		\left[\frac{p}{q}l_j\right]=\frac{j}{q}.\label{finalmod}
	\end{align}
	Let $a>0$ be fixed. For any potential $V=V(n)$, define $a_n$ by $V(n)=\frac{a_na}{1+n}$ for each $n\geq 1$. Define
	\begin{align}
			P_V(\varphi,n)=P(\varphi)-\sum_{j=0}^{q-1}a_{n+l_j}\sin\left(2\pi\varphi+\frac{2\pi}{q}j\right)\label{definitionofPVtheta}
	\end{align}
	and 
	\begin{align}
		T_V(\varphi,n)=\sum_{j=0}^{q-1}a_{n+l_j}\sin^2\pi\left(\varphi+\frac{j}{q}\right)-T(\varphi),\label{definitionofTVtheta}
	\end{align}
	where $P(\varphi)$ and $T(\varphi)$ are defined by \eqref{definitionofP_qtheta} and \eqref{definitionofT_qtheta}, respectively.
	Clearly, definitions of $P_V(\varphi,n)$ and $T_V(\varphi,n)$
 depend on $a$. However, since $a$
 will always be fixed in what follows, we suppress this dependence for notational simplicity.
	By \eqref{perioPvarphi} and \eqref{perioTvarphi}, $P(\varphi),T(\varphi)$ are $\frac{1}{q}$-periodic. Then we have that $P_V(\varphi,n)$ and $T_V(\varphi,n)$ are both $1$-periodic in $\varphi$. Namely,
	for any $\varphi\in\R$, one has
	\begin{align}
		P_V(\varphi,n)=P_V\left(\varphi+1,n\right)\label{periodPV}
	\end{align}
	and 
	\begin{align}
		T_V(\varphi,n)=T_V\left(\varphi+1,n\right)\label{periodTV}.
	\end{align}
	\begin{lemma}
	Let $a>0$ be fixed.	Suppose that $\theta(n)$ is determined by \eqref{ctt} with $V(n)=\frac{a_na}{1+n}$, where $\abs{a_n}\leq 1$ for each $n\geq 1$. Then one has 
		\begin{align}\label{relationoftheta}
			\theta(n+q)=\theta(n)+kq+\frac{aT(\theta(n))+aT_V(\theta(n),n)}{(1+n)\pi\sin\pi k}+\frac{O(1)}{1+n^2},\ n\to\infty
		\end{align}
		and 
		\begin{align}\label{pqthetapathetan}
			\sum_{j=0}^{q-1}V(n+j)\sin2\pi\theta(n+j)=\frac{aP(\theta(n))-aP_V(\theta(n),n)}{1+n}+\frac{O(1)}{1+n^2},\ n\to\infty.
		\end{align}
	\end{lemma}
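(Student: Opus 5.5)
The plan is to take the period expansion \eqref{bhgx2}, which is already available, substitute $V(n+l_j)=\frac{a\,a_{n+l_j}}{1+n+l_j}$, and then reorganize the result using the definitions \eqref{definitionofTVtheta} and \eqref{definitionofPVtheta}.

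\textbf{Angular relation \eqref{relationoftheta}.} Since $|a_n|\le1$, we have $V(n)=O(1/(1+n))$. Hence $|V(n)/\sin\pi k|<\frac1{10}$ for all large $n$, so Lemma \ref{nn} and \eqref{bhgx2} apply. In \eqref{bhgx2} I will replace $\frac{1}{1+n+l_j}$ by $\frac{1}{1+n}$. Because $0\le l_j\le q-1$ is bounded, this costs at most $\frac{l_j}{(1+n)(1+n+l_j)}=O((1+n)^{-2})$. The sum then becomes
\[
\frac{a}{(1+n)\pi\sin\pi k}\sum_{j=0}^{q-1}a_{n+l_j}\sin^2\pi\Big(\theta(n)+\frac jq\Big).
\]
By \eqref{definitionofTVtheta}, this sum equals $T(\theta(n))+T_V(\theta(n),n)$, which gives \eqref{relationoftheta}.

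\textbf{Radial sum \eqref{pqthetapathetan}.} I will use \eqref{bhgx} in the form $\theta(n+j)=\theta(n)+kj+O(1/(1+n))$ for $0\le j\le q-1$. Since $\sin 2\pi x$ is Lipschitz, this gives
\[
\sin2\pi\theta(n+j)=\sin\big(2\pi\theta(n)+2\pi kj\big)+O\big((1+n)^{-1}\big).
\]
Multiplying by $V(n+j)=O(1/(1+n))$, the error becomes $O((1+n)^{-2})$. Next I reindex the sum over $j$ by the permutation $j\mapsto l_j$. Using \eqref{finalmod} and the $1$-periodicity of $\sin 2\pi x$, we get $\sin(2\pi\theta(n)+2\pi k l_j)=\sin(2\pi\theta(n)+\frac{2\pi}{q}j)$. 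Replacing $\frac1{1+n+l_j}$ by $\frac1{1+n}$ as before, the sum becomes
\[
\frac{a}{1+n}\sum_{j=0}^{q-1}a_{n+l_j}\sin\Big(2\pi\theta(n)+\frac{2\pi}{q}j\Big)+\frac{O(1)}{1+n^2}.
\]
By \eqref{definitionofPVtheta}, the last sum equals $P(\theta(n))-P_V(\theta(n),n)$, which gives \eqref{pqthetapathetan}.

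\textbf{Main point to check.} No step is a real obstacle. The one thing to verify is that all $O(\cdot)$ constants are uniform over sequences with $|a_n|\le 1$, since they depend only on $a$, $q$ and $k$. This follows because the error term in Lemma \ref{nn} is quadratic in $V$, and $V$ is bounded by $a/(1+n)$ uniformly in the sequence.
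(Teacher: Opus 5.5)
Your proposal is correct and follows essentially the same route as the paper. The paper's proof likewise substitutes into the period expansion \eqref{bhgx2} (respectively \eqref{bhgx}), reindexes through the permutation in \eqref{finalmod}, absorbs the shift $\frac{1}{1+n+l_j}\to\frac{1}{1+n}$ into the $O\bigl((1+n)^{-2}\bigr)$ error, and reads off $T+T_V$ and $P-P_V$ from \eqref{definitionofTVtheta} and \eqref{definitionofPVtheta}.
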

	\begin{proof}
		One can obtain \eqref{relationoftheta} by \eqref{bhgx2} and \eqref{definitionofTVtheta}, and obtain \eqref{pqthetapathetan} by \eqref{bhgx}, \eqref{finalmod} and \eqref{definitionofPVtheta}.
	\end{proof}
	
	Recall that $[s]$ and $(s,t)_{\mathrm{mod}}$ are defined by \eqref{demodx} and \eqref{demodint}, respectively.
	
	\begin{lemma}\label{lemmafixedpoints}	Let $a>0$ be fixed.
			Suppose that $\theta(n)$ is determined by \eqref{ctt} with $\abs{V(n)}\leq\frac{a}{1+n}$.  Assume $0<\delta\ll 1$. Let $\{n_m\}_{m=0}^{d}\subset\mathbb{N}$ satisfy
			\begin{align}
				\frac{1}{\delta}<n_0<n_1<n_2<\cdots<n_d,
			\end{align} 
			and be such that
			\begin{align}
				\left[\theta(n_m)\right]\in \cup_{j=0}^{2q}{\left(\frac{j}{2q}-\delta,\frac{j}{2q}+\delta\right)_{\mathrm{mod}}},\ m=0,1,\cdots,d.\label{lemmafixedpointsthetarange}
			\end{align}
		Then
		we have 
		\begin{align}\label{lemmafixedpointsqBq}
			\frac{\sum_{m=0}^{d}\sum_{j=0}^{q-1}V(n_{m}+j)\sin2\pi\theta(n_{m}+j)}{\sum_{m=0}^{d}\frac{a}{1+n_{m}}}\leq qB_q+O(\delta),\ d\to\infty.
		\end{align}
	\end{lemma}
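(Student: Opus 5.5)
We bound each period's contribution separately by a pointwise estimate and then average with the weights $\frac{a}{1+n_m}$. Fix $m$ and write $\varphi_m=\theta(n_m)$. By \eqref{bhgx}, for $0\le j\le q-1$,
\[
\theta(n_m+j)=\varphi_m+kj+\frac{O(1)}{1+n_m}.
\]
Since $|V(n)|\le \frac{a}{1+n}$ and $n_m>1/\delta$, we get
\[
\sum_{j=0}^{q-1}V(n_m+j)\sin 2\pi\theta(n_m+j)=\sum_{j=0}^{q-1}V(n_m+j)\sin\Bigl(2\pi\varphi_m+\tfrac{2\pi p}{q}j\Bigr)+\frac{O(1)}{(1+n_m)^2}.
\]
The error here is $\frac{a}{1+n_m}\cdot O(\delta)$, because $\frac{1}{1+n_m}<\delta$. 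This step uses only $|V(n)|\le\frac{a}{1+n}$, so we do not need the finer expansion \eqref{pqthetapathetan}. Also $\frac{1}{1+n_m+j}\le\frac{1}{1+n_m}$.

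Next we reindex through the permutation $l_j$ of \eqref{finalmod}, so that the phases become $2\pi\varphi_m+\frac{2\pi}{q}j$. Bounding $|V(n_m+l_j)|\le\frac{a}{1+n_m}$ termwise gives
\[
\sum_{j=0}^{q-1}V(n_m+j)\sin 2\pi\theta(n_m+j)\le \frac{a}{1+n_m}\,P(\varphi_m)+\frac{a}{1+n_m}O(\delta),
\]
where $P$ is as in \eqref{definitionofP_qtheta}.

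The key point is to evaluate $P$ near the special phases. By hypothesis \eqref{lemmafixedpointsthetarange}, $[\varphi_m]$ lies within $\delta$ of some $\frac{j}{2q}$. Every such point is congruent modulo $\frac1q$ to $0$, $\frac{1}{2q}$, or $\frac1q$, by the periodicity \eqref{perioPvarphi}. By \eqref{Bq}, $P$ equals $qB_q$ at each of these three points. The function $P$ is a finite sum of $|\sin|$ terms, hence globally Lipschitz with constant at most $2\pi q$, independent of the branch formula \eqref{sscc}. Therefore
\[
P(\varphi_m)\le qB_q+2\pi q\,\delta=qB_q+O(\delta),
\]
uniformly in $m$.

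Finally we sum over $m=0,\dots,d$. This gives a numerator at most $\bigl(qB_q+O(\delta)\bigr)\sum_{m=0}^{d}\frac{a}{1+n_m}$. Dividing by $\sum_{m=0}^{d}\frac{a}{1+n_m}$ yields \eqref{lemmafixedpointsqBq}; the bound in fact holds for every $d$, not only as $d\to\infty$.

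The only step needing care is the uniformity of the implied constants. The $O(1)$ in \eqref{bhgx} comes from Lemma \ref{nn}, which requires $|V(n)/\sin\pi k|<\frac1{10}$. This holds once $n>1/\delta$ with $\delta$ small relative to $\frac{\sin\pi k}{a}$. With that, the implied constant depends only on $a,k,q$, which is what makes the error $O(\delta)$ uniform in $m$ and $d$.
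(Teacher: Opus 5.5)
Your proof is correct and follows essentially the same route as the paper. You bound each period's sum by $\frac{a}{1+n_m}\sum_j|\sin 2\pi\theta(n_m+j)|$, replace the phases via \eqref{bhgx} and the permutation $l_j$ to obtain $P(\theta(n_m))+O(\delta)$, and then use \eqref{Bq} together with regularity of $P$ near the points $\frac{j}{2q}$. The differences are only cosmetic: you linearize the phase before taking absolute values, and you use the global Lipschitz constant $2\pi q$ of $P$ instead of its piecewise $C^1$ structure from \eqref{sscc}, which also shows the estimate holds for every $d$.
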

	\begin{proof}
		Since $\abs{V(n)}\leq \frac{a}{1+n}$, for any $m=0,1,\cdots,d,$ 
		\begin{align}
	\frac{\sum_{j=0}^{q-1}V(n_{m}+j)\sin2\pi\theta(n_{m}+j)}{\frac{a}{1+n_{m}}}\leq \frac{\sum_{j=0}^{q-1}\frac{a\abs{\sin2\pi\theta(n_{m}+j)}}{1+n_m+j}}{\frac{a}{1+n_{m}}}
	\leq \sum_{j=0}^{q-1}\abs{\sin2\pi\theta(n_{m}+j)}.\label{lemma3.4sum}
			\end{align}
		By \eqref{thetan+jn}, \eqref{definitionofP_qtheta} and $n_0>\frac{1}{\delta}$, 
		\begin{align}
			\sum_{j=0}^{q-1}\abs{\sin2\pi\theta(n_{m}+j)}=&\sum_{j=0}^{q-1}\abs{\sin\left(2\pi\theta(n_m)+\frac{2\pi}{q}pj\right)}+O(\delta)\nonumber\\
			=&\sum_{j=0}^{q-1}\abs{\sin\left(2\pi\theta(n_m)+\frac{2\pi}{q}j\right)}+O(\delta)\nonumber\\
			=&P(\theta(n_m))+O(\delta), \ m\to\infty.\label{lemma3.4pq}
		\end{align}
		By Lemma \ref{lsscc}, we know that $P(\varphi)$ is piecewise continuously differentiable. Then by \eqref{perioPvarphi}, \eqref{Bq} and \eqref{lemmafixedpointsthetarange}, one has that 
		\begin{align}
			P(\theta(n_m))\leq qB_q+O(\delta),\ m\to\infty.\label{lemma3.4ptheta}
		\end{align}
		Then one can obtain \eqref{lemmafixedpointsqBq} by \eqref{lemma3.4sum}, \eqref{lemma3.4pq} and \eqref{lemma3.4ptheta}.
	\end{proof}
	
		\begin{lemma}\label{lemmaduichencancel}	Let $a>0$ be fixed.
		Suppose that $\theta(n)$ is determined by \eqref{ctt} with $\abs{V(n)}\leq\frac{a}{1+n}$. 
		Assume that $\varphi_0\in (0,1)\setminus \left\{\frac{j}{2q}\right\}_{j=0}^{2q}$,  $\tilde{\varphi}_0=1-\varphi_0$ and
		\begin{align}
			0<\delta\ll \min\left\{\abs{\varphi_0-\frac{j}{2q}}:j=0,1,\cdots,2q\right\}.\label{dlld1}
		\end{align} 
	Let $\{n_m\}_{m=0}^d\subset\mathbb{N},\{\tilde{n}_m\}_{m=0}^{\tilde{d}}\subset\mathbb{N}$ satisfy
		\begin{align}
			\frac{1}{\delta}<n_0<n_1<n_2<\cdots<n_d,\nonumber\\
			\frac{1}{\delta}<\tilde{n}_0<\tilde{n}_1<\tilde{n}_2<\cdots<\tilde{n}_{\tilde{d}},\nonumber
		\end{align}
		and be such that
		\begin{align}
			\label{theta0neighbour}\left[\theta(n_m)\right]\in (\varphi_0-3\delta,\varphi_0+3\delta),\ m=0,1,\cdots,d,\\
			\label{thetatilde0neighbour}\left[\theta(\tilde{n}_m)\right]\in (\tilde{\varphi}_0-3\delta,\tilde{\varphi}_0+3\delta),\ m=0,1,\cdots,\tilde{d}.
		\end{align}
		Then if
		\begin{align}
			\abs{\sum_{m=0}^d(\theta(n_m+q)-\theta(n_m)-kq)-\sum_{m=0}^{\tilde{d}}(\theta(\tilde{n}_m+q)-\theta(\tilde{n}_m)-kq)}\leq 10,\label{thetanmtildethetanmleq10}
		\end{align}
		and 
		\begin{align}\label{tiaohejishugeq}
		\sum_{m=0}^{d}\frac{1}{1+n_{m}}+\sum_{m=0}^{\tilde{d}}\frac{1}{1+\tilde{n}_{m}}\geq \frac{1}{\delta},
		\end{align}		
		we have
		\begin{align}
			\frac{\sum_{m=0}^{d}\sum_{j=0}^{q-1}V(n_{m}+j)\sin2\pi\theta(n_{m}+j)+\sum_{m=0}^{\tilde{d}}\sum_{j=0}^{q-1}V(\tilde{n}_{m}+j)\sin2\pi\theta(\tilde{n}_{m}+j)}{\sum_{m=0}^{d}\frac{a}{1+n_{m}}+\sum_{m=0}^{\tilde{d}}\frac{a}{1+\tilde{n}_{m}}}\label{lemmaduichenfinal}\\
			\leq qB_q+O(\delta),\ d+{\tilde{d}}\to\infty.\nonumber
		\end{align}
	\end{lemma}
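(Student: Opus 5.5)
Write $S_m:=\sum_{j=0}^{q-1}V(n_m+j)\sin2\pi\theta(n_m+j)$ and similarly $\tilde S_m$. By \eqref{pqthetapathetan}, $S_m=\frac{a}{1+n_m}\big(P(\theta(n_m))-P_V(\theta(n_m),n_m)\big)+O(n_m^{-2})$. By \eqref{relationoftheta}, the drift is
\[
\theta(n_m+q)-\theta(n_m)-kq=\frac{a\big(T(\theta(n_m))+T_V(\theta(n_m),n_m)\big)}{(1+n_m)\pi\sin\pi k}+O(n_m^{-2}).
\]
The tilde terms satisfy the same relations. The $O(n^{-2})$ errors sum to $O(\delta)$ since $n_0>1/\delta$, and they are negligible against the denominator, which is $\geq a/\delta$ by \eqref{tiaohejishugeq}.

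Set $\varphi=[\theta(n_m)]$. Since $\varphi$ and the window centers stay at distance $\gg\delta$ from the points $\frac{j}{2q}$, the functions $P$, $T$ and $F(\varphi):=\frac{2\sin(\frac{\pi}{q}-2\pi[q\varphi]/q\cdots)}{\cdots}$ are Lipschitz on the windows. Here $F$ denotes the coefficient in \eqref{ghs0}, evaluated on the representative of $\varphi$ in $[0,\frac1q]$. The main pointwise step is a defect inequality:
\[
P_V(\varphi,n)\geq -F(\varphi)\,T_V(\varphi,n),\quad\text{i.e.}\quad \sum_j(\mathrm{sgn}_j-a_{n+l_j})\big(\sin(2\pi\varphi+\tfrac{2\pi j}{q})-F\sin^2\pi(\varphi+\tfrac jq)\big)\geq0 .
\]
Each term is nonnegative. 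When the sine is positive, $\mathrm{sgn}_j-a\geq0$ and the bracket is $\geq0$ by \eqref{qiuhe}. When the sine is negative, both factors are $\leq0$ by \eqref{qiuhe2}. The reduction from general $\varphi$ to $(0,\frac1q)$ uses the periodicity \eqref{periodPV} and \eqref{periodTV}, together with the reindexing $j\mapsto j+1$ under $\varphi\mapsto\varphi+\frac1q$. Combining with \eqref{ghs0},
\[
P-P_V\leq P+F\,T_V=qB_q+F\,(T+T_V).
\]
Thus
\[
S_m\leq \frac{a}{1+n_m}qB_q+F(\varphi_0)\,\pi\sin\pi k\,\big(\theta(n_m+q)-\theta(n_m)-kq\big)+O\!\left(\frac{\delta}{1+n_m}\right)+O(n_m^{-2}).
\]
The Lipschitz error from replacing $F(\varphi)$ by $F(\varphi_0)$ is $O(\delta)\cdot|T+T_V|\frac{a}{1+n_m}$, which is of the acceptable size.

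Next, use the reflection symmetry. For $\tilde\varphi_0=1-\varphi_0$, \eqref{pqduichen} and \eqref{tqduichen} extend, via $1$-periodicity, to $\varphi\mapsto-\varphi$: $P(-\varphi)=P(\varphi)$ and $T(-\varphi)=-T(\varphi)$. I also need the coefficient attached to the reflected window to satisfy $F(\tilde\varphi_0)=-F(\varphi_0)$. This should hold because reducing $1-\varphi_0$ to $[0,\frac1q]$ gives $\frac1q-\psi$ where $\psi$ is the reduction of $\varphi_0$, and $\sin(\frac\pi q-2\pi(\frac1q-\psi))=-\sin(\frac\pi q-2\pi\psi)$ while the cosine is even. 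This sign flip is the step to verify carefully. With $c:=F(\varphi_0)\pi\sin\pi k$, summing gives
\[
\sum_m S_m+\sum_m\tilde S_m\leq qB_q\Big(\sum\tfrac{a}{1+n_m}+\sum\tfrac{a}{1+\tilde n_m}\Big)+c\Big(\sum_m\Delta_m-\sum_m\tilde\Delta_m\Big)+O(\delta)\cdot(\text{denominator})+O(\delta),
\]
where $\Delta_m$ and $\tilde\Delta_m$ are the drifts.

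By \eqref{thetanmtildethetanmleq10}, the drift term is at most $10|c|=O_{\varphi_0}(1)$. Dividing by the denominator, which is $\geq a/\delta$, turns this into $O(\delta)$, and \eqref{lemmaduichenfinal} follows. The main obstacle is the sign and reduction bookkeeping. One must check that the one-sided inequalities \eqref{qiuhe} and \eqref{qiuhe2} transfer correctly to arbitrary representatives of $\varphi$ and of $1-\varphi$, so that the drift coefficient flips sign exactly and the two drift sums appear as a difference rather than a sum.
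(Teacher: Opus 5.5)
Your proof is correct, but it is organized differently from the paper's.

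\textbf{Shared ingredients.} Both arguments rest on the same three inputs: the identity \eqref{ghs0}, the coefficientwise bounds \eqref{qiuhe}--\eqref{qiuhe2}, and the sign flip of the coefficient $F$ under $\varphi\mapsto 1-\varphi$. Your verification of that flip is exactly the paper's \eqref{lem3.5equal}. In fact, once its positive denominator is cleared, the paper's key inequality \eqref{lem3.5final} is just the sum over both windows of your pointwise inequality $P_V+F\,T_V\ge 0$.

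\textbf{The paper's route.} It first freezes $P,T,P_V,T_V$ at $\varphi_0$ and $\tilde\varphi_0$. It then uses the drift constraint together with $T(\varphi_0)=-T(\tilde\varphi_0)\gg\delta$ to solve for the total harmonic weight as a quotient of $T_V$-sums. Finally it bounds the quotient $\bigl(\sum P_V+\sum\tilde P_V\bigr)/\bigl(\sum\tilde T_V-\sum T_V\bigr)$ from below by an extremal (mediant) argument. This requires a reduction to $\varphi_0$ in the left half of a $\frac1q$-cell, and the separate positivity statements \eqref{denominator>0} and \eqref{numerator>0}.

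\textbf{Your route.} You apply the linear inequality at the actual phase $[\theta(n_m)]$. The drift then enters only as the additive term $F(\varphi_0)\pi\sin\pi k\bigl(\sum\Delta_m-\sum\tilde\Delta_m\bigr)=O(1)$, which the weight bound \eqref{tiaohejishugeq} turns into $O(\delta)$ after division. You need only Lipschitz continuity of $F(\varphi)=2\tan\bigl(\frac{\pi}{2q}-\pi\psi\bigr)$ on the windows, where $\psi$ is the representative of $\varphi$ in $(0,\frac1q)$. There is no division by a $T$-quantity, no positivity bookkeeping, and no quantitative lower bound on $|T(\varphi_0)|$.

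\textbf{What each buys.} Your version is shorter and more robust. The paper's quotient form makes explicit the reading of $B_q$ as radial decay per unit of angular drift, and it identifies which choices of the $a_n$ are extremal; this is the picture the introduction uses to motivate the constant.
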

	
	\begin{proof}
		Suppose that $\varphi_0\in \left(\frac{j_0}{q},\frac{j_0+1}{q}\right)$ for some $j_0=0,1,\cdots,q-1$. We only need to prove the result for the case 
		\begin{align}
			\varphi_0\in \left(\frac{j_0}{q},\frac{j_0}{q}+\frac{1}{2q}\right),\label{ficase}
		\end{align}
		then one can obtain the result by interchanging $\varphi_0$ and $\tilde{\varphi}_0$ if $\varphi_0\in \left(\frac{j_0}{q}+\frac{1}{2q},\frac{j_0+1}{q}\right)$.
		
			Let us first consider the denominator of \eqref{lemmaduichenfinal}. Denote by 
$V(n)=\frac{a_na}{1+n}$
		with $\abs{a_n}\leq 1$, combining \eqref{relationoftheta} with \eqref{thetanmtildethetanmleq10}, as $ d+{\tilde{d}}\to\infty$,
		\begin{align}
			\sum_{m=0}^{d}\frac{T(\theta(n_{m}))+T_V(\theta(n_{m}),n_{m})}{1+n_{m}}-\sum_{m=0}^{\tilde{d}}\frac{T(\theta(\tilde{n}_{m}))+T_V(\theta(\tilde{n}_{m}),\tilde{n}_{m})}{1+\tilde{n}_{m}}=O(1).\label{smallthanOvarepsilonsum}
		\end{align}
			By \eqref{perioTvarphi} and \eqref{ssccs}, we know that $T(\varphi)$ is smooth near $\varphi_0, \tilde{\varphi}_0$. Then applying \eqref{definitionofTVtheta}, $T_V(\varphi,n)$ is smooth near $\varphi_0,\tilde{\varphi}_0$  with respect to $\varphi$. Therefore, by
		\eqref{theta0neighbour} and \eqref{thetatilde0neighbour}, and that $T(\varphi),T_V(\varphi,n)$ are both $1$-periodic in $\varphi$, one has that as $m\to\infty$, 
		\begin{align}
			T(\theta(n_m))=T(\varphi_0)+O(\delta),\ \  T_V(\theta(n_m),n_m)=T_V(\varphi_0,n_m)+O(\delta),\nonumber
		\end{align}
		and
		\begin{align}
			T(\theta(\tilde{n}_m))=T(\tilde{\varphi}_0)+O(\delta),\ \  T_V(\theta(\tilde{n}_m),\tilde{n}_m)=T_V(\tilde{\varphi}_0,\tilde{n}_m)+O(\delta). \nonumber
		\end{align}
		Therefore, by \eqref{smallthanOvarepsilonsum}, as $d+\tilde{d}\to\infty$,
		\begin{align}
			\sum_{m=0}^{d}\frac{T(\varphi_0)+T_V(\varphi_0,n_{m})}{1+n_{m}}-\sum_{m=0}^{\tilde{d}}\frac{T(\tilde{\varphi}_0)+T_V(\tilde{\varphi}_0,\tilde{n}_{m})}{1+\tilde{n}_{m}}\nonumber\\
			=O(1)+ \sum_{m=0}^{d}\frac{O(\delta)}{1+n_{m}}+\sum_{m=0}^{\tilde{d}}\frac{O(\delta)}{1+\tilde{n}_{m}}.\nonumber
		\end{align}
		Thus, by \eqref{tiaohejishugeq}, as $d+\tilde{d}\to\infty$,
		\begin{align}
		\sum_{m=0}^{d}\frac{T(\varphi_0)+T_V(\varphi_0,n_{m})}{1+n_{m}}-\sum_{m=0}^{\tilde{d}}\frac{T(\tilde{\varphi}_0)+T_V(\tilde{\varphi}_0,\tilde{n}_{m})}{1+\tilde{n}_{m}}\nonumber\\
			=O(\delta)\left( \sum_{m=0}^{d}\frac{1}{1+n_{m}}+\sum_{m=0}^{\tilde{d}}\frac{1}{1+\tilde{n}_{m}}\right),\nonumber
		\end{align}
		namely,
		\begin{align}
			\label{l3.5sumOvarepsilon}\sum_{m=0}^{\tilde{d}}\frac{T_V(\tilde{\varphi}_0,\tilde{n}_{m})}{1+\tilde{n}_{m}}-\sum_{m=0}^{d}\frac{T_V(\varphi_0,n_{m})}{1+n_{m}}=\sum_{m=0}^{d}\frac{T(\varphi_0)+O(\delta)}{1+n_{m}}-\sum_{m=0}^{\tilde{d}}\frac{T(\tilde{\varphi}_0)+O(\delta)}{1+\tilde{n}_{m}}.
		\end{align}
		By \eqref{perioTvarphi}, \eqref{tqduichen}, \eqref{ssccs} and \eqref{ficase}, we know that $T(\varphi_0)=-T(\tilde{\varphi}_0)>0$. By \eqref{ssccs} and \eqref{dlld1} we have $\delta\ll T(\varphi_0)$,  by \eqref{l3.5sumOvarepsilon}, we conclude that
		\begin{align}
			 \sum_{m=0}^{d}\frac{1}{1+n_{m}}+\sum_{m=0}^{\tilde{d}}\frac{1}{1+\tilde{n}_{m}}=\frac{\sum_{m=0}^{\tilde{d}}\frac{T_V(\tilde{\varphi}_0,\tilde{n}_{m})}{1+\tilde{n}_{m}}-\sum_{m=0}^{d}\frac{T_V(\varphi_0,n_{m})}{1+n_{m}}}{T(\varphi_0)+O(\delta)}.\label{l3.5denominator}
		\end{align}
Then we have
\begin{align}
	{\sum_{m=0}^{\tilde{d}}\frac{T_V(\tilde{\varphi}_0,\tilde{n}_{m})}{1+\tilde{n}_{m}}-\sum_{m=0}^{d}\frac{T_V(\varphi_0,n_{m})}{1+n_{m}}}>0.\label{denominator>0}
\end{align}
		
		Now let us consider the numerator of \eqref{lemmaduichenfinal} (denoted by $S$).
		By \eqref{pqthetapathetan} one has
		\begin{align}
			S=&\sum_{m=0}^{d}\left(\frac{aP(\theta(n_{m}))-aP_V(\theta(n_{m}),n_{m})}{1+n_{m}}+\frac{O(1)}{1+n_{m}^2}\right)\label{desfi}\\
			&+\sum_{m=0}^{\tilde{d}}\left(\frac{aP(\theta(\tilde{n}_{m}))-aP_V(\theta(\tilde{n}_{m}),\tilde{n}_{m})}{1+\tilde{n}_{m}}+\frac{O(1)}{1+\tilde{n}_{m}^2}\right).\nonumber
		\end{align}
		Since $n_0>\frac{1}{\delta}$ and $\tilde{n}_0>\frac{1}{\delta}$, we have
		\begin{align}
		\label{nmsquaredOvarepsilon}	\frac{1}{1+n_m^2}=\frac{O(\delta)}{1+n_m},\ m\to\infty,\\
		\label{nmtildesquaredOvarepsilon}	\frac{1}{1+\tilde{n}_m^2}=\frac{O(\delta)}{1+\tilde{n}_m},\ m\to\infty.
		\end{align}
	By \eqref{perioPvarphi} and \eqref{sscc}, we know that $P(\varphi)$ is smooth near $\varphi_0,\tilde{\varphi}_0$. Then by \eqref{definitionofPVtheta}, $P_V(\varphi,n)$ is smooth near $\varphi_0,\tilde{\varphi}_0$  with respect to $\varphi$. Therefore, from
	\eqref{theta0neighbour} and \eqref{thetatilde0neighbour}, and that $P(\varphi),P_V(\varphi,n)$ are both $1$-periodic in $\varphi$, one has
\begin{align}
	P(\theta(n_m))=P(\varphi_0)+O(\delta),\ \  P_V(\theta(n_m),n_m)=P_V(\varphi_0,n_m)+O(\delta),\ m\to\infty,\nonumber
\end{align}
and
\begin{align}
	P(\theta(\tilde{n}_m))=P(\tilde{\varphi}_0)+O(\delta),\ \  P_V(\theta(\tilde{n}_m),\tilde{n}_m)=P_V(\tilde{\varphi}_0,\tilde{n}_m)+O(\delta),\ m\to\infty. \nonumber
\end{align}		
Then by \eqref{desfi},
		\begin{align}
			S=\sum_{m=0}^{d}\frac{aP(\varphi_0)-aP_V(\varphi_0,n_{m})+O(\delta)}{1+n_{m}}+\sum_{m=0}^{\tilde{d}}\frac{aP(\tilde{\varphi}_0)-aP_V(\tilde{\varphi}_0,\tilde{n}_{m})+O(\delta)}{1+\tilde{n}_{m}},\ d+\tilde{d}\to\infty.\nonumber
		\end{align}
		By \eqref{perioPvarphi} and \eqref{pqduichen}, we know that $P(\varphi_0)=P(\tilde{\varphi}_0),$ then
		\begin{align}
			S=&\left(aP(\varphi_0)+O(\delta)\right)\left(\sum_{m=0}^{d}\frac{1}{1+n_{m}}+\sum_{m=0}^{\tilde{d}}\frac{1}{1+\tilde{n}_{m}}\right)-\sum_{m=0}^{d}\frac{aP_V(\varphi_0,n_{m})}{1+n_{m}}\label{l3.5numerator}\\
			&-\sum_{m=0}^{\tilde{d}}\frac{aP_V(\tilde{\varphi}_0,\tilde{n}_{m})}{1+\tilde{n}_{m}},\ d+\tilde{d}\to\infty.\nonumber
		\end{align}
		By \eqref{l3.5denominator} and \eqref{l3.5numerator}, we can obtain 
		\begin{align}
			&\frac{\sum_{m=0}^{d}\sum_{j=0}^{q-1}V(n_{m}+j)\sin2\pi\theta(n_{m}+j)+\sum_{m=0}^{\tilde{d}}\sum_{j=0}^{q-1}V(\tilde{n}_{m}+j)\sin2\pi\theta(\tilde{n}_{m}+j)}{\sum_{m=0}^{d}\frac{a}{1+n_{m}}+\sum_{m=0}^{\tilde{d}}\frac{a}{1+\tilde{n}_{m}}}\nonumber\\
			=&\frac{\left(aP(\varphi_0)+O(\delta)\right)\left(\sum_{m=0}^{d}\frac{1}{1+n_{m}}+\sum_{m=0}^{\tilde{d}}\frac{1}{1+\tilde{n}_{m}}\right)-\sum_{m=0}^{d}\frac{aP_V(\varphi_0,n_{m})}{1+n_{m}}
				-\sum_{m=0}^{\tilde{d}}\frac{aP_V(\tilde{\varphi}_0,\tilde{n}_{m})}{1+\tilde{n}_{m}}}{\sum_{m=0}^{d}\frac{a}{1+n_{m}}+\sum_{m=0}^{\tilde{d}}\frac{a}{1+\tilde{n}_{m}}}\nonumber\\
			=&P(\varphi_0)+O(\delta)-\frac{\sum_{m=0}^{d}\frac{P_V(\varphi_0,n_{m})}{1+n_{m}}+\sum_{m=0}^{\tilde{d}}\frac{P_V(\tilde{\varphi}_0,\tilde{n}_{m})}{1+\tilde{n}_{m}}}{{\sum_{m=0}^{\tilde{d}}\frac{T_V(\tilde{\varphi}_0,\tilde{n}_{m})}{1+\tilde{n}_{m}}-\sum_{m=0}^{d}\frac{T_V(\varphi_0,n_{m})}{1+n_{m}}}}(T(\varphi_0)+O(\delta)),\ d+\tilde{d}\to\infty.\label{l3.5middlefinal}
		\end{align}

		Since $\varphi_0-\frac{j_0}{q}\in \left(0,\frac{1}{q}\right)$, by \eqref{bqb} and \eqref{ghs0}, one has
		\begin{align}
		P\left(\varphi_0-\frac{j_0}{q}\right)-\frac{2\sin\left(\frac{\pi}{q}-2\pi\left(\varphi_0-\frac{j_0}{q}\right)\right)}{1+\cos\left(\frac{\pi}{q}-2\pi\left(\varphi_0-\frac{j_0}{q}\right)\right)}T\left(\varphi_0-\frac{j_0}{q}\right)=qB_q.\nonumber
		\end{align}
	    By \eqref{perioPvarphi} and \eqref{perioTvarphi}, one has $P(\varphi_0)=	P\left(\varphi_0-\frac{j_0}{q}\right), T(\varphi_0)=T\left(\varphi_0-\frac{j_0}{q}\right)$, then
	    \begin{align}
	    	P\left(\varphi_0\right)-\frac{2\sin\left(\frac{\pi}{q}-2\pi\left(\varphi_0-\frac{j_0}{q}\right)\right)}{1+\cos\left(\frac{\pi}{q}-2\pi\left(\varphi_0-\frac{j_0}{q}\right)\right)}T\left(\varphi_0\right)=qB_q.\label{l3.5ghs0}
	    \end{align}
	    Therefore, to prove \eqref{lemmaduichenfinal}, by \eqref{l3.5middlefinal}, \eqref{l3.5ghs0} and the fact $T(\varphi_0)>0$, we only need to show
	    		\begin{align}
	    		\inf_{\substack{\abs{a_n}\leq1\\ \text{subject to}\  \eqref{denominator>0}}}\frac{\sum_{m=0}^{d}\frac{P_V(\varphi_0,n_{m})}{1+n_{m}}+\sum_{m=0}^{\tilde{d}}\frac{P_V(\tilde{\varphi}_0,\tilde{n}_{m})}{1+\tilde{n}_{m}}}{{\sum_{m=0}^{\tilde{d}}\frac{T_V(\tilde{\varphi}_0,\tilde{n}_{m})}{1+\tilde{n}_{m}}-\sum_{m=0}^{d}\frac{T_V(\varphi_0,n_{m})}{1+n_{m}}}}\geq \frac{2\sin\left(\frac{\pi}{q}-2\pi\left(\varphi_0-\frac{j_0}{q}\right)\right)}{1+\cos\left(\frac{\pi}{q}-2\pi\left(\varphi_0-\frac{j_0}{q}\right)\right)}.\label{lem3.5final}
	    	\end{align}
Since $\varphi_0\in \left(\frac{j_0}{q},\frac{j_0}{q}+\frac{1}{2q}\right)$, and $\tilde{\varphi}_0=1-\varphi_0,$ we have that for any $j=0,1,\cdots,q-1$,
\begin{align}
	\sin \left(2\pi\varphi_0+\frac{2\pi}{q}j\right)\neq0,\ 	\sin \left(2\pi\tilde{\varphi}_0+\frac{2\pi}{q}j\right)\neq0.\nonumber
\end{align}
Denote by $J_0,\tilde{J}_0\subset\{0,\cdots,q-1\}$ so that 
\begin{align}
	\sin \left(2\pi\varphi_0+\frac{2\pi}{q}j\right)>0,\ \mathrm{for\ any}\ j\in J_0,\label{lem3.5sintheta0>0}\\
	\sin \left(2\pi\tilde{\varphi}_0+\frac{2\pi}{q}j\right)>0,\ \mathrm{for\ any}\ j\in \tilde{J}_0,\label{lem3.5sinthetatilde0>0}
\end{align}
and
\begin{align}
		\sin \left(2\pi\varphi_0+\frac{2\pi}{q}j\right)<0,\ \mathrm{for\ any}\ j\in J_0^c:= \{0,\cdots,q-1\}\setminus J_0,\label{lem3.5sintheta0<0}\\
	\sin \left(2\pi\tilde{\varphi}_0+\frac{2\pi}{q}j\right)<0,\ \mathrm{for\ any}\ j\in \tilde{J}_0^c:= \{0,\cdots,q-1\}\setminus\tilde{J}_0.\label{lem3.5sinthetatilde0<0}
\end{align}
By \eqref{definitionofP_qtheta} and \eqref{definitionofPVtheta}, for any $m$ one has
\begin{align}
	P_V(\varphi_0,n_m)=&\sum_{j=0}^{q-1}\abs{\sin\left(2\pi\varphi_0+\frac{2\pi}{q}j\right)}-\sum_{j=0}^{q-1}a_{n_m+l_j}\sin\left(2\pi\varphi_0+\frac{2\pi}{q}j\right)\nonumber\\
	=&\sum_{j\in J_0}(1-a_{n_m+l_j})\sin\left(2\pi\varphi_0+\frac{2\pi}{q}j\right)-\sum_{j\in J_0^c}(1+a_{n_m+l_j})\sin\left(2\pi\varphi_0+\frac{2\pi}{q}j\right)\label{PVexpansion}
\end{align}
and
\begin{align}
		P_V(\tilde{\varphi}_0,\tilde{n}_m)
	=\sum_{j\in \tilde{J}_0}(1-a_{\tilde{n}_m+l_j})\sin\left(2\pi\tilde{\varphi}_0+\frac{2\pi}{q}j\right)-\sum_{j\in \tilde{J}_0^c}(1+a_{\tilde{n}_m+l_j})\sin\left(2\pi\tilde{\varphi}_0+\frac{2\pi}{q}j\right).\label{PVtildeexpansion}
\end{align}
By $\abs{a_n}\leq 1$, one obtains that $1\pm a_{n_m+l_j}\geq 0$, $1\pm a_{\tilde{n}_m+l_j}\geq 0$. Therefore, \begin{align}
	P_V(\varphi_0,n_m)\geq 0,\ P_V(\tilde{\varphi}_0,\tilde{n}_m)\geq 0.\label{PVl0}
\end{align} 
Moreover, $P_V(\varphi_0,n_m)> 0$ if and only if there exists $j\in J_0$ such that $a_{n_m+l_j}<1$ or there exists $j\in J_0^c$ such that $a_{n_m+l_j}>-1$,  and $	P_V(\tilde{\varphi}_0,\tilde{n}_m)>0$ if and only if  there exists $j\in \tilde{J}_0$ such that $a_{\tilde{n}_m+l_j}<1$ or there exists $j\in \tilde{J}_0^c$ such that $a_{\tilde{n}_m+l_j}>-1$.

By \eqref{definitionofT_qtheta} and \eqref{definitionofTVtheta}, for any $m$ one has 
\begin{equation}
T_V(\varphi_0,n_m)
	=\sum_{j\in J_0}\left(a_{n_m+l_j}-1\right)\sin^2\pi\left(\varphi_0+\frac{j}{q}\right)+\sum_{j\in J_0^c}\left(1+a_{n_m+l_j}\right)\sin^2\pi\left(\varphi_0+\frac{j}{q}\right)\label{TVexpansion}
\end{equation}
and
\begin{align}
	T_V(\tilde{\varphi}_0,\tilde{n}_m)
	=\sum_{j\in \tilde{J}_0}(a_{\tilde{n}_m+l_j}-1)\sin^2\pi\left(\tilde{\varphi}_0+\frac{j}{q}\right)+\sum_{j\in \tilde{J}_0^c}(1+a_{\tilde{n}_m+l_j})\sin^2\pi\left(\tilde{\varphi}_0+\frac{j}{q}\right).\label{TVtildeexpansion}
\end{align}
Hence, we obtain
\begin{align}
	&\sum_{m=0}^{\tilde{d}}\frac{T_V(\tilde{\varphi}_0,\tilde{n}_{m})}{1+\tilde{n}_{m}}-\sum_{m=0}^{d}\frac{T_V(\varphi_0,n_{m})}{1+n_{m}}\nonumber\\
	=&\sum_{m=0}^{\tilde{d}}\frac{\sum_{j\in \tilde{J}_0^c}(1+a_{\tilde{n}_m+l_j})\sin^2\pi\left(\tilde{\varphi}_0+\frac{j}{q}\right)}{1+\tilde{n}_{m}}+\sum_{m=0}^{d}\frac{\sum_{j\in J_0}\left(1-a_{n_m+l_j}\right)\sin^2\pi\left(\varphi_0+\frac{j}{q}\right)}{1+n_{m}}\nonumber\\
	&-\sum_{m=0}^{\tilde{d}}\frac{\sum_{j\in \tilde{J}_0}(1-a_{\tilde{n}_m+l_j})\sin^2\pi\left(\tilde{\varphi}_0+\frac{j}{q}\right)}{1+\tilde{n}_{m}}-\sum_{m=0}^{d}\frac{\sum_{j\in J_0^c}\left(1+a_{n_m+l_j}\right)\sin^2\pi\left(\varphi_0+\frac{j}{q}\right)}{1+n_{m}}.\nonumber
\end{align}

Therefore, by \eqref{denominator>0}, $ 1\pm a_{n_m+l_j}\geq0$ and $1\pm a_{\tilde{n}_m+l_j}\geq0$, one has
\begin{align}
	&\sum_{m=0}^{\tilde{d}}\frac{\sum_{j\in \tilde{J}_0^c}(1+a_{\tilde{n}_m+l_j})\sin^2\pi\left(\tilde{\varphi}_0+\frac{j}{q}\right)}{1+\tilde{n}_{m}}+\sum_{m=0}^{d}\frac{\sum_{j\in J_0}\left(1-a_{n_m+l_j}\right)\sin^2\pi\left(\varphi_0+\frac{j}{q}\right)}{1+n_{m}}\nonumber\\
	&>\sum_{m=0}^{\tilde{d}}\frac{\sum_{j\in \tilde{J}_0}(1-a_{\tilde{n}_m+l_j})\sin^2\pi\left(\tilde{\varphi}_0+\frac{j}{q}\right)}{1+\tilde{n}_{m}}+\sum_{m=0}^{d}\frac{\sum_{j\in J_0^c}\left(1+a_{n_m+l_j}\right)\sin^2\pi\left(\varphi_0+\frac{j}{q}\right)}{1+n_{m}}\nonumber\\
	&\geq 0.\nonumber
\end{align}
Thus we can conclude that 
\begin{align}
	\mathrm{there\ exists\ }m\in\{0,1,\cdots,\tilde{d}\}\ &\mathrm{and}\ j\in \tilde{J}_0^c,\ \mathrm{such\ that}\ a_{\tilde{n}_m+l_j}>-1,\label{fis1}\\
	&\ \mathrm{or}\nonumber\\
	\mathrm{there\ exists\ }m\in\{0,1,\cdots,{d}\}\ &\mathrm{and}\ j\in {J}_0,\ \mathrm{such\ that}\ a_{{n}_m+l_j}<1.\label{fis2}
\end{align}
By \eqref{PVexpansion} and \eqref{PVtildeexpansion}, there exists $m\in\{0,1,\cdots,\tilde{d}\}$ such that $	P_V(\tilde{\varphi}_0,\tilde{n}_m)>0$, or there exists $m\in\{0,1,\cdots,{d}\}$ such that $P_V(\varphi_0,n_m)>0$. Therefore, by \eqref{PVl0},
\begin{align}
	\sum_{m=0}^{d}\frac{P_V(\varphi_0,n_{m})}{1+n_{m}}+\sum_{m=0}^{\tilde{d}}\frac{P_V(\tilde{\varphi}_0,\tilde{n}_{m})}{1+\tilde{n}_{m}}>0.\label{numerator>0}
\end{align}
By \eqref{denominator>0} and \eqref{numerator>0} we know that the numerator and denominator of 
\begin{align}
	\frac{\sum_{m=0}^{d}\frac{P_V(\varphi_0,n_{m})}{1+n_{m}}+\sum_{m=0}^{\tilde{d}}\frac{P_V(\tilde{\varphi}_0,\tilde{n}_{m})}{1+\tilde{n}_{m}}}{{\sum_{m=0}^{\tilde{d}}\frac{T_V(\tilde{\varphi}_0,\tilde{n}_{m})}{1+\tilde{n}_{m}}-\sum_{m=0}^{d}\frac{T_V(\varphi_0,n_{m})}{1+n_{m}}}}\nonumber
\end{align}
are both positive. To obtain the infimum of a fraction with positive numerator and denominator, we should let the denominator be large and the numerator be small as much as possible. Then by \eqref{PVexpansion} and \eqref{TVexpansion}, for any $m\in\{0,1,\cdots,d\}$, we should let  $a_{n_m+l_j}=-1$ for any $j\in J_0^c$. By \eqref{PVtildeexpansion} and \eqref{TVtildeexpansion},  for any $m\in\{0,1,\cdots,\tilde{d}\}$, we should let $a_{\tilde{n}_m+l_j}=1$ for  any $j\in \tilde{J}_0$. Therefore, by \eqref{PVexpansion}, \eqref{PVtildeexpansion}, \eqref{TVexpansion} and \eqref{TVtildeexpansion},
\begin{align}
	&\inf_{\substack{\abs{a_n}\leq1\\ \text{subject to}\  \eqref{denominator>0}}}\frac{\sum_{m=0}^{d}\frac{P_V(\varphi_0,n_{m})}{1+n_{m}}+\sum_{m=0}^{\tilde{d}}\frac{P_V(\tilde{\varphi}_0,\tilde{n}_{m})}{1+\tilde{n}_{m}}}{{\sum_{m=0}^{\tilde{d}}\frac{T_V(\tilde{\varphi}_0,\tilde{n}_{m})}{1+\tilde{n}_{m}}-\sum_{m=0}^{d}\frac{T_V(\varphi_0,n_{m})}{1+n_{m}}}}\label{lem3.5inf}\\
=&\inf_{\substack{\abs{a_n}\leq1\\ \text{subject to}\  \eqref{denominator>0}}}\frac{\sum_{m=0}^{d}\frac{\sum_{j\in J_0}(1-a_{n_m+l_j})\sin\left(2\pi\varphi_0+\frac{2\pi}{q}j\right)}{1+n_{m}}+\sum_{m=0}^{\tilde{d}}\frac{-\sum_{j\in \tilde{J}_0^c}(1+a_{\tilde{n}_m+l_j})\sin\left(2\pi\tilde{\varphi}_0+\frac{2\pi}{q}j\right)}{1+\tilde{n}_{m}}}{\sum_{m=0}^d\frac{\sum_{j\in J_0}\left(1-a_{n_m+l_j}\right)\sin^2\pi\left(\varphi_0+\frac{j}{q}\right)}{1+n_m} +\sum_{m=0}^{\tilde{d}}\frac{\sum_{j\in \tilde{J}_0^c}(1+a_{\tilde{n}_m+l_j})\sin^2\pi\left(\tilde{\varphi}_0+\frac{j}{q}\right)}{1+\tilde{n}_m}}.\nonumber
\end{align}
By \eqref{qiuhe}, \eqref{ficase} and \eqref{lem3.5sintheta0>0}, we can conclude that for any $j\in J_0$,
\begin{align}
	\frac{\sin\left(2\pi\varphi_0+\frac{2\pi}{q}j\right)}{\sin^2\pi\left(\varphi_0+\frac{j}{q}\right)}=	\frac{\sin\left(2\pi\left(\varphi_0-\frac{j_0}{q}\right)+\frac{2\pi}{q}(j+j_0)\right)}{\sin^2\pi\left(\varphi_0-\frac{j_0}{q}+\frac{j+j_0}{q}\right)}\geq \frac{2\sin\left(\frac{\pi}{q}-2\pi\left(\varphi_0-\frac{j_0}{q}\right)\right)}{1+\cos\left(\frac{\pi}{q}-2\pi\left(\varphi_0-\frac{j_0}{q}\right)\right)}.\nonumber
\end{align}
Hence, if there exists $m\in \{0,1,\cdots,d\}$ and $j\in J_0$ with $a_{n_m+l_j}<1$, then
\begin{align}
	\frac{\sum_{m=0}^{d}\frac{\sum_{j\in J_0}(1-a_{n_m+l_j})\sin\left(2\pi\varphi_0+\frac{2\pi}{q}j\right)}{1+n_{m}}}{\sum_{m=0}^d\frac{\sum_{j\in J_0}\left(1-a_{n_m+l_j}\right)\sin^2\pi\left(\varphi_0+\frac{j}{q}\right)}{1+n_m} }\geq \frac{2\sin\left(\frac{\pi}{q}-2\pi\left(\varphi_0-\frac{j_0}{q}\right)\right)}{1+\cos\left(\frac{\pi}{q}-2\pi\left(\varphi_0-\frac{j_0}{q}\right)\right)}.\label{lem3.5halflarge}
\end{align}
By \eqref{ficase},  $\tilde{\varphi}_0=1-\varphi_0\in\left(\frac{q-j_0-1}{q}+\frac{1}{2q},\frac{q-j_0}{q}\right)$. Then by \eqref{qiuhe2} and \eqref{lem3.5sinthetatilde0<0}, we can conclude that for any $j\in \tilde{J}_0^c$,
\begin{align}
	\frac{-\sin\left(2\pi\tilde{\varphi}_0+\frac{2\pi}{q}j\right)}{\sin^2\pi\left(\tilde{\varphi}_0+\frac{j}{q}\right)}=&	\frac{-\sin\left(2\pi\left(\tilde{\varphi}_0-\frac{q-j_0-1}{q}\right)+\frac{2\pi}{q}(j+q-j_0-1)\right)}{\sin^2\pi\left(\tilde{\varphi}_0-\frac{q-j_0-1}{q}+\frac{j+q-j_0-1}{q}\right)}\nonumber\\
	\geq& \frac{-2\sin\left(\frac{\pi}{q}-2\pi\left(\tilde{\varphi}_0-\frac{q-j_0-1}{q}\right)\right)}{1+\cos\left(\frac{\pi}{q}-2\pi\left(\tilde{\varphi}_0-\frac{q-j_0-1}{q}\right)\right)}.\nonumber
\end{align}
Hence, if there exists $m\in \{0,1,\cdots,\tilde{d}\}$ and $j\in \tilde{J}_0^c$ with $a_{\tilde{n}_m+l_j}>-1$, then
\begin{align}
	\frac{\sum_{m=0}^{\tilde{d}}\frac{-\sum_{j\in \tilde{J}_0^c}(1+a_{\tilde{n}_m+l_j})\sin\left(2\pi\tilde{\varphi}_0+\frac{2\pi}{q}j\right)}{1+\tilde{n}_{m}}}{\sum_{m=0}^{\tilde{d}}\frac{\sum_{j\in \tilde{J}_0^c}(1+a_{\tilde{n}_m+l_j})\sin^2\pi\left(\tilde{\varphi}_0+\frac{j}{q}\right)}{1+\tilde{n}_m}}\geq \frac{-2\sin\left(\frac{\pi}{q}-2\pi\left(\tilde{\varphi}_0-\frac{q-j_0-1}{q}\right)\right)}{1+\cos\left(\frac{\pi}{q}-2\pi\left(\tilde{\varphi}_0-\frac{q-j_0-1}{q}\right)\right)}.\label{lem3.5halfless}
\end{align}
By \eqref{ficase} and $\tilde{\varphi}_0=1-\varphi_0$,
\begin{align}
	\frac{\sin\left(\frac{\pi}{q}-2\pi\left(\varphi_0-\frac{j_0}{q}\right)\right)}{1+\cos\left(\frac{\pi}{q}-2\pi\left(\varphi_0-\frac{j_0}{q}\right)\right)}=\frac{-\sin\left(\frac{\pi}{q}-2\pi\left(\tilde{\varphi}_0-\frac{q-j_0-1}{q}\right)\right)}{1+\cos\left(\frac{\pi}{q}-2\pi\left(\tilde{\varphi}_0-\frac{q-j_0-1}{q}\right)\right)}.\label{lem3.5equal}
\end{align}
Combining with \eqref{fis1}  and \eqref{fis2}, by \eqref{lem3.5inf}-\eqref{lem3.5equal}, we can obtain \eqref{lem3.5final}.	\end{proof}
	
	\begin{remark}Conditions \eqref{theta0neighbour} and \eqref{thetatilde0neighbour}
	are mainly imposed to emphasize that $[\theta(n_m)]$ and $[\theta(\tilde{n}_m)]$ remain in the neighborhood of $\varphi_0$ and the neighborhood of $\tilde{\varphi}_0$, respectively. The factor $3\delta$ is chosen to be consistent with the following lemmas in Section \ref{secparpru}.
	In the proof, one can notice that one of the sets $\{n_{m}\}_{m=0}^{d}$, $\{\tilde{n}_{m}\}_{m=0}^{\tilde{d}}$ could be empty. For later reference, we state this result separately as Corollary \ref{zhongyaoyinli}.
\end{remark}

	\begin{corollary} \label{zhongyaoyinli}
		Let $a>0$ be fixed.	Suppose that $\theta(n)$ is determined by \eqref{ctt} with $\abs{V(n)}\leq\frac{a}{1+n}$. 
		Assume that $\varphi_0\in (0,1)\setminus \left\{\frac{j}{2q}\right\}_{j=0}^{2q}$ and 
	\begin{align}
	0<\delta\ll \min\left\{\abs{\varphi_0-\frac{j}{2q}}:j=0,1,\cdots,2q\right\}.\nonumber
\end{align} 
Let $\{n_m\}_{m=0}^d\subset\mathbb{N}$ satisfy
		\begin{align}
			\frac{1}{\delta}<n_0<n_1<n_2<\cdots<n_d,\nonumber
		\end{align}
		and be such that
		\begin{align}
			\label{theta0pmvarepsilon}\left[\theta(n_m)\right]\in (\varphi_0-3\delta,\varphi_0+3\delta),\ m=0,1,\cdots,d.
		\end{align}
		Then if
		\begin{align}
			\abs{\sum_{m=0}^d(\theta(n_m+q)-\theta(n_m)-kq)}\leq 10,\label{xydy1}
		\end{align}
		and 
		\begin{align}\label{dayuN}
			\sum_{m=0}^{d}\frac{1}{1+n_{m}}\geq \frac{1}{\delta},
		\end{align}		
		we have 
		\begin{align}\label{bqjia0}
			\frac{\sum_{m=0}^{d}\sum_{j=0}^{q-1}V(n_m+j)\sin2\pi\theta(n_m+j)}{\sum_{m=0}^{d}\frac{a}{1+n_m}}\leq qB_q+O(\delta) ,\ d\to\infty.
		\end{align}
	\end{corollary}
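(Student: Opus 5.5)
The plan is to rerun the proof of Lemma \ref{lemmaduichencancel} with the reflected family $\{\tilde n_m\}$ taken to be empty. The remark above already observes that nothing in that argument needs both families to be nonempty. One could try to invoke the lemma formally with $\tilde d=-1$, but it is cleaner to repeat the argument, so I will check each step in the one-sided setting.

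First I reduce by symmetry. Write $\varphi_0\in(\frac{j_0}{q},\frac{j_0+1}{q})$. If $\varphi_0$ lies in the first half $(\frac{j_0}{q},\frac{j_0}{q}+\frac{1}{2q})$, then $T(\varphi_0)>0$ by \eqref{ssccs} and \eqref{tqpdy0}. If it lies in the second half, then $T(\varphi_0)<0$, and every sign in the argument below simply flips; this is the same case distinction as \eqref{ficase}. In both cases $\delta\ll|T(\varphi_0)|$ by the choice of $\delta$, since $T$ is bounded away from $0$ at distance $\gg\delta$ from the points $\frac{j}{2q}$.

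Next I use \eqref{relationoftheta} together with \eqref{xydy1}. This gives $\sum_m\frac{T(\theta(n_m))+T_V(\theta(n_m),n_m)}{1+n_m}=O(1)$. By smoothness of $T$ and $T_V$ near $\varphi_0$ and by \eqref{theta0pmvarepsilon}, I replace $\theta(n_m)$ by $\varphi_0$ at cost $O(\delta)$ per term. I then absorb the $O(1)$ using \eqref{dayuN}. The result is
\[
\sum_m\frac{1}{1+n_m}=\frac{-\sum_m\frac{T_V(\varphi_0,n_m)}{1+n_m}}{T(\varphi_0)+O(\delta)},
\]
so in particular $-\sum_m T_V(\varphi_0,n_m)/(1+n_m)>0$.

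For the numerator, \eqref{pqthetapathetan} and the same smoothness argument, together with \eqref{nmsquaredOvarepsilon}, give
\[
S=(aP(\varphi_0)+O(\delta))\sum_m\frac{1}{1+n_m}-a\sum_m\frac{P_V(\varphi_0,n_m)}{1+n_m}.
\]
Dividing by $a\sum_m\frac{1}{1+n_m}$ produces the analogue of \eqref{l3.5middlefinal}. By \eqref{ghs0} (translated by $j_0/q$, as in \eqref{l3.5ghs0}), it then suffices to prove
\[
\frac{\sum_m P_V(\varphi_0,n_m)/(1+n_m)}{-\sum_m T_V(\varphi_0,n_m)/(1+n_m)}\geq\frac{2\sin(\frac{\pi}{q}-2\pi(\varphi_0-\frac{j_0}{q}))}{1+\cos(\frac{\pi}{q}-2\pi(\varphi_0-\frac{j_0}{q}))}.
\]

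To prove this ratio bound I use the expansions \eqref{PVexpansion} and \eqref{TVexpansion}, in which every coefficient $1\mp a_{n_m+l_j}$ is nonnegative. Positivity of the denominator forces some $a_{n_m+l_j}<1$ with $j\in J_0$, so the numerator is strictly positive. To lower the ratio one should take $a=-1$ on $J_0^c$: this removes the nonnegative $J_0^c$ contribution from the numerator and the unfavourable term from the denominator. After that, numerator and denominator are coefficient-weighted sums of $\sin(2\pi\varphi_0+\frac{2\pi j}{q})$ and of $\sin^2\pi(\varphi_0+\frac jq)$ over $j\in J_0$, and \eqref{qiuhe} bounds each termwise ratio from below by the required constant. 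The only delicate point, as in the lemma, is the reduction to $a=-1$ on $J_0^c$. I will justify it by noting that those terms add a nonnegative amount to the numerator and subtract a nonnegative amount from the denominator, and so can only increase the ratio.
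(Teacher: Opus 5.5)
Your proposal is correct and follows the paper's route. The paper obtains the corollary from the proof of Lemma \ref{lemmaduichencancel} with one family empty, and your ``flipped'' second-half case is exactly the lemma's computation for the reflected family: you interchange the roles of $J_0$ and $J_0^c$, take $a=+1$ on $J_0$, and use \eqref{qiuhe2} in place of \eqref{qiuhe}. Your monotonicity argument for discarding the $J_0^c$ terms (they only enlarge the numerator and shrink the denominator, which stays positive) is a cleaner justification of the paper's informal infimum step.
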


\section{Partitions of Slowly Varying Sequences}\label{sec4}
	Recall that $[s]$ and $(s,t)_{\mathrm{mod}}$ are defined by \eqref{demodx} and \eqref{demodint}, respectively.
Let $(x_n)_{n\in\mathbb{N}}$ be a real sequence with
\begin{align}
		x_{n+1}=x_n+\frac{O(1)}{1+n}.\label{xnbigo1}
	\end{align}
Denote the set of accumulation points of the sequence $(\left[x_n\right])_{n\in\mathbb{N}}$ by $A$. 

\begin{lemma}\label{casesxn}
	Suppose that the real sequence $(x_n)_{n\in\mathbb{N}}$
	 satisfies \eqref{xnbigo1}.
	Then for any $s,t\in A$ with $s\leq t$, we have either
	\begin{align}
		[s,t]\subset A,\nonumber
	\end{align}
	or
	\begin{align}
		[0,s]\cup[t,1]\subset A,\nonumber
	\end{align}
	where $[s,s]:=\{s\}$.
\end{lemma}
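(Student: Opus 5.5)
The plan is to use the slow variation $x_{n+1}-x_n\to 0$ to show that the projected orbit $[x_n]$ cannot hop over a region of the circle with a step of bounded size. Between visits near $s$ and visits near $t$, it must therefore sweep through one of the two arcs joining $s$ and $t$ on $\mathbb R/\mathbb Z$. These two arcs are $[s,t]$ and $[t,1]\cup[0,s]$. If $s=t$ the first alternative holds trivially, since $\{s\}\subset A$. So assume $s<t$.

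\textbf{Step 1: infinitely many crossings.} Fix $\epsilon>0$ small compared with $t-s$. Since $s,t\in A$, there are infinitely many indices $n$ with $[x_n]$ within $\epsilon$ of $s$ (distance on the circle), and infinitely many with $[x_n]$ within $\epsilon$ of $t$. We can therefore extract interlaced indices $m_1<n_1<m_2<n_2<\cdots$ with $m_i\to\infty$, where $[x_{m_i}]$ is $\epsilon$-close to $s$ and $[x_{n_i}]$ is $\epsilon$-close to $t$.

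\textbf{Step 2: each crossing sweeps an arc.} By \eqref{xnbigo1}, $|x_{n+1}-x_n|\le C/(1+n)$. Hence for $n\ge m_i$ the steps are smaller than $\eta_i:=C/(1+m_i)\to0$. Now follow the real sequence $x_n$ for $m_i\le n\le n_i$. Its lift moves from a point congruent to $s+O(\epsilon)$ to a point congruent to $t+O(\epsilon)$ in steps of size less than $\eta_i$. Consequently the set $\{[x_n]: m_i\le n\le n_i\}$ is $\eta_i$-dense in the image of the real interval between $x_{m_i}$ and $x_{n_i}$. That image contains, up to $O(\epsilon)$ at the ends, either the arc $[s,t]$ (travelling counterclockwise) or the arc $[t,1]\cup[0,s]$ (travelling clockwise). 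This is an intermediate-value argument for sequences with small steps. If the lift winds around more than once, the whole circle is covered, which is even better.

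\textbf{Step 3: pass to the limit in $i$, then in $\epsilon$.} One of the two alternatives in Step 2 occurs for infinitely many $i$. Say it is the first, which happens along a subsequence $i\in I_\epsilon$. Take any $y\in[s+2\epsilon,t-2\epsilon]$. For each $i\in I_\epsilon$ there is an index $n\in[m_i,n_i]$ with $[x_n]$ within $\eta_i$ of $y$. Since $\eta_i\to0$ and $n\ge m_i\to\infty$, $y$ is an accumulation point of $[x_n]$, so $y\in A$. Next let $\epsilon\to0$ along the sequence $\epsilon=1/j$. Some alternative occurs for infinitely many $j$, and both alternatives are nested as $\epsilon$ decreases. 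Because $A$ is closed, we obtain either $[s,t]\subset A$ or $[0,s]\cup[t,1]\subset A$. In the second case the endpoints $0$ and $1$ are identified, so $0\in A$ and $1$ corresponds to the same point.

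\textbf{Main obstacle.} The delicate point is Step 2: making precise the passage from a discrete sequence with small steps to covering a circular arc. It must handle the fractional-part wraparound at $0$ and the $O(\epsilon)$ endpoint tolerances. I would carry it out on the real lift $x_n$. There I would apply the elementary fact that a sequence with steps less than $\eta$ going from $\alpha$ to $\beta$ visits every $\eta$-neighbourhood of every point between $\alpha$ and $\beta$. I would then project, noting that the projection of a real interval of length less than $1$ is exactly one of the two arcs determined by its endpoints, and that an interval of length at least $1$ projects onto all of $S^1$.
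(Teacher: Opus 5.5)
Your argument is correct and uses the same mechanism as the paper's proof. Small steps of the lift mean $[x_n]$ cannot jump over a gap, so every passage from near $s$ to near $t$ must traverse one of the two arcs. The paper runs this by contradiction: it picks $y_1\in(s,t)\setminus A$ and $y_2\in[0,s)\cup(t,1]$ with $y_2\notin A$, and notes that once $[x_n]$ is near $s$ it is trapped in the arc between the two gaps, so it can never approach $t$. That route avoids your double limit in $i$ and $\epsilon$. The one point to tighten is that $\epsilon$ must be small relative to the circle distance $\min\{t-s,\,1-(t-s)\}$, not just $t-s$; the degenerate case $s=0$, $t=1$, where the second alternative holds trivially, should be set aside explicitly.
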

\begin{proof}
	Without loss of generality, we assume that $0<s<t<1.$
	
	Otherwise, there exist $y_1\in (s,t)$, and $y_2\in [0,s)\cup (t,1]$ such that $y_1\notin A, y_2\notin A$. Namely,  there exists some small $\varepsilon_0>0$ such that for any large $n$,
	\begin{align}
		\left[x_n\right]\notin (y_1-\varepsilon_0,y_1+\varepsilon_0),\label{y1lem41}
	\end{align}
	and 
	\begin{align}
		\left[x_n\right]\notin (y_2-\varepsilon_0,y_2+\varepsilon_0).\label{y2lem41}
	\end{align}
	Here we replace \eqref{y2lem41} with
	\begin{align}
		&\left[x_n\right]\notin (0,\varepsilon_0),&\mathrm{if}\ y_2=0,\nonumber\\
&\left[x_n\right]\notin (1-\varepsilon_0,1), &\mathrm{if}\ y_2=1.\nonumber
	\end{align}
	However, since $s\in A$, there exists some large enough $n_0$ with $\left[x_{n_0}\right]\in (s-\varepsilon_0,s+\varepsilon_0)$. Then by \eqref{xnbigo1}, \eqref{y1lem41} and \eqref{y2lem41}, for any $n>n_0$, $\left[x_n\right]$ will not approach $t$, which leads to a contradiction. 
\end{proof}

\begin{corollary}
		Suppose that the real sequence $(x_n)_{n\in\mathbb{N}}$
	satisfies \eqref{xnbigo1}, then there are four cases of $A$:
	
	\textbf{Case a.} There exists some $\varphi_0\in [0,1]$ such that $A=\{\varphi_0\}$ or $A=\{0,1\}$.

	\textbf{Case b.} 
$A=[b,c],\ 0\leq b<c\leq 1, b^2+(c-1)^2\neq 0.$
	
	\textbf{Case c.}
	$A=[0,d]\cup [e,1], \ 0\leq d<e\leq 1,d^2+(e-1)^2\neq0.$
	
		\textbf{Case d.}
	$A=[0,1].$
\end{corollary}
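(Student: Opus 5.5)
We derive the corollary from Lemma \ref{casesxn}, together with two standard facts about $A$. First, $A\neq\emptyset$ and $A$ is closed, since $[x_n]\in[0,1]$ is bounded and accumulation sets are closed. Second, because the identification $0\sim1$ is only implicit when working in $[0,1]$, we agree that $0\in A$ if and only if $1\in A$. More precisely, we regard $A$ as the accumulation set of $e^{2\pi i x_n}$ pulled back to $[0,1]$, with both endpoints included whenever either is. This convention is consistent with the fact that $[x_n]$ approaching $1^-$ means approaching $0$ on $S^1$.

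The plan is a case analysis on the structure of $A$.

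If $A$ is a single point of $S^1$, then $A=\{\varphi_0\}$ with $\varphi_0\in(0,1)$, or $A=\{0,1\}$; this is Case a.

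Otherwise, take any two distinct points $s<t$ of $A$. By Lemma \ref{casesxn}, either $[s,t]\subset A$ or $[0,s]\cup[t,1]\subset A$, so $A$ contains a nondegenerate arc of $S^1$ joining $s$ and $t$. Hence, viewed on $S^1$, $A$ is a closed set in which any two points are joined by an arc contained in $A$. Such a set is connected, and a closed connected subset of $S^1$ is either a point, a closed arc, or all of $S^1$. To verify this in the elementary setting of $[0,1]$, consider the complement $U=S^1\setminus A$, which is open.

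If $U=\emptyset$, then $A=[0,1]$; this is Case d.

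If $U\neq\emptyset$, we show that $U$ is a single open arc. Suppose instead that $U$ had two components. Then there would be points $y_1,y_2\in U$ separated on $S^1$ by points $s,t\in A$ lying on both sides. That is, each of the two arcs between $s$ and $t$ contains a point of $U$. This contradicts Lemma \ref{casesxn}, which puts one of these two arcs entirely inside $A$.

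So $A$ is the complement of one open arc, and therefore a closed arc of $S^1$. Translating back to $[0,1]$ gives two possibilities.

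If the arc avoids the identification point $0\sim1$, it has the form $[b,c]$ with $0<b<c<1$; more generally $[b,c]$ with $b=0$ or $c=1$ but not both, which is the condition $b^2+(c-1)^2\neq0$. This is Case b. The situation $b=0,c=1$ would be the full circle, which falls under Case d.

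If the arc crosses $0$ in its interior, it reads $[0,d]\cup[e,1]$ with $d<e$. The condition $d^2+(e-1)^2\neq0$ excludes the degenerate case $d=0,e=1$, which is $\{0,1\}$ and already belongs to Case a. This is Case c.

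The main point requiring care is the endpoint bookkeeping between $[0,1]$ and $S^1$. The conditions $b^2+(c-1)^2\neq0$ and $d^2+(e-1)^2\neq 0$ exist precisely to make the four cases mutually exclusive. I would handle this by doing the whole argument on $S^1$ and translating only at the end, which avoids any real difficulty.
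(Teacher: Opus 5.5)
Your topological core is correct, and it differs from the paper's route. You use Lemma \ref{casesxn} to show that the complement of the cluster set on $S^1$ is a single arc. The paper instead stays in $[0,1]$ and splits by $\#A$. It takes $b=\inf A$ and $c=\sup A$; when $b=0$ and $c=1$ it picks $y\notin A$ and sets $d=\sup\{x\in A:x<y\}$, $e=\inf\{x\in A:x>y\}$. It then applies Lemma \ref{casesxn} to the pairs $(b,c)$ and $(d,e)$.

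The gap is your convention ``$0\in A$ iff $1\in A$''. This is not a property of $A$; it replaces $A$ by a different set. $A$ is the cluster set of $[x_n]\in[0,1)$ inside $[0,1]$, where $0$ and $1$ are distinct points. That is why the statement lists $\{0\}$, $\{1\}$, $\{0,1\}$ separately, allows $b=0<c<1$ in Case b, and allows $e=1$ or $d=0$ in Case c.

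The cluster set on $S^1$ does not determine $A$. For $0<c<1$ and $n\geq 2$:
\begin{itemize}
\item $x_n=\frac{c}{2}(1+\sin\ln n)$ gives $A=[0,c]$, which is Case b.
\item $x_n=\frac{c}{2}(1+\sin\ln n)-\frac{1}{\ln n}$ gives $A=[0,c]\cup\{1\}$, which is Case c with $d=c$, $e=1$. Indeed $x_n<0$ and $x_n\to0^-$ along the indices where $\ln n$ is closest to $\frac{3\pi}{2}+2\pi k$.
\end{itemize}
Both sequences satisfy \eqref{xnbigo1} and have the same image on $S^1$. So no argument carried out purely on $S^1$ can decide between Case b and Case c here.

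Your back-translation is also inconsistent with your own convention. Under the convention, an arc with endpoint $0\sim1$ pulls back to $[0,c]\cup\{1\}$ or $\{0\}\cup[b,1]$, yet you list it as $[0,c]$ or $[b,1]$. Moreover, the Case c sets with $d=0$ or $e=1$ never appear in your enumeration, because arcs with $0\sim1$ in their interior give only $0<d<e<1$.

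The repair is a genuine extra step at the end. Drop the convention, write $\rho(x)=e^{2\pi i x}$, and run your argument on $\rho(A)$. Lemma \ref{casesxn} transfers: for preimages $s<t$ in $A$ with $\rho(s)\neq\rho(t)$, the two arcs joining $\rho(s)$ and $\rho(t)$ are $\rho([s,t])$ and $\rho([0,s]\cup[t,1])$. Then pull back using three facts: $A$ is closed, $A\cap(0,1)=\rho^{-1}(\rho(A))\cap(0,1)$, and $A\cap\{0,1\}\neq\emptyset$ iff $\rho(0)\in\rho(A)$. This gives:
\begin{itemize}
\item If $\rho(0)$ is interior to the arc, or $\rho(A)=S^1$, closedness forces $0,1\in A$. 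This is Case c with $0<d<e<1$, or Case d.
\item If $\rho(0)$ is an endpoint of the arc, then $A\in\{[0,c],\,[0,c]\cup\{1\}\}$ or $A\in\{[b,1],\,\{0\}\cup[b,1]\}$. These are Case b, or Case c with $e=1$ or $d=0$ respectively.
\item If $\rho(A)=\{\rho(0)\}$, then $A\in\{\{0\},\{1\},\{0,1\}\}$, all of which are Case a.
\end{itemize}
With this added your proof is complete. The paper's inf/sup argument never identifies $0$ with $1$, so it produces these endpoint variants automatically.
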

\begin{proof}
If the cardinality of $A$, $\# A=2$, then $A=\{0,1\}$. Otherwise, there exist $s,t\in A$ with $s<t$ and $s^2+(t-1)^2\neq 0$. By Lemma \ref{casesxn}, one has $[0,s]\cup [t,1]\subset A$ or $[s,t]\subset A$, which leads to $\#A=\infty$.

Assume that $\#A >2$ and $A\neq [0,1]$. Let $b=\inf A$ and $c=\sup A$.
If $b^2+(c-1)^2\neq 0$, then by Lemma \ref{casesxn}, $A=[b,c]$.
 If $b=0,c=1$, since $A\neq [0,1]$, there exists $y\in (0,1)$ such that $y\notin A$. Let $d=\sup\{x\in A:x<y\}$ and $e=\inf\{x\in A:x>y\}$, where $d^2+(e-1)^2\neq 0$ because of $\#A>2$. Then by Lemma \ref{casesxn}, $A=[0,d]\cup[e,1]$.
\end{proof}

We prove the following result to show that, in some sense, \textbf{Case b} and \textbf{Case c} are equivalent.

\begin{lemma}\label{lemequcd}
		Assume that the real sequence $(x_n)_{n\in\mathbb{N}}$ satisfies \eqref{xnbigo1}. If $A=[0,d]\cup [e,1], \ 0\leq d<e\leq 1,d^2+(e-1)^2\neq0,$ then the set of accumulation points of $([y_n])_{n\in\mathbb{N}}$ is $A'=[e+t-1,d+t]$, where $y_n=x_n+t, n\geq 1$ and $t\in (1-e,1-d)$ is fixed.
		If $A=[b,c],\ 0\leq b<c\leq 1,b^2+(c-1)^2\neq 0$, then the set of accumulation points of $([y_n])_{n\in\mathbb{N}}$ is $A'=[0,c-s]\cup [b-s+1,1]$, where $y_n=x_n-s, n\geq 1$ and $s\in (b,c)$ is fixed.
\end{lemma}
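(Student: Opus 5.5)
The argument is purely topological: translating the sequence by a constant $t$ rotates the circle $\mathbb R/\mathbb Z$, so the accumulation set of $([y_n])$ is the rotated image of $A$, read back in $[0,1]$. I would first record the elementary fact that for fixed $t$ the map $\rho_t:[x]\mapsto[x+t]$ is a homeomorphism of the circle. Hence $\varphi$ is an accumulation point of $([x_n])$ on the circle if and only if $\rho_t(\varphi)$ is an accumulation point of $([y_n])$. One subtlety is the convention used in the paper: accumulation points are taken in $[0,1]$, with $0$ and $1$ identified in the circle picture. A sequence $[x_n]$ accumulating at $0$ from above, or from below through values near $1$, gives $0$ or $1$ (or both) as accumulation points in $[0,1]$. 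So I would state the correspondence as follows: $A$, viewed as a closed subset of the circle, has image $\rho_t(A)$, and $A'$ is the preimage of $\rho_t(A)$ under $[0,1]\to S^1$, which is valid because $A$ is closed.

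For the first assertion, take $A=[0,d]\cup[e,1]$ with $t\in(1-e,1-d)$. On the circle, $A$ is the single arc from $e$ through $0\equiv1$ to $d$. Its lift to $\mathbb R$ is $[e-1,d]$, and it has length $d-e+1<1$ because $d<e$. Translating by $t$ gives $[e-1+t,d+t]$. The choice $t\in(1-e,1-d)$ gives $e-1+t>0$ and $d+t<1$, so this interval lies in $(0,1)$ and does not contain the identification point. Therefore its image under $[\cdot]$ is exactly $[e+t-1,d+t]\subset(0,1)$, and since neither $0$ nor $1$ is in it, the endpoint convention causes no ambiguity. I would verify the two inclusions directly. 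If $[x_{n_i}]\to\varphi\in A$ along a subsequence, then $[y_{n_i}]=[[x_{n_i}]+t]$ converges to the point of $[e+t-1,d+t]$ corresponding to $\varphi$; for $\varphi\in[e,1]$ use $\varphi+t-1$, and for $\varphi\in[0,d]$ use $\varphi+t$. This uses continuity of the fractional part away from integers, which holds because $\varphi+t$ and $\varphi+t-1$ stay away from $\mathbb Z$. Conversely, every accumulation point of $([y_n])$ is obtained the same way by applying $\rho_{-t}$.

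The second assertion is symmetric. With $A=[b,c]$ and $s\in(b,c)$, translating by $-s$ sends $[b,c]$ to $[b-s,c-s]$, an interval containing $0$ in its interior with $b-s>-1$ and $c-s<1$. Its image in $[0,1]$ is $[0,c-s]\cup[b-s+1,1]$. Both $0$ and $1$ now appear, because subsequences of $[x_n]$ converging to $s$ from either side, which exist since $s$ is interior to the interval $A$, produce $[y_n]$ approaching $0$ from above and $1$ from below. I would also check that the nondegeneracy conditions carry over. For example, $c-s<b-s+1$ holds since $c-b\le 1$, with $c-b=1$ excluded by $b^2+(c-1)^2\ne0$. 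This ensures $A'$ has the stated form.

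The main obstacle is bookkeeping at the identification point. One has to handle the fact that $[\cdot]$ is discontinuous at integers, and that $0$ and $1$ are treated as distinct points of $[0,1]$ in the definition of $A$. I would address this by working on the circle throughout and converting back only at the end, using closedness of $A$ and the existence of subsequences approaching from each side. No use of \eqref{xnbigo1} is needed beyond its role in guaranteeing that $A$ has the stated shapes.
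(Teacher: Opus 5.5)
Your proposal is correct and is essentially the paper's argument. The paper also transports the accumulation set under the shift: it gets $A'\subset$ (claimed set) from the eventual confinement of $[x_n]$ to a small neighborhood of $A$. It gets the reverse inclusion by shifting subsequences on the open pieces and then using that $A'$ is closed to pick up the endpoints, including $0$ and $1$ in the second assertion; that closedness step does the same job as your one-sided subsequences at $s$.

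One correction to your framing: $A'$ is not in general the full preimage in $[0,1]$ of the rotated circle set, since a sequence decreasing to $0$ has accumulation set $\{0\}$, not $\{0,1\}$. Closedness of $A$ is also not what makes the preimage description hold here. It holds because the rotated arc either misses the identification point (first assertion) or contains it in its interior (second assertion), and your case-by-case checks do verify exactly this.
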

\begin{proof}
	Suppose $A=[0,d]\cup [e,1]$, then for any positive $\varepsilon\ll \min\{1-(d+t),e+t-1\}$, there exists $N\in\mathbb{N}$ such that for any $n\geq N$,
	\begin{align}
		[x_n]\in [0,d+\varepsilon)\cup (e-\varepsilon,1).\nonumber
	\end{align}
	Therefore, for any $n\geq N$,
	\begin{align}
		[y_n]=[x_n+t]\in [t,d+t+\varepsilon)\cup(e+t-1-\varepsilon,t)=(e+t-1-\varepsilon,d+t+\varepsilon).\nonumber
	\end{align}
	Then $A'\subset [e+t-1,d+t]$.
	On the other hand, assume $y\in (e+t-1,d+t)$. If $y\in (t,d+t)$, namely, $y-t\in (0,d)\subset A$. Then for any small $\varepsilon>0$, there exists $m\in\mathbb{N}$ such that
	\begin{align}
		[x_m]\in (y-t-\varepsilon,y-t+\varepsilon).\nonumber
	\end{align}
	Therefore,
	\begin{align}
		[y_m]=[x_m+t]\in (y-\varepsilon,y+\varepsilon),\nonumber
	\end{align}
	from which we have $y\in A'$. Namely, $(t,d+t)\subset A'$. Similarly, $(e+t-1,t]\subset A'$.
  Since $A'$ is closed, we can obtain the desired result.
	
	For the second assertion, suppose $A=[b,c]$. Without loss of generality, assume $0\leq b<c<1$, then for any positive $\varepsilon\ll\min\{s-b,c-s\}$, there exists $N\in\mathbb{N}$ such that for any $n\geq N$,
	\begin{align}
		&[x_n]\in [0,c+\varepsilon),\ &\mathrm{if}\ b=0,\nonumber\\
		&[x_n]\in (b-\varepsilon,c+\varepsilon),\ &\mathrm{if}\ b>0.\nonumber
	\end{align}
	Therefore, for any $n\geq N$,
	\begin{align}
		&[y_n]=[x_n-s]\in [0,c-s+\varepsilon)\cup [1-s,1],\ &\mathrm{if}\ b=0,\nonumber\\
		&[y_n]=[x_n-s]\in [0,c-s+\varepsilon)\cup (b-s+1-\varepsilon,1],\ &\mathrm{if}\ b>0,\nonumber
	\end{align}
which leads to $A'\subset [0,c-s]\cup [b-s+1,1]$. On the other hand, assume $y\in (0,c-s)\cup (b-s+1,1)$. If $y\in (0,c-s)$, then $y+s\in (s,c)\subset A$. Therefore, for any small $\varepsilon>0$, there exists $m\in \mathbb{N}$ such that
\begin{align}
	[x_m]\in (y+s-\varepsilon,y+s+\varepsilon).\nonumber
\end{align}
Thus
\begin{align}
	[y_m]=[x_m-s]\in (y-\varepsilon,y+\varepsilon).\nonumber
\end{align}
This leads to $(0,c-s)\subset A'$. In a similar way, one can show $(b-s+1,1)\subset A'$. Combined with the fact that $A'$ is closed, one can obtain the desired result.
\end{proof}

Recall that $\R/\Z$ is homeomorphic to the unit circle $S^1$ via the homeomorphism $[x]\to e^{2\pi i x}$, we do not distinguish between elements in the two spaces and denote them uniformly by $[x]$. From this point of view, one can see more clearly that \textbf{Case b} and \textbf{Case c} are equivalent.
\begin{figure}[H]
	\centering

	\tikzset{every picture/.style={line width=0.75pt}} 
	
	\begin{tikzpicture}[x=0.75pt,y=0.75pt,yscale=-1,xscale=1]
		
		\draw   (136,133.5) .. controls (136,108.37) and (156.37,88) .. (181.5,88) .. controls (206.63,88) and (227,108.37) .. (227,133.5) .. controls (227,158.63) and (206.63,179) .. (181.5,179) .. controls (156.37,179) and (136,158.63) .. (136,133.5) -- cycle ;
		\draw  [fill={rgb, 255:red, 208; green, 2; blue, 27 }  ,fill opacity=1 ] (133,132) .. controls (133,130.62) and (134.12,129.5) .. (135.5,129.5) .. controls (136.88,129.5) and (138,130.62) .. (138,132) .. controls (138,133.38) and (136.88,134.5) .. (135.5,134.5) .. controls (134.12,134.5) and (133,133.38) .. (133,132) -- cycle ;
		\draw   (326,130.5) .. controls (326,105.37) and (346.37,85) .. (371.5,85) .. controls (396.63,85) and (417,105.37) .. (417,130.5) .. controls (417,155.63) and (396.63,176) .. (371.5,176) .. controls (346.37,176) and (326,155.63) .. (326,130.5) -- cycle ;
		\draw  [fill={rgb, 255:red, 208; green, 2; blue, 27 }  ,fill opacity=1 ] (414,129) .. controls (414,127.62) and (415.12,126.5) .. (416.5,126.5) .. controls (417.88,126.5) and (419,127.62) .. (419,129) .. controls (419,130.38) and (417.88,131.5) .. (416.5,131.5) .. controls (415.12,131.5) and (414,130.38) .. (414,129) -- cycle ;
		
		\draw (110,205) node [anchor=north west][inner sep=0.75pt]   [align=left] {$A=\{\varphi_0\}$\ with $\varphi_0\in [0,1]$};
		\draw (335,205) node [anchor=north west][inner sep=0.75pt]   [align=left] {$A=\{0,1\}$};
		\draw (231,115) node [anchor=north west][inner sep=0.75pt]   [align=left] {$[0]$};
		\draw (424,115) node [anchor=north west][inner sep=0.75pt]   [align=left] {$[0]$};
		\draw (208,75) node [anchor=north west][inner sep=0.75pt]   [align=left] {$S^1$};
		\draw (409,78) node [anchor=north west][inner sep=0.75pt]   [align=left] {$S^1$};
		\draw (100,120) node [anchor=north west][inner sep=0.75pt]   [align=left] {$[\varphi_0]$};
		
	\end{tikzpicture}
	\caption{}
	\label{fcasea}
\end{figure}

\begin{figure}[H]
	\centering

\tikzset{every picture/.style={line width=0.75pt}} 

\begin{tikzpicture}[x=0.75pt,y=0.75pt,yscale=-1,xscale=1]
	
	\draw   (264,98.5) .. controls (264,73.37) and (284.37,53) .. (309.5,53) .. controls (334.63,53) and (355,73.37) .. (355,98.5) .. controls (355,123.63) and (334.63,144) .. (309.5,144) .. controls (284.37,144) and (264,123.63) .. (264,98.5) -- cycle ;
	\draw   (117,101.5) .. controls (117,76.37) and (137.37,56) .. (162.5,56) .. controls (187.63,56) and (208,76.37) .. (208,101.5) .. controls (208,126.63) and (187.63,147) .. (162.5,147) .. controls (137.37,147) and (117,126.63) .. (117,101.5) -- cycle ;
	\draw [color={rgb, 255:red, 208; green, 2; blue, 27 }  ,draw opacity=1 ][line width=3]    (138,140.5) .. controls (108,122.5) and (114,74.5) .. (137,63.5) ;
	\draw [color={rgb, 255:red, 208; green, 2; blue, 27 }  ,draw opacity=1 ][line width=3]    (333,137) .. controls (355,124) and (370,87) .. (330,58) ;
	\draw   (420,97.5) .. controls (420,72.37) and (440.37,52) .. (465.5,52) .. controls (490.63,52) and (511,72.37) .. (511,97.5) .. controls (511,122.63) and (490.63,143) .. (465.5,143) .. controls (440.37,143) and (420,122.63) .. (420,97.5) -- cycle ;
	\draw  [color={rgb, 255:red, 208; green, 2; blue, 27 }  ,draw opacity=1 ][line width=2.25]  (420,97.5) .. controls (420,72.37) and (440.37,52) .. (465.5,52) .. controls (490.61,54) and (511,72.37) .. (511,97.5) .. controls (511,122.63) and (490.63,143) .. (465.5,143) .. controls (440.37,143) and (420,122.63) .. (420,97.5) -- cycle ;
	
	\draw (130,180) node [anchor=north west][inner sep=0.75pt]   [align=left] {$A=[b,c]$};
	\draw (260,180) node [anchor=north west][inner sep=0.75pt]   [align=left] {$A=[0,d]\cup [e,1]$};
	\draw (212,88) node [anchor=north west][inner sep=0.75pt]   [align=left] {$[0]$};
	\draw (361,88) node [anchor=north west][inner sep=0.75pt]   [align=left] {$[0]$};
	\draw (199,50) node [anchor=north west][inner sep=0.75pt]   [align=left] {$S^1$};
	\draw (351,50) node [anchor=north west][inner sep=0.75pt]   [align=left] {$S^1$};
	\draw (435,180) node [anchor=north west][inner sep=0.75pt]   [align=left] {$A=[0,1]$};
	\draw (524,88) node [anchor=north west][inner sep=0.75pt]   [align=left] {$[0]$};
	\draw (518,50) node [anchor=north west][inner sep=0.75pt]   [align=left] {$S^1$};
		\draw (125,35) node [anchor=north west][inner sep=0.75pt]   [align=left] {$[c]$};
		\draw (318,35) node [anchor=north west][inner sep=0.75pt]   [align=left] {$[e]$};
		\draw (125,145) node [anchor=north west][inner sep=0.75pt]   [align=left] {$[b]$};
		\draw (318,145) node [anchor=north west][inner sep=0.75pt]   [align=left] {$[d]$};
	
\end{tikzpicture}
	\caption{}
	\label{fcasebc}
\end{figure}

\begin{proposition}\label{propconfi}
In \textbf{Cases a}, \textbf{b} and \textbf{c},  $(x_n)_{n\geq N}$
is confined to an interval with length less than $1$ for large $N$.
\end{proposition}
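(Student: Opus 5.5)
The plan is to find, in each of Cases a, b, c, a point of $S^1$ that the tail of $([x_n])$ eventually avoids with a margin, and then use slow variation to confine the real sequence itself.

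\emph{Step 1: a gap on the circle.}
\begin{itemize}
\item In Case b, $A=[b,c]$ with $b^2+(c-1)^2\neq0$, so $A\neq[0,1]$ and, as a subset of $S^1$, $A$ is an arc missing some point $y\in S^1$.
\item In Case c, $A=[0,d]\cup[e,1]$ with $d<e$, so any $y\in(d,e)$ works.
\item In Case a, $A$ is a single point of $S^1$ (recall $\{0,1\}$ is one point of the circle), and any other $y$ works.
\end{itemize}
In every case, pick such a $y$ and an $\varepsilon_0>0$ with $(y-2\varepsilon_0,y+2\varepsilon_0)_{\mathrm{mod}}\cap A=\emptyset$. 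Since $A$ is the accumulation set of $([x_n])$ and $S^1$ is compact, there is $N_1$ such that
\[
[x_n]\notin (y-\varepsilon_0,y+\varepsilon_0)_{\mathrm{mod}}\quad\text{for all } n\geq N_1 .
\]
Otherwise infinitely many terms would lie in this closed-ish window and would accumulate at a point of $A$ inside it, which is impossible.

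\emph{Step 2: slow variation forbids jumping the gap.} By \eqref{xnbigo1}, $|x_{n+1}-x_n|\leq \frac{C}{1+n}$. Choose $N\geq N_1$ with $\frac{C}{1+N}<\varepsilon_0$. Then for all $n\geq N$, consecutive terms $x_n,x_{n+1}$ differ by less than the width $2\varepsilon_0$ of the forbidden window. Consider the lifts of the window in $\mathbb{R}$, namely the intervals $(y+m-\varepsilon_0,\,y+m+\varepsilon_0)$ for $m\in\mathbb{Z}$. To pass from one side of such an interval to the other, the sequence would need a step of length at least $2\varepsilon_0$, or would need to land inside it. Neither is possible for $n\geq N$. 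Hence there is a fixed $m$ with
\[
x_n\in\bigl(y+m+\varepsilon_0,\; y+m+1-\varepsilon_0\bigr)\quad\text{for all } n\geq N,
\]
and this interval has length $1-2\varepsilon_0<1$, which is the claim.

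\emph{Main obstacle.} The only delicate point is Step 1: one must make sure that in Cases b and c the set $A$ really misses an open arc of $S^1$, and not just a point of $[0,1]$ such as $0$ versus $1$. This is exactly where the conditions $b^2+(c-1)^2\neq0$ and $d<e$ are used, together with the identification $0\sim1$. Once a genuine open gap is secured, Step 2 is immediate.
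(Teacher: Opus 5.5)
Your proof is correct and is essentially the paper's argument: the paper confines the tail of $([x_n])$ to a small neighborhood of $A$, a proper arc of $S^1$, and then uses \eqref{xnbigo1} to pin $x_n$ to a single lift. You phrase the same mechanism through the complementary gap, which lets you treat Cases a, b, c and the boundary subcases ($\{0\}$, $\{0,1\}$, $b=0$, $c=1$) uniformly, without reducing Case c to Case b via Lemma \ref{lemequcd}. The only extra thing the paper's case-by-case version gives is the sharper interval $(b+m-\varepsilon,c+m+\varepsilon)$ quoted at the start of Section \ref{pcb}, and that also follows from yours because your lift projects injectively to $S^1$ (strictly, your final interval should be the closed one $[y+m+\varepsilon_0,\,y+m+1-\varepsilon_0]$, which changes nothing).
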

\begin{proof}
	If $A=\{\varphi_0\}$ for some $\varphi_0\in (0,1)$, by the definition, there exist  $0<\varepsilon\ll \min\{\varphi_0,1-\varphi_0\}$ and $N\in\mathbb{N}$, such that for any $n\geq N$,
	\begin{align}
		[x_n]\in (\varphi_0-\varepsilon,\varphi_0+\varepsilon).\nonumber
	\end{align}
Combined with \eqref{xnbigo1}, by letting $N$ be large enough, there exists some $m\in\Z$ such that for any $n\geq N$,
\begin{align}
	x_n\in (\varphi_0+m-\varepsilon,\varphi_0+m+\varepsilon),\nonumber
\end{align}
which leads to the result. If $A=\{0\}$ or $A=\{1\}$ or $A=\{0,1\}$, the proofs are similar and omitted.
By Lemma \ref{lemequcd},  \textbf{Case b} and \textbf{Case c} are essentially the same, therefore we only need to prove the result for \textbf{Case b}, in which $A=[b,c],0\leq b<c\leq 1$ with $b^2+(c-1)^2\neq 0$. Without loss of generality, we assume $0\leq b<c<1$. Then by the definition, there exists positive $\varepsilon\ll \min \{b,1-c\}$ if $b>0$ or $\varepsilon\ll 1-c$ if $b=0$, and $N\in\mathbb{N}$, such that for any $n\geq N$,
\begin{align}
	[x_n]&\in (b-\varepsilon,c+\varepsilon),\ &\mathrm{if}\ b>0,\nonumber\\
	[x_n]&\in [0,c+\varepsilon),\ &\mathrm{if}\ b=0\nonumber.
\end{align}
Combined with \eqref{xnbigo1}, by letting $N$ be large enough, there exists some $m\in\Z$ such that for any $n\geq N$,
\begin{align}
	x_n&\in (b+m-\varepsilon,c+m+\varepsilon),\ &\mathrm{if}\ b>0,\nonumber\\
	x_n&\in [m,c+m+\varepsilon),\ &\mathrm{if}\ b=0,\nonumber
\end{align}
which leads to the result.
\end{proof}

\begin{theorem}[\textbf{Case a}]\label{casea}
Assume that the real sequence $(x_n)_{n\in\mathbb{N}}$ satisfies \eqref{xnbigo1} with $A=\{\varphi_0\},\varphi_0\neq 0,1$. Let $\delta>0$ be small.  Then there exists large $n_\delta\in\mathbb{N}$ with $n_\delta\gg \frac{1}{\delta}$, such that 
for any $n\geq n_\delta$, one has
\begin{align}
	\left[x_n\right]\in (\varphi_0-3\delta,\varphi_0+3\delta),\label{caseaaiinuitilde}
\end{align}
and for any  $N\in\mathbb{N}$ large enough,
\begin{align}
	\abs{\sum_{n=n_\delta}^{n_\delta+N}(x_{n+1}-x_n)}\leq 10,\label{caseaaiminusbileq10}
\end{align}
\begin{align}
\sum_{n=n_\delta}^{n_\delta+N}\frac{1}{1+n}\gg \frac{1}{\delta}.\label{caseaIifractionlarge}
\end{align}
\end{theorem}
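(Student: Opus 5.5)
The plan is to obtain all three properties from the confinement argument already used in Proposition \ref{propconfi}, applied with the neighbourhood of radius $3\delta$ in place of $\varepsilon$. Since $\varphi_0\in(0,1)$ and $\delta$ is small, we may assume $3\delta<\min\{\varphi_0,1-\varphi_0\}$. Then the window $(\varphi_0-3\delta,\varphi_0+3\delta)$ is an honest subinterval of $(0,1)$ that does not contain the identification point. Because $A=\{\varphi_0\}$ is the full accumulation set of $([x_n])$ on the compact circle, only finitely many $n$ can have $[x_n]$ outside this open window. So there is $N_1$ with $[x_n]\in(\varphi_0-3\delta,\varphi_0+3\delta)$ for all $n\ge N_1$. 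This is \eqref{caseaaiinuitilde} for any $n_\delta\ge N_1$.

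Next we lift the window to $\mathbb{R}$. By \eqref{xnbigo1} there is a constant $C$ with $|x_{n+1}-x_n|\le C/(1+n)$. Enlarging $N_1$, we may assume $C/(1+n)<1-6\delta$ for $n\ge N_1$. For each $n\ge N_1$ there is a unique integer $m_n$ with $x_n\in(\varphi_0+m_n-3\delta,\varphi_0+m_n+3\delta)$. If $m_{n+1}\neq m_n$, then $|x_{n+1}-x_n|\ge 1-6\delta$, which contradicts the bound above. Hence $m_n\equiv m$ is constant for $n\ge N_1$, and $x_n$ stays in an interval of length $6\delta$. Then for any $n_\delta\ge N_1$ and any $N$, the telescoping sum satisfies
$$\Big|\sum_{n=n_\delta}^{n_\delta+N}(x_{n+1}-x_n)\Big|=|x_{n_\delta+N+1}-x_{n_\delta}|<6\delta\le 10,$$
which gives \eqref{caseaaiminusbileq10}. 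This holds for every $N$, not only for large $N$.

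Finally, choose $n_\delta\ge\max\{N_1,\lceil\delta^{-2}\rceil\}$, so that $n_\delta\gg1/\delta$. Property \eqref{caseaIifractionlarge} then follows from the divergence of the harmonic series: $\sum_{n=n_\delta}^{n_\delta+N}\frac{1}{1+n}\ge\ln\frac{n_\delta+N+2}{n_\delta+1}\to\infty$ as $N\to\infty$. So the sum exceeds any prescribed multiple of $1/\delta$ once $N$ is large enough, depending on $\delta$ and $n_\delta$.

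There is no genuine obstacle here. The only point needing care is the lifting step, which is exactly why $\varphi_0\neq0,1$ is assumed: the window must avoid the point $0\equiv1$ so that the fractional part is continuous on it, and the step size must be small compared with the gap $1-6\delta$.
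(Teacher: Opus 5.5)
Your proposal is correct and follows essentially the same route as the paper. The common steps are: eventual confinement of $[x_n]$ to the $3\delta$-window, lifting to a single integer translate $(\varphi_0+m-3\delta,\varphi_0+m+3\delta)$ using the small step size from \eqref{xnbigo1}, telescoping the sum, and divergence of the harmonic series. You spell out the lifting argument (constancy of $m_n$, using $|x_{n+1}-x_n|<1-6\delta$), which the paper asserts in one line.
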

\begin{proof}
The existence of $n_\delta$ for \eqref{caseaaiinuitilde} is obvious. Let $n_\delta$ be large enough, by \eqref{xnbigo1} and \eqref{caseaaiinuitilde}, there exists $m\in \mathbb{Z}$ such that for any $n\geq n_\delta$, 
\begin{align}
x_n\in (\varphi_0+m-3\delta,\varphi_0+m+3\delta).\nonumber
\end{align}
Then by
$\sum_{n=n_\delta}^{n_\delta+N}(x_{n+1}-x_n)=x_{n_\delta+N+1}-x_{n_\delta}$ one can obtain \eqref{caseaaiminusbileq10}. Let $N$ be large, we have \eqref{caseaIifractionlarge}.
\end{proof}

Let us consider the latter three cases. Since the proofs of \textbf{Case b} and \textbf{Case d} are extremely hard and technical, we state only the results here and provide the proofs separately in the next two sections.

We first consider \textbf{Case b}. Recall
\begin{align}
	A=[b,c],\ 0\leq b<c\leq 1, b^2+(c-1)^2\neq 0.\nonumber
\end{align}
For any small $\delta>0$, let $K_1=K_1(b,c,\delta):=\left \lfloor \frac{c-b}{\delta} \right \rfloor$, where $\left \lfloor \cdot \right \rfloor$ is the floor function. For any $i=1,\cdots,K_1-1$, let
\begin{align}
	\alpha_i=\alpha_i(\delta)=c-i\delta,\label{dealphai}
\end{align}
and
\begin{align}
	Q_i(\delta)=(\alpha_i-3\delta,\alpha_i+3\delta)_{\mathrm{mod}}.\label{deofVi}
\end{align}

\begin{theorem}[\textbf{Case b}]\label{caseb}
	Assume that the real sequence $(x_n)_{n\in\mathbb{N}}$ satisfies \eqref{xnbigo1} with $A=[b,c],0\leq b<c\leq 1, b^2+(c-1)^2\neq 0$. Let $\delta>0$ be small and $K_1=\left \lfloor \frac{c-b}{\delta} \right \rfloor$. Then there exists large $n_\delta\in\mathbb{N}$ with $n_\delta\gg \frac{1}{\delta}$, such that for any $N\in\mathbb{N}$ large enough, we can divide $I:=\{n\}_{n=n_\delta}^{n_\delta+N}$ into $(K_1-1)$ disjoint subsets $I_i, i=1,2,\cdots,K_1-1,$ and for any $i=1,2,\cdots,K_1-1$, one has
	\begin{align}
		{[x_n]}&\in Q_i(\delta),\ \mathrm{for\ any}\ n\in I_i,\label{aiinvi}\\
		&\abs{\sum_{n\in I_i}(x_{n+1}-x_n)}\leq 10,\label{Iileq10}
	\end{align}
	and 
	\begin{align}
		\sum_{n\in I_i}\frac{1}{1+n}\gg \frac{1}{\delta}.\label{Iifractionlargecasec}
	\end{align}
\end{theorem}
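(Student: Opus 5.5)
Without loss of generality I will work in \textbf{Case b} with $0\le b<c<1$; the case $c=1$, $b>0$ can be reduced to this one by a translation, as in Lemma \ref{lemequcd}. By Proposition \ref{propconfi}, after fixing $n_\delta$ large and subtracting an integer, I may assume that $x_n\in(b-\varepsilon,c+\varepsilon)$ for all $n\ge n_\delta$, with $\varepsilon\ll\delta$. This lifts the problem to the real line, so the fractional part is no longer needed. Since $|x_{n+1}-x_n|=O(1/n)\ll\delta$, consecutive steps are tiny compared to the window width $6\delta$. I also note that the windows $(\alpha_i-3\delta,\alpha_i+3\delta)$ overlap and together cover $(b+2\delta,c+2\delta)$; the strips near $b$ and $c$ need extra care.

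\textbf{Assignment by levels.} Each index $n$ is assigned to a window by comparing $x_n$ with the level grid $\alpha_i$. Concretely, set $n\in I_i$ if $x_n\in[\alpha_i-\delta/2,\alpha_i+\delta/2)$. The top strip $x_n>\alpha_1+\delta/2$ is merged into $I_1$, and the bottom strip is merged into $I_{K_1-1}$; these stay inside $Q_1$ and $Q_{K_1-1}$ because $c-\alpha_1=\delta$, $\alpha_{K_1-1}-b<2\delta$, and $\varepsilon\ll\delta$. This gives \eqref{aiinvi} immediately.

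\textbf{Bounded increments.} For \eqref{Iileq10}, observe that $\sum_{n\in I_i}(x_{n+1}-x_n)$ splits into the maximal runs of consecutive indices inside $I_i$. Each run contributes the difference of its exit and entry values. An entry or exit through the upper boundary $\alpha_i+\delta/2$ gives a contribution $\alpha_i+\delta/2+O(1/n)$, and similarly for the lower boundary. Contributions from runs that enter and leave through the same boundary cancel up to an overshoot $O(1/n)$. Runs that cross from one boundary to the other contribute $\pm\delta$ with alternating sign, because the sequence moves continuously across the levels; so they telescope to at most $\delta$ plus overshoots. The real difficulty is to make the sum of all overshoots $\sum O(1/n_{\mathrm{cross}})$ bounded, since there can be infinitely many boundary crossings (for example, oscillations hovering around a level). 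This is where I expect the main obstacle.

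My first plan for the overshoot problem is to replace the sharp boundaries by hysteresis, as hinted in the introduction. I would change the assigned window only after $x_n$ has moved a distance $\delta/2$ beyond the boundary, and ignore shallow oscillations. This is still compatible with \eqref{aiinvi}, because the windows $Q_i$ have half-width $3\delta$. With this rule, each genuine switch requires a displacement of order $\delta$. Between consecutive unmatched switches at times $m<m'$, slow variation gives $\delta\lesssim\sum_{m}^{m'}1/n\approx\ln(m'/m)$. Hence such switches occur at geometrically separated times, and the overshoots $O(1/m)$ sum to $O(1/n_\delta)$. The remaining bookkeeping is that telescoping the entry and exit values, now taken at deep points, leaves a bounded residual of order $c-b\le1$. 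Together this yields \eqref{Iileq10}, comfortably below $10$.

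\textbf{Harmonic mass.} For \eqref{Iifractionlargecasec}, every $\alpha_i$ lies in $A=[b,c]$, so $x_n$ visits every window infinitely often. What I need is a lower bound on the harmonic mass spent near each level. Every excursion from $b$ to $c$ has harmonic length at least $\gtrsim\delta$ inside each window, by slow variation. Since $b$ and $c$ are both accumulation points, there are infinitely many such crossings. Taking $N$ large therefore accumulates $\gg 1/\delta$ in every $I_i$, because only finitely many ($K_1-1$) windows must be satisfied at once. If instead the path oscillates back and forth only across a subrange, its harmonic mass in the windows it occupies grows even faster. In either case, enlarging $N$ suffices.
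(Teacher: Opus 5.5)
Your lifting, the treatment of the top and bottom strips, and the harmonic-mass argument are fine. You also locate the difficulty correctly: with sharp bands the $\pm\delta$ traversal terms alternate and cancel, and only the overshoots $O(1/n)$ are a problem.

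\textbf{The gap is in your hysteresis fix.} Under the rule you describe, you leave window $i$ for $i+1$ only once $x_n<\alpha_i-\delta=\alpha_{i+1}$, and you return only once $x_n\ge\alpha_{i+1}+\delta=\alpha_i$. So every run of $I_i$ is entered at level $\alpha_i$, from either side, and exited at $\alpha_{i-1}$ or $\alpha_{i+1}$. Each run therefore contributes $\pm\delta+O(1/n)$, not $O(1/n)$. Moreover the signs no longer alternate: a downward traversal of $[\alpha_{i+1},\alpha_i]$ is charged to $I_i$, while the upward traversal of the same strip is charged to $I_{i+1}$.

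Here is a concrete failure. Take $x_n=g(\ln n)$ with $g$ Lipschitz and periodic, so \eqref{xnbigo1} holds and $A=[b,c]$. Let one period of $g$ sweep $[b,c]$ and also make one extra oscillation between $\alpha_{i+1}-\eta$ and $\alpha_i+\eta$. Then each period adds $-\delta$ to $\sum_{n\in I_i}(x_{n+1}-x_n)$ and $+\delta$ to the sum over $I_{i+1}$, so \eqref{Iileq10} fails once $N$ is large.

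Your geometric separation of switches is correct, but it only controls the overshoots. The assertion that telescoping the entry and exit values ``leaves a bounded residual of order $c-b$'' is false. Telescoping needs every entry and exit of a given set to occur at one and the same level, and a two-threshold hysteresis destroys exactly that.

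\textbf{The missing idea} is to filter crossings of a single level, rather than move the switching levels, and to build the partition as differences of nested sets. For each $i$, the paper proceeds as follows.
\begin{enumerate}
\item Take all crossing times of $\alpha_i$, and discard those that only bracket shallow dips, i.e.\ excursions below $\alpha_i$ that stay above $\alpha_i-\delta/2$.
\item Let $L_i$ be the union of the retained stretches spent above $\alpha_i$.
\item All retained entries and exits of $L_i$ are crossings of $\alpha_i$. Hence $\sum_{n\in L_i}(x_{n+1}-x_n)$ telescopes to $O(c-b)$ plus $\sum_m O(1/n_m)$.
\item Every retained dip reaches depth $\delta/2$, so your geometric-separation argument makes this overshoot sum $O(1/n_\delta)$.
\end{enumerate}
Since only dips shallower than $\delta/2<\delta$ are absorbed, one has $\{x_n\ge\alpha_i\}\subset L_i\subset\{x_n\ge\alpha_i-\delta/2\}$. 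Hence $L_1\subset\cdots\subset L_{K_1-2}$, and $I_i:=L_i\setminus L_{i-1}$ satisfies $\abs{\sum_{n\in I_i}(x_{n+1}-x_n)}\le 2$. It also satisfies
\[
\{x_n\in[\alpha_i,\alpha_i+\delta/2)\}\subset I_i\subset\{x_n\in[\alpha_i-\delta/2,\alpha_{i-1})\}.
\]
The second inclusion gives \eqref{aiinvi}, and the first is what lets your harmonic-mass argument go through.
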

\begin{proof}
	See Section \ref{pcb}.
\end{proof}

Let us next consider \textbf{Case c}, in which 
$$A=[0,d]\cup [e,1], \ 0\leq d<e\leq 1,d^2+(e-1)^2\neq0.$$ 
For any small $\delta>0$, let $K_2=K_2(d,e,\delta):=\left \lfloor \frac{1+d-e}{\delta} \right \rfloor$. 
For any $i=1,\cdots,K_2-1$, let
\begin{align}
	\beta_i=\beta_i(\delta)=d-i\delta,
\end{align}
and
\begin{align}
	W_i(\delta)=(\beta_i-3\delta,\beta_i+3\delta)_{\mathrm{mod}}.\label{deofWi}
\end{align}
\begin{theorem}[\textbf{Case c}]\label{casec}
	Assume that the real sequence $(x_n)_{n\in\mathbb{N}}$ satisfies \eqref{xnbigo1} with $A=[0,d]\cup [e,1], \ 0\leq d<e\leq 1,d^2+(e-1)^2\neq0.$ Let $\delta>0$ be small and $K_2=\left \lfloor \frac{1+d-e}{\delta} \right \rfloor$. Then there exists large $n_\delta\in\mathbb{N}$ with $n_\delta\gg \frac{1}{\delta}$, such that for any $N\in\mathbb{N}$ large enough, we can divide $I:=\{n\}_{n=n_\delta}^{n_\delta+N}$ into $(K_2-1)$ disjoint subsets $I_i, i=1,2,\cdots,K_2-1,$  and for any $i=1,2,\cdots,K_2-1$, one has
	\begin{align}
		[x_n]&\in W_i(\delta),\ \mathrm{for\ any}\ n\in I_i,\label{aiinwi}\\
		&\abs{\sum_{n\in I_i}(x_{n+1}-x_n)}\leq 10,\label{Iileq10cased}
	\end{align}
	and 
	\begin{align}
		\sum_{n\in I_i}\frac{1}{1+n}\gg \frac{1}{\delta}.\label{Iifractionlargecased}
	\end{align}
\end{theorem}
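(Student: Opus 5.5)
The plan is to reduce Theorem~\ref{casec} to Theorem~\ref{caseb} via the shift of Lemma~\ref{lemequcd}, rather than redoing the partition construction.

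Fix $t\in(1-e,1-d)$ and set $y_n:=x_n+t$. Since $y_{n+1}-y_n=x_{n+1}-x_n$, the sequence $(y_n)$ again satisfies \eqref{xnbigo1}. By Lemma~\ref{lemequcd}, the accumulation set of $([y_n])$ is $A'=[b',c']$ with $b'=e+t-1$ and $c'=d+t$. We have $0<b'<c'<1$, so in particular $b'^2+(c'-1)^2\neq0$. Moreover
\[
c'-b'=1+d-e,
\]
so $K_1(b',c',\delta)=\lfloor (1+d-e)/\delta\rfloor=K_2(d,e,\delta)$.

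We then apply Theorem~\ref{caseb} to $(y_n)$ with the same small $\delta$. This yields $n_\delta\gg 1/\delta$ and, for every large $N$, a partition of $I=\{n\}_{n=n_\delta}^{n_\delta+N}$ into $I_1,\dots,I_{K_2-1}$. Each piece satisfies $[y_n]\in Q_i'(\delta)=(\alpha_i'-3\delta,\alpha_i'+3\delta)_{\mathrm{mod}}$ for $n\in I_i$, where $\alpha_i'=c'-i\delta=d+t-i\delta$. The increment condition $\bigl|\sum_{n\in I_i}(y_{n+1}-y_n)\bigr|\le10$ and the harmonic-sum condition $\sum_{n\in I_i}\frac1{1+n}\gg\frac1\delta$ also hold.

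Next we translate back to $(x_n)$. The increment bound is identical for $x_n$, since the increments coincide, which gives \eqref{Iileq10cased}. The harmonic bound \eqref{Iifractionlargecased} does not involve the sequence at all. For \eqref{aiinwi}, note that $[x_n]=[y_n-t]$ and that $(s,u)_{\mathrm{mod}}$ is translation-equivariant modulo $\Z$ by \eqref{equmodr}. Hence $[y_n]\in(\alpha_i'-3\delta,\alpha_i'+3\delta)_{\mathrm{mod}}$ implies
\[
[x_n]\in(\alpha_i'-t-3\delta,\alpha_i'-t+3\delta)_{\mathrm{mod}}=(\beta_i-3\delta,\beta_i+3\delta)_{\mathrm{mod}}=W_i(\delta),
\]
because $\alpha_i'-t=d-i\delta=\beta_i$. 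The indices $i$ thus match exactly.

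The only points needing care are bookkeeping. First, $t$ must be chosen strictly inside $(1-e,1-d)$, which is a nonempty interval since $d<e$, so that $A'$ is a genuine arc of \textbf{Case b} not touching $0$ or $1$. Second, the identity $(s,u)_{\mathrm{mod}}-t=(s-t,u-t)_{\mathrm{mod}}$ must be checked directly from \eqref{equmodr}. All the real difficulty is carried by Theorem~\ref{caseb} (Section~\ref{pcb}), which we take as given; there is no further obstacle here.
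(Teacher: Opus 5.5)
Your proposal is correct and is essentially the paper's proof: the paper also fixes $t\in(1-e,1-d)$, sets $y_n=x_n+t$, uses Lemma~\ref{lemequcd} to get the accumulation set $[e+t-1,d+t]$, applies Theorem~\ref{caseb}, and translates the windows back to $W_i(\delta)$ via \eqref{equmodr}. Your explicit checks that $0<e+t-1<d+t<1$ and that $K_1=K_2$ are bookkeeping the paper leaves implicit.
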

\begin{proof}
	Let $t\in (1-e,1-d)$ be fixed and $y_n=x_n+t$. By Lemma \ref{lemequcd}, the set of accumulation points of $([y_n])_{n\in\mathbb{N}}$ is $[e+t-1,d+t]$. By Theorem \ref{caseb}, we can divide $I:=\{n\}_{n=n_\delta}^{n_\delta+N}$ into $(K_2-1)$ disjoint subsets $I_i, i=1,2,\cdots,K_2-1,$  and for any $i=1,2,\cdots,K_2-1$,
	\begin{align}
		[y_n]\in (d+t-&i\delta-3\delta,d+t-i\delta+3\delta)_{\mathrm{mod}},\ \mathrm{for\ any}\ n\in I_i,\label{fiynind}\\
		&\abs{\sum_{n\in I_i}(y_{n+1}-y_n)}\leq 10,\nonumber
	\end{align}
	and 
	\begin{align}
		\sum_{n\in I_i}\frac{1}{1+n}\gg \frac{1}{\delta}.\nonumber
	\end{align}
	Therefore, we have \eqref{Iileq10cased} and \eqref{Iifractionlargecased}. 
	By \eqref{equmodr} and \eqref{fiynind}, for each $n\in I_i$, there exists $m_n\in\Z$ such that
	\begin{align}
		x_n+t+m_n\in(d+t-i\delta-3\delta,d+t-i\delta+3\delta).\nonumber
	\end{align}
	Therefore, $[x_n]+x_n-[x_n]+m_n\in (d-i\delta-3\delta,d-i\delta+3\delta)$. By \eqref{equstmod}, one has $[x_n]\in W_i(\delta)$.
\end{proof}
We finally consider \textbf{Case d}. Namely, $A=[0,1]$. For any small $\delta>0$, let $K_3=K_3(\delta):=\left \lfloor \frac{1}{2\delta} \right \rfloor$. For any $i=1,\cdots,K_3-1$, let
\begin{align}
	\varphi_i=\varphi_i(\delta)=\frac{1}{2}-i\delta, \ \tilde{\varphi}_i=\tilde{\varphi}_i(\delta)=1-\varphi_i(\delta)=\frac{1}{2}+i\delta,\label{definitionofthetai}
\end{align}
and
\begin{align}
	U_i(\delta)=(\varphi_i-3\delta,\varphi_i+3\delta)_{\mathrm{mod}},\ \tilde{U}_i(\delta)=(\tilde{\varphi}_i-3\delta,\tilde{\varphi}_i+3\delta)_{\mathrm{mod}}.\label{definitionofui}
\end{align}
\begin{theorem}[\textbf{Case d}]\label{cased}
 Assume that the real sequence $(x_n)_{n\in\mathbb{N}}$ satisfies \eqref{xnbigo1} with $A=[0,1]$. Let $\delta>0$ be small and $K_3=\left \lfloor \frac{1}{2\delta} \right \rfloor$. Then there exists large $n_\delta\in\mathbb{N}$ with $n_\delta\gg \frac{1}{\delta}$, such that for any $N\in\mathbb{N}$ large enough, we can divide $I:=\{n\}_{n=n_\delta}^{n_\delta+N}$ into $2(K_3-1)$ disjoint subsets: $A_i, \tilde{{A}}_i,\ i=1,2,\cdots,K_3-1,$ 
and
\begin{enumerate}
\item For any $i=1,\cdots,K_3-1,$ one has
\begin{align}
	\left[x_n\right]\in U_i(\delta),\ \mathrm{for\ any\ } n\in A_i,\label{aiinui}
\end{align}
and
\begin{align}
	\left[x_n\right]\in \tilde{U}_i(\delta),\ \mathrm{for\ any\ } n\in \tilde{A}_i.\label{aiinuitilde}
\end{align}
\item For any $i=2,\cdots,K_3-2$, one has
\begin{align}
	\abs{\sum_{n\in A_i}(x_{n+1}-x_n)-\sum_{n\in \tilde{A}_i}(x_{n+1}-x_n)}\leq 10,\label{aiminusbileq10}
\end{align}
and 
\begin{align}
\sum_{n\in I_i}\frac{1}{1+n}\gg \frac{1}{\delta},\label{Iifractionlarge}
\end{align}
where $I_i=A_i\cup \tilde{{A}}_i.$ 
\end{enumerate}
\end{theorem}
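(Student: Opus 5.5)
The plan is to assign each index to a window according to where $[x_n]$ sits, with a hysteresis rule, and then to express the accumulated increment in each window through signed crossings. Since the circle is covered, the net signed crossing number of an arc essentially does not depend on the arc. So the windows $U_i$ and $\tilde U_i=1-U_i$ carry the same net crossing count, and the difference in \eqref{aiminusbileq10} telescopes to something bounded.

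\emph{Step 1: assignment with hysteresis.} Tile $\mathbb R/\mathbb Z$ by arcs $J_i,\tilde J_i$ of length $\delta$ centred at $\varphi_i,\tilde\varphi_i$. The small leftover arcs near $\frac12$ and near $0$ are merged into the extreme windows $i=1$ and $i=K_3-1$; this is harmless because \eqref{aiminusbileq10} and \eqref{Iifractionlarge} are only required for $2\le i\le K_3-2$. I would process $n=n_\delta,n_\delta+1,\dots$ in order. Index $n$ is given the same label as $n-1$ as long as $[x_n]$ stays inside the enlarged window $U_i$ (respectively $\tilde U_i$) of half-width $3\delta$. When $[x_n]$ leaves that window, $n$ is relabelled by the tile $J_{i'}$ or $\tilde J_{i'}$ that contains $[x_n]$. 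This gives \eqref{aiinui} and \eqref{aiinuitilde} directly. The enlargement from $\delta/2$ to $3\delta$ is there to remove shallow oscillations: between two consecutive relabellings, the lift $x_n$ must travel a distance of at least about $2\delta$.

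\emph{Step 2: telescoping and overshoot control.} The indices carrying a fixed label split into maximal runs $[s,t]$. For each run, $\sum_{n=s}^{t}(x_{n+1}-x_n)=x_{t+1}-x_s$. Each relabelling happens when the lift crosses a level of the form $\alpha\pm3\delta+m$, and the exact crossing level is missed by at most $O(1/n)$. So a run's contribution equals either $0$ (it re-exits on the side it entered) or $\pm(\text{fixed width})$ (it passes through), in both cases up to an overshoot error $O(1/s)+O(1/t)$. Summing over the runs of a window $W$:
\[
\sum_{n\in W}(x_{n+1}-x_n)=c_W\,\mathcal N_W+\sum_{\text{runs}}O(1/\text{endpoint}),
\]
where $\mathcal N_W$ is the net signed number of complete passages through $W$. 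The lift is continuous up to $O(1/n)$ jumps. Hence $\mathcal N_W$ equals the net winding $\lfloor x_{n_\delta+N}\rfloor-\lfloor x_{n_\delta}\rfloor$ up to an additive $\pm2$, and this holds for every $W$. For a symmetric pair $W=U_i,\tilde U_i$ the width constants $c_W$ coincide, so the main terms cancel up to $O(1)$.

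\emph{Step 3: summability of errors (main obstacle).} The hard point is to show that the total overshoot error is bounded, since the number of runs may be unbounded. Between consecutive relabellings the lift moves at least $\delta$, and $|x_{n+1}-x_n|\le C/(1+n)$. Therefore $\sum_{t_j}^{t_{j+1}}\frac1{1+n}\ge\delta/C$, which gives $t_{j+1}\ge e^{\delta/C}t_j$. Consequently
\[
\sum_j \frac1{t_j}\le \frac{C'}{\delta\, n_\delta},
\]
and this is at most $1$ once $n_\delta\gg\delta^{-1}$. The delicate part is the interlacing of runs belonging to different windows. I would handle it by applying the geometric-growth bound to the global sequence of relabelling times, which covers all windows at once, rather than window by window.

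\emph{Step 4: harmonic mass \eqref{Iifractionlarge}.} Because $A=[0,1]$, the sequence $[x_n]$ returns infinitely often to neighbourhoods of two points at distance $\ge\frac14$. Each such transit passes completely through every window lying on one of the two arcs between them. In the case when the lift does not wind, the sequence oscillates, and returning close to every point forces it to pass through both $U_i$ and $\tilde U_i$ infinitely often. Each complete passage uses harmonic mass at least about $\delta/C$ in that window. So for $N$ large the harmonic mass in every $I_i=A_i\cup\tilde A_i$ with $2\le i\le K_3-2$ exceeds any prescribed multiple of $1/\delta$. Finally, $n_\delta$ is chosen so that Steps 2 and 3 give the constant $10$ in \eqref{aiminusbileq10}.
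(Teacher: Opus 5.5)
Your Steps~1--2 are incompatible: under your relabelling rule, a run never begins at the edge of its window. Label $i$ is dropped when $[x_n]$ leaves $(\varphi_i-3\delta,\varphi_i+3\delta)_{\mathrm{mod}}$, i.e.\ at $\varphi_i\pm3\delta+o(\delta)=\varphi_{i\mp3}+o(\delta)$, which is the centre of the tile $J_{i\mp3}$. So every run starts essentially at the centre of its window and ends $3\delta$ away on one side. This has three consequences:
\begin{itemize}
\item The increment of a run is $\pm3\delta$, and the sign depends only on the exit side. The dichotomy ``$0$ or $\pm$width'' of Step~2 never arises.
\item The sum over a window is $3\delta$ times (up-exits minus down-exits), which the winding number does not control.
\item Each switch skips two windows.
\end{itemize}

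This already fails for $x_n=\ln n$ (which satisfies \eqref{xnbigo1} with $A=[0,1]$) and $\delta=\frac{1}{2K}$ with $3\mid K$. Then $\varphi_i=(K-i)\delta$ and $\tilde\varphi_i=(K+i)\delta$. After at most one turn, runs start only at centres $j\delta$ with $j\equiv r\pmod 3$, for a fixed $r\in\{1,2\}$: since $3\mid 2K$ the residue is preserved, and no relabelling lands in the leftover arcs at $0$ and $\frac12$.
\begin{itemize}
\item For $3\nmid i$, exactly one of $\varphi_i,\tilde\varphi_i$ lies in this class. So one of $A_i,\tilde A_i$ gains $3\delta$ per turn while the other is eventually empty, and the quantity in \eqref{aiminusbileq10} is about $3\delta\ln\frac{n_\delta+N}{n_\delta}\to\infty$.
\item For $3\mid i$, both are eventually empty, so \eqref{Iifractionlarge} fails too. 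Contrary to Step~4, a transit across the arc $U_i$ need not produce any index labelled $i$.
\end{itemize}
Reducing the half-width to $\delta$ removes the skipping but not the defect. Take an orbit oscillating $M$ times between just below $\varphi_v$ and just above $\varphi_v+\delta$. Every up-leg goes into $A_v$ and every down-leg into $A_{v-1}$. So $\sum_{n\in A_v}(x_{n+1}-x_n)\approx M\delta$ with nothing in $\tilde A_v$ to balance it, and $M$ may grow like $\delta^{-1}\ln N$.

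The missing idea is what makes the telescoping exact. The index set of a window must be the occupation set of a region bounded by two \emph{fixed} levels. Hysteresis must be implemented by discarding shallow crossings of those levels, not by moving labels. Then every retained entry and exit lies on one of the two levels, and an excursion returning to its entry side contributes $0$ up to overshoot. Disjoint windows must then be produced as differences of a \emph{nested} family of such sets.

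This is the paper's construction. For each $i$ it:
\begin{itemize}
\item keeps the crossings of $\varphi_1,\varphi_i,\tilde\varphi_1,\tilde\varphi_i$;
\item deletes those between which $[x_n]$ stays within $\frac{\delta}{2}$ of $[\varphi_i,\varphi_1]\cup[\tilde\varphi_1,\tilde\varphi_i]$;
\item obtains nested sets $\mathcal L_2\subset\mathcal L_3\subset\cdots$ and splits $\mathcal L_i$ into $L_i$ and $\tilde L_i$;
\item proves $\abs{\sum_{n\in L_i}(x_{n+1}-x_n)-\sum_{n\in\tilde L_i}(x_{n+1}-x_n)}\le1$;
\item sets $A_i=L_i\setminus L_{i-1}$ and $\tilde A_i=\tilde L_i\setminus\tilde L_{i-1}$.
\end{itemize}
Then \eqref{Iifractionlarge} follows from $\mathcal A_i\setminus\frac12\mathcal A_i\subset A_i$.

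Within that framework your winding observation is correct. For the occupation set of $[\varphi_i,\varphi_1]$, and likewise for its mirror, the net number of complete passages equals the total change of the lift up to $O(1)$. It would be a cleaner substitute for the paper's case analysis with $\sigma$ and $g$. Your Step~3 is the paper's geometric-growth bound, but it has to be applied to the retained crossings of each $\mathcal L_i$ separately.
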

\begin{proof}
	See Section \ref{pcd}.
\end{proof}

To better understand the above theorems, we present a similar result to Theorem \ref{caseb} in the continuous setting.

\begin{theorem}[Continuous analogue of Theorem \ref{caseb}]\label{thmcontinuous} Assume that the real function $f(\cdot)\in C^1(\R)$ with
	\begin{align}
		f'(t)=\frac{O(1)}{1+\abs{t}},\ t\to\infty,\label{apfprO}
	\end{align}
	 and $0\leq b<c\leq 1,b^2+(c-1)^2\neq 0$, where
	\begin{align}
		b=\liminf_{t\to\infty}f(t),\ c=\limsup_{t\to\infty}f(t).\nonumber
	\end{align} Let $\delta>0$ be small and $K_1=\left \lfloor \frac{c-b}{\delta} \right \rfloor$. Then there exists $t_\delta\gg \frac{1}{\delta}$, such that for any large enough $M$, we can divide $[t_{\delta},t_{\delta}+M]$ into $(K_1-1)$ disjoint subsets $I_i, i=1,2,\cdots,K_1-1,$ and for any $i=1,2,\cdots,K_1-1$, one has
	\begin{align}
		f(t)&\in Q_i(\delta),\ \mathrm{for\ any}\ t\in I_i,\label{appeniinvia}\\
		&\abs{\int_{t\in I_i}f'(t)dt}\leq 10,\label{appenee}
	\end{align}
	and 
	\begin{align}
	\int_{t\in I_i}\frac{1}{1+t}dt\gg \frac{1}{\delta},\label{appenIifractionlargecasec}
	\end{align}
	where $Q_i(\delta)$ is defined by \eqref{deofVi}.
\end{theorem}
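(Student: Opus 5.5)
The plan is to exploit the fact that, in the continuous setting, the bounded-increment condition \eqref{appenee} comes for free from a chain-rule identity. I will therefore define the pieces $I_i$ as preimages of a partition of the range into strips of width $\delta$; no combinatorial matching of crossings, as required in the discrete Case b, is needed.

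\emph{Step 1 (confinement and choice of $t_\delta$).} Since $b=\liminf f$ and $c=\limsup f$, I choose $t_\delta\gg \frac1\delta$ such that $f(t)\in(b-\delta,c+\delta)$ for all $t\ge t_\delta$. With $\alpha_i=c-i\delta$ as in \eqref{dealphai}, set $S_i=(\alpha_i-\frac\delta2,\alpha_i+\frac\delta2]$ for $2\le i\le K_1-2$. The extreme strips absorb the ends: $S_1=(\alpha_1-\frac\delta2,\infty)$ and $S_{K_1-1}=(-\infty,\alpha_{K_1-1}+\frac\delta2]$. These sets partition $\mathbb R$, and I define $I_i=\{t\in[t_\delta,t_\delta+M]: f(t)\in S_i\}$.

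To get \eqref{appeniinvia}, I need $f(t)\in(\alpha_i-3\delta,\alpha_i+3\delta)$ on $I_i$.
\begin{itemize}
\item For interior strips this is immediate.
\item For $i=1$ I use $f<c+\delta<\alpha_1+3\delta$.
\item For $i=K_1-1$, the definition $K_1=\lfloor (c-b)/\delta\rfloor$ gives $b+\delta\le\alpha_{K_1-1}<b+2\delta$. Hence $f>b-\delta>\alpha_{K_1-1}-3\delta$.
\end{itemize}
The containment passes to $Q_i(\delta)$ after reducing mod $1$.

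\emph{Step 2 (bounded increment).} Let $G_i(y)=\int_0^y \mathbf 1_{S_i\cap(b-\delta,c+\delta)}(s)\,ds$. This function is Lipschitz with constant $1$ and has total variation at most $3\delta$. Because $f\in C^1$, the composition $G_i\circ f$ is absolutely continuous. The chain rule for Lipschitz-after-$C^1$ compositions gives $(G_i\circ f)'(t)=\mathbf 1_{S_i}(f(t))f'(t)$ for a.e.\ $t$. On the set where $f(t)$ sits at a discontinuity of the indicator, $f'=0$ a.e.\ by the standard Lebesgue-point argument, so the formula is unaffected there. Therefore
\[
\int_{I_i}f'(t)\,dt=G_i(f(t_\delta+M))-G_i(f(t_\delta)),
\]
which has absolute value at most $3\delta\le 10$.

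\emph{Step 3 (largeness).} Let $|f'(t)|\le C/(1+t)$. Each time $f$ traverses a sub-interval of $S_i$ of length $\ell$, the traversal takes place on a time interval $[t_1,t_2]$ contained in $I_i$. On it,
\[
\ell\le C\int_{t_1}^{t_2}\frac{dt}{1+t},
\]
so each traversal contributes at least $\ell/C$ to $\int_{I_i}\frac{dt}{1+t}$.

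Since $f$ comes within $\delta/4$ of both $c$ and $b$ infinitely often, $f$ makes infinitely many disjoint traversals between $c-\frac\delta4$ and $b+\frac\delta4$ after $t_\delta$. Each traversal fully crosses every interior strip, which has length $\delta$. It also crosses a sub-interval of length at least $\delta/4$ inside each extreme strip: for $S_1$ this is from $\alpha_1-\frac\delta2$ up to $c-\frac\delta4$, and symmetrically for $S_{K_1-1}$. Hence, after $L$ such traversals, $\int_{I_i}\frac{dt}{1+t}\ge L\delta/(4C)$. Taking $M$ large makes $L\gg \delta^{-2}$, which gives \eqref{appenIifractionlargecasec}.

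The step I expect to be the main obstacle is the measure-theoretic justification in Step 2. I plan to handle it by approximating $\mathbf 1_{S_i}$ by continuous cutoffs $\chi$, applying the classical chain rule to $\Phi\circ f$ with $\Phi'=\chi$, and passing to the limit by dominated convergence. This approximation route avoids the measure-theoretic subtleties of the direct a.e.\ chain rule.
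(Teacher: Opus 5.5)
Your proposal is correct and follows essentially the paper's route: both arguments partition $[t_\delta,t_\delta+M]$ into preimages of $\delta$-wide level bands, bound $\int_{I_i}f'$ by a fundamental-theorem/telescoping identity, and obtain \eqref{appenIifractionlargecasec} from repeated full traversals of each band. The differences are cosmetic. You center the bands at $\alpha_i$ and prove the increment identity band by band through the chain rule for the clamp function $G_i\circ f$, which gives an $O(\delta)$ bound (the length of $S_{K_1-1}\cap(b-\delta,c+\delta)$ is only bounded by $3.5\delta$, not $3\delta$, which is harmless). The paper instead uses the nested superlevel sets $\mathcal{I}_i=\{c-i\delta\le f<c+\delta\}$, the identity $\int_{\mathcal{I}_i}f'=f(\sup\mathcal{I}_i)-f(t_\delta)$, and successive differences, which gives a bound of $2$.
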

\begin{proof}
To avoid repetition, we include a proof in Appendix \ref{appena}.
\end{proof}

\begin{remark}
	Although the discrete setting involves extremely hard analysis and technical difficulties, for some special cases, the corresponding results in the continuous setting are more transparent and easier to interpret. 
	For example, let $$f(t)=\frac{c}{2}(1+\sin (\ln t)),\ \ \ \  t>0,\ \ \ 0<c<1.$$
	Then 
	\begin{align}
		\liminf_{t\to\infty}f(t)=0,\ \ \ \ \ \limsup_{t\to\infty}f(t)=c.\nonumber
	\end{align}
	
	\begin{figure}[H]

		\tikzset{every picture/.style={line width=0.75pt}} 
		
		\begin{tikzpicture}[x=0.75pt,y=0.75pt,yscale=-0.8,xscale=0.8]
			
			\draw    (117.67,316) -- (580.67,316) ;
			\draw [shift={(582.67,316)}, rotate = 180] [color={rgb, 255:red, 0; green, 0; blue, 0 }  ][line width=0.75]    (10.93,-3.29) .. controls (6.95,-1.4) and (3.31,-0.3) .. (0,0) .. controls (3.31,0.3) and (6.95,1.4) .. (10.93,3.29)   ;
			\draw    (117.67,316) -- (117.67,53) ;
			\draw [shift={(117.67,51)}, rotate = 90] [color={rgb, 255:red, 0; green, 0; blue, 0 }  ][line width=0.75]    (10.93,-3.29) .. controls (6.95,-1.4) and (3.31,-0.3) .. (0,0) .. controls (3.31,0.3) and (6.95,1.4) .. (10.93,3.29)   ;
			\draw    (117.67,112) -- (559.67,112) ;
			\draw    (117.67,142) -- (559.67,142) ;
			\draw    (117.67,172) -- (559.67,172) ;
			\draw    (117.67,202) -- (559.67,202) ;
			\draw    (117.67,231) -- (559.67,231) ;
			\draw    (117.67,259) -- (559.67,259) ;
			\draw    (117.67,288) -- (559.67,288) ;
		
		\draw    (131.67,310) .. controls (171.67,280) and (132.67,115) .. (170.67,113) ;
		\draw    (170.67,113) .. controls (218.67,111) and (208.67,343) .. (248.67,313) ;
		\draw    (248.67,313) .. controls (288.67,283) and (297.67,109) .. (333.67,112) ;
		\draw    (333.67,112) .. controls (381.67,110) and (412.67,343) .. (452.67,313) ;
		\draw    (449.67,315) .. controls (494.67,292) and (500.67,252) .. (533.67,205) ;

			\draw (100,107) node [anchor=north west][inner sep=0.75pt]   [align=left] {$c$};
			\draw (60,135) node [anchor=north west][inner sep=0.75pt]   [align=left] {$c-\delta$};
			\draw (60,165) node [anchor=north west][inner sep=0.75pt]   [align=left] {$c-2\delta$};
			\draw (60,195) node [anchor=north west][inner sep=0.75pt]   [align=left] {$c-3\delta$};
			\draw (60,225) node [anchor=north west][inner sep=0.75pt]   [align=left] {$c-4\delta$};
			\draw (60,250) node [anchor=north west][inner sep=0.75pt]   [align=left] {$c-5\delta$};
			\draw (60,279) node [anchor=north west][inner sep=0.75pt]   [align=left] {$c-6\delta$};
			\draw (100,305) node [anchor=north west][inner sep=0.75pt]   [align=left] {$0$};
			\draw (585,310) node [anchor=north west][inner sep=0.75pt]   [align=left] {$t$};
		\end{tikzpicture}
		\caption{$f(t)=\frac{c}{2}(1+\sin (\ln t))$}
		\label{figcontinuousbands}
	\end{figure}
Figure \ref{figcontinuousbands} illustrates the level band partition used below.
	Let $t_\delta$ be large with $\sin(\ln t_{\delta})=-1$, so that $f(t_\delta)=0$. Let
	\[
	I_i=\{t_{\delta}\leq t\leq t_{\delta}+M:c-i\delta<f(t)\leq c-(i-1)\delta\},
	\quad i=1,2,\ldots,K_1-2,
	\]
	and
	\[
	I_{K_1-1}=\{t_{\delta}\leq t\leq t_{\delta}+M:0\leq f(t)\leq c-(K_1-2)\delta\},
	\]
	so that these sets form a partition of $[t_\delta,t_\delta+M]$. Since $\int_{x}^{y}f'(t)dt=f(y)-f(x)$, on each $I_i$, the contributions of complete upward and downward traversals of the same level band cancel; only the two boundary traversals can remain. Hence \eqref{appenee} follows.
	
	Since 
	\begin{align}
		\abs{f'(t)}=\abs{\frac{c\cos(\ln t)}{2t}}\leq \frac{c}{2t},\ t>0,\nonumber
	\end{align}
	we know that each $I_i$ consists of intervals $I$ with
	\begin{align}
		\abs{\int_I f'(t)dt}=\delta\leq \int_I\frac{c}{2t}dt.\nonumber
	\end{align}
	Then  by letting $M$ be large enough, each $I_i$ contains enough such intervals $I$, from which we can prove \eqref{appenIifractionlargecasec}.
\end{remark}

\section{Partitions in the Proper-Arc Case}\label{pcb}
\begin{proof}[\textbf{Proof of Theorem \ref{caseb}}] 
Since $c-b<1$,	by an appropriate rotation of the interval 
$[b,c]$ on $S^1$, we can always assume $0<b<c<1$. 

Let $0<\delta\ll\min\{b,1-c\}$. By Proposition \ref{propconfi}, there exists  $m\in \Z$ such that the sequence $(x_n)_{n\in\mathbb{N}}$ will eventually oscillate in the interval $(b-\delta+m,c+\delta+m)$. Without loss of generality, we assume that $m=0$ and
\begin{align}
	x_n\in (b-\delta,c+\delta),\ \mathrm{for\ any\ }n\in\mathbb{N}.\label{xninbcmpe}
\end{align}
Thus for any $n\in\mathbb{N}$, $$
	x_n=[x_n].$$
	Define
	\begin{align}
		&\mathcal{I}_1=\left\{n\in I:x_n\in[\alpha_1,c+\delta)\right\},\label{cbdmi1}\\
		\mathcal{I}_i=\{n&\in I:x_n\in[\alpha_i,\alpha_{i-1}]\},\ i=2,\cdots,K_1-1,\label{cbdmii}
	\end{align}
	where $\alpha_i$ is defined by \eqref{dealphai}. For any $i=1,2,\cdots,K_1-1$, define
	\begin{align}
		\frac{1}{2}\mathcal{I}_i=\left\{n\in I:x_n\in\left[\alpha_{i}+\frac{\delta}{2},\alpha_{i-1}\right] \right\},\label{cbdhalfmii}
	\end{align}
	where $\alpha_0:=c$.
	Let $n_\delta\gg \frac{1}{\delta}$ and by \eqref{xnbigo1} we have
	\begin{align}
		\abs{x_{n+1}-x_n}\ll \delta,\ \mathrm{for\ any}\ n\geq n_\delta.\label{cbxn1xnlle}
	\end{align}
	Since $A=[b,c]$ and $\alpha_1=c-\delta$, by \eqref{xninbcmpe} we can choose $n_\delta$ so that 
	\begin{align}
		x_{n_\delta}\in \left(\alpha_1+\frac{\delta}{2},c+\delta\right).\label{cbxnein}
	\end{align}
	Let $i\in\{1,2,\cdots,K_1-2\}$ be fixed and define
	\begin{align}
		X_i=\{n:n_\delta< n<n_\delta+N, x_{n+1}<\alpha_i\leq x_n\ \mathrm{or}\ x_n< \alpha_i\leq x_{n+1}\}.\label{dxicb}
	\end{align}
By the definition, for each $n\in X_i$,  $x_n$ must either exit
$[\alpha_i,c+\delta)$ from $\alpha_i$, as ensured by $x_{n+1}<\alpha_i\leq x_n$, or enter
$[\alpha_i,c+\delta)$ from $\alpha_i$, as guaranteed by $x_n<\alpha_i\leq x_{n+1}$.

 Let 
	\begin{align}
		\{n_m\}_{m=1}^d:=\{n_\delta,n_\delta+N\}\cup X_i\label{dnmcb}
	\end{align}
	with 
	\begin{align}
		n_1:=n_\delta<n_2<n_3<\cdots<n_d:=n_\delta+N.\label{cbdn}
	\end{align}
	
	Then by \eqref{cbxnein}-\eqref{cbdn}, one has that for any $n\in (n_m,n_{m+1}]$ with odd $m\leq d-1$,
	\begin{align}
		x_n\in [\alpha_i,c+{\delta}),\nonumber
	\end{align}
	and for any $n\in (n_m,n_{m+1}]$ with even $m\leq d-1$, 
	\begin{align}
		x_n<\alpha_i.\nonumber
	\end{align}
	Roughly speaking, $x_{n_m}$  exits $[\alpha_i,c+{\delta})$ when $m$ is even, and enters $[\alpha_i,c+{\delta})$ when $m$ is odd. A natural idea is to let $I_i=\mathcal{I}_i$. However,
	\begin{align}
		\abs{\sum_{n\in X_i\cap \mathcal{I}_i}(x_{n+1}-x_n)}\nonumber
	\end{align}
	may be very large so that \eqref{Iileq10} fails. Therefore, we need to carefully choose the points. 
	We remove the redundant elements from $\{n_m\}_{m=1}^d$. For any $m_1,m_2\in \{1,2,\cdots,d\}$ with $m_1<m_2$, if for any $n\in [n_{m_1},n_{m_2}]$, one has 
	$$x_n> \alpha_i-\frac{\delta}{2},$$
	then we remove all $n_m\in (n_{m_1},n_{m_2})$ from $\{n_m\}_{m=1}^d$.
	
\begin{figure}[H]

	\tikzset{every picture/.style={line width=0.75pt}} 
	
	\begin{tikzpicture}[x=0.75pt,y=0.75pt,yscale=-1,xscale=1]
		
		\draw    (37.27,280.65) -- (36.61,35) ;
		\draw [shift={(36.6,33)}, rotate = 89.84] [color={rgb, 255:red, 0; green, 0; blue, 0 }  ][line width=0.75]    (10.93,-3.29) .. controls (6.95,-1.4) and (3.31,-0.3) .. (0,0) .. controls (3.31,0.3) and (6.95,1.4) .. (10.93,3.29)   ;
		\draw    (37.13,223.49) -- (280.99,223.95) ;
		\draw    (36.73,69.05) -- (280.6,69.51) ;
		\draw    (37.27,280.65) -- (279.14,281.1) ;
		\draw [shift={(281.14,281.1)}, rotate = 180.11] [color={rgb, 255:red, 0; green, 0; blue, 0 }  ][line width=0.75]    (10.93,-3.29) .. controls (6.95,-1.4) and (3.31,-0.3) .. (0,0) .. controls (3.31,0.3) and (6.95,1.4) .. (10.93,3.29)   ;
		\draw    (37.02,182.14) -- (280.89,182.6) ;
		\draw [line width=1.5]  [dash pattern={on 1.69pt off 2.76pt}]  (43,195) -- (66,162) ;
		\draw [line width=1.5]  [dash pattern={on 1.69pt off 2.76pt}]  (83,196) -- (70,166) ;
		\draw [line width=1.5]  [dash pattern={on 1.69pt off 2.76pt}]  (91.2,195.39) -- (109,166) ;
		\draw [line width=1.5]  [dash pattern={on 1.69pt off 2.76pt}]  (113.85,168.08) -- (126,194) ;
		\draw [line width=1.5]  [dash pattern={on 1.69pt off 2.76pt}]  (131,196) -- (145,171) ;
		\draw [line width=1.5]  [dash pattern={on 1.69pt off 2.76pt}]  (148.65,173.13) -- (159,201) ;
		\draw [line width=1.5]  [dash pattern={on 1.69pt off 2.76pt}]  (164.37,197.23) -- (176,171) ;
		\draw [line width=1.5]  [dash pattern={on 1.69pt off 2.76pt}]  (194,197) -- (181.46,173.36) ;
		\draw [line width=1.5]  [dash pattern={on 1.69pt off 2.76pt}]  (199,195) -- (214,166) ;
		\draw    (372.22,280.65) -- (371.55,35) ;
		\draw [shift={(371.55,33)}, rotate = 89.84] [color={rgb, 255:red, 0; green, 0; blue, 0 }  ][line width=0.75]    (10.93,-3.29) .. controls (6.95,-1.4) and (3.31,-0.3) .. (0,0) .. controls (3.31,0.3) and (6.95,1.4) .. (10.93,3.29)   ;
		\draw    (372.07,223.49) -- (615.94,223.95) ;
		\draw    (371.68,69.05) -- (615.54,69.51) ;
		\draw    (372.22,280.65) -- (614.08,281.1) ;
		\draw [shift={(616.08,281.1)}, rotate = 180.11] [color={rgb, 255:red, 0; green, 0; blue, 0 }  ][line width=0.75]    (10.93,-3.29) .. controls (6.95,-1.4) and (3.31,-0.3) .. (0,0) .. controls (3.31,0.3) and (6.95,1.4) .. (10.93,3.29)   ;
		\draw    (293,163) -- (344,163) ;
		\draw [shift={(346,163)}, rotate = 180] [color={rgb, 255:red, 0; green, 0; blue, 0 }  ][line width=0.75]    (10.93,-3.29) .. controls (6.95,-1.4) and (3.31,-0.3) .. (0,0) .. controls (3.31,0.3) and (6.95,1.4) .. (10.93,3.29)   ;
		\draw    (372.02,182.14) -- (615.89,182.6) ;
		\draw [line width=1.5]  [dash pattern={on 1.69pt off 2.76pt}]  (378,195) -- (401,162) ;
		\draw [line width=1.5]  [dash pattern={on 1.69pt off 2.76pt}]  (418,196) -- (405,166) ;
		\draw [line width=1.5]  [dash pattern={on 1.69pt off 2.76pt}]  (426.2,196.39) -- (444,167) ;
		\draw [line width=1.5]  [dash pattern={on 1.69pt off 2.76pt}]  (448.85,168.08) -- (461,194) ;
		\draw [line width=1.5]  [dash pattern={on 1.69pt off 2.76pt}]  (466,196) -- (480,171) ;
		\draw [line width=1.5]  [dash pattern={on 1.69pt off 2.76pt}]  (483.65,173.13) -- (494,201) ;
		\draw [line width=1.5]  [dash pattern={on 1.69pt off 2.76pt}]  (499.37,197.23) -- (511,171) ;
		\draw [line width=1.5]  [dash pattern={on 1.69pt off 2.76pt}]  (529,197) -- (516.46,173.36) ;
		\draw [line width=1.5]  [dash pattern={on 1.69pt off 2.76pt}]  (534,195) -- (549,166) ;
		\draw  [dash pattern={on 4.5pt off 4.5pt}]  (51,183) -- (50,281) ;
		\draw  [dash pattern={on 4.5pt off 4.5pt}]  (75,180) -- (74,280) ;
		\draw  [dash pattern={on 4.5pt off 4.5pt}]  (98,183) -- (97,281) ;
		\draw  [dash pattern={on 4.5pt off 4.5pt}]  (119,181) -- (118,280) ;
		\draw  [dash pattern={on 4.5pt off 4.5pt}]  (138,183) -- (137,281) ;
		\draw  [dash pattern={on 4.5pt off 4.5pt}]  (150.82,181.07) -- (149.82,281.07) ;
		\draw  [dash pattern={on 4.5pt off 4.5pt}]  (170,184) -- (169,281) ;
		\draw  [dash pattern={on 4.5pt off 4.5pt}]  (185,181) -- (184,281) ;
		\draw  [dash pattern={on 4.5pt off 4.5pt}]  (205,183) -- (204,281) ;
		\draw  [dash pattern={on 4.5pt off 4.5pt}]  (386,183) -- (385,281) ;
		\draw  [dash pattern={on 4.5pt off 4.5pt}]  (540,183) -- (539,281) ;
		
		\draw (15.05,178) node [anchor=north west][inner sep=0.75pt]  [rotate=-359.98] [align=left] {\scriptsize$\alpha_i$};
		\draw (0,218) node [anchor=north west][inner sep=0.75pt]  [rotate=-359.98] [align=left] {\scriptsize$\alpha_i-\frac{\delta}{2}$};
		\draw (25,271.97) node [anchor=north west][inner sep=0.75pt]  [rotate=-359.98] [align=left] {\scriptsize0};
		\draw (285.71,275) node [anchor=north west][inner sep=0.75pt]  [rotate=-359.98] [align=left] {$n$};
		\draw (535,283) node [anchor=north west][inner sep=0.75pt]  [rotate=-359.98] [align=left] {\scriptsize$n_9$};
		\draw (350,178) node [anchor=north west][inner sep=0.75pt]  [rotate=-359.98] [align=left] {\scriptsize$\alpha_i$};
		\draw (334,218) node [anchor=north west][inner sep=0.75pt]  [rotate=-359.98] [align=left] {\scriptsize$\alpha_i-\frac{\delta}{2}$};
		\draw (360,271.97) node [anchor=north west][inner sep=0.75pt]  [rotate=-359.98] [align=left] {\scriptsize0};
		\draw (377.46,283) node [anchor=north west][inner sep=0.75pt]  [rotate=-359.98] [align=left] {\scriptsize$n_1$};
		\draw (620.66,275) node [anchor=north west][inner sep=0.75pt]  [rotate=-359.98] [align=left] {$n$};
		\draw (41.27,283) node [anchor=north west][inner sep=0.75pt]   [align=left] {\scriptsize$n_1$};
		\draw (66.27,283) node [anchor=north west][inner sep=0.75pt]   [align=left] {\scriptsize$n_2$};
		\draw (90.27,283) node [anchor=north west][inner sep=0.75pt]   [align=left] {\scriptsize$n_3$};
		\draw (110.27,283) node [anchor=north west][inner sep=0.75pt]   [align=left] {\scriptsize$n_4$};
		\draw (130.27,283) node [anchor=north west][inner sep=0.75pt]   [align=left] {\scriptsize$n_5$};
		\draw (145.82,283) node [anchor=north west][inner sep=0.75pt]   [align=left] {\scriptsize$n_6$};
		\draw (162,283) node [anchor=north west][inner sep=0.75pt]   [align=left] {\scriptsize$n_7$};
		\draw (176,283) node [anchor=north west][inner sep=0.75pt]   [align=left] {\scriptsize$n_8$};
		\draw (197,283) node [anchor=north west][inner sep=0.75pt]   [align=left] {\scriptsize$n_9$};
		\draw (5,61) node [anchor=north west][inner sep=0.75pt]  [rotate=-359.98] [align=left] {\scriptsize$c+\delta$};
		\draw (340,61) node [anchor=north west][inner sep=0.75pt]  [rotate=-359.98] [align=left] {\scriptsize$c+\delta$};
		\draw (43.6,32) node [anchor=north west][inner sep=0.75pt]  [rotate=-359.98] [align=left] {$x_n$};
		\draw (377.55,32) node [anchor=north west][inner sep=0.75pt]  [rotate=-359.98] [align=left] {$x_n$};

	\end{tikzpicture}
	\caption{}
\end{figure}
	
	 After this deletion procedure, the remaining crossing points still alternate between entering and exiting the strip $[\alpha_i,c+\delta)$, but each retained gap now contains a genuine excursion down to the lower level $\alpha_i-\frac{\delta}{2}$.
	 To avoid introducing additional notation, we continue to denote the remaining elements by $\mathcal{X}_i=\{n_m\}_{m=1}^d$ ($n_1=n_\delta$ and $n_d=n_\delta+N$ will not be removed, see Figure \ref{cbremove} for example).

\begin{figure}[H]
	\centering
	
	\tikzset{every picture/.style={line width=0.75pt}}
	\scalebox{1.05}{
		\rotatebox[origin=c]{90.2}{%
			\begin{tikzpicture}[x=0.75pt,y=0.75pt,yscale=-1,xscale=1]
				
				\draw (162.77,63.83) -- (475.47,63);
				\draw [shift={(477.47,62.99)},rotate=179.85]
				[color={rgb,255:red,0;green,0;blue,0}][line width=0.75]
				(10.93,-3.29) .. controls (6.95,-1.4) and (3.31,-0.3)
				.. (0,0) .. controls (3.31,0.3) and (6.95,1.4)
				.. (10.93,3.29);
				
				\draw (231.24,63.65) -- (230.64,386.88);
				\draw (280.78,63.52) -- (280.18,386.75);
				\draw (416.28,63.16) -- (415.68,386.38);
				
				\draw (162.77,63.83) -- (162.17,400.06);
				\draw [shift={(162.17,402.06)},rotate=270.11]
				[color={rgb,255:red,0;green,0;blue,0}][line width=0.75]
				(10.93,-3.29) .. controls (6.95,-1.4) and (3.31,-0.3)
				.. (0,0) .. controls (3.31,0.3) and (6.95,1.4)
				.. (10.93,3.29);
				
				\draw [line width=1.5,dash pattern={on 1.69pt off 2.76pt}]
				(406.61,69.69) -- (259.4,101.53);
				\draw [line width=1.5,dash pattern={on 1.69pt off 2.76pt}]
				(254.11,102.42) -- (332.74,127.51);
				\draw [line width=1.5,dash pattern={on 1.69pt off 2.76pt}]
				(336.02,129.3) -- (219.41,160.52);
				\draw [line width=1.5,dash pattern={on 1.69pt off 2.76pt}]
				(330.16,193.16) -- (219.49,163.94);
				\draw [line width=1.5,dash pattern={on 1.69pt off 2.76pt}]
				(335.9,194.54) -- (251.35,221.46);
				\draw [line width=1.5,dash pattern={on 1.69pt off 2.76pt}]
				(295.57,238.62) -- (247.52,223.29);
				\draw [line width=1.5,dash pattern={on 1.69pt off 2.76pt}]
				(224.59,263.29) -- (304.22,241.4);
				\draw [line width=1.5,dash pattern={on 1.69pt off 2.76pt}]
				(223.84,266.73) -- (314.14,286.16);
				\draw [line width=1.5,dash pattern={on 1.69pt off 2.76pt}]
				(318.96,287.55) -- (248.99,312);
				\draw [line width=1.5,dash pattern={on 1.69pt off 2.76pt}]
				(249.72,316.19) -- (370.58,375.38);
				
				\draw [dash pattern={on 4.5pt off 4.5pt}]
				(285.18,95.9) -- (163.08,94.8);
				\draw [dash pattern={on 4.5pt off 4.5pt}]
				(276.86,110.79) -- (162.59,109.67);
				\draw [dash pattern={on 4.5pt off 4.5pt}]
				(284.09,143.09) -- (162.62,142);
				\draw [dash pattern={on 4.5pt off 4.5pt}]
				(276.28,180.25) -- (163.01,179.13);
				\draw [dash pattern={on 4.5pt off 4.5pt}]
				(283.96,211.14) -- (163.04,210.06);
				\draw [dash pattern={on 4.5pt off 4.5pt}]
				(276.18,233.05) -- (163.91,231.94);
				\draw [dash pattern={on 4.5pt off 4.5pt}]
				(283.98,247.09) -- (162.43,245.99);
				\draw [dash pattern={on 4.5pt off 4.5pt}]
				(276.55,278.61) -- (162.83,277.5);
				\draw [dash pattern={on 4.5pt off 4.5pt}]
				(283.26,299.64) -- (161.88,298.56);
				\draw [dash pattern={on 4.5pt off 4.5pt}]
				(275.93,330.15) -- (159.38,329.05);
				\draw [dash pattern={on 4.5pt off 4.5pt}]
				(404.61,69.68) -- (162.76,68.92);
				\draw [dash pattern={on 4.5pt off 4.5pt}]
				(368.58,374.38) -- (162.69,373.53);
				
				\draw (476.78,75.7) node
				[anchor=north west,inner sep=0.75pt,rotate=-89.99] {$x_n$};
				
				\draw (159,64.1) node
				[anchor=north west,inner sep=0.75pt,rotate=-89.99] {$n_1$};
				\draw (159,84.97) node
				[anchor=north west,inner sep=0.75pt,rotate=-89.99] {$n_2$};
				\draw (159,103.24) node
				[anchor=north west,inner sep=0.75pt,rotate=-89.99] {$n_3$};
				\draw (159,136.97) node
				[anchor=north west,inner sep=0.75pt,rotate=-89.99] {$n_4$};
				\draw (159,173.98) node
				[anchor=north west,inner sep=0.75pt,rotate=-89.99] {$n_5$};
				\draw (159,203.02) node
				[anchor=north west,inner sep=0.75pt,rotate=-89.99] {$n_6$};
				\draw (159,226.33) node
				[anchor=north west,inner sep=0.75pt,rotate=-89.99] {$n_7$};
				\draw (159,242.38) node
				[anchor=north west,inner sep=0.75pt,rotate=-89.99] {$n_8$};
				\draw (159,270.68) node
				[anchor=north west,inner sep=0.75pt,rotate=-89.99] {$n_9$};
				\draw (159,291.94) node
				[anchor=north west,inner sep=0.75pt,rotate=-89.99] {$n_{10}$};
				\draw (159,324.43) node
				[anchor=north west,inner sep=0.75pt,rotate=-89.99] {$n_{11}$};
				\draw (159,367.09) node
				[anchor=north west,inner sep=0.75pt,rotate=-89.99] {$n_{12}$};
				
				\draw (424.39,25) node
				[anchor=north west,inner sep=0.75pt,rotate=-89.99] {$c+\delta$};
				\draw (285,45) node
				[anchor=north west,inner sep=0.75pt,rotate=-89.99] {$\alpha_i$};
				\draw (240,16) node
				[anchor=north west,inner sep=0.75pt,rotate=-89.99]
				{$\alpha_i-\frac{\delta}{2}$};
				\draw (166,45) node
				[anchor=north west,inner sep=0.75pt,rotate=-89.99] {$0$};
				\draw (165,410) node
				[anchor=north west,inner sep=0.75pt,rotate=-89.99] {$n$};
				
			\end{tikzpicture}%
	}}

	\begin{tikzpicture}
		\draw[->,line width=1pt] (0,0.55) -- (0,-0.55);
	\end{tikzpicture}

	\scalebox{1.05}{
		\rotatebox[origin=c]{90.2}{%
			\begin{tikzpicture}[x=0.75pt,y=0.75pt,yscale=-1,xscale=1]
				
				\draw (162.6,486.44) -- (475.31,486.86);
				\draw [shift={(477.31,486.87)},rotate=180.08]
				[color={rgb,255:red,0;green,0;blue,0}][line width=0.75]
				(10.93,-3.29) .. controls (6.95,-1.4) and (3.31,-0.3)
				.. (0,0) .. controls (3.31,0.3) and (6.95,1.4)
				.. (10.93,3.29);
				
				\draw (231.08,486.53) -- (229.18,809.76);
				\draw (280.62,486.6) -- (278.72,809.82);
				\draw (416.11,486.78) -- (414.22,810.01);
				
				\draw (162.6,486.44) -- (160.72,822.66);
				\draw [shift={(160.7,824.66)},rotate=270.34]
				[color={rgb,255:red,0;green,0;blue,0}][line width=0.75]
				(10.93,-3.29) .. controls (6.95,-1.4) and (3.31,-0.3)
				.. (0,0) .. controls (3.31,0.3) and (6.95,1.4)
				.. (10.93,3.29);
				
				\draw [line width=1.5,dash pattern={on 1.69pt off 2.76pt}]
				(406.42,493.27) -- (259.08,524.52);
				\draw [line width=1.5,dash pattern={on 1.69pt off 2.76pt}]
				(253.79,525.4) -- (332.32,550.8);
				\draw [line width=1.5,dash pattern={on 1.69pt off 2.76pt}]
				(335.59,552.6) -- (218.85,583.36);
				\draw [line width=1.5,dash pattern={on 1.69pt off 2.76pt}]
				(329.48,616.43) -- (218.92,586.77);
				\draw [line width=1.5,dash pattern={on 1.69pt off 2.76pt}]
				(335.21,617.84) -- (250.55,644.42);
				\draw [line width=1.5,dash pattern={on 1.69pt off 2.76pt}]
				(294.7,661.76) -- (246.71,646.24);
				\draw [line width=1.5,dash pattern={on 1.69pt off 2.76pt}]
				(223.62,686.15) -- (303.34,664.57);
				\draw [line width=1.5,dash pattern={on 1.69pt off 2.76pt}]
				(222.86,689.58) -- (313.08,709.37);
				\draw [line width=1.5,dash pattern={on 1.69pt off 2.76pt}]
				(317.9,710.78) -- (247.83,734.96);
				\draw [line width=1.5,dash pattern={on 1.69pt off 2.76pt}]
				(248.55,739.15) -- (369.16,798.82);
				
				\draw [dash pattern={on 4.5pt off 4.5pt}]
				(283.61,566.18) -- (162.14,564.61);
				\draw [dash pattern={on 4.5pt off 4.5pt}]
				(275.65,603.31) -- (162.38,601.74);
				\draw [dash pattern={on 4.5pt off 4.5pt}]
				(283.08,670.19) -- (161.53,668.6);
				\draw [dash pattern={on 4.5pt off 4.5pt}]
				(275.53,701.68) -- (161.81,700.11);
				\draw [dash pattern={on 4.5pt off 4.5pt}]
				(404.42,493.26) -- (162.57,491.53);
				\draw [dash pattern={on 4.5pt off 4.5pt}]
				(367.16,797.81) -- (161.29,796.13);
				
				\draw (476.56,499.57) node
				[anchor=north west,inner sep=0.75pt,rotate=-90.22] {$x_n$};
				
				\draw (159,486.69) node
				[anchor=north west,inner sep=0.75pt,rotate=-90.22] {$n_1$};
				\draw (159,559.57) node
				[anchor=north west,inner sep=0.75pt,rotate=-90.22] {$n_2$};
				\draw (159,596.57) node
				[anchor=north west,inner sep=0.75pt,rotate=-90.22] {$n_3$};
				\draw (159,664.98) node
				[anchor=north west,inner sep=0.75pt,rotate=-90.22] {$n_4$};
				\draw (159,693.28) node
				[anchor=north west,inner sep=0.75pt,rotate=-90.22] {$n_5$};
				\draw (159,789.69) node
				[anchor=north west,inner sep=0.75pt,rotate=-90.22] {$n_6$};
				
				\draw (424.33,450) node
				[anchor=north west,inner sep=0.75pt,rotate=-90.22] {$c+\delta$};
				\draw (285,465) node
				[anchor=north west,inner sep=0.75pt,rotate=-90.22] {$\alpha_i$};
				\draw (240,441) node
				[anchor=north west,inner sep=0.75pt,rotate=-90.22]
				{$\alpha_i-\frac{\delta}{2}$};
				\draw (170.83,465) node
				[anchor=north west,inner sep=0.75pt,rotate=-90.22] {$0$};
				\draw (165,833.36) node
				[anchor=north west,inner sep=0.75pt,rotate=-90.22] {$n$};
				
			\end{tikzpicture}%
		}
	}
	\caption{}
	\label{cbremove}
\end{figure}

After the deletions, by \eqref{cbdmi1}-\eqref{cbdn} we can conclude that (See Figure \ref{cbfigadele})
\begin{enumerate}
	\item[$\bullet$] For any $m=2,3,\cdots,d-1$, 
	\begin{align}
		x_{n_m+1}<\alpha_i\leq x_{n_m},\ \mathrm{if}\ m\ \mathrm{is\ even},\label{cbcrosse}\\
		x_{n_m}<\alpha_i\leq x_{n_m+1},\ \mathrm{if}\ m\ \mathrm{is\ odd}.\label{cbcrosso}
	\end{align}
	\item[$\bullet$] For any  $n\in [n_m,n_{m+1}]$ with odd $m\leq d-1$,
	\begin{align}
		x_n\in \left(\alpha_i-\frac{\delta}{2},c+\delta\right)\subset\cup_{l=1}^i\mathcal{I}_l\cup\frac{1}{2}\mathcal{I}_{i+1}.\label{cboddmxnin}
	\end{align}
	\item[$\bullet$] For any $n\in (n_m,n_{m+1}]$ with even $m\leq d-1$,
	\begin{align}
		x_n< \alpha_i,\label{cbevenmxnin}
	\end{align}
	moreover, there exists $l_m\in (n_m,n_{m+1}]$ such that 
	\begin{align}
		x_{l_m}\leq \alpha_i-\frac{\delta}{2}.\label{cbxlmleq}
	\end{align}
\end{enumerate} 
\begin{figure}[H]
	\centering

	\tikzset{every picture/.style={line width=0.75pt}} 
	
	\begin{tikzpicture}[x=0.75pt,y=0.75pt,yscale=-1,xscale=1]
		
		\draw    (157.66,332.4) -- (156.77,37.7) ;
		\draw [shift={(156.77,35.7)}, rotate = 89.83] [color={rgb, 255:red, 0; green, 0; blue, 0 }  ][line width=0.75]    (10.93,-3.29) .. controls (6.95,-1.4) and (3.31,-0.3) .. (0,0) .. controls (3.31,0.3) and (6.95,1.4) .. (10.93,3.29)   ;
		\draw    (157.47,263.92) -- (480.69,264.47) ;
		\draw    (156.94,78.89) -- (480.17,79.43) ;
		\draw    (157.66,332.4) -- (478.89,332.94) ;
		\draw [shift={(480.89,332.94)}, rotate = 180.1] [color={rgb, 255:red, 0; green, 0; blue, 0 }  ][line width=0.75]    (10.93,-3.29) .. controls (6.95,-1.4) and (3.31,-0.3) .. (0,0) .. controls (3.31,0.3) and (6.95,1.4) .. (10.93,3.29)   ;
		\draw    (157.33,214.39) -- (480.55,214.93) ;
		\draw [line width=1.5]  [dash pattern={on 1.69pt off 2.76pt}]  (163.47,88.56) -- (195.34,235.76) ;
		\draw [line width=1.5]  [dash pattern={on 1.69pt off 2.76pt}]  (196.23,241.05) -- (221.3,162.42) ;
		\draw [line width=1.5]  [dash pattern={on 1.69pt off 2.76pt}]  (223.09,159.13) -- (254.34,275.74) ;
		\draw [line width=1.5]  [dash pattern={on 1.69pt off 2.76pt}]  (286.95,164.98) -- (257.76,275.66) ;
		\draw [line width=1.5]  [dash pattern={on 1.69pt off 2.76pt}]  (288.33,159.24) -- (315.27,243.79) ;
		\draw [line width=1.5]  [dash pattern={on 1.69pt off 2.76pt}]  (332.42,199.57) -- (317.1,247.62) ;
		\draw [line width=1.5]  [dash pattern={on 1.69pt off 2.76pt}]  (357.11,270.54) -- (335.2,190.92) ;
		\draw [line width=1.5]  [dash pattern={on 1.69pt off 2.76pt}]  (360.54,271.29) -- (379.96,180.99) ;
		\draw [line width=1.5]  [dash pattern={on 1.69pt off 2.76pt}]  (381.35,176.16) -- (405.82,246.13) ;
		\draw [line width=1.5]  [dash pattern={on 1.69pt off 2.76pt}]  (410,245.4) -- (469.17,124.54) ;
		\draw  [dash pattern={on 4.5pt off 4.5pt}]  (236.89,211.06) -- (235.83,332.53) ;
		\draw  [dash pattern={on 4.5pt off 4.5pt}]  (274.05,218.87) -- (272.96,332.14) ;
		\draw  [dash pattern={on 4.5pt off 4.5pt}]  (340.9,211.15) -- (339.82,332.71) ;
		\draw  [dash pattern={on 4.5pt off 4.5pt}]  (372.42,218.57) -- (371.33,332.3) ;
		\draw  [dash pattern={on 4.5pt off 4.5pt}]  (163.47,90.56) -- (162.75,332.41) ;
		\draw  [dash pattern={on 4.5pt off 4.5pt}]  (468.17,126.53) -- (467.35,332.42) ;
		\draw  [dash pattern={on 4.5pt off 4.5pt}]  (254.34,275.74) -- (253.24,334.01) ;
		\draw  [dash pattern={on 4.5pt off 4.5pt}]  (357.34,270.74) -- (356.24,332.01) ;
		
		\draw (267.79,335) node [anchor=north west][inner sep=0.75pt]  [rotate=-359.98] [align=left] {$n_3$};
		\draw (371.33,335) node [anchor=north west][inner sep=0.75pt]  [rotate=-359.98] [align=left] {$n_5$};
		\draw (228,335) node [anchor=north west][inner sep=0.75pt]  [rotate=-359.98] [align=left] {$n_2$};
		\draw (332.5,335) node [anchor=north west][inner sep=0.75pt]  [rotate=-359.98] [align=left] {$n_4$};
		\draw (350.24,335) node [anchor=north west][inner sep=0.75pt]  [rotate=-359.98] [align=left] {${l_4}$};
		\draw (110,70.78) node [anchor=north west][inner sep=0.75pt]  [rotate=-359.98] [align=left] {$c+\delta$};
		\draw (135,208) node [anchor=north west][inner sep=0.75pt]  [rotate=-359.98] [align=left] {$\alpha_i$};
		\draw (110,255) node [anchor=north west][inner sep=0.75pt]  [rotate=-359.98] [align=left] {$\alpha_i-\frac{\delta}{2}$};
		\draw (135,324.28) node [anchor=north west][inner sep=0.75pt]  [rotate=-359.98] [align=left] {0};
		\draw (157.93,335) node [anchor=north west][inner sep=0.75pt]  [rotate=-359.98] [align=left] {$n_1$};
		\draw (460.92,335) node [anchor=north west][inner sep=0.75pt]  [rotate=-359.98] [align=left] {$n_6$};
		\draw (489.55,324.89) node [anchor=north west][inner sep=0.75pt]  [rotate=-359.98] [align=left] {$n$};
		\draw (248,335) node [anchor=north west][inner sep=0.75pt]  [rotate=-359.98] [align=left] {${l_2}$};
		\draw (161.08,27.78) node [anchor=north west][inner sep=0.75pt]  [rotate=-359.98] [align=left] {$x_n$};

	\end{tikzpicture}
	\caption{}
	\label{cbfigadele}
\end{figure}

 Note that $d=d(i)$ and each $l_m=l_m(i),n_m=n_m(i)$ depend on $i$, but we leave the dependence on $i$ implicit for simplicity. 
By \eqref{cbxn1xnlle}, \eqref{cbcrosse} and \eqref{cbxlmleq}, for any even $m\leq d-1$,
 \begin{align}
 	\label{cbnmlm}\abs{x_{l_m}-x_{n_m}}\geq \frac{\delta}{4}.
 \end{align}
 Define $L_i$ based on $\mathcal{X}_i=\{n_m\}^d_{m=1}$:
	\begin{align}
		L_i=\begin{cases}
			\cup_{m=1}^{\frac{d}{2}}\{n_{2m-1},n_{2m-1}+1,\cdots,n_{2m}\}, &\mathrm{if}\ d\ \mathrm{is\ even},\\
			\cup_{m=1}^{\frac{d-1}{2}}\{n_{2m-1},n_{2m-1}+1,\cdots,n_{2m}\}, &\mathrm{if}\ d\ \mathrm{is\ odd}.\nonumber
		\end{cases}
	\end{align}
	See Figure \ref{figlie} and Figure \ref{figlio} for example, where the elements of $L_i$ are marked in red.
	By letting $i=1,2,\cdots,K_1-2$, we can obtain $\{L_i\}_{i=1}^{K_1-2}$.
\begin{figure}[H]
	\centering

	\tikzset{every picture/.style={line width=0.75pt}} 
	
	\begin{tikzpicture}[x=0.75pt,y=0.75pt,yscale=-1,xscale=1]
		
		\draw    (152.66,333.41) -- (151.77,38.71) ;
		\draw [shift={(151.77,36.71)}, rotate = 89.83] [color={rgb, 255:red, 0; green, 0; blue, 0 }  ][line width=0.75]    (10.93,-3.29) .. controls (6.95,-1.4) and (3.31,-0.3) .. (0,0) .. controls (3.31,0.3) and (6.95,1.4) .. (10.93,3.29)   ;
		\draw    (152.47,264.94) -- (475.69,265.48) ;
		\draw    (151.94,79.9) -- (475.17,80.45) ;
		\draw    (152.66,333.41) -- (473.89,333.96) ;
		\draw [shift={(475.89,333.96)}, rotate = 180.1] [color={rgb, 255:red, 0; green, 0; blue, 0 }  ][line width=0.75]    (10.93,-3.29) .. controls (6.95,-1.4) and (3.31,-0.3) .. (0,0) .. controls (3.31,0.3) and (6.95,1.4) .. (10.93,3.29)   ;
		\draw    (152.33,215.4) -- (475.55,215.95) ;
		\draw [color={rgb, 255:red, 208; green, 2; blue, 27 }  ,draw opacity=1 ][line width=1.5]  [dash pattern={on 1.69pt off 2.76pt}]  (158.47,89.57) -- (190.34,236.78) ;
		\draw [color={rgb, 255:red, 208; green, 2; blue, 27 }  ,draw opacity=1 ][line width=1.5]  [dash pattern={on 1.69pt off 2.76pt}]  (191.23,242.06) -- (216.3,163.43) ;
		\draw [line width=1.5]  [dash pattern={on 1.69pt off 2.76pt}]  (266.35,223.34) -- (251.76,273.68) ;
		\draw [color={rgb, 255:red, 208; green, 2; blue, 27 }  ,draw opacity=1 ][line width=1.5]  [dash pattern={on 1.69pt off 2.76pt}]  (283.33,160.26) -- (310.27,244.81) ;
		\draw [line width=1.5]  [dash pattern={on 1.69pt off 2.76pt}]  (352.11,271.56) -- (337.2,215.93) ;
		\draw [color={rgb, 255:red, 208; green, 2; blue, 27 }  ,draw opacity=1 ][line width=1.5]  [dash pattern={on 1.69pt off 2.76pt}]  (376.35,177.18) -- (400.82,247.15) ;
		\draw [color={rgb, 255:red, 208; green, 2; blue, 27 }  ,draw opacity=1 ][line width=1.5]  [dash pattern={on 1.69pt off 2.76pt}]  (405,246.41) -- (464.17,125.55) ;
		\draw  [dash pattern={on 4.5pt off 4.5pt}]  (231.89,212.08) -- (230.83,333.55) ;
		\draw  [dash pattern={on 4.5pt off 4.5pt}]  (269.05,219.88) -- (267.96,333.15) ;
		\draw  [dash pattern={on 4.5pt off 4.5pt}]  (335.9,212.17) -- (334.82,333.72) ;
		\draw  [dash pattern={on 4.5pt off 4.5pt}]  (367.42,219.59) -- (366.33,333.32) ;
		\draw  [dash pattern={on 4.5pt off 4.5pt}]  (158.47,91.57) -- (157.75,333.42) ;
		\draw  [dash pattern={on 4.5pt off 4.5pt}]  (463.17,127.55) -- (462.35,333.44) ;
	
		
		\draw [color={rgb, 255:red, 208; green, 2; blue, 27 }  ,draw opacity=1 ][line width=1.5]  [dash pattern={on 1.69pt off 2.76pt}]  (218.09,160.15) -- (233.72,217.46) ;
		\draw [color={rgb, 255:red, 208; green, 2; blue, 27 }  ,draw opacity=1 ][line width=1.5]  [dash pattern={on 1.69pt off 2.76pt}]  (283.33,160.26) -- (267.35,221.34) ;
		\draw [color={rgb, 255:red, 208; green, 2; blue, 27 }  ,draw opacity=1 ][line width=1.5]  [dash pattern={on 1.69pt off 2.76pt}]  (329.42,199.59) -- (311.1,246.64) ;
		\draw [color={rgb, 255:red, 208; green, 2; blue, 27 }  ,draw opacity=1 ][line width=1.5]  [dash pattern={on 1.69pt off 2.76pt}]  (331.2,194.93) -- (336.9,215.17) ;
		\draw [color={rgb, 255:red, 208; green, 2; blue, 27 }  ,draw opacity=1 ][line width=1.5]  [dash pattern={on 1.69pt off 2.76pt}]  (376.35,177.18) -- (367.42,219.59) ;
		\draw [line width=1.5]  [dash pattern={on 1.69pt off 2.76pt}]  (355.54,272.3) -- (366.25,224.15) ;
		\draw [line width=1.5]  [dash pattern={on 1.69pt off 2.76pt}]  (249.34,276.76) -- (233.72,217.46) ;
		\draw (263,335) node [anchor=north west][inner sep=0.75pt]  [rotate=-359.98] [align=left] {$n_3$};
		\draw (363,335) node [anchor=north west][inner sep=0.75pt]  [rotate=-359.98] [align=left] {$n_5$};
		\draw (225,335) node [anchor=north west][inner sep=0.75pt]  [rotate=-359.98] [align=left] {$n_2$};
		\draw (330,335) node [anchor=north west][inner sep=0.75pt]  [rotate=-359.98] [align=left] {$n_4$};
		
		\draw (110,70.78) node [anchor=north west][inner sep=0.75pt]  [rotate=-359.98] [align=left] {$c+\delta$};
		\draw (135,208) node [anchor=north west][inner sep=0.75pt]  [rotate=-359.98] [align=left] {$\alpha_i$};
		\draw (105,255) node [anchor=north west][inner sep=0.75pt]  [rotate=-359.98] [align=left] {$\alpha_i-\frac{\delta}{2}$};
		\draw (135,324.28) node [anchor=north west][inner sep=0.75pt]  [rotate=-359.98] [align=left] {0};
		\draw (153,335) node [anchor=north west][inner sep=0.75pt]  [rotate=-359.98] [align=left] {$n_1$};
		\draw (455,335) node [anchor=north west][inner sep=0.75pt]  [rotate=-359.98] [align=left] {$n_6$};
		\draw (484,327) node [anchor=north west][inner sep=0.75pt]  [rotate=-359.98] [align=left] {$n$};
	
		\draw (161.08,29) node [anchor=north west][inner sep=0.75pt]  [rotate=-359.98] [align=left] {$x_n$};

	\end{tikzpicture}
	\caption{$d=6$ is even}
	\label{figlie}
\end{figure}

\begin{figure}[H]
	\centering

	\tikzset{every picture/.style={line width=0.75pt}} 
	
	\begin{tikzpicture}[x=0.75pt,y=0.75pt,yscale=-1,xscale=1]
		
		\draw    (152.66,333.41) -- (151.77,38.71) ;
		\draw [shift={(151.77,36.71)}, rotate = 89.83] [color={rgb, 255:red, 0; green, 0; blue, 0 }  ][line width=0.75]    (10.93,-3.29) .. controls (6.95,-1.4) and (3.31,-0.3) .. (0,0) .. controls (3.31,0.3) and (6.95,1.4) .. (10.93,3.29)   ;
		\draw    (152.47,264.94) -- (475.69,265.48) ;
		\draw    (151.94,79.9) -- (475.17,80.45) ;
		\draw    (152.66,333.41) -- (473.89,333.96) ;
		\draw [shift={(475.89,333.96)}, rotate = 180.1] [color={rgb, 255:red, 0; green, 0; blue, 0 }  ][line width=0.75]    (10.93,-3.29) .. controls (6.95,-1.4) and (3.31,-0.3) .. (0,0) .. controls (3.31,0.3) and (6.95,1.4) .. (10.93,3.29)   ;
		\draw    (152.33,215.4) -- (475.55,215.95) ;
		\draw [color={rgb, 255:red, 208; green, 2; blue, 27 }  ,draw opacity=1 ][line width=1.5]  [dash pattern={on 1.69pt off 2.76pt}]  (158.47,89.57) -- (190.34,236.78) ;
		\draw [color={rgb, 255:red, 208; green, 2; blue, 27 }  ,draw opacity=1 ][line width=1.5]  [dash pattern={on 1.69pt off 2.76pt}]  (191.23,242.06) -- (216.3,163.43) ;
		\draw [line width=1.5]  [dash pattern={on 1.69pt off 2.76pt}]  (266.35,223.34) -- (251.76,273.68) ;
		\draw [color={rgb, 255:red, 208; green, 2; blue, 27 }  ,draw opacity=1 ][line width=1.5]  [dash pattern={on 1.69pt off 2.76pt}]  (283.33,160.26) -- (310.27,244.81) ;
		\draw [line width=1.5]  [dash pattern={on 1.69pt off 2.76pt}]  (352.11,271.56) -- (337.2,215.93) ;
		\draw [color={rgb, 255:red, 208; green, 2; blue, 27 }  ,draw opacity=1 ][line width=1.5]  [dash pattern={on 1.69pt off 2.76pt}]  (380.35,178.18) -- (409.67,213.67) ;
		\draw  [dash pattern={on 4.5pt off 4.5pt}]  (231.89,212.08) -- (230.83,333.55) ;
		\draw  [dash pattern={on 4.5pt off 4.5pt}]  (269.05,219.88) -- (267.96,333.15) ;
		\draw  [dash pattern={on 4.5pt off 4.5pt}]  (335.9,212.17) -- (334.82,333.72) ;
		\draw  [dash pattern={on 4.5pt off 4.5pt}]  (367.42,219.59) -- (366.33,333.32) ;
		\draw  [dash pattern={on 4.5pt off 4.5pt}]  (158.47,91.57) -- (157.75,333.42) ;
		\draw  [dash pattern={on 4.5pt off 4.5pt}]  (462.17,284.55) -- (461.35,333.44) ;

		\draw [color={rgb, 255:red, 208; green, 2; blue, 27 }  ,draw opacity=1 ][line width=1.5]  [dash pattern={on 1.69pt off 2.76pt}]  (218.09,160.15) -- (233.72,217.46) ;
		\draw [color={rgb, 255:red, 208; green, 2; blue, 27 }  ,draw opacity=1 ][line width=1.5]  [dash pattern={on 1.69pt off 2.76pt}]  (283.33,160.26) -- (267.35,221.34) ;
		\draw [color={rgb, 255:red, 208; green, 2; blue, 27 }  ,draw opacity=1 ][line width=1.5]  [dash pattern={on 1.69pt off 2.76pt}]  (329.42,199.59) -- (311.1,246.64) ;
		\draw [color={rgb, 255:red, 208; green, 2; blue, 27 }  ,draw opacity=1 ][line width=1.5]  [dash pattern={on 1.69pt off 2.76pt}]  (331.2,194.93) -- (336.9,215.17) ;
		\draw [color={rgb, 255:red, 208; green, 2; blue, 27 }  ,draw opacity=1 ][line width=1.5]  [dash pattern={on 1.69pt off 2.76pt}]  (376.35,177.18) -- (367.42,219.59) ;
		\draw [line width=1.5]  [dash pattern={on 1.69pt off 2.76pt}]  (355.54,272.3) -- (366.25,224.15) ;
		\draw [line width=1.5]  [dash pattern={on 1.69pt off 2.76pt}]  (249.34,276.76) -- (233.72,217.46) ;
		\draw  [dash pattern={on 4.5pt off 4.5pt}]  (408.17,212.55) -- (407.35,334.44) ;
	
		\draw [line width=1.5]  [dash pattern={on 1.69pt off 2.76pt}]  (462.17,286.55) -- (411.72,217.46) ;

		\draw (455,335) node [anchor=north west][inner sep=0.75pt]   [align=left] {$n_7$};
			\draw (263,335) node [anchor=north west][inner sep=0.75pt]  [rotate=-359.98] [align=left] {$n_3$};
		\draw (363,335) node [anchor=north west][inner sep=0.75pt]  [rotate=-359.98] [align=left] {$n_5$};
		\draw (225,335) node [anchor=north west][inner sep=0.75pt]  [rotate=-359.98] [align=left] {$n_2$};
		\draw (330,335) node [anchor=north west][inner sep=0.75pt]  [rotate=-359.98] [align=left] {$n_4$};
		\draw (110,70.78) node [anchor=north west][inner sep=0.75pt]  [rotate=-359.98] [align=left] {$c+\delta$};
		\draw (135,208) node [anchor=north west][inner sep=0.75pt]  [rotate=-359.98] [align=left] {$\alpha_i$};
		\draw (105,255) node [anchor=north west][inner sep=0.75pt]  [rotate=-359.98] [align=left] {$\alpha_i-\frac{\delta}{2}$};
		\draw (135,324.28) node [anchor=north west][inner sep=0.75pt]  [rotate=-359.98] [align=left] {0};
		\draw (153,335) node [anchor=north west][inner sep=0.75pt]  [rotate=-359.98] [align=left] {$n_1$};
		\draw (400,335) node [anchor=north west][inner sep=0.75pt]  [rotate=-359.98] [align=left] {$n_6$};
		\draw (484,327) node [anchor=north west][inner sep=0.75pt]  [rotate=-359.98] [align=left] {$n$};
		
		\draw (161.08,29) node [anchor=north west][inner sep=0.75pt]  [rotate=-359.98] [align=left] {$x_n$};
		
	\end{tikzpicture}
\caption{$d=7$ is odd}
\label{figlio}
	
\end{figure}

	 By \eqref{cbdmi1}-\eqref{cbdhalfmii}, \eqref{cboddmxnin}, \eqref{cbevenmxnin} and the definition of $L_i$, we can conclude that 
	\begin{align}
		\cup_{l=1}^{i}\mathcal{I}_l\subset L_i\subset \cup_{l=1}^{i}\mathcal{I}_l\cup\frac{1}{2}\mathcal{I}_{i+1},\ i=1,2,\cdots,K_1-2.\label{cbcupmilmih}
	\end{align}
Then by \eqref{cbdmii}, \eqref{cbdhalfmii} and \eqref{cbcupmilmih}, it holds that
\begin{align}
L_i\subset\cup_{l=1}^{i}\mathcal{I}_l\cup\frac{1}{2}\mathcal{I}_{i+1}\subset \cup_{l=1}^{i+1}\mathcal{I}_l\subset L_{i+1},\ i=1,2,\cdots,K_1-3.\nonumber
\end{align}
Hence,
	\begin{align}
		L_1\subset L_2\subset\cdots\subset L_{K_1-2}.\label{cbl1l2lk}
	\end{align}
	
We claim that 
	\begin{align}
		\abs{\sum_{n\in L_i}(x_{n+1}-x_n)}\leq 1,\ i=1,2,\cdots,K_1-2.\label{fixnmxnli}
	\end{align}
	Then one can prove Theorem \ref{caseb} by letting
	\begin{align}
		I_1&=L_1,\label{cbi1l1}\\
		I_{i}=L_i\setminus L_{i-1}&,i=2,\cdots,K_1-2,\label{cbiili}\\
		I_{K_1-1}&=I\setminus L_{K_1-2}.\label{cbikc-1}
	\end{align}
Indeed, by \eqref{cbl1l2lk} and the definitions of $I_i,i=1,\cdots,K_1-1$, we know that $I_i,i=1,\cdots,K_1-1,$ are disjoint subsets of $I$.
By \eqref{cbdmi1},  \eqref{cbdhalfmii}, \eqref{cbcupmilmih} and \eqref{cbi1l1}, one has for any $n\in I_1$,
\begin{align}
	x_n\in \Big[\alpha_2+\frac{\delta}{2},c+\delta\Big),\nonumber
\end{align} combined with \eqref{dealphai} and \eqref{deofVi}
we can obtain \eqref{aiinvi} for $i=1$. For any $i=2,\cdots,K_1-2$, by \eqref{cbcupmilmih}, \eqref{cbl1l2lk} and \eqref{cbiili}, we have 
\begin{align}
	I_i\subset \mathcal{I}_i\cup\frac{1}{2}\mathcal{I}_{i+1}.\nonumber
\end{align}
Then by \eqref{dealphai}, \eqref{deofVi}, \eqref{cbdmii} and \eqref{cbdhalfmii} we can obtain \eqref{aiinvi} for $i=2,\cdots,K_1-2$. By \eqref{xninbcmpe}-\eqref{cbdmii}, \eqref{cbcupmilmih} and \eqref{cbikc-1}, for any $n\in I_{K_1-1}$,
it holds that
\begin{align}
	x_n\in (b-\delta,\alpha_{K_1-2}).\nonumber
\end{align}
Then combining with \eqref{dealphai} and  \eqref{deofVi} one can obtain \eqref{aiinvi} for $i=K_1-1$.

By \eqref{fixnmxnli} and \eqref{cbi1l1}, we directly obtain \eqref{Iileq10} for $i=1$. By \eqref{cbl1l2lk}, \eqref{fixnmxnli} and \eqref{cbiili}, one has
\begin{align}
	\abs{\sum_{n\in I_i}(x_{n+1}-x_n)}\leq &\abs{\sum_{n\in L_i}(x_{n+1}-x_n)}+\abs{\sum_{n\in L_{i-1}}(x_{n+1}-x_n)}\nonumber\\
	\leq &2,\nonumber
\end{align}
from which we can obtain \eqref{Iileq10} for $i=2,\cdots,K_1-2$. By \eqref{fixnmxnli} and \eqref{cbikc-1}, we have
	\begin{align}
		\abs{\sum_{n\in I_{K_1-1}}(x_{n+1}-x_n)}=&\abs{\sum_{n\in I}(x_{n+1}-x_n)-\sum_{n\in L_{K_1-2}}(x_{n+1}-x_n)}\nonumber\\
		\leq&\abs{\sum_{n=n_{\delta}}^{n_\delta+N}(x_{n+1}-x_n)}+\abs{\sum_{n\in L_{K_1-2}}(x_{n+1}-x_n)}\nonumber\\
		\leq &\abs{x_{n_\delta+N+1}-x_{n_\delta}}+1.\nonumber
	\end{align}
	 Then we can obtain \eqref{Iileq10} for $i=K_1-1$ from \eqref{xninbcmpe}.
	 
	 By \eqref{cbdmi1}-\eqref{cbdhalfmii}, \eqref{cbcupmilmih}, \eqref{cbl1l2lk} and \eqref{cbi1l1}-\eqref{cbikc-1}, one can conclude that
	 \begin{align}
	 	\Big\{n\in I:x_n\in \Big[\alpha_i,\alpha_i+\frac{\delta}{2}\Big)\Big\}\subset \mathcal{I}_i\setminus \frac{1}{2}\mathcal{I}_i\subset I_i, \ i=1,2,\cdots,K_1-1.\label{cbmiminushalf}
	 \end{align}
	Let $i\in \{1,2,\cdots,K_1-1\}$ be fixed. Since 
	$$\Big[\alpha_i,\alpha_i+\frac{\delta}{2}\Big)\subset A,$$
	by \eqref{cbxn1xnlle} we know that there exist infinitely many different pairs of $(s,t)\in\mathbb{N}^2$ with $n_\delta<s<t$ such that 
	 \begin{align}
	 	x_s\in \Big(\alpha_i,\alpha_i+\frac{\delta}{10}\Big),\ x_t&\in \Big(\alpha_i+\frac{3\delta}{10},\alpha_i+\frac{2\delta}{5}\Big),\nonumber\\
	 	x_n\in \Big(\alpha_i,\alpha_i+\frac{\delta}{2}\Big),\ &\mathrm{for\ any }\ n\in [s,t].\label{cbxninaist}
	 \end{align}
Then by \eqref{xnbigo1}, it holds that
\begin{align}
	\frac{\delta}{5}\leq x_t-x_s\leq \abs{\sum_{n=s}^{t-1}(x_{n+1}-x_n)}
	\leq O(1)\sum_{n=s}^{t-1}\frac{1}{1+n}.\label{eover5le}
\end{align}
By \eqref{cbmiminushalf} and \eqref{cbxninaist}, we know  $\{s,s+1,\cdots,t\}\subset I_i$. Since such pairs of $(s,t)$ are infinitely many, by letting $N$ be large enough, $I_i$ would contain enough such pairs, then \eqref{Iifractionlargecasec} follows from \eqref{eover5le}.

Now we are in a position to prove \eqref{fixnmxnli}. By the definition of $L_i$, 
\begin{align}
	\abs{\sum_{n\in L_i}(x_{n+1}-x_n)}=\abs{\sum_{m=1}^{\frac{d}{2}}(x_{n_{2m}+1}-x_{n_{2m-1}})}&,\ \mathrm{if}\ d\ \mathrm{is\ even},\label{cbevsum}\\
	\abs{\sum_{n\in L_i}(x_{n+1}-x_n)}=\abs{\sum_{m=1}^{\frac{d-1}{2}}(x_{n_{2m}+1}-x_{n_{2m-1}})}&,\ \mathrm{if}\ d\ \mathrm{is\ odd}.\label{cbodsum}
\end{align}

For any $m=2,\cdots,d-1$, by \eqref{xnbigo1}, \eqref{cbcrosse} and \eqref{cbcrosso}, we have
\begin{align}
	x_{n_m}-\alpha_i=\frac{O(1)}{1+n_m},\ x_{n_m+1}-\alpha_i=\frac{O(1)}{1+n_m}.\nonumber
\end{align}
Then if $d$ is even, 
\begin{align}
	\abs{\sum_{m=1}^{\frac{d}{2}}(x_{n_{2m}+1}-x_{n_{2m-1}})}\nonumber=&\abs{\sum_{m=1}^{\frac{d}{2}}(x_{n_{2m}+1}-\alpha_i-x_{n_{2m-1}}+\alpha_i)}\\
	\leq &\abs{x_{n_d+1}-x_{n_1}}+\sum_{m=2}^{d-1}\frac{O(1)}{1+n_m},
\end{align}
if $d$ is odd,
\begin{align}
	\abs{\sum_{m=1}^{\frac{d-1}{2}}(x_{n_{2m}+1}-x_{n_{2m-1}})}\leq \abs{x_{n_{d-1}+1}-x_{n_1}}+\sum_{m=2}^{d-1}\frac{O(1)}{1+n_m}.
\end{align}
For any even $m\leq d-1$, by \eqref{xnbigo1} and \eqref{cbnmlm}, one can obtain
\begin{align}
	\frac{\delta}{4}\leq \abs{x_{l_m}-x_{n_m}}\leq \sum_{n=n_m}^{n_{m+1}}\abs{x_{n+1}-x_n}=O(1)\ln \frac{n_{m+1}}{n_m}.\nonumber
\end{align}
Hence, there exists $C_0>1$ such that for any even $m$,
\begin{align}
	n_m>C_0n_{m-2}>\cdots>C_0^{\frac{m}{2}-1}n_2>C_0^{\frac{m}{2}-1}n_1.\nonumber
\end{align}
Therefore,
\begin{align}
	\sum_{m=2}^{d-1}\frac{1}{1+n_m}\leq \frac{2C_0}{n_1(C_0-1)}.\label{cbsumfin}
\end{align}
By \eqref{xninbcmpe} and \eqref{cbevsum}-\eqref{cbsumfin}, 
\begin{align}
		\abs{\sum_{n\in L_i}(x_{n+1}-x_n)}\leq c-b+2\delta+\frac{O(1)}{1+n_1}.\nonumber
		\end{align}
Then by letting $n_\delta$ be large enough, we can obtain \eqref{fixnmxnli}.
\end{proof}

\section{Partitions in the Full-Circle Case}\label{pcd}
\begin{proof}[\textbf{Proof of Theorem \ref{cased}}]
	Let $n_\delta\in\mathbb{N}$ be large, to be specified below.
	For $i=1,2,\cdots, K_3-1,$ define 
	\begin{align}
		\mathcal{A}_i=\{n\in I: \left[x_n\right]\in [\varphi_i,\varphi_{i-1}]\},\ \tilde{\mathcal{A}}_i=\{n\in I: \left[x_n\right]\in [\tilde{\varphi}_{i-1},\tilde{\varphi}_i]\}, \label{definitionofai}
	\end{align}
	where $\varphi_i,\tilde{\varphi}_i$ are defined by \eqref{definitionofthetai} and 
	$$\varphi_0=\tilde{\varphi}_0=\frac{1}{2}.$$ 
	For any $i=1,2,\cdots,K_3-1$, define
\begin{align}
\frac{1}{2}\mathcal{A}_i&=\left\{n\in I: \left[x_n\right]\in \left[\varphi_{i}+\frac{\delta}{2},\varphi_{i-1}\right]\right\},\label{deofhalfai}\\  
\frac{1}{2}\tilde{\mathcal{A}}_i&=\left\{n\in I: \left[x_n\right]\in \left[\tilde{\varphi}_{i-1},\tilde{\varphi}_{i}-\frac{\delta}{2}\right]\right\}.\label{deofhalfai2}
\end{align}
Let  $n_\delta\gg \frac{1}{\delta}$ and by \eqref{xnbigo1} we have
\begin{align}
	\abs{x_{n+1}-x_n}\ll \delta, \ \mathrm{for\ any\ } n\geq n_\delta.\label{deltaxnllvarepsilon}
\end{align} 
Since $A=[0,1]$, we can choose $n_\delta$ so that
\begin{align}
	\left[x_{n_{\delta}}\right]\in (\varphi_1,\tilde{\varphi}_1).\label{xne}
\end{align}
	Let $i\in\{2,\cdots,K_3-2\}$ be fixed. Define
	\begin{align}
		X_1={\Bigg\{}n: & n_\delta\leq n< n_\delta+N,\ [x_n]\in\left[\varphi_1-\frac{\delta}{2},\varphi_1+\frac{\delta}{2}\right],\label{definitionofX1}\\ 
		&[x_n]\leq \varphi_1< [x_{n+1}]\ \mathrm{or}\ [x_{n+1}]\leq \varphi_1< [x_n]{\Bigg\}},\nonumber
	\end{align}
	\begin{align}
		\label{definitionofX1tilde}\tilde{X}_1={\Bigg\{}n: & n_\delta\leq n< n_\delta+N,\ [x_n]\in\left[\tilde{\varphi}_1-\frac{\delta}{2},\tilde{\varphi}_1+\frac{\delta}{2}\right],\\
		 &[x_n]< \tilde{\varphi}_1\leq [x_{n+1}]\ \mathrm{or}\ [x_{n+1}]< \tilde{\varphi}_1\leq [x_n]{\Bigg\}},\nonumber
		 	\end{align}
		 	\begin{align}
		\label{definitionofXi}X_i={\Bigg\{}n: & n_\delta\leq n< n_\delta+N,\ [x_n]\in\left[\varphi_i-\frac{\delta}{2},\varphi_i+\frac{\delta}{2}\right],\\
		&[x_n]< \varphi_i\leq [x_{n+1}]\ \mathrm{or}\ [x_{n+1}]< \varphi_i\leq [x_n]{\Bigg\}},\nonumber
	\end{align}
	\begin{align}
		\label{definitionofXitilde}\tilde{X}_i={\Bigg\{}n: & n_\delta\leq n< n_\delta+N,\ [x_n]\in\left[\tilde{\varphi}_i-\frac{\delta}{2},\tilde{\varphi}_i+\frac{\delta}{2}\right],\\
		&[x_n]\leq \tilde{\varphi}_i< [x_{n+1}]\ \mathrm{or}\ [x_{n+1}]\leq \tilde{\varphi}_i< [x_n]{\Bigg\}}\nonumber.
	\end{align}

Let 
\begin{align}
	\{n_m\}_{m=1}^d:=\{n_\delta+N\}\cup X_1\cup \tilde{X}_1\cup X_i\cup \tilde{X}_i\label{cdnm1d}
\end{align}
with 
\begin{align}
	n_1<n_2<\cdots<n_d:=n_\delta+N.\label{cddn}
\end{align}
Then by \eqref{xne}-\eqref{cddn}, one has for any $n\in(n_m,n_{m+1}]$ with odd $m\leq d-1$, 
\begin{align}
	\left[x_n\right]\in [\varphi_i,\varphi_1]\cup [\tilde{\varphi}_1,\tilde{\varphi}_i],
\end{align}
and for any $n\in(n_m,n_{m+1}]$ with even $m\leq d-1$,
\begin{align}
	\left[x_n\right]\in [0,\varphi_i)\cup (\varphi_1,\tilde{\varphi}_1)\cup (\tilde{\varphi}_i,1).\label{evenbefore}
\end{align}
Roughly speaking, $\left[x_{n_m}\right]$ enters  $[\varphi_i,\varphi_1]\cup[\tilde{\varphi}_1,\tilde{\varphi}_i]$ when $m$ is odd, and exits  $[\varphi_i,\varphi_1]\cup[\tilde{\varphi}_1,\tilde{\varphi}_i]$ when $m$ is even. 	A natural choice is to set $I_i=\mathcal{A}_i\cup\tilde{\mathcal{A}}_i$. However, 
\begin{align}
\abs{\sum_{n\in X_i\cap \mathcal{A}_i}(x_{n+1}-x_n)-\sum_{n\in \tilde{X}_i\cap\tilde{\mathcal{A}}_i}(x_{n+1}-x_n)}\nonumber
\end{align}
may be very large so that \eqref{aiminusbileq10} fails. Therefore, we take the following constructions.

We remove the redundant elements from $\{n_m\}_{m=1}^d$. For any $m_1,m_2\in \{1,2,\cdots,d\}$ with $m_1<m_2$, if for any $n\in[n_{m_1},n_{m_2}]$, one has 
\begin{align}
	\left[x_{n}\right]\in \left(\varphi_i-\frac{\delta}{2},\varphi_1+\frac{\delta}{2}\right)\cup\left(\tilde{\varphi}_1-\frac{\delta}{2},\tilde{\varphi}_i+\frac{\delta}{2}\right),\nonumber
\end{align}then we remove all $n_m\in (n_{m_1},n_{m_2})$ from $\{n_m\}_{m=1}^d$. 

After this deletion procedure,  the remaining crossing points inherit
the entering/exiting alternation.
We denote the set of the remaining points by ${\mathcal{X}}_i.$ To avoid introducing additional notation, we continue to denote the remaining elements by ${\mathcal{X}}_i=\{n_m\}_{m=1}^d$ ($n_1$ and $n_d=n_\delta+N$ will not be removed). See Figure \ref{figcdre} for example.

\begin{figure}[H]
	\centering
	
	\tikzset{every picture/.style={line width=0.75pt}}
	\scalebox{1.15}{
		\rotatebox[origin=c]{90}{%
			\begin{tikzpicture}[x=0.75pt,y=0.75pt,yscale=-1,xscale=1]
				
				\draw (336.82,95.9) -- (336.98,404.17);
				\draw (393,95.9) -- (393,404.17);
				\draw (159.83,96.16) -- (458.85,95.13);
				\draw [shift={(460.85,95.12)},rotate=179.8]
				[color={rgb,255:red,0;green,0;blue,0}][line width=0.75]
				(10.93,-3.29) .. controls (6.95,-1.4) and (3.31,-0.3)
				.. (0,0) .. controls (3.31,0.3) and (6.95,1.4)
				.. (10.93,3.29);
				
				\draw (215.58,95.97) -- (215.73,405.59);
				\draw (364.69,95.45) -- (364.85,405.08);
				\draw (159.83,96.16) -- (159.99,403.78);
				\draw [shift={(159.99,405.78)},rotate=269.97]
				[color={rgb,255:red,0;green,0;blue,0}][line width=0.75]
				(10.93,-3.29) .. controls (6.95,-1.4) and (3.31,-0.3)
				.. (0,0) .. controls (3.31,0.3) and (6.95,1.4)
				.. (10.93,3.29);
				
				\draw (243.45,95.87) -- (243.6,405.49);
				
				\draw [line width=1.5,dash pattern={on 1.69pt off 2.76pt}]
				(371.53,99.08) -- (323.53,127.07);
				\draw [line width=1.5,dash pattern={on 1.69pt off 2.76pt}]
				(351.52,142.07) -- (319.53,131.07);
				\draw [line width=1.5,dash pattern={on 1.69pt off 2.76pt}]
				(357.52,146.08) -- (320.52,159.07);
				\draw [line width=1.5,dash pattern={on 1.69pt off 2.76pt}]
				(374.52,185.08) -- (316.52,163.07);
				\draw [line width=1.5,dash pattern={on 1.69pt off 2.76pt}]
				(368.52,188.08) -- (207.51,228.05);
				\draw [line width=1.5,dash pattern={on 1.69pt off 2.76pt}]
				(253.51,247.06) -- (206.51,229.05);
				\draw [line width=1.5,dash pattern={on 1.69pt off 2.76pt}]
				(253.51,247.06) -- (221.51,255.05);
				\draw [line width=1.5,dash pattern={on 1.69pt off 2.76pt}]
				(269.5,279.06) -- (221.5,258.05);
				\draw [line width=1.5,dash pattern={on 1.69pt off 2.76pt}]
				(200.5,304.05) -- (271.5,283.06);
				\draw [line width=1.5,dash pattern={on 1.69pt off 2.76pt}]
				(346.49,338.07) -- (201.5,307.05);
				\draw [line width=1.5,dash pattern={on 1.69pt off 2.76pt}]
				(315.49,350.07) -- (353.49,340.07);
				\draw [line width=1.5,dash pattern={on 1.69pt off 2.76pt}]
				(309.49,353.07) -- (375.49,369.08);
				
				\draw [dash pattern={on 4.5pt off 4.5pt}]
				(340.86,118.07) -- (159.86,119.04);
				\draw [dash pattern={on 4.5pt off 4.5pt}]
				(332.86,136.07) -- (159.86,137.04);
				\draw [dash pattern={on 4.5pt off 4.5pt}]
				(339.02,152.57) -- (159.02,153.54);
				\draw [dash pattern={on 4.5pt off 4.5pt}]
				(333.85,170.07) -- (158.85,171.04);
				\draw [dash pattern={on 4.5pt off 4.5pt}]
				(338.85,195.07) -- (160.85,196.04);
				\draw [dash pattern={on 4.5pt off 4.5pt}]
				(246.84,218.06) -- (159.84,219.04);
				\draw [dash pattern={on 4.5pt off 4.5pt}]
				(240.84,243.05) -- (159.84,244.04);
				\draw [dash pattern={on 4.5pt off 4.5pt}]
				(246.91,248.99) -- (159.91,249.97);
				\draw [dash pattern={on 4.5pt off 4.5pt}]
				(239.84,267.05) -- (159.84,268.04);
				\draw [dash pattern={on 4.5pt off 4.5pt}]
				(246.83,290.06) -- (159.83,291.04);
				\draw [dash pattern={on 4.5pt off 4.5pt}]
				(239.83,316.05) -- (158.83,317.04);
				\draw [dash pattern={on 4.5pt off 4.5pt}]
				(332.82,336.07) -- (159.82,337.04);
				\draw [dash pattern={on 4.5pt off 4.5pt}]
				(338.82,344.07) -- (159.82,345.04);
				\draw [dash pattern={on 4.5pt off 4.5pt}]
				(332.82,359.07) -- (159.82,360.04);
				
				\draw (460.7,106.99) node
				[anchor=north west,inner sep=0.75pt,rotate=-89.89]
				{$[x_n]$};
				
				\draw (158,112) node
				[anchor=north west,inner sep=0.75pt,rotate=-89.89]
				{\scriptsize$n_1$};
				\draw (158,128) node
				[anchor=north west,inner sep=0.75pt,rotate=-89.89]
				{\scriptsize$n_2$};
				\draw (158,144) node
				[anchor=north west,inner sep=0.75pt,rotate=-89.89]
				{\scriptsize$n_3$};
				\draw (158,164.17) node
				[anchor=north west,inner sep=0.75pt,rotate=-89.89]
				{\scriptsize$n_4$};
				\draw (158,190.98) node
				[anchor=north west,inner sep=0.75pt,rotate=-89.89]
				{\scriptsize$n_5$};
				\draw (158,211.44) node
				[anchor=north west,inner sep=0.75pt,rotate=-89.89]
				{\scriptsize$n_6$};
				\draw (158,235) node
				[anchor=north west,inner sep=0.75pt,rotate=-89.89]
				{\scriptsize$n_7$};
				\draw (158,248) node
				[anchor=north west,inner sep=0.75pt,rotate=-89.89]
				{\scriptsize$n_8$};
				\draw (158,263.04) node
				[anchor=north west,inner sep=0.75pt,rotate=-89.89]
				{\scriptsize$n_9$};
				\draw (158,285.04) node
				[anchor=north west,inner sep=0.75pt,rotate=-89.89]
				{\scriptsize$n_{10}$};
				\draw (158,308) node
				[anchor=north west,inner sep=0.75pt,rotate=-89.89]
				{\scriptsize$n_{11}$};
				\draw (158,325) node
				[anchor=north west,inner sep=0.75pt,rotate=-89.89]
				{\scriptsize$n_{12}$};
				\draw (158,342) node
				[anchor=north west,inner sep=0.75pt,rotate=-89.89]
				{\scriptsize$n_{13}$};
				\draw (158,359) node
				[anchor=north west,inner sep=0.75pt,rotate=-89.89]
				{\scriptsize$n_{14}$};
				
				\draw (375.67,45) node
				[anchor=north west,inner sep=0.75pt,rotate=-89.89]
				{\scriptsize$\varphi_1+\frac{\delta}{2}$};
				\draw (342,70) node
				[anchor=north west,inner sep=0.75pt,rotate=-89.89]
				{\scriptsize$\varphi_1$};
				\draw (250,70) node
				[anchor=north west,inner sep=0.75pt,rotate=-89.89]
				{\scriptsize$\varphi_i$};
				\draw (400,73) node
				[anchor=north west,inner sep=0.75pt,rotate=-89.89]
				{\scriptsize$\frac{1}{2}$};
				\draw (222.77,45) node
				[anchor=north west,inner sep=0.75pt,rotate=-89.89]
				{\scriptsize$\varphi_i-\frac{\delta}{2}$};
				\draw (166,80) node
				[anchor=north west,inner sep=0.75pt,rotate=-89.89]
				{\scriptsize$0$};
				
			\end{tikzpicture}%
		}
	}
	
	\begin{tikzpicture}
		\draw[->,line width=1pt] (0,0.55) -- (0,-0.55);
	\end{tikzpicture}
	
	\scalebox{1.15}{
		\rotatebox[origin=c]{90}{%
			\begin{tikzpicture}[x=0.75pt,y=0.75pt,yscale=-1,xscale=1]
				
				\draw (335.82,505.98) -- (335.98,814.25);
				\draw (158.83,506.24) -- (457.85,505.21);
				\draw [shift={(459.85,505.2)},rotate=179.8]
				[color={rgb,255:red,0;green,0;blue,0}][line width=0.75]
				(10.93,-3.29) .. controls (6.95,-1.4) and (3.31,-0.3)
				.. (0,0) .. controls (3.31,0.3) and (6.95,1.4)
				.. (10.93,3.29);
				
				\draw (214.58,506.05) -- (214.73,815.67);
				\draw (363.69,505.53) -- (363.85,815.16);
				\draw (393,505.53) -- (393,815.16);
				\draw (158.83,506.24) -- (158.99,813.86);
				\draw [shift={(158.99,815.86)},rotate=269.97]
				[color={rgb,255:red,0;green,0;blue,0}][line width=0.75]
				(10.93,-3.29) .. controls (6.95,-1.4) and (3.31,-0.3)
				.. (0,0) .. controls (3.31,0.3) and (6.95,1.4)
				.. (10.93,3.29);
				
				\draw (242.45,505.95) -- (242.6,815.57);
				
				\draw [line width=1.5,dash pattern={on 1.69pt off 2.76pt}]
				(370.53,509.16) -- (322.53,537.15);
				\draw [line width=1.5,dash pattern={on 1.69pt off 2.76pt}]
				(350.52,552.15) -- (318.53,541.15);
				\draw [line width=1.5,dash pattern={on 1.69pt off 2.76pt}]
				(356.52,556.16) -- (319.52,569.15);
				\draw [line width=1.5,dash pattern={on 1.69pt off 2.76pt}]
				(373.52,595.16) -- (315.52,573.15);
				\draw [line width=1.5,dash pattern={on 1.69pt off 2.76pt}]
				(367.52,598.16) -- (206.51,638.13);
				\draw [line width=1.5,dash pattern={on 1.69pt off 2.76pt}]
				(252.51,657.14) -- (205.51,639.13);
				\draw [line width=1.5,dash pattern={on 1.69pt off 2.76pt}]
				(252.51,657.14) -- (220.51,665.13);
				\draw [line width=1.5,dash pattern={on 1.69pt off 2.76pt}]
				(268.5,689.14) -- (220.5,668.13);
				\draw [line width=1.5,dash pattern={on 1.69pt off 2.76pt}]
				(199.5,714.13) -- (270.5,693.14);
				\draw [line width=1.5,dash pattern={on 1.69pt off 2.76pt}]
				(345.49,748.15) -- (200.5,717.13);
				\draw [line width=1.5,dash pattern={on 1.69pt off 2.76pt}]
				(314.49,760.15) -- (352.49,750.15);
				\draw [line width=1.5,dash pattern={on 1.69pt off 2.76pt}]
				(308.49,763.15) -- (374.49,779.16);
				
				\draw [dash pattern={on 4.5pt off 4.5pt}]
				(339.86,528.15) -- (158.86,529.12);
				\draw [dash pattern={on 4.5pt off 4.5pt}]
				(332.85,580.15) -- (157.85,581.12);
				\draw [dash pattern={on 4.5pt off 4.5pt}]
				(337.85,605.15) -- (159.85,606.12);
				\draw [dash pattern={on 4.5pt off 4.5pt}]
				(245.84,628.14) -- (158.84,629.12);
				\draw [dash pattern={on 4.5pt off 4.5pt}]
				(239.84,653.13) -- (158.84,654.12);
				\draw [dash pattern={on 4.5pt off 4.5pt}]
				(245.83,700.14) -- (158.83,701.12);
				\draw [dash pattern={on 4.5pt off 4.5pt}]
				(238.83,726.13) -- (157.83,727.12);
				\draw [dash pattern={on 4.5pt off 4.5pt}]
				(331.82,769.15) -- (158.82,770.12);
				
				\draw (459.7,517.07) node
				[anchor=north west,inner sep=0.75pt,rotate=-89.89]
				{$[x_n]$};
				
				\draw (158,523.09) node
				[anchor=north west,inner sep=0.75pt,rotate=-89.89]
				{\scriptsize$n_1$};
				\draw (158,574.71) node
				[anchor=north west,inner sep=0.75pt,rotate=-89.89]
				{\scriptsize$n_2$};
				\draw (158,600.46) node
				[anchor=north west,inner sep=0.75pt,rotate=-89.89]
				{\scriptsize$n_3$};
				\draw (158,623.91) node
				[anchor=north west,inner sep=0.75pt,rotate=-89.89]
				{\scriptsize$n_4$};
				\draw (158,649.71) node
				[anchor=north west,inner sep=0.75pt,rotate=-89.89]
				{\scriptsize$n_5$};
				\draw (158,695.71) node
				[anchor=north west,inner sep=0.75pt,rotate=-89.89]
				{\scriptsize$n_6$};
				\draw (158,721.52) node
				[anchor=north west,inner sep=0.75pt,rotate=-89.89]
				{\scriptsize$n_7$};
				\draw (158,765.12) node
				[anchor=north west,inner sep=0.75pt,rotate=-89.89]
				{\scriptsize$n_8$};
				
				\draw (374.67,452) node
				[anchor=north west,inner sep=0.75pt,rotate=-89.89]
				{\scriptsize$\varphi_1+\frac{\delta}{2}$};
				\draw (342,479.77) node
				[anchor=north west,inner sep=0.75pt,rotate=-89.89]
				{\scriptsize$\varphi_1$};
				\draw (250,480.08) node
				[anchor=north west,inner sep=0.75pt,rotate=-89.89]
				{\scriptsize$\varphi_i$};
				\draw (400,480) node
				[anchor=north west,inner sep=0.75pt,rotate=-89.89]
				{\scriptsize$\frac{1}{2}$};
				\draw (222.77,452) node
				[anchor=north west,inner sep=0.75pt,rotate=-89.89]
				{\scriptsize$\varphi_i-\frac{\delta}{2}$};
				\draw (166,490) node
				[anchor=north west,inner sep=0.75pt,rotate=-89.89]
				{\scriptsize$0$};
				
			\end{tikzpicture}%
		}
	}
	\caption{}
	\label{figcdre}
\end{figure}

After the deletions,  by \eqref{definitionofai}-\eqref{cddn} we can conclude that (See Figure \ref{figcdxnin})
\begin{enumerate}
	\item[$\bullet$] For any $n\in [n_m,n_{m+1}]$ with odd $m\leq d-1$,
	\begin{align}
		\left[x_{n}\right]\in \left(\varphi_i-\frac{\delta}{2},\varphi_1+\frac{\delta}{2}\right)\cup\left(\tilde{\varphi}_1-\frac{\delta}{2},\tilde{\varphi}_i+\frac{\delta}{2}\right)\subset \cup_{l=1}^i(\mathcal{A}_l\cup \tilde{\mathcal{A}}_l)\cup \frac{1}{2}\mathcal{A}_{i+1}\cup\frac{1}{2}\tilde{\mathcal{A}}_{i+1}.\label{oddafter}
	\end{align}
	\item[$\bullet$] For any $n\in (n_m,n_{m+1}]$ with even $m\leq d-1$,
	\begin{align}
		\left[x_n\right]\in [0,\varphi_i)\cup (\varphi_1,\tilde{\varphi}_1)\cup (\tilde{\varphi}_i,1),\label{evenafter}
	\end{align}
	moreover, there exists $l_m\in (n_m,n_{m+1}]$  such that 
	\begin{align}
		\left[x_{l_m}\right]\in \Big[0,\varphi_i-\frac{\delta}{2}\Big]\cup \Big[\varphi_1+\frac{\delta}{2},\tilde{\varphi}_1-\frac{\delta}{2}\Big]\cup \Big[\tilde{\varphi}_i+\frac{\delta}{2},1\Big).\label{lm}
	\end{align} 
\end{enumerate}

\begin{figure}[H]
	\centering

\tikzset{every picture/.style={line width=0.75pt}} 

\begin{tikzpicture}[x=0.75pt,y=0.75pt,yscale=-1,xscale=1]
	
	\draw  (77,399) -- (593,399)(77,23) -- (77,399) -- cycle (586,394) -- (593,399) -- (586,404) (72,30) -- (77,23) -- (82,30)  ;
	\draw    (77,370) -- (589,370) ;
	\draw    (77,350) -- (589,350) ;
	\draw    (77,270) -- (589,270) ;
	\draw    (77,251) -- (589,251) ;
	\draw    (77,200) -- (589,200) ;
	\draw    (77,180) -- (589,180) ;
	\draw    (77,101) -- (589,101) ;
	\draw    (77,81) -- (589,81) ;
	\draw    (77,50) -- (589,50) ;
	\draw [line width=1.5]  [dash pattern={on 1.69pt off 2.76pt}]  (91,260) -- (103,292) ;
	\draw [line width=1.5]  [dash pattern={on 1.69pt off 2.76pt}]  (118,261) -- (105,292) ;
	\draw [line width=1.5]  [dash pattern={on 1.69pt off 2.76pt}]  (122,260) -- (134,292) ;
	\draw [line width=1.5]  [dash pattern={on 1.69pt off 2.76pt}]  (161,240) -- (137,291) ;
	\draw [line width=1.5]  [dash pattern={on 1.69pt off 2.76pt}]  (164,243) -- (206,357) ;
	\draw [line width=1.5]  [dash pattern={on 1.69pt off 2.76pt}]  (220,335) -- (210,358) ;
	\draw [line width=1.5]  [dash pattern={on 1.69pt off 2.76pt}]  (224,335) -- (236,367) ;
	\draw [line width=1.5]  [dash pattern={on 1.69pt off 2.76pt}]  (248,334) -- (239,365) ;
	\draw [line width=1.5]  [dash pattern={on 1.69pt off 2.76pt}]  (253,336) -- (275,397) ;
	\draw [line width=1.5]  [dash pattern={on 1.69pt off 2.76pt}]  (278,52) -- (299,110) ;
	\draw [line width=1.5]  [dash pattern={on 1.69pt off 2.76pt}]  (317,82) -- (303,112) ;
	\draw [line width=1.5]  [dash pattern={on 1.69pt off 2.76pt}]  (320,85) -- (332,117) ;
	\draw [line width=1.5]  [dash pattern={on 1.69pt off 2.76pt}]  (349,75) -- (335,117) ;
	\draw [line width=1.5]  [dash pattern={on 1.69pt off 2.76pt}]  (351,68) -- (386,187) ;
	\draw [line width=1.5]  [dash pattern={on 1.69pt off 2.76pt}]  (403,161) -- (389,187) ;
	\draw [line width=1.5]  [dash pattern={on 1.69pt off 2.76pt}]  (406,164) -- (424,212) ;
	\draw [line width=1.5]  [dash pattern={on 1.69pt off 2.76pt}]  (442,171) -- (439.23,177.03) -- (425,208) ;
	\draw [line width=1.5]  [dash pattern={on 1.69pt off 2.76pt}]  (444,175) -- (522,366) ;
	\draw [line width=1.5]  [dash pattern={on 1.69pt off 2.76pt}]  (550,263) -- (526,364) ;
	\draw [line width=1.5]  [dash pattern={on 1.69pt off 2.76pt}]  (553,258) -- (569,324) ;
	\draw  [dash pattern={on 4.5pt off 4.5pt}]  (93,265) -- (93,399) ;
	\draw  [dash pattern={on 4.5pt off 4.5pt}]  (145,275) -- (145,399) ;
	\draw  [dash pattern={on 4.5pt off 4.5pt}]  (172,265) -- (172,399) ;
	\draw  [dash pattern={on 4.5pt off 4.5pt}]  (258,347) -- (258,399) ;
	\draw  [dash pattern={on 4.5pt off 4.5pt}]  (294,98) -- (294,399) ;
	\draw  [dash pattern={on 4.5pt off 4.5pt}]  (340,105) -- (340,401) ;
	\draw  [dash pattern={on 4.5pt off 4.5pt}]  (360,96) -- (360,399) ;
	\draw  [dash pattern={on 4.5pt off 4.5pt}]  (411,177) -- (413,402) ;
	\draw  [dash pattern={on 4.5pt off 4.5pt}]  (437,181) -- (439,399) ;
	\draw  [dash pattern={on 4.5pt off 4.5pt}]  (444,177) -- (446,399) ;
	\draw  [dash pattern={on 4.5pt off 4.5pt}]  (482,270) -- (484,400) ;

	\draw  [dash pattern={on 4.5pt off 4.5pt}]  (569,324) -- (570,399) ;
	\draw  [dash pattern={on 4.5pt off 4.5pt}]  (161,242) -- (161,399) ;
	\draw  [dash pattern={on 4.5pt off 4.5pt}]  (269,381) -- (269,402) ;
	\draw  [dash pattern={on 4.5pt off 4.5pt}]  (351,69) -- (351,402) ;
	\draw  [dash pattern={on 4.5pt off 4.5pt}]  (423,208) -- (425,399) ;
	\draw  [dash pattern={on 4.5pt off 4.5pt}]  (465,227) -- (467,399) ;
	
	\draw (60,45) node [anchor=north west][inner sep=0.75pt]   [align=left] {\scriptsize$1$};
	\draw (30,72) node [anchor=north west][inner sep=0.75pt]   [align=left] {\scriptsize$\tilde{\varphi}_i+\frac{\delta}{2}$};
	\draw (55,92) node [anchor=north west][inner sep=0.75pt]   [align=left] {\scriptsize$\tilde{\varphi}_i$};
	\draw (55,172) node [anchor=north west][inner sep=0.75pt]   [align=left] {\scriptsize$\tilde{\varphi}_1$};
	\draw (30,190) node [anchor=north west][inner sep=0.75pt]   [align=left] {\scriptsize$\tilde{\varphi}_1-\frac{\delta}{2}$};
	\draw (30,242) node [anchor=north west][inner sep=0.75pt]   [align=left] {\scriptsize$\varphi_1+\frac{\delta}{2}$};
	\draw (55,265) node [anchor=north west][inner sep=0.75pt]   [align=left] {\scriptsize$\varphi_1$};
	\draw (55,344) node [anchor=north west][inner sep=0.75pt]   [align=left] {\scriptsize$\varphi_i$};
	\draw (30,362) node [anchor=north west][inner sep=0.75pt]   [align=left] {\scriptsize$\varphi_i-\frac{\delta}{2}$};
	\draw (60,389) node [anchor=north west][inner sep=0.75pt]   [align=left] {\scriptsize0};
	\draw (85,401) node [anchor=north west][inner sep=0.75pt]   [align=left] {\scriptsize$n_1$};
	\draw (138,401) node [anchor=north west][inner sep=0.75pt]   [align=left] {\scriptsize$n_2$};
	\draw (165,401) node [anchor=north west][inner sep=0.75pt]   [align=left] {\scriptsize$n_3$};
	\draw (248,401) node [anchor=north west][inner sep=0.75pt]   [align=left] {\scriptsize$n_4$};
	\draw (286,401) node [anchor=north west][inner sep=0.75pt]   [align=left] {\scriptsize$n_5$};
	\draw (329,401) node [anchor=north west][inner sep=0.75pt]   [align=left] {\scriptsize$n_6$};
	\draw (355,401) node [anchor=north west][inner sep=0.75pt]   [align=left] {\scriptsize$n_7$};
	\draw (403,401) node [anchor=north west][inner sep=0.75pt]   [align=left] {\scriptsize$n_8$};
	\draw (430,401) node [anchor=north west][inner sep=0.75pt]   [align=left] {\scriptsize$n_9$};
	\draw (444,401) node [anchor=north west][inner sep=0.75pt]   [align=left] {\scriptsize$n_{10}$};
	\draw (477,401) node [anchor=north west][inner sep=0.75pt]   [align=left] {\scriptsize$n_{11}$};
	\draw (563,401) node [anchor=north west][inner sep=0.75pt]   [align=left] {\scriptsize$n_{12}$};
	\draw (82,16) node [anchor=north west][inner sep=0.75pt]   [align=left] {$[x_n]$};
	\draw (600,395) node [anchor=north west][inner sep=0.75pt]   [align=left] {$n$};
	\draw (154,401) node [anchor=north west][inner sep=0.75pt]   [align=left] {\scriptsize$l_2$};
	\draw (265,401) node [anchor=north west][inner sep=0.75pt]   [align=left] {\scriptsize$l_4$};
	\draw (345,401) node [anchor=north west][inner sep=0.75pt]   [align=left] {\scriptsize$l_6$};
	\draw (420,401) node [anchor=north west][inner sep=0.75pt]   [align=left] {\scriptsize$l_8$};
	\draw (464,401) node [anchor=north west][inner sep=0.75pt]   [align=left] {\scriptsize$l_{10}$};

\end{tikzpicture}
	\caption{}
\label{figcdxnin}
\end{figure}

We emphasize that $d=d(i)$ and each $l_m=l_m(i),n_m=n_m(i)$ depend on $i$, but we leave the dependence on $i$ implicit for simplicity when there is no ambiguity. By \eqref{deltaxnllvarepsilon} and \eqref{lm}, for any even $m$,
\begin{align}
\abs{x_{l_m}-x_{n_m}}\geq \frac{\delta}{4}.\label{lmlargeepsilon}
\end{align}

Define $\mathcal{L}_i$ based on $\mathcal{X}_i=\{n_m\}^d_{m=1}$:
\begin{align}
\label{demcLi}\mathcal{L}_i=\begin{cases}
\cup_{m=1}^{\frac{d}{2}}\{n_{2m-1},n_{2m-1}+1,\cdots,n_{2m}\},\ &\mathrm{\ if \ } d \ \mathrm{\ is \ even},\\
\\
\cup_{m=1}^{\frac{d-1}{2}}\{n_{2m-1},n_{2m-1}+1,\cdots,n_{2m}\},\ & \mathrm{\ if \ } d \ \mathrm{\ is \ odd}.
\end{cases}
\end{align}
See Figure \ref{figcdevend} and Figure \ref{figcdoddd} for example, where the elements of $\mathcal{L}_i$ are marked in red. By letting $i=2,3,\cdots,K_3-2$, we can obtain $\{\mathcal{L}_i\}_{i=2}^{K_3-2}$.  

\begin{figure}[H]
	\centering

\tikzset{every picture/.style={line width=0.75pt}} 

\begin{tikzpicture}[x=0.75pt,y=0.75pt,yscale=-1,xscale=1]
	
	\draw  (81,389) -- (597,389)(81,13) -- (81,389) -- cycle (590,384) -- (597,389) -- (590,394) (76,20) -- (81,13) -- (86,20)  ;
	\draw    (81,360) -- (593,360) ;
	\draw    (81,340) -- (593,340) ;
	\draw    (81,260) -- (593,260) ;
	\draw    (81,241) -- (593,241) ;
	\draw    (81,190) -- (593,190) ;
	\draw    (81,170) -- (593,170) ;
	\draw    (81,91) -- (593,91) ;
	\draw    (81,71) -- (593,71) ;
	\draw    (81,40) -- (593,40) ;
	\draw [line width=1.5]  [dash pattern={on 1.69pt off 2.76pt}]  (95,250) -- (97,255) ;
	\draw [line width=1.5]  [dash pattern={on 1.69pt off 2.76pt}]  (165,230) -- (150.67,260) ;
	\draw [line width=1.5]  [dash pattern={on 1.69pt off 2.76pt}]  (168,233) -- (176,255) ;
	\draw [color={rgb, 255:red, 208; green, 2; blue, 27 }  ,draw opacity=1 ][line width=1.5]  [dash pattern={on 1.69pt off 2.76pt}]  (224,325) -- (214,348) ;
	\draw [color={rgb, 255:red, 208; green, 2; blue, 27 }  ,draw opacity=1 ][line width=1.5]  [dash pattern={on 1.69pt off 2.76pt}]  (228,325) -- (240,357) ;
	\draw [color={rgb, 255:red, 208; green, 2; blue, 27 }  ,draw opacity=1 ][line width=1.5]  [dash pattern={on 1.69pt off 2.76pt}]  (252,324) -- (243,355) ;
	\draw [line width=1.5]  [dash pattern={on 1.69pt off 2.76pt}]  (264,343) -- (279,387) ;
	\draw [line width=1.5]  [dash pattern={on 1.69pt off 2.76pt}]  (282,42) -- (297,87) ;
	\draw [color={rgb, 255:red, 208; green, 2; blue, 27 }  ,draw opacity=1 ][line width=1.5]  [dash pattern={on 1.69pt off 2.76pt}]  (321,72) -- (307,102) ;
	\draw [color={rgb, 255:red, 208; green, 2; blue, 27 }  ,draw opacity=1 ][line width=1.5]  [dash pattern={on 1.69pt off 2.76pt}]  (324,75) -- (336,107) ;
	\draw [line width=1.5]  [dash pattern={on 1.69pt off 2.76pt}]  (353,65) -- (344.67,89.33) ;
	\draw [line width=1.5]  [dash pattern={on 1.69pt off 2.76pt}]  (357,58) -- (366,86) ;
	\draw [color={rgb, 255:red, 208; green, 2; blue, 27 }  ,draw opacity=1 ][line width=1.5]  [dash pattern={on 1.69pt off 2.76pt}]  (407,151) -- (393,177) ;
	\draw [line width=1.5]  [dash pattern={on 1.69pt off 2.76pt}]  (417.67,171.33) -- (428,202) ;
	\draw [line width=1.5]  [dash pattern={on 1.69pt off 2.76pt}]  (438.67,176.33) -- (429,198) ;
	\draw [line width=1.5]  [dash pattern={on 1.69pt off 2.76pt}]  (451,172) -- (484,254) ;
	\draw [color={rgb, 255:red, 208; green, 2; blue, 27 }  ,draw opacity=1 ][line width=1.5]  [dash pattern={on 1.69pt off 2.76pt}]  (554,253) -- (530,354) ;
	\draw [color={rgb, 255:red, 208; green, 2; blue, 27 }  ,draw opacity=1 ][line width=1.5]  [dash pattern={on 1.69pt off 2.76pt}]  (557,248) -- (573,314) ;
	\draw  [dash pattern={on 4.5pt off 4.5pt}]  (97,255) -- (97,389) ;
	\draw  [dash pattern={on 4.5pt off 4.5pt}]  (149,265) -- (149,389) ;
	\draw  [dash pattern={on 4.5pt off 4.5pt}]  (176,255) -- (176,389) ;
	\draw  [dash pattern={on 4.5pt off 4.5pt}]  (262,337) -- (262,389) ;
	\draw  [dash pattern={on 4.5pt off 4.5pt}]  (298,88) -- (298,389) ;
	\draw  [dash pattern={on 4.5pt off 4.5pt}]  (344,95) -- (344,391) ;
	\draw  [dash pattern={on 4.5pt off 4.5pt}]  (366,86) -- (366,389) ;
	\draw  [dash pattern={on 4.5pt off 4.5pt}]  (415,167) -- (417,392) ;
	\draw  [dash pattern={on 4.5pt off 4.5pt}]  (441,171) -- (443,389) ;
	\draw  [dash pattern={on 4.5pt off 4.5pt}]  (448,167) -- (450,389) ;
	\draw  [dash pattern={on 4.5pt off 4.5pt}]  (484,255) -- (486,389) ;
	\draw  [dash pattern={on 4.5pt off 4.5pt}]  (573,314) -- (574,389) ;
	\draw [color={rgb, 255:red, 208; green, 2; blue, 27 }  ,draw opacity=1 ][line width=1.5]  [dash pattern={on 1.69pt off 2.76pt}]  (97,255) -- (107,282) ;
	\draw [color={rgb, 255:red, 208; green, 2; blue, 27 }  ,draw opacity=1 ][line width=1.5]  [dash pattern={on 1.69pt off 2.76pt}]  (122,251) -- (109,282) ;
	\draw [color={rgb, 255:red, 208; green, 2; blue, 27 }  ,draw opacity=1 ][line width=1.5]  [dash pattern={on 1.69pt off 2.76pt}]  (126,250) -- (137,279.5) ;
	\draw [color={rgb, 255:red, 208; green, 2; blue, 27 }  ,draw opacity=1 ][line width=1.5]  [dash pattern={on 1.69pt off 2.76pt}]  (140,281) -- (150,263) ;
	\draw [color={rgb, 255:red, 208; green, 2; blue, 27 }  ,draw opacity=1 ][line width=1.5]  [dash pattern={on 1.69pt off 2.76pt}]  (176,255) -- (210,347) ;
	\draw [color={rgb, 255:red, 208; green, 2; blue, 27 }  ,draw opacity=1 ][line width=1.5]  [dash pattern={on 1.69pt off 2.76pt}]  (257,326) -- (263.67,342.33) ;
	\draw [color={rgb, 255:red, 208; green, 2; blue, 27 }  ,draw opacity=1 ][line width=1.5]  [dash pattern={on 1.69pt off 2.76pt}]  (298,88) -- (303.67,104.33) ;
	\draw [color={rgb, 255:red, 208; green, 2; blue, 27 }  ,draw opacity=1 ][line width=1.5]  [dash pattern={on 1.69pt off 2.76pt}]  (339,107) -- (344,93) ;
	\draw [color={rgb, 255:red, 208; green, 2; blue, 27 }  ,draw opacity=1 ][line width=1.5]  [dash pattern={on 1.69pt off 2.76pt}]  (366,86) -- (392,177) ;
	\draw [color={rgb, 255:red, 208; green, 2; blue, 27 }  ,draw opacity=1 ][line width=1.5]  [dash pattern={on 1.69pt off 2.76pt}]  (410,154) -- (415,167) ;
	\draw [color={rgb, 255:red, 208; green, 2; blue, 27 }  ,draw opacity=1 ][line width=1.5]  [dash pattern={on 1.69pt off 2.76pt}]  (447,161) -- (439,175) ;
	\draw [color={rgb, 255:red, 208; green, 2; blue, 27 }  ,draw opacity=1 ][line width=1.5]  [dash pattern={on 1.69pt off 2.76pt}]  (447,161) -- (450,172) ;
	\draw [color={rgb, 255:red, 208; green, 2; blue, 27 }  ,draw opacity=1 ][line width=1.5]  [dash pattern={on 1.69pt off 2.76pt}]  (526,356) -- (484,255) ;
	
	\draw (62,33) node [anchor=north west][inner sep=0.75pt]   [align=left] {1};
	\draw (42,62) node [anchor=north west][inner sep=0.75pt]   [align=left] {\scriptsize$\tilde{\varphi}_i+\frac{\delta}{2}$};
	\draw (58,84) node [anchor=north west][inner sep=0.75pt]   [align=left] {\scriptsize$\tilde{\varphi}_i$};
	\draw (58,164) node [anchor=north west][inner sep=0.75pt]   [align=left] {\scriptsize$\tilde{\varphi}_1$};
	\draw (42,186) node [anchor=north west][inner sep=0.75pt]   [align=left] {\scriptsize$\tilde{\varphi}_1-\frac{\delta}{2}$};
	\draw (42,232) node [anchor=north west][inner sep=0.75pt]   [align=left] {\scriptsize${\varphi}_1+\frac{\delta}{2}$};
	\draw (58,255) node [anchor=north west][inner sep=0.75pt]   [align=left] {\scriptsize${\varphi}_1$};
	\draw (58,333) node [anchor=north west][inner sep=0.75pt]   [align=left] {\scriptsize$\varphi_i$};
	\draw (42,353) node [anchor=north west][inner sep=0.75pt]   [align=left] {\scriptsize$\varphi_i-\frac{\delta}{2}$};
	\draw (64,381) node [anchor=north west][inner sep=0.75pt]   [align=left] {\scriptsize{0}};
	\draw (89,395) node [anchor=north west][inner sep=0.75pt]   [align=left] {\scriptsize$n_1$};
	\draw (142,395) node [anchor=north west][inner sep=0.75pt]   [align=left] {\scriptsize$n_2$};
	\draw (168,395) node [anchor=north west][inner sep=0.75pt]   [align=left] {\scriptsize$n_3$};
	\draw (253,395) node [anchor=north west][inner sep=0.75pt]   [align=left] {\scriptsize$n_4$};
	\draw (290,395) node [anchor=north west][inner sep=0.75pt]   [align=left] {\scriptsize$n_5$};
	\draw (338,395) node [anchor=north west][inner sep=0.75pt]   [align=left] {\scriptsize$n_6$};
	\draw (358,395) node [anchor=north west][inner sep=0.75pt]   [align=left] {\scriptsize$n_7$};
	\draw (408,395) node [anchor=north west][inner sep=0.75pt]   [align=left] {\scriptsize$n_8$};
	\draw (432,395) node [anchor=north west][inner sep=0.75pt]   [align=left] {\scriptsize$n_9$};
	\draw (446,395) node [anchor=north west][inner sep=0.75pt]   [align=left] {\scriptsize$n_{10}$};
	\draw (481,395) node [anchor=north west][inner sep=0.75pt]   [align=left] {\scriptsize$n_{11}$};
	\draw (568,395) node [anchor=north west][inner sep=0.75pt]   [align=left] {\scriptsize$n_{12}$};
	\draw (86,6) node [anchor=north west][inner sep=0.75pt]   [align=left] {$[x_n]$};
	\draw (602,385) node [anchor=north west][inner sep=0.75pt]   [align=left] {$n$};

\end{tikzpicture}
	\caption{$d=12$ is even}
	\label{figcdevend}
\end{figure}

\begin{figure}[H]
	\centering

	\tikzset{every picture/.style={line width=0.75pt}} 
	
	\begin{tikzpicture}[x=0.75pt,y=0.75pt,yscale=-1,xscale=1]
		
		\draw  (81,389) -- (597,389)(81,13) -- (81,389) -- cycle (590,384) -- (597,389) -- (590,394) (76,20) -- (81,13) -- (86,20)  ;
	\draw    (81,360) -- (593,360) ;
	\draw    (81,340) -- (593,340) ;
	\draw    (81,260) -- (593,260) ;
	\draw    (81,241) -- (593,241) ;
	\draw    (81,190) -- (593,190) ;
	\draw    (81,170) -- (593,170) ;
	\draw    (81,91) -- (593,91) ;
	\draw    (81,71) -- (593,71) ;
	\draw    (81,40) -- (593,40) ;
	\draw [line width=1.5]  [dash pattern={on 1.69pt off 2.76pt}]  (95,250) -- (97,255) ;
	\draw [line width=1.5]  [dash pattern={on 1.69pt off 2.76pt}]  (165,230) -- (150.67,260) ;
	\draw [line width=1.5]  [dash pattern={on 1.69pt off 2.76pt}]  (168,233) -- (176,255) ;
	\draw [color={rgb, 255:red, 208; green, 2; blue, 27 }  ,draw opacity=1 ][line width=1.5]  [dash pattern={on 1.69pt off 2.76pt}]  (224,325) -- (214,348) ;
	\draw [color={rgb, 255:red, 208; green, 2; blue, 27 }  ,draw opacity=1 ][line width=1.5]  [dash pattern={on 1.69pt off 2.76pt}]  (228,325) -- (240,357) ;
	\draw [color={rgb, 255:red, 208; green, 2; blue, 27 }  ,draw opacity=1 ][line width=1.5]  [dash pattern={on 1.69pt off 2.76pt}]  (252,324) -- (243,355) ;
	\draw [line width=1.5]  [dash pattern={on 1.69pt off 2.76pt}]  (264,343) -- (279,387) ;
	\draw [line width=1.5]  [dash pattern={on 1.69pt off 2.76pt}]  (282,42) -- (297,87) ;
	\draw [color={rgb, 255:red, 208; green, 2; blue, 27 }  ,draw opacity=1 ][line width=1.5]  [dash pattern={on 1.69pt off 2.76pt}]  (321,72) -- (307,102) ;
	\draw [color={rgb, 255:red, 208; green, 2; blue, 27 }  ,draw opacity=1 ][line width=1.5]  [dash pattern={on 1.69pt off 2.76pt}]  (324,75) -- (336,107) ;
	\draw [line width=1.5]  [dash pattern={on 1.69pt off 2.76pt}]  (353,65) -- (344.67,89.33) ;
	\draw [line width=1.5]  [dash pattern={on 1.69pt off 2.76pt}]  (357,58) -- (366,86) ;
	\draw [color={rgb, 255:red, 208; green, 2; blue, 27 }  ,draw opacity=1 ][line width=1.5]  [dash pattern={on 1.69pt off 2.76pt}]  (407,151) -- (393,177) ;
	\draw [line width=1.5]  [dash pattern={on 1.69pt off 2.76pt}]  (417.67,171.33) -- (428,202) ;
	\draw [line width=1.5]  [dash pattern={on 1.69pt off 2.76pt}]  (438.67,176.33) -- (429,198) ;
	\draw [line width=1.5]  [dash pattern={on 1.69pt off 2.76pt}]  (451,172) -- (484,254) ;
	\draw [color={rgb, 255:red, 208; green, 2; blue, 27 }  ,draw opacity=1 ][line width=1.5]  [dash pattern={on 1.69pt off 2.76pt}]  (550,262) -- (530,354) ;
\draw [line width=1.5]  [dash pattern={on 1.69pt off 2.76pt}]  (551,258) -- (566,195) ;

	\draw  [dash pattern={on 4.5pt off 4.5pt}]  (97,255) -- (97,389) ;
	\draw  [dash pattern={on 4.5pt off 4.5pt}]  (149,265) -- (149,389) ;
	\draw  [dash pattern={on 4.5pt off 4.5pt}]  (176,255) -- (176,389) ;
	\draw  [dash pattern={on 4.5pt off 4.5pt}]  (262,337) -- (262,389) ;
	\draw  [dash pattern={on 4.5pt off 4.5pt}]  (298,88) -- (298,389) ;
	\draw  [dash pattern={on 4.5pt off 4.5pt}]  (344,95) -- (344,391) ;
	\draw  [dash pattern={on 4.5pt off 4.5pt}]  (366,86) -- (366,389) ;
	\draw  [dash pattern={on 4.5pt off 4.5pt}]  (415,167) -- (417,392) ;
	\draw  [dash pattern={on 4.5pt off 4.5pt}]  (441,171) -- (443,389) ;
	\draw  [dash pattern={on 4.5pt off 4.5pt}]  (448,167) -- (450,389) ;
	\draw  [dash pattern={on 4.5pt off 4.5pt}]  (484,255) -- (486,389) ;
	\draw  [dash pattern={on 4.5pt off 4.5pt}]  (550,262) -- (552,389) ;
	\draw  [dash pattern={on 4.5pt off 4.5pt}]  (565,198) -- (567,389) ;
	\draw [color={rgb, 255:red, 208; green, 2; blue, 27 }  ,draw opacity=1 ][line width=1.5]  [dash pattern={on 1.69pt off 2.76pt}]  (97,255) -- (107,282) ;
	\draw [color={rgb, 255:red, 208; green, 2; blue, 27 }  ,draw opacity=1 ][line width=1.5]  [dash pattern={on 1.69pt off 2.76pt}]  (122,251) -- (109,282) ;
	\draw [color={rgb, 255:red, 208; green, 2; blue, 27 }  ,draw opacity=1 ][line width=1.5]  [dash pattern={on 1.69pt off 2.76pt}]  (126,250) -- (137,279.5) ;
	\draw [color={rgb, 255:red, 208; green, 2; blue, 27 }  ,draw opacity=1 ][line width=1.5]  [dash pattern={on 1.69pt off 2.76pt}]  (140,281) -- (150,263) ;
	\draw [color={rgb, 255:red, 208; green, 2; blue, 27 }  ,draw opacity=1 ][line width=1.5]  [dash pattern={on 1.69pt off 2.76pt}]  (176,255) -- (210,347) ;
	\draw [color={rgb, 255:red, 208; green, 2; blue, 27 }  ,draw opacity=1 ][line width=1.5]  [dash pattern={on 1.69pt off 2.76pt}]  (257,326) -- (263.67,342.33) ;
	\draw [color={rgb, 255:red, 208; green, 2; blue, 27 }  ,draw opacity=1 ][line width=1.5]  [dash pattern={on 1.69pt off 2.76pt}]  (298,88) -- (303.67,104.33) ;
	\draw [color={rgb, 255:red, 208; green, 2; blue, 27 }  ,draw opacity=1 ][line width=1.5]  [dash pattern={on 1.69pt off 2.76pt}]  (339,107) -- (344,93) ;
	\draw [color={rgb, 255:red, 208; green, 2; blue, 27 }  ,draw opacity=1 ][line width=1.5]  [dash pattern={on 1.69pt off 2.76pt}]  (366,86) -- (392,177) ;
	\draw [color={rgb, 255:red, 208; green, 2; blue, 27 }  ,draw opacity=1 ][line width=1.5]  [dash pattern={on 1.69pt off 2.76pt}]  (410,154) -- (415,167) ;
	\draw [color={rgb, 255:red, 208; green, 2; blue, 27 }  ,draw opacity=1 ][line width=1.5]  [dash pattern={on 1.69pt off 2.76pt}]  (447,161) -- (439,175) ;
	\draw [color={rgb, 255:red, 208; green, 2; blue, 27 }  ,draw opacity=1 ][line width=1.5]  [dash pattern={on 1.69pt off 2.76pt}]  (447,161) -- (450,172) ;
	\draw [color={rgb, 255:red, 208; green, 2; blue, 27 }  ,draw opacity=1 ][line width=1.5]  [dash pattern={on 1.69pt off 2.76pt}]  (526,356) -- (484,255) ;
	
	\draw (62,33) node [anchor=north west][inner sep=0.75pt]   [align=left] {1};
	\draw (42,62) node [anchor=north west][inner sep=0.75pt]   [align=left] {\scriptsize$\tilde{\varphi}_i+\frac{\delta}{2}$};
	\draw (58,84) node [anchor=north west][inner sep=0.75pt]   [align=left] {\scriptsize$\tilde{\varphi}_i$};
	\draw (58,164) node [anchor=north west][inner sep=0.75pt]   [align=left] {\scriptsize$\tilde{\varphi}_1$};
	\draw (42,186) node [anchor=north west][inner sep=0.75pt]   [align=left] {\scriptsize$\tilde{\varphi}_1-\frac{\delta}{2}$};
	\draw (42,232) node [anchor=north west][inner sep=0.75pt]   [align=left] {\scriptsize${\varphi}_1+\frac{\delta}{2}$};
	\draw (58,255) node [anchor=north west][inner sep=0.75pt]   [align=left] {\scriptsize${\varphi}_1$};
	\draw (58,333) node [anchor=north west][inner sep=0.75pt]   [align=left] {\scriptsize$\varphi_i$};
	\draw (42,353) node [anchor=north west][inner sep=0.75pt]   [align=left] {\scriptsize$\varphi_i-\frac{\delta}{2}$};
	\draw (64,381) node [anchor=north west][inner sep=0.75pt]   [align=left] {\scriptsize{0}};
	\draw (89,395) node [anchor=north west][inner sep=0.75pt]   [align=left] {\scriptsize$n_1$};
	\draw (142,395) node [anchor=north west][inner sep=0.75pt]   [align=left] {\scriptsize$n_2$};
	\draw (168,395) node [anchor=north west][inner sep=0.75pt]   [align=left] {\scriptsize$n_3$};
	\draw (253,395) node [anchor=north west][inner sep=0.75pt]   [align=left] {\scriptsize$n_4$};
	\draw (290,395) node [anchor=north west][inner sep=0.75pt]   [align=left] {\scriptsize$n_5$};
	\draw (338,395) node [anchor=north west][inner sep=0.75pt]   [align=left] {\scriptsize$n_6$};
	\draw (358,395) node [anchor=north west][inner sep=0.75pt]   [align=left] {\scriptsize$n_7$};
	\draw (408,395) node [anchor=north west][inner sep=0.75pt]   [align=left] {\scriptsize$n_8$};
	\draw (432,395) node [anchor=north west][inner sep=0.75pt]   [align=left] {\scriptsize$n_9$};
	\draw (446,395) node [anchor=north west][inner sep=0.75pt]   [align=left] {\scriptsize$n_{10}$};
	\draw (481,395) node [anchor=north west][inner sep=0.75pt]   [align=left] {\scriptsize$n_{11}$};
		\draw (540,395) node [anchor=north west][inner sep=0.75pt]   [align=left] {\scriptsize$n_{12}$};
	\draw (568,395) node [anchor=north west][inner sep=0.75pt]   [align=left] {\scriptsize$n_{13}$};
	\draw (86,6) node [anchor=north west][inner sep=0.75pt]   [align=left] {$[x_n]$};
	\draw (602,385) node [anchor=north west][inner sep=0.75pt]   [align=left] {$n$};

	\end{tikzpicture}
	\caption{$d=13$ is odd}
	\label{figcdoddd}
\end{figure}

It follows from the construction that
\begin{align}
	\mathcal{L}_2\subset \mathcal{L}_3\subset\cdots\subset\mathcal{L}_{K_3-2}.\nonumber
\end{align}
Let
\begin{align}
L_i=\left\{n\in \mathcal{L}_i:\left[x_n\right]<\frac{1}{2}\right\},\ i=2,3,\cdots,K_3-2,\label{deLi}
\end{align}
and 
\begin{align}
	\tilde{L}_i=\left\{n\in\mathcal{L}_i:\left[x_n\right]>\frac{1}{2}\right\},\ i=2,3,\cdots,K_3-2,\label{deLitilde}
\end{align}
then 
\begin{align}
	&L_2\subset L_3\subset \cdots\subset L_{K_3-2},\label{subl2k2}\\
	&\tilde{L}_2\subset \tilde{L}_3\subset \cdots\subset \tilde{L}_{K_3-2}.\label{subl2k23}
\end{align}
By \eqref{deltaxnllvarepsilon}, \eqref{oddafter} and \eqref{evenafter}, we have that for any $m$,
\begin{align}
\{n_{2m-1},n_{2m-1}+1,\cdots,n_{2m}\}\subset L_i,\label{wholeLi}
\end{align}
or
\begin{align}
\{n_{2m-1},n_{2m-1}+1,\cdots,n_{2m}\}\subset \tilde{L}_i.\label{wholeLitilde}
\end{align}
By \eqref{definitionofai}-\eqref{deofhalfai2}, \eqref{oddafter}, \eqref{evenafter}, \eqref{demcLi}-\eqref{deLitilde}, we have
for any $i=2,3,\cdots,K_3-2,$
\begin{align}
\label{subsetALA}\cup_{l=2}^i\mathcal{A}_l\subset L_i\subset \cup_{l=1}^{i}\mathcal{A}_l\cup\frac{1}{2}\mathcal{A}_{i+1}
\end{align}
and
\begin{align} 
	\cup_{l=2}^i\tilde{\mathcal{A}}_l\subset \tilde{L}_i\subset\cup_{l=1}^{i}\tilde{\mathcal{A}}_l\cup\frac{1}{2}\tilde{\mathcal{A}}_{i+1}.\label{subsetALAtilde}
\end{align}

We claim that
\begin{align}
L_{i}\setminus L_{i-1}\subset \mathcal{A}_{i}\cup \mathcal{A}_{i+1},\ \tilde{L}_{i}\setminus \tilde{L}_{i-1}\subset \tilde{\mathcal{A}}_{i}\cup \tilde{\mathcal{A}}_{i+1},\ i=3,\cdots,K_3-2,\label{claim1}
\end{align}
and 
\begin{align}
\abs{\sum_{n\in L_i}(x_{n+1}-x_n)-\sum_{n\in \tilde{L}_i}(x_{n+1}-x_n)}\leq 1, \ i=2,3,\cdots,K_3-2.\label{claim2}
\end{align}
Then one can prove the theorem by letting 
\begin{align} &A_1=\mathcal{A}_1\setminus L_{K_3-2},\label{deA1}\\ \tilde{A}_1&=\tilde{\mathcal{A}}_1\setminus\left(\tilde{L}_{K_3-2}\cup A_1\right),\label{deA1tilde}\\
A_i=L_i\setminus L_{i-1}&,\ \tilde{A}_i=\tilde{L}_i\setminus \tilde{L}_{i-1},\ i=2,\cdots,K_3-2,\label{deAi}
\end{align}
and
\begin{align}
&A_{K_3-1}=\left\{n\in I:\left[x_n\right]<\varphi_{K_3-2},\ n\notin L_{K_3-2}\right\},\label{deAK-1}\\
&\tilde{A}_{K_3-1}=\left\{n\in I:\left[x_n\right]>\tilde{\varphi}_{K_3-2},\ n\notin\tilde{L}_{K_3-2}\right\},\label{deAK-1tilde}
\end{align}
where we let 
\begin{align}
	L_1=\tilde{L}_1=\emptyset\label{L1L1tilde}
\end{align}
in \eqref{deAi}.

Indeed, it follows from \eqref{definitionofai}-\eqref{deofhalfai2}, \eqref{subl2k2}, \eqref{subl2k23} and \eqref{deA1}-\eqref{deAK-1tilde} that $$A_1,A_2,\cdots,A_{K_3-1},\tilde{A}_1,\tilde{A}_2,\cdots,\tilde{A}_{K_3-1}$$ 
are disjoint. 

By \eqref{definitionofai}-\eqref{deofhalfai2}, \eqref{subsetALA}, \eqref{subsetALAtilde} and \eqref{deA1}-\eqref{deAK-1tilde},
one can obtain
\begin{align}
	I=\cup_{i=1}^{K_3-1}I_i,\nonumber
\end{align}
where $I_i=A_i\cup \tilde{A}_i$.

By \eqref{definitionofai}, \eqref{deA1} and \eqref{deA1tilde} one can obtain \eqref{aiinui} and \eqref{aiinuitilde} for $i=1$. By \eqref{definitionofai}-\eqref{deofhalfai2}, \eqref{deLi}, \eqref{deLitilde}, \eqref{subsetALA}, \eqref{subsetALAtilde} and \eqref{deAi}, one can obtain \eqref{aiinui} and \eqref{aiinuitilde} for $i=2$.
Applying \eqref{definitionofai}, \eqref{claim1}, \eqref{deAi}, one can obtain \eqref{aiinui} and \eqref{aiinuitilde} for $i=3,\cdots,K_3-2$. 
From  \eqref{deAK-1} and \eqref{deAK-1tilde} one can obtain \eqref{aiinui} and \eqref{aiinuitilde} for $i=K_3-1$.

By \eqref{claim2}, \eqref{deAi} and \eqref{L1L1tilde}, one can obtain \eqref{aiminusbileq10} for $i=2$.
By \eqref{subl2k2}, \eqref{subl2k23}, \eqref{claim2} and \eqref{deAi}, one has that 
\begin{align}
\abs{\sum_{n\in A_i}(x_{n+1}-x_n)-\sum_{n\in\tilde{A}_i}(x_{n+1}-x_n)}\leq& \abs{\sum_{n\in L_i}(x_{n+1}-x_n)-\sum_{n\in \tilde{L}_i}(x_{n+1}-x_n)}\nonumber\\
&+\abs{\sum_{n\in L_{i-1}}(x_{n+1}-x_n)-\sum_{n\in \tilde{L}_{i-1}}(x_{n+1}-x_n)}\nonumber\\
\leq &2.\nonumber
\end{align}
Then one can obtain \eqref{aiminusbileq10} for $i=3,\cdots,K_3-2$. 

In order to prove \eqref{Iifractionlarge} based on \eqref{deA1}-\eqref{deAK-1tilde}, let $i\in\{2,\cdots,K_3-2\}$ be fixed. 
Recall that the accumulation set of $([x_n])_{n\in\mathbb{N}}$ is $A=[0,1]$, and 
$$\Big[\varphi_i,\varphi_{i}+\frac{\delta}{2}\Big)\subset [0,1],$$ 
by \eqref{deltaxnllvarepsilon},
 there are infinitely many pairs of $s,t\in\mathbb{N}$ with
\begin{align}
\abs{\left[x_t\right]-\left[x_s\right]}\geq \frac{\delta}{8},\label{xsmxtg}
\end{align}
and for any $n\in [s,t]$, 
\begin{align}
	[x_n]&\in \Big[\varphi_i,\varphi_{i}+\frac{\delta}{2}\Big),\label{xsxnm2}\\
	x_n-&[x_n]=x_s-[x_s].\label{xsxnm}
\end{align}
Applying \eqref{subsetALA} and \eqref{deAi}, it holds that
\begin{align}
\mathcal{A}_i\setminus \frac{1}{2}\mathcal{A}_i\subset A_i.\nonumber
\end{align} Hence, by \eqref{definitionofai}, \eqref{deofhalfai} and \eqref{xsxnm2} one has
$\{s,s+1,\cdots,t\}\subset A_i\subset I_i$ for arbitrarily large $s>n_\delta$ and $N>t-n_\delta$.
By \eqref{xnbigo1}, \eqref{xsmxtg} and \eqref{xsxnm}, one has 
\begin{align}
\frac{\delta}{8}\leq \abs{\sum_{n=s}^{t-1}(\left[x_{n+1}\right]-\left[x_n\right])}= \abs{\sum_{n=s}^{t-1}(x_{n+1}-x_n)}\leq O(1)\sum_{n=s}^{t-1}\frac{1}{1+n}\leq O(1)\sum_{n\in I_i}\frac{1}{1+n}.\nonumber
\end{align}
Since such pairs of $s,t$ are infinitely many,  one can obtain \eqref{Iifractionlarge} by letting $N$ be large enough.

Next, we are going to prove \eqref{claim1} and \eqref{claim2}. Consider \eqref{claim1} first. Let $i\in \{3,\cdots,K_3-2\}$ be fixed.
We only show that $L_{i}\setminus L_{i-1}\subset \mathcal{A}_{i}\cup \mathcal{A}_{i+1}, $ the proof of $ \tilde{L}_{i}\setminus \tilde{L}_{i-1}\subset \tilde{\mathcal{A}}_{i}\cup \tilde{\mathcal{A}}_{i+1}$ is similar and omitted.  From \eqref{subsetALA}, we have that
\begin{align}
\cup_{l=2}^{i-1}\mathcal{A}_l\subset L_{i-1},\nonumber
\end{align}
and 
\begin{align}
L_i\subset \cup_{l=1}^i\mathcal{A}_l\cup\frac{1}{2}\mathcal{A}_{i+1}.\nonumber
\end{align}
Thus by \eqref{definitionofai} and \eqref{deofhalfai},
\begin{align}
L_i\setminus L_{i-1}\subset \mathcal{A}_1\cup \mathcal{A}_i\cup\frac{1}{2}\mathcal{A}_{i+1}\subset \mathcal{A}_1\cup \mathcal{A}_i\cup\mathcal{A}_{i+1}.\nonumber
\end{align}
Therefore, we only need to prove 
\begin{align}
	(L_i\setminus L_{i-1})\cap \mathcal{A}_1=\emptyset.\nonumber
\end{align}
Suppose that there exists $\tilde{n}$ with
\begin{align}
	\tilde{n}\in L_i\cap\mathcal{A}_1,\ \tilde{n}\notin L_{i-1}.\nonumber
\end{align}
Recall that $d$ and each $n_m,l_m$ depend on $i$. By \eqref{demcLi}, \eqref{deLi} and $\tilde{n}\notin L_{i-1}$ we know that there exist $$n_{m}=n_m(i-1)\in\mathcal{X}_{i-1},n_{m+1}=n_{m+1}(i-1)\in {\mathcal{X}}_{i-1}$$ 
with even $m$ such that $\tilde{n}\in(n_{m},n_{m+1}]$. By \eqref{cddn} and \eqref{demcLi} we know that $\tilde{n}=n_{m+1}\notin L_{i-1}$ can occur only when $\tilde{n}=n_{d}=n_{\delta}+N$ with odd $d=d(i-1)$.


If $\tilde{n}\in(n_{m},n_{m+1})$ for even $m\leq d-1$,
 by \eqref{deltaxnllvarepsilon}, \eqref{evenafter} and $\tilde{n}\in\mathcal{A}_1$, one has that for any ${n}\in(n_{m},n_{m+1}]$, 
\begin{align}
\left[x_n\right]\in (\varphi_1,\tilde{\varphi}_1).\label{fixnin}
\end{align}
By \eqref{lm} and \eqref{fixnin}, there exists $l_m\in(n_{m},n_{m+1}]$ such that 
\begin{align}
\left[x_{{l_{m}}}\right]\in \left[\varphi_1+\frac{\delta}{2},\tilde{\varphi}_1-\frac{\delta}{2}\right].\label{fixlmin}
\end{align}
Hence, by \eqref{oddafter}, \eqref{fixnin}, \eqref{fixlmin} and definitions of $\mathcal{L}_i$ and $L_i$, one obtains that 
\begin{align}
\{n_{m}+1,\cdots,n_{m+1}-1\}\cap L_i=\emptyset,\nonumber
\end{align}
which contradicts $\tilde{n}\in L_i$.
 Thus, we can obtain \eqref{claim1} if $\tilde{n}\in(n_{m},n_{m+1})$ for even $m\leq d-1$.
The proof of \eqref{claim1} for $\tilde{n}=n_d=n_{\delta}+N$ with odd $d=d(i-1)$ is similar and omitted.

Next, we prove \eqref{claim2}.
By \eqref{definitionofai}-\eqref{deofhalfai2}, \eqref{subsetALA} and \eqref{subsetALAtilde}, we have for any $n\in \cup_{i=2}^{K_3-2}\left(L_i\cup \tilde{L}_i\right)$,
\begin{align}
	[x_n]\in \left[\delta,1-\delta\right].\nonumber
\end{align}
Then by \eqref{deltaxnllvarepsilon}, for any $n\in \cup_{i=2}^{K_3-2}\left(L_i\cup \tilde{L}_i\right)$,
\begin{align}
	x_{n+1}-x_n=[x_{n+1}]-[x_n].\label{xn1xne}
\end{align}
Therefore, to prove \eqref{claim2}, we only need to prove 
\begin{align}
\label{fitopr}	\abs{\sum_{n\in L_i}(\left[x_{n+1}\right]-\left[x_n\right])-\sum_{n\in \tilde{L}_i}(\left[x_{n+1}\right]-\left[x_n\right])}\leq 1, \ i=2,3,\cdots,K_3-2.
\end{align}

Define $\sigma:\{n_m\}_{m=1}^{d-1}\to \{1,-1\}$ by
\begin{align}
\sigma(n_m)=\begin{cases}
1, &n_m\in X_1\cup X_i,\\
-1, &n_m\in \tilde{X}_1\cup \tilde{X}_i.
\end{cases}\nonumber
\end{align}
Since 
\begin{align}
\sum_{n=n_{2m-1}}^{n_{2m}}(\left[x_{n+1}\right]-\left[x_n\right])=\left[x_{n_{2m}+1}\right]-\left[x_{n_{2m-1}}\right],\nonumber
\end{align}
by \eqref{demcLi}, \eqref{wholeLi} and \eqref{wholeLitilde}, if $d$ is odd,
\begin{align}
	\abs{\sum_{n\in L_i}(\left[x_{n+1}\right]-\left[x_n\right])-\sum_{n\in \tilde{L}_i}(\left[x_{n+1}\right]-\left[x_n\right])}
	=\abs{\sum_{m=1}^{\frac{d-1}{2}}\sigma(n_{2m-1})(\left[x_{n_{2m}+1}\right]-\left[x_{n_{2m-1}}\right])},\label{transition}
\end{align}
and if $d$ is even,
\begin{align}
	\abs{\sum_{n\in L_i}(\left[x_{n+1}\right]-\left[x_n\right])-\sum_{n\in \tilde{L}_i}(\left[x_{n+1}\right]-\left[x_n\right])}
	=\abs{\sum_{m=1}^{\frac{d}{2}}\sigma(n_{2m-1})(\left[x_{n_{2m}+1}\right]-\left[x_{n_{2m-1}}\right])}.\nonumber
\end{align}
We only prove \eqref{fitopr} for the case that $d$ is odd, the proof of even case is similar and omitted.

Define the map: $\gamma:\{n_m\}_{m=1}^{d-1}\to \{\varphi_1,\varphi_i,\tilde{\varphi}_1,\tilde{\varphi}_i\}$ by 
\begin{align}
\gamma(n_{m})=\begin{cases}
\varphi_1,&\mathrm{\ if \ }n_m\in X_1,\\
\varphi_i,&\mathrm{\ if \ }n_m\in X_i,\\
\tilde{\varphi}_1,&\mathrm{\ if \ }n_m\in \tilde{X}_1,\\
\tilde{\varphi}_i,&\mathrm{\ if \ }n_m\in \tilde{X}_i.\\
\end{cases}\nonumber
\end{align}
By \eqref{definitionofX1}-\eqref{definitionofXitilde}, one has
\begin{align}
\gamma(n_m)\in [\left[x_{n_m}\right],\left[x_{n_{m}+1}\right]] \ \mathrm{or}\ \gamma(n_m)\in [\left[x_{n_m+1}\right],\left[x_{n_{m}}\right]].\nonumber
\end{align}
Then by \eqref{xnbigo1} and \eqref{xn1xne} one obtains 
\begin{align}
{\left[x_{n_m}\right]-\gamma(n_m)}= \frac{O(1)}{1+n_m},\
 {\left[x_{n_m+1}\right]-\gamma(n_m)}= \frac{O(1)}{1+n_m}.\nonumber
\end{align}
Therefore,
\begin{align}
\sigma(n_{2m-1})(\left[x_{n_{2m}+1}\right]-\left[x_{n_{2m-1}}\right])=&\sigma(n_{2m-1})(\left[x_{n_{2m}+1}\right]-\gamma(n_{2m})+\gamma(n_{2m})-\gamma(n_{2m-1})\nonumber\\
&+\gamma(n_{2m-1})-\left[x_{n_{2m-1}}\right])\nonumber\\
=&\sigma(n_{2m-1})(\gamma(n_{2m})-\gamma(n_{2m-1}))+\frac{O(1)}{1+{n_{2m-1}}}.\label{sigmaeachpart}
\end{align}

We prove the following by induction:
For any $1\leq s\leq t$, one has
\begin{align}
{\sum_{m=s}^t\sigma(n_{2m-1})(\left[x_{n_{2m}+1}\right]-\left[x_{n_{2m-1}}\right])}=g(n_{2s-1},n_{2t})+ \sum_{m=s}^{t}\frac{O(1)}{1+n_{2m-1}},\label{sigmasum}
\end{align}
where $O(1)$ is independent of $s,t$. Here
\begin{align}
g(n_{m},n_{l})=\begin{cases}
0, &\mathrm{if}\ (n_{m},n_{l})\in \left(X_1\times X_1\right)\cup \left(X_i\times X_i\right)\cup \left(\tilde{X}_1\times \tilde{X}_1\right)\cup \left(\tilde{X}_i\times \tilde{X}_i\right)\\
&\cup\left(X_1\times \tilde{X}_1\right)\cup \left(\tilde{X}_1\times X_1\right)\cup\left(X_i\times \tilde{X}_i\right)\cup\left(\tilde{X}_i\times X_i\right),\\
\varphi_i-\varphi_1, &\mathrm{if}\ (n_{m},n_{l})\in \left(X_1\times X_i\right)\cup \left({X}_1\times \tilde{X}_i\right)\cup \left(\tilde{X}_1\times \tilde{X}_i\right)\cup \left(\tilde{X}_1\times X_i\right),\\
\varphi_1-\varphi_i, &\mathrm{if}\ (n_{m},n_{l})\in \left(X_i\times X_1\right)\cup\left({X}_i\times \tilde{X}_1\right) \cup \left(\tilde{X}_i\times \tilde{X}_1\right)\cup \left(\tilde{X}_i\times X_1\right).
\end{cases}\nonumber
\end{align}
If $t-s=0$, namely, $s=t$, by \eqref{wholeLi} and \eqref{wholeLitilde} one has that 
\begin{align}
(n_{2s-1},n_{2s})\in &\left(X_1\times X_1\right)\cup \left(X_i\times X_i\right)\cup \left(\tilde{X}_1\times \tilde{X}_1\right)\cup \left(\tilde{X}_i\times \tilde{X}_i\right)\nonumber\\
&\cup\left(X_1\times X_i\right)\cup \left(X_i\times X_1\right)\cup \left(\tilde{X}_1\times \tilde{X}_i\right)\cup \left(\tilde{X}_i\times \tilde{X}_1\right).\nonumber
\end{align}

If $(n_{2s-1},n_{2s})\in \left(X_1\times X_1\right)\cup \left(X_i\times X_i\right)\cup \left(\tilde{X}_1\times \tilde{X}_1\right)\cup \left(\tilde{X}_i\times \tilde{X}_i\right)$, then $\gamma(n_{2s-1})=\gamma(n_{2s})$ and $g(n_{2s-1},n_{2s})=0$, by \eqref{sigmaeachpart} one can obtain \eqref{sigmasum}.

If $(n_{2s-1},n_{2s})\in\left(X_1\times X_i\right)\cup \left(X_i\times X_1\right)\cup \left(\tilde{X}_1\times \tilde{X}_i\right)\cup \left(\tilde{X}_i\times \tilde{X}_1\right)$, one can verify that 
\begin{align}
	\sigma(n_{2s-1})(\gamma(n_{{2s}})-\gamma(n_{2s-1}))=\begin{cases}
		\varphi_i-\varphi_1, &\mathrm{if}\ (n_{2s-1},n_{2s})\in (X_1\times X_i)\cup \left(\tilde{X}_1\times\tilde{X}_i\right),\nonumber\\
		\varphi_1-\varphi_i,&\mathrm{if}\ (n_{2s-1},n_{2s})\in (X_i\times X_1)\cup \left(\tilde{X}_i\times\tilde{X}_1\right),\nonumber
	\end{cases}
\end{align}
from which one obtains
\begin{align}
\sigma(n_{2s-1})(\gamma(n_{{2s}})-\gamma(n_{2s-1}))=g(n_{2s-1},n_{2s}),\nonumber
\end{align}
then by \eqref{sigmaeachpart} we can obtain \eqref{sigmasum}.

If $t-s=1$, namely, $t=s+1$. 
We consider the case that $n_{2s-1}\in X_1$, the remaining cases are analogous. 
Since $x_{n_m}$ enters $[\varphi_i,\varphi_1]\cup[\tilde{\varphi}_1,\tilde{\varphi}_i]$ when $m$ is odd, and exits $[\varphi_i,\varphi_1]\cup[\tilde{\varphi}_1,\tilde{\varphi}_i]$ when $m$ is even, by \eqref{definitionofX1}-\eqref{definitionofXitilde}, \eqref{oddafter} and \eqref{evenafter}, the paths of $n_{2s-1}\to n_{2s}\to n_{2s+1}\to n_{2s+2}$ can only be
\begin{figure}[H]
\begin{tikzpicture}[semithick,->,level/.style={sibling distance=120pt/#1, level distance=40pt},
                    every node/.style={align=center, inner sep=0, minimum size=25pt}]
\node {$n_{2s-1}$}
child {node {$n_{2s}$}
child {node {$n_{2s+1}$}
child {node {$n_{2s+2}$}}
}
}
;
\end{tikzpicture}
\begin{tikzpicture}[semithick,->,level/.style={sibling distance=120pt/#1, level distance=40pt},
                    every node/.style={align=center, inner sep=0, minimum size=25pt}]
\node {$X_1$}
    child {node {$X_1$}
child { node{$X_1$}
child{node{$X_1$}}
child{node{$X_i$}}
}
child { node{$\tilde{X}_1$}
child{node{$\tilde{X}_1$}}
child{node{$\tilde{X}_i$}}
}
} 
    child {
      node {$X_i$}
        child {node {$X_i$}
child{node{$X_1$}}
child{node{$X_i$}}}
        child {node {$\tilde{X}_i$}
child{node{$\tilde{X}_1$}}
child{node{$\tilde{X}_i$}}}
    }
;
\end{tikzpicture}
\caption{}
\end{figure}

To avoid repetition, we only prove the first two cases.

If $(n_{2s-1},n_{2s},n_{2s+1},n_{2s+2})\in \left(X_1\times X_1\times X_1\times X_1\right)$, then $$\gamma(n_{2s+2})=\gamma(n_{2s+1})=\gamma(n_{2s})=\gamma(n_{2s-1})=\varphi_1$$ 
and $g(n_{2s-1},n_{2s+2})=0$, by \eqref{sigmaeachpart} one obtains \eqref{sigmasum}.

If $(n_{2s-1},n_{2s},n_{2s+1},n_{2s+2})\in \left(X_1\times X_1\times X_1\times X_i\right)$, then
\begin{align}
	\gamma(n_{2s+2})=\varphi_i&,\ \gamma(n_{2s+1})=\gamma(n_{2s})=\gamma(n_{2s-1})=\varphi_1,\nonumber\\
	&\sigma(n_{2s-1})=\sigma(n_{2s+1})=1,\nonumber
\end{align}
and $g(n_{2s-1},n_{2s+2})=\varphi_i-\varphi_1$, by \eqref{sigmaeachpart} we can obtain \eqref{sigmasum}.

Suppose that we have \eqref{sigmasum} for any $t-s\leq l$, $l\geq 1$. Let $t-s=l+1$, by the same reason, we assume $n_{2s-1}\in X_1$ and omit the proofs of cases $n_{2s-1}\in X_i\cup\tilde{X}_1\cup\tilde{X}_i$. Since $x_{n_m}$ enters  $[\varphi_i,\varphi_1]\cup[\tilde{\varphi}_1,\tilde{\varphi}_i]$ when $m$ is odd, and exits  $[\varphi_i,\varphi_1]\cup[\tilde{\varphi}_1,\tilde{\varphi}_i]$ when $m$ is even, by \eqref{definitionofX1}-\eqref{definitionofXitilde}, \eqref{oddafter} and \eqref{evenafter}, the paths of $n_{2s-1}\to n_{2t-2}\to n_{2t-1}\to n_{2t}$ can be

\begin{figure}[H]
\begin{tikzpicture}[semithick,->,level/.style={sibling distance=105pt/#1, level distance=40pt},
                    every node/.style={align=center, inner sep=0, minimum size=10pt}]
\node {$n_{2s-1}$}
child {node {$n_{2t-2}$}
child {node {$n_{2t-1}$}
child {node {$n_{2t}$}}
}
}
;
\end{tikzpicture}
\begin{tikzpicture}[semithick,->,level/.style={sibling distance=105pt/#1, level distance=40pt},
                    every node/.style={align=center, inner sep=0, minimum size=15pt}]
\node {$X_1$}
    child {node {$X_1$}
child { node{$X_1$}
child{node{$X_1$}}
child{node{$X_i$}}
}
child { node{$\tilde{X}_1$}
child{node{$\tilde{X}_1$}}
child{node{$\tilde{X}_i$}}
}
} 
    child {
      node {$X_i$}
        child {node {$X_i$}
child{node{$X_1$}}
child{node{$X_i$}}}
        child {node {$\tilde{X}_i$}
child{node{$\tilde{X}_1$}}
child{node{$\tilde{X}_i$}}}
    }
child {node {$\tilde{X}_1$}
child { node{$X_1$}
child{node{$X_1$}}
child{node{$X_i$}}
}
child { node{$\tilde{X}_1$}
child{node{$\tilde{X}_1$}}
child{node{$\tilde{X}_i$}}
}
}
  child {
      node {$\tilde{X}_i$}
        child {node {$X_i$}
child{node{$X_1$}}
child{node{$X_i$}}}
        child {node {$\tilde{X}_i$}
child{node{$\tilde{X}_1$}}
child{node{$\tilde{X}_i$}}}
    }
;
\end{tikzpicture}
\caption{}
\end{figure}
Since the proofs are similar, we consider the path that $(n_{2s-1},n_{2t-2},n_{2t-1},n_{2t})\in X_1\times {X}_i\times \tilde{X}_i\times \tilde{X}_1$ as an example. See Figure \ref{ficcdfin}.

\begin{figure}[H]
\centering          

\begin{tikzpicture}[x=0.75pt,y=0.75pt,yscale=-1,xscale=1]
	
	\draw    (117,160) -- (467,160) ;
	\draw    (117,100) -- (467,100) ;
	\draw    (117,290) -- (467,290) ;
	\draw    (117,230) -- (467,230) ;
	\draw    (117,341) -- (117,13) ;
	\draw [shift={(117,11)}, rotate = 90] [color={rgb, 255:red, 0; green, 0; blue, 0 }  ][line width=0.75]    (10.93,-3.29) .. controls (6.95,-1.4) and (3.31,-0.3) .. (0,0) .. controls (3.31,0.3) and (6.95,1.4) .. (10.93,3.29)   ;
	\draw    (117,210) -- (467,210) ;
	\draw    (117,180) -- (467,180) ;
	\draw    (117,310) -- (467,310) ;
	\draw    (117,80) -- (467,80) ;
	\draw [line width=1.5]  [dash pattern={on 1.69pt off 2.76pt}]  (137.67,238.5) -- (127.67,223.5) ;
	\draw [line width=1.5]  [dash pattern={on 1.69pt off 2.76pt}]  (122.7,216.5) -- (128.7,225.5) ;
	\draw [line width=1.5]  [dash pattern={on 1.69pt off 2.76pt}]  (286,280.5) -- (319.67,339.5) ;
	\draw    (117,46) -- (467,46) ;
	\draw [line width=1.5]  [dash pattern={on 1.69pt off 2.76pt}]  (327.67,51.83) -- (381.7,172.5) ;
	\draw [line width=1.5]  [dash pattern={on 1.69pt off 2.76pt}]  (385.7,172.5) -- (398.7,148.5) ;
	\draw [line width=1.5]  [dash pattern={on 1.69pt off 2.76pt}]  (399.67,154.5) -- (405.67,185.5) ;
	\draw  [dash pattern={on 4.5pt off 4.5pt}]  (130.7,227.5) -- (130.67,342.83) ;
	\draw  [dash pattern={on 4.5pt off 4.5pt}]  (289.67,287.33) -- (289.67,342.67) ;
	\draw  [dash pattern={on 4.5pt off 4.5pt}]  (347.6,98.5) -- (347.6,340.83) ;
	\draw  [dash pattern={on 4.5pt off 4.5pt}]  (399.67,154.5) -- (399.67,341) ;
	\draw    (117,341) -- (490.67,342.33) ;
	\draw [shift={(492.67,342.33)}, rotate = 180.2] [color={rgb, 255:red, 0; green, 0; blue, 0 }  ][line width=0.75]    (10.93,-3.29) .. controls (6.95,-1.4) and (3.31,-0.3) .. (0,0) .. controls (3.31,0.3) and (6.95,1.4) .. (10.93,3.29)   ;
	
	\draw (126,5) node [anchor=north west][inner sep=0.75pt]   [align=left] {$[x_n]$};
	\draw (98,40) node [anchor=north west][inner sep=0.75pt]   [align=left] {\scriptsize 1};
	\draw (78,69) node [anchor=north west][inner sep=0.75pt]   [align=left] {\scriptsize$\tilde{\varphi}_i+\frac{\delta}{2}$};
	\draw (95,93) node [anchor=north west][inner sep=0.75pt]   [align=left] {\scriptsize$\tilde{\varphi}_i$};
	\draw (95,155) node [anchor=north west][inner sep=0.75pt]   [align=left] {\scriptsize$\tilde{\varphi}_1$};
	\draw (78,170) node [anchor=north west][inner sep=0.75pt]   [align=left] {\scriptsize$\tilde{\varphi}_1-\frac{\delta}{2}$};
	\draw (78,199) node [anchor=north west][inner sep=0.75pt]   [align=left] {\scriptsize$\varphi_1+\frac{\delta}{2}$};
	\draw (95,227) node [anchor=north west][inner sep=0.75pt]   [align=left] {\scriptsize$\varphi_1$};
	\draw (95,285) node [anchor=north west][inner sep=0.75pt]   [align=left] {\scriptsize$\varphi_i$};
	\draw (78,300) node [anchor=north west][inner sep=0.75pt]   [align=left] {\scriptsize$\varphi_i-\frac{\delta}{2}$};
	\draw (95,335) node [anchor=north west][inner sep=0.75pt]   [align=left] {\scriptsize 0};
	\draw (123,345) node [anchor=north west][inner sep=0.75pt]   [align=left] {\scriptsize$n_{2s-1}$};
	\draw (280,345) node [anchor=north west][inner sep=0.75pt]   [align=left] {\scriptsize$n_{2t-2}$};
	\draw (336,345) node [anchor=north west][inner sep=0.75pt]   [align=left] {\scriptsize$n_{2t-1}$};
	\draw (392,345) node [anchor=north west][inner sep=0.75pt]   [align=left] {\scriptsize$n_{2t}$};
	\draw (499,338) node [anchor=north west][inner sep=0.75pt]   [align=left] {\scriptsize$n$};
	\draw (193,250) node [anchor=north west][inner sep=0.75pt]   [align=left] {$\cdots$};
	
	\draw (193,345) node [anchor=north west][inner sep=0.75pt]   [align=left] {$\cdots$};

\end{tikzpicture}
\caption{}
\label{ficcdfin}
\end{figure}

Since $(n_{2s-1},n_{2t-2})\in X_1\times {X}_i$ and $(t-1)-s=l$, by the assumption that \eqref{sigmasum} holds for $t-s\leq l$, one has
\begin{align}
{\sum_{m=s}^{t-1}\sigma(n_{2m-1})(\left[x_{n_{2m}+1}\right]-\left[x_{n_{2m-1}}\right])}=&g(n_{2s-1},n_{2t-2})+ \sum_{m=s}^{t-1}\frac{O(1)}{1+n_{2m-1}}\nonumber\\
=&\varphi_i-\varphi_1+\sum_{m=s}^{t-1}\frac{O(1)}{1+n_{2m-1}}.\nonumber
\end{align}
Since $(n_{2t-1},n_{2t})\in \tilde{X}_i\times \tilde{X}_1$, by \eqref{sigmaeachpart} one has
\begin{align}
{\sigma(n_{2t-1})(\left[x_{n_{2t}+1}\right]-\left[x_{n_{2t-1}}\right])}=\varphi_1-\varphi_i+\frac{O(1)}{1+n_{2t-1}}.\nonumber
\end{align}
By $(n_{2s-1},n_{2t})\in (X_1\times \tilde{X}_1)$, one has $g(n_{2s-1},n_{2t})=0$, from which we can prove \eqref{sigmasum} for $t-s=l+1$. By induction, we have \eqref{sigmasum} for any $1\leq s\leq t$.

	For any even $m$, by \eqref{xnbigo1} and \eqref{lmlargeepsilon} one can obtain
\begin{align}
\frac{\delta}{4}\leq \abs{x_{l_m}-x_{n_m}}\leq \sum_{n=n_{m}}^{n_{m+1}}\abs{x_{n+1}-x_n}= O(1)\ln \frac{n_{m+1}}{n_m}.\nonumber
\end{align}
Hence, there exists $C_0>1$ such that for any even $m$,
\begin{align}
n_m>C_0n_{m-2}>\cdots>C_0^{\frac{m}{2}-1}n_2>C_0^{\frac{m}{2}-1}n_1.\nonumber
\end{align}
Thus, one has that 
\begin{align}
\sum_{m=1}^{\frac{d-1}{2}}\frac{1}{1+n_{2m-1}}\leq \frac{C_0}{n_1(C_0-1)}.\label{sumsmall}
\end{align}
Let $n_\delta$ be large enough, by \eqref{transition}, \eqref{sigmasum} and \eqref{sumsmall} one can obtain \eqref{fitopr}.
\end{proof}

\section{Partitions of Pr\"ufer Angles}\label{secparpru}
In this section, we assume that $k=\frac{p}{q}$ with odd $q\geq 3$. Recall that $\theta(n)$ is the modified Pr\"ufer angle determined by \eqref{ctt} under the condition \eqref{vno1}.
Denote the set of accumulation points of $(\left[\theta(nq)\right])_{n\in\mathbb{N}}$ by $B$. Set $$x_n:=\theta(nq)-nkq.$$ 
Clearly, by \eqref{demodx},
\begin{align}
	[x_n]=\left[\theta(nq)\right],\nonumber
\end{align}
and hence
$B$ is also the set of accumulation points of $(\left[x_n\right])_{n\in\mathbb{N}}$. Moreover, it holds that
\begin{align}
	x_{n+1}-x_n=\theta(nq+q)-\theta(nq)-kq.\nonumber
\end{align}
Then by \eqref{thetan+jn} one can obtain \eqref{xnbigo1}. Therefore, all the results in the previous section for $(\left[x_n\right])_{n\in\mathbb{N}}$  can be carried over, in a parallel manner, to $(\left[\theta(nq)\right])_{n\in\mathbb{N}}$. Thus, we have the following results.
\begin{corollary}\label{corangle}
	 There are four cases of $B$:
	
	\textbf{Case A.} There exists some $\varphi_0\in [0,1]$ such that $	B=\{\varphi_0\}$ or $B=\{0,1\}$.
	
	\textbf{Case B.}  $B=[b,c],\ 0\leq b<c\leq 1, b^2+(c-1)^2\neq 0$.
	
	\textbf{Case C.} $B=[0,d]\cup [e,1], \ 0\leq d<e\leq 1,d^2+(e-1)^2\neq0$.

		\textbf{Case D.} $B=[0,1].$
\end{corollary}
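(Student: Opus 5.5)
The plan is to reduce Corollary \ref{corangle} to the classification already proved for general slowly varying sequences. Set $x_n:=\theta(nq)-nkq$, as above. Since $kq=p\in\mathbb Z$, we have $[x_n]=[\theta(nq)]$ for every $n$. Hence the set of accumulation points of $([\theta(nq)])_{n\in\mathbb N}$ coincides with that of $([x_n])_{n\in\mathbb N}$, that is, $B=A$ with $A$ as in Section \ref{sec4}.

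The only hypothesis to verify is \eqref{xnbigo1} for $(x_n)$. By construction,
\[
x_{n+1}-x_n=\theta(nq+q)-\theta(nq)-kq .
\]
Applying \eqref{thetan+jn} with the fixed shift $j=q$ at the point $nq$ gives
\[
x_{n+1}-x_n=\frac{O(1)}{1+nq}=\frac{O(1)}{1+n}.
\]
This uses \eqref{vno1}, which guarantees $|V(n)/\sin\pi k|<\frac1{10}$ for large $n$, so Lemma \ref{nn} applies. The estimate is needed only for $n$ large; finitely many initial terms affect neither $A$ nor \eqref{xnbigo1} as an asymptotic statement. If one wants the bound literally for all $n$, it suffices to adjust the $O(1)$ constant or to discard finitely many terms.

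With \eqref{xnbigo1} established, the corollary of Lemma \ref{casesxn} classifying $A$ applies verbatim. It places $A$, and therefore $B$, in exactly one of Cases a--d, and these translate directly to Cases A--D.

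There is essentially no obstacle here. The one point needing care is that the bound from \eqref{thetan+jn} is stated as $n\to\infty$ in the original index, and the passage to the subsequence $nq$ only improves the constant.
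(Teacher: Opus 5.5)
Your proposal is correct and follows the paper's own argument. The paper likewise sets $x_n=\theta(nq)-nkq$, observes that $[x_n]=[\theta(nq)]$ and that $x_{n+1}-x_n=\theta(nq+q)-\theta(nq)-kq=\frac{O(1)}{1+n}$ by \eqref{thetan+jn}, and then applies the classification of accumulation sets of slowly varying sequences from Section \ref{sec4}.
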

\begin{lemma}[\textbf{Case A}]\label{caseA}
Let $\theta(n)$ be determined by \eqref{ctt} with $B=\{\varphi_0\},\varphi_0\neq 0,1$. Let $\delta>0$ be small.  Then there exists large $n_0\in\mathbb{N}$ with $n_0\gg \frac{1}{\delta}$, such that 
for any $n\geq n_0$, one has
\begin{align}
	[\theta(nq)]\in (\varphi_0-3\delta,\varphi_0+3\delta),\label{caseAaiinuitilde}
\end{align}
and for any $N\in\mathbb{N}$ large enough,
\begin{align}
	\abs{\sum_{n=n_0}^{n_0+N}(\theta(nq+q)-\theta(nq)-kq)}\leq 10,\label{caseAaiminusbileq10}
\end{align}
\begin{align}
\sum_{n=n_0}^{n_0+N}\frac{1}{1+n}\gg \frac{1}{\delta}.\label{caseAIifractionlarge}
\end{align}
\end{lemma}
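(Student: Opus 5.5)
The plan is to reduce Lemma \ref{caseA} directly to Theorem \ref{casea}, applied to the auxiliary sequence $x_n:=\theta(nq)-nkq$ introduced at the start of this section. First, I check that $(x_n)$ satisfies \eqref{xnbigo1}. Since $x_{n+1}-x_n=\theta(nq+q)-\theta(nq)-kq$, this follows from \eqref{thetan+jn} with $j=q$ and starting index $nq$, which gives $x_{n+1}-x_n=\frac{O(1)}{1+nq}=\frac{O(1)}{1+n}$.

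Next, since $kq=p\in\mathbb Z$, we have $nkq\in\mathbb Z$, so $[x_n]=[\theta(nq)]$. Hence the accumulation set $A$ of $([x_n])$ coincides with $B=\{\varphi_0\}$, and $\varphi_0\neq0,1$. All hypotheses of Theorem \ref{casea} therefore hold.

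Applying Theorem \ref{casea} with the given small $\delta$ yields $n_\delta\gg\frac1\delta$; I set $n_0:=n_\delta$. Then \eqref{caseaaiinuitilde} translates verbatim into \eqref{caseAaiinuitilde} through $[x_n]=[\theta(nq)]$. Likewise \eqref{caseaaiminusbileq10} becomes \eqref{caseAaiminusbileq10} by substituting $x_{n+1}-x_n=\theta(nq+q)-\theta(nq)-kq$ term by term. Finally, \eqref{caseaIifractionlarge} is literally \eqref{caseAIifractionlarge}, since it involves only the indices.

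There is no real obstacle here. The only points to watch are the integrality of $kq$ and the uniformity of the $O(1)$ in \eqref{thetan+jn}, which Lemma \ref{nn} provides once $|V(n)|$ is small, i.e.\ for large $n$; this is absorbed by taking $n_0$ large.
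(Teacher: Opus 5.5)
Your proposal is correct and matches the paper's argument: the paper sets $x_n=\theta(nq)-nkq$, notes $[x_n]=[\theta(nq)]$ (since $kq=p\in\mathbb{Z}$) and $x_{n+1}-x_n=\theta(nq+q)-\theta(nq)-kq=\frac{O(1)}{1+n}$ by \eqref{thetan+jn}, and then transfers Theorem \ref{casea} verbatim. Your remarks on the integrality of $kq$ and on taking $n_0$ large so that Lemma \ref{nn} applies cover the only points that need checking.
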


\begin{lemma}[\textbf{Case B}]\label{caseB}
  Let $\theta(n)$ be determined by \eqref{ctt} with $B=[b,c],0\leq b<c\leq 1, b^2+(c-1)^2\neq 0$. Let $\delta>0$ be small and $K_1=\left \lfloor \frac{c-b}{\delta} \right \rfloor$. Then there exists large $n_0\in\mathbb{N}$ with $n_0\gg \frac{1}{\delta}$, such that for any $N\in\mathbb{N}$ large enough, we can divide $I:=\{n\}_{n=n_0}^{n_0+N}$ into $(K_1-1)$ disjoint subsets $I_i, i=1,2,\cdots,K_1-1,$ and for any $i=1,2,\cdots,K_1-1$, one has
\begin{align}
	[\theta(nq)]\in Q_i(\delta),\ \mathrm{for\ any}\ n\in I_i,\label{aiinvitheta}
\end{align}
\begin{align}
	\abs{\sum_{n\in I_i}(\theta(nq+q)-\theta(nq)-kq)}\leq 10,\label{Iileq10theta}
\end{align}
and 
\begin{align}
\sum_{n\in I_i}\frac{1}{1+n}\gg \frac{1}{\delta},\label{Iifractionlargecasectheta}
\end{align}
where $Q_i(\delta)$ is defined by \eqref{deofVi}.
\end{lemma}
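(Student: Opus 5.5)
This lemma is a direct transcription of Theorem \ref{caseb} to the sequence $x_n:=\theta(nq)-nkq$, so the plan is to check that $(x_n)$ satisfies the hypotheses of that theorem and then translate each conclusion back.

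\textbf{Hypotheses.} By \eqref{thetan+jn} with $j=q$, we have $x_{n+1}-x_n=\theta(nq+q)-\theta(nq)-kq=\frac{O(1)}{1+nq}=\frac{O(1)}{1+n}$. Hence $(x_n)$ satisfies \eqref{xnbigo1}. Since $kq=p\in\mathbb{Z}$, we get $[x_n]=[\theta(nq)]$. Therefore the accumulation set $A$ of $([x_n])$ coincides with $B=[b,c]$, with $0\le b<c\le1$ and $b^2+(c-1)^2\neq0$. So we are exactly in \textbf{Case b}, and Theorem \ref{caseb} applies with the same $\delta$ and $K_1=\lfloor\frac{c-b}{\delta}\rfloor$.

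\textbf{Translating the conclusions.} Take $n_0:=n_\delta$ from Theorem \ref{caseb}; it satisfies $n_0\gg\frac1\delta$. For every $N$ large, the theorem gives a partition of $I=\{n\}_{n=n_0}^{n_0+N}$ into disjoint sets $I_1,\dots,I_{K_1-1}$. Each piece translates directly:
\begin{itemize}
\item \eqref{aiinvi} says $[x_n]\in Q_i(\delta)$ for $n\in I_i$, which is \eqref{aiinvitheta} because $[x_n]=[\theta(nq)]$.
\item \eqref{Iileq10} says $|\sum_{n\in I_i}(x_{n+1}-x_n)|\le10$, which is \eqref{Iileq10theta} after substituting $x_{n+1}-x_n=\theta(nq+q)-\theta(nq)-kq$.
\item \eqref{Iifractionlargecasec} is literally \eqref{Iifractionlargecasectheta}, since the weights $\frac1{1+n}$ use the same index $n$.
\end{itemize}

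\textbf{Expected difficulty.} There is no real obstacle, because all the combinatorial work is already in Section \ref{pcb}. The only points that need care are bookkeeping. First, the index $n$ of $x_n$ is the period counter, not the original site $nq$, and the estimate \eqref{xnbigo1} must be stated in that index; this holds because $\frac{1}{1+nq}\le\frac{1}{1+n}$. Second, the identification $[x_n]=[\theta(nq)]$ depends on $kq=p$ being an integer.
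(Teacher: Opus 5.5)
Your proposal is correct and is the paper's own argument. The paper sets $x_n=\theta(nq)-nkq$, observes that $[x_n]=[\theta(nq)]$ (so the accumulation sets agree) and that $x_{n+1}-x_n=\theta(nq+q)-\theta(nq)-kq=\frac{O(1)}{1+n}$ by \eqref{thetan+jn}, and then carries Theorem \ref{caseb} over verbatim, exactly as you describe.
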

\begin{lemma}[\textbf{Case C}]\label{caseC}
 Let $\theta(n)$ be determined by \eqref{ctt} with $B=[0,d]\cup [e,1], \ 0\leq d<e\leq 1,d^2+(e-1)^2\neq0.$ Let $\delta>0$ be small and $K_2=\left \lfloor \frac{1+d-e}{\delta} \right \rfloor$. Then there exists large $n_0\in\mathbb{N}$ with $n_0\gg \frac{1}{\delta}$, such that for any $N\in\mathbb{N}$ large enough, we can divide $I:=\{n\}_{n=n_0}^{n_0+N}$ into $(K_2-1)$ disjoint subsets $I_i, i=1,2,\cdots,K_2-1,$ and for any $i=1,2,\cdots,K_2-1$, one has
\begin{align}
[\theta(nq)]\in W_i(\delta),\ \mathrm{for\ any}\ n\in I_i,\label{aiinwitheta}
\end{align}
\begin{align}
	\abs{\sum_{n\in I_i}(\theta(nq+q)-\theta(nq)-kq)}\leq 10,\label{Iileq10casedtheta}
\end{align}
and 
\begin{align}
\sum_{n\in I_i}\frac{1}{1+n}\gg \frac{1}{\delta},\label{Iifractionlargecasedtheta}
\end{align}
where $W_i(\delta)$ is defined by \eqref{deofWi}.
\end{lemma}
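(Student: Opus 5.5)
This is a direct transfer of Theorem \ref{casec} to the sequence $x_n:=\theta(nq)-nkq$, exactly as Lemmas \ref{caseA} and \ref{caseB} transfer Theorems \ref{casea} and \ref{caseb}.

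First we check the hypotheses of Theorem \ref{casec}. By \eqref{thetan+jn} with $j=q$ we get
\[
x_{n+1}-x_n=\theta(nq+q)-\theta(nq)-kq=\frac{O(1)}{1+nq}=\frac{O(1)}{1+n},
\]
so \eqref{xnbigo1} holds. Since $kq=p\in\mathbb Z$, we have $[x_n]=[\theta(nq)]$. Hence the accumulation set $A$ of $([x_n])$ coincides with $B=[0,d]\cup[e,1]$, with $0\le d<e\le1$ and $d^2+(e-1)^2\neq0$.

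Next we apply Theorem \ref{casec} with the same $\delta$ and $K_2=\lfloor\frac{1+d-e}{\delta}\rfloor$. It yields $n_\delta\gg\frac1\delta$, which we take as $n_0$. For every large $N$ it also gives a partition of $I=\{n\}_{n=n_0}^{n_0+N}$ into disjoint sets $I_1,\dots,I_{K_2-1}$ satisfying \eqref{aiinwi}, \eqref{Iileq10cased} and \eqref{Iifractionlargecased}.

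Finally we translate these three properties back to $\theta$:
\begin{itemize}
\item Since $[x_n]=[\theta(nq)]$, property \eqref{aiinwi} is exactly \eqref{aiinwitheta}.
\item Since $x_{n+1}-x_n=\theta(nq+q)-\theta(nq)-kq$ termwise, property \eqref{Iileq10cased} is exactly \eqref{Iileq10casedtheta}.
\item Property \eqref{Iifractionlargecased} involves only the index sets, so it gives \eqref{Iifractionlargecasedtheta} verbatim.
\end{itemize}

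There is no genuine obstacle. The only points needing care are the exact identity $[x_n]=[\theta(nq)]$, which holds because $kq=p$ is an integer, and the fact that the step bound comes from \eqref{thetan+jn} with a uniform constant. All the substantive work was already done in Theorem \ref{caseb}, through the reduction via Lemma \ref{lemequcd}.
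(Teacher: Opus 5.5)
Your proposal is correct and is the paper's argument: the paper sets $x_n=\theta(nq)-nkq$, notes that $[x_n]=[\theta(nq)]$ and that $x_{n+1}-x_n=\theta(nq+q)-\theta(nq)-kq=\frac{O(1)}{1+n}$ by \eqref{thetan+jn}, and then carries Theorem \ref{casec} over verbatim. Nothing further is needed.
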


\begin{lemma}[\textbf{Case D}]\label{caseD}
	Let $\theta(n)$ be determined by \eqref{ctt} with $B=[0,1]$. Let $\delta>0$ be small and $K_3=\left \lfloor \frac{1}{2\delta} \right \rfloor$. Then there exists large $n_0\in\mathbb{N}$ with $n_0\gg \frac{1}{\delta}$, such that for any  $N\in\mathbb{N}$ large enough, we can divide $I:=\{n\}_{n=n_0}^{n_0+N}$ into $2(K_3-1)$ disjoint subsets: $A_i, \tilde{{A}}_i,\ i=1,2,\cdots,K_3-1,$ and
	\begin{enumerate}
		\item For any $i=1,2,\cdots,K_3-1$, one has
		\begin{align}
			[\theta(nq)]\in U_i(\delta),\ \mathrm{for\ any}\ n\in A_i,\label{aiinuitheta}
		\end{align}
		and 
		\begin{align}
			[\theta(nq)]\in \tilde{U}_i(\delta),\ \mathrm{for\ any}\ n\in \tilde{A}_i,\label{aiinuitildetheta}
		\end{align}
		where $U_i(\delta),\tilde{U}_i(\delta)$ are defined by \eqref{definitionofui}.
		\item For any $i=2,\cdots,K_3-2$, one has
		\begin{align}
			\abs{\sum_{n\in A_i}(\theta(nq+q)-\theta(nq)-kq)-\sum_{n\in \tilde{A}_i}(\theta(nq+q)-\theta(nq)-kq)}\leq 10,\label{aiminusbileq10theta}
		\end{align}
		and 
		\begin{align}
			\sum_{n\in I_i}\frac{1}{1+n}\gg \frac{1}{\delta},\label{Iifractionlargetheta}
		\end{align}
		where $I_i=A_i\cup \tilde{{A}}_i$.
	\end{enumerate} 
\end{lemma}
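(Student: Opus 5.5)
The plan is to deduce Lemma \ref{caseD} directly from Theorem \ref{cased}, applied to the auxiliary sequence
\[
x_n:=\theta(nq)-nkq .
\]
Since $kq=p\in\mathbb Z$, we have $[x_n]=[\theta(nq)]$. Hence the accumulation set of $([x_n])_{n\in\mathbb N}$ is exactly $B=[0,1]$, which is the hypothesis $A=[0,1]$ of Theorem \ref{cased}.

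The first step is to check that $(x_n)$ satisfies \eqref{xnbigo1}. Because $V$ obeys \eqref{vno1}, Lemma \ref{nn} applies for large $n$. Then \eqref{thetan+jn} with $j=q$ gives
\[
x_{n+1}-x_n=\theta(nq+q)-\theta(nq)-kq=\frac{O(1)}{1+nq}=\frac{O(1)}{1+n}.
\]

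The second step is to invoke Theorem \ref{cased} for this $(x_n)$. This produces $n_\delta\gg\frac1\delta$ and, for every large $N$, a partition of $\{n\}_{n=n_\delta}^{n_\delta+N}$ into sets $A_i,\tilde A_i$, $i=1,\dots,K_3-1$. Set $n_0:=n_\delta$. The conclusions then transfer term by term.

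\begin{itemize}
\item \eqref{aiinui} and \eqref{aiinuitilde} become \eqref{aiinuitheta} and \eqref{aiinuitildetheta}, because $[x_n]=[\theta(nq)]$.
\item \eqref{aiminusbileq10} becomes \eqref{aiminusbileq10theta}, because $x_{n+1}-x_n=\theta(nq+q)-\theta(nq)-kq$.
\item \eqref{Iifractionlarge} is literally \eqref{Iifractionlargetheta}, with $I_i=A_i\cup\tilde A_i$.
\end{itemize}

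All the substantive difficulty has already been absorbed into Theorem \ref{cased}. The only point needing care is the index bookkeeping. The index $n$ in Lemma \ref{caseD} counts periods, so the partition lives on period indices and the weights $\frac{1}{1+n}$ are unchanged. When this is later converted to the Pr\"ufer variables at times $nq$, the weights become $\frac{1}{1+nq}$, which differ only by the fixed factor $q$. The same argument, combined with Theorems \ref{casea}, \ref{caseb} and \ref{casec}, gives Lemmas \ref{caseA}, \ref{caseB} and \ref{caseC}.
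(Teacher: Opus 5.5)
Your proposal is correct and is the paper's own argument: the paper also sets $x_n=\theta(nq)-nkq$, notes $[x_n]=[\theta(nq)]$ and $x_{n+1}-x_n=\theta(nq+q)-\theta(nq)-kq=\frac{O(1)}{1+n}$ by \eqref{thetan+jn}, and then transfers Theorem \ref{cased} directly to the Pr\"ufer angle. Your remarks that Lemma \ref{nn} is only needed for large $n$, and that the weights $\frac{1}{1+n}$ and $\frac{1}{1+nq}$ differ by a bounded factor, are accurate bookkeeping that the paper does not spell out.
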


\section{The Sharp Bound in the Odd-Denominator Case}\label{sec8}
\begin{proof}[\textbf{Proof of Theorem \ref{main}}]Let $a=a(V)$ be defined by
	\begin{align}
		a(V)=\limsup_{n\to \infty}\abs{nV(n)}.\nonumber
	\end{align}
	To prove the result, we show that for any small $\varepsilon>0$, there exists $n_0 \in\mathbb{N}$ such that as $N\to\infty$,
	\begin{align}\label{mmmdelta01}
	\frac{\sum_{m=0}^{N-1}\sum_{j=0}^{q-1}V(n_0q+mq+j)\sin2\pi\theta(n_0q+mq+j)}{\sum_{m=0}^{N-1}\frac{a+\varepsilon}{1+n_0q+mq}}	\leq
		qB_q+O(\varepsilon).
	\end{align}
In the following, we always assume that $\varepsilon>0$ is sufficiently small and may change even in the same formula. 
	Indeed, if one has \eqref{mmmdelta01}, then by \eqref{lr} we obtain that for any large enough $N$,
	\begin{align}
		\ln R(n_0q+Nq)^2=&\ln R(n_0q)^2
		-\sum_{m=0}^{N-1}\sum_{j=0}^{q-1}\frac{V(n_0q+mq+j)}{\sin\pi k}\sin2\pi\theta(n_0q+mq+j)+O(1)\nonumber\\
		\geq &O(1)-\left(\frac{qB_q}{\sin\pi k}+\varepsilon\right)\sum_{m=0}^{N-1}\frac{a+\varepsilon}{1+n_0q+mq}\nonumber\\
		=&O(1)-\frac{(a+\varepsilon)(B_q+\varepsilon)}{\sin\pi k}\ln (n_0q+Nq).\label{lnRn0m0qlargethan}
	\end{align}
	If $a<\frac{\sin\pi k}{B_q}$, then 
	\begin{align}
		\frac{(a+\varepsilon)(B_q+\varepsilon)}{\sin\pi k}< 1\nonumber
	\end{align}
		for any small enough $\varepsilon>0$. 
Hence, by \eqref{lnRn0m0qlargethan} one has that for any large enough $N$,
\begin{align}
R(n_0q+Nq)^2\geq \frac{1}{n_0q+Nq}.\nonumber
\end{align}
	This implies that $R(\cdot)\notin \ell^2(\mathbb{N})$. By \eqref{yne} and \eqref{rne} we have that $u(\cdot,E)\notin \ell^2(\mathbb{N})$, which implies that $E$ is not an eigenvalue of $H$.
	
Let $B$ be the set of accumulation points of $(\left[\theta(nq)\right])_{n\in\mathbb{N}}$. We prove \eqref{mmmdelta01} in each case. We emphasize that for any large $n$,
\begin{align}
	\abs{V(n)}\leq \frac{a+\varepsilon}{1+n}\label{fvc}.
\end{align}

 \textbf{Case A.}
		If $B=\{\varphi_0\}$ with $\varphi_0\in \{\frac{j}{2q}\}_{j=0}^{2q}$ or $B=\{0,1\}$, then \eqref{mmmdelta01} follows from \eqref{fvc} and Lemma \ref{lemmafixedpoints} with $\delta$ being replaced by $\varepsilon$.
		
	If $B=\{\varphi_0\}$ with $\varphi_0\notin \{\frac{j}{2q}\}_{j=0}^{2q}$, 
	let 
		\begin{align}
		0<\varepsilon\ll \min\left\{\abs{\varphi_0-\frac{j}{2q}}:j=0,1,\cdots,2q\right\}.\nonumber
	\end{align} 
	Then we can obtain
	\eqref{mmmdelta01} from \eqref{fvc}, Corollary \ref{zhongyaoyinli} and Lemma \ref{caseA} by letting $\delta=\varepsilon$.
	
 \textbf{Case B \& Case C.} We only give the proof of \textbf{Case B}, the proof of \textbf{Case C} is similar and omitted.
In this case, $B=[b,c],$ where $0\leq b<c\leq 1, b^2+(c-1)^2\neq 0$. Let $0<\tilde{\varepsilon}\ll\varepsilon\ll1$ and $K_1=\left \lfloor \frac{c-b}{\tilde{\varepsilon}} \right \rfloor$. Then by Lemma \ref{caseB}, for any large enough $N$, we can divide $I=\{n\}_{n=n_0}^{n_0+N}$ into $(K_1-1)$ disjoint subsets $I_i,i=1,\cdots,K_1-1$, such that \eqref{aiinvitheta}-\eqref{Iifractionlargecasectheta} hold with $\delta=\tilde{\varepsilon}$. 

To satisfy the requirement 
	\begin{align}
	\tilde{\varepsilon}=\delta\ll \min\left\{\abs{\varphi_0-\frac{j}{2q}}:j=0,1,\cdots,2q\right\}\label{vedllm}
\end{align} 
in Corollary \ref{zhongyaoyinli}, we need to divide sets $I_1,I_2,\cdots,I_{K_1-2}$ and $I_{K_1-1}$ into two classes. 
The first class consists of those whose corresponding sequences $[\theta(nq)]$ are close to 
$\frac{j}{2q}$ for some $j=0,\cdots,2q$, while the second class consists of those that are relatively far from these points.

Let
\begin{align}
	X=\left\{1\leq i\leq K_1-1: Q_i(\tilde{\varepsilon})\subset\cup_{j=0}^{2q} \left(\frac{j}{2q}-\varepsilon,\frac{j}{2q}+\varepsilon\right)_{\mathrm{mod}}\right\},\nonumber
\end{align}
where $Q_i(\tilde{\varepsilon})$ is defined by \eqref{deofVi}.

Then by \eqref{aiinvitheta}, for any $n\in \cup_{i\in X} I_i$, one has that 
\begin{align}
	[\theta(nq)]\in \cup_{j=0}^{2q}{\left(\frac{j}{2q}-\varepsilon,\frac{j}{2q}+\varepsilon\right)_{\mathrm{mod}}},\label{fiqix}
\end{align}
and for any $i\in Y:=\{1,2,\cdots,K_1-1\}\setminus X$,  
\begin{align}
	Q_i(\tilde{\varepsilon}) \not\subset\cup_{j=0}^{2q} \left(\frac{j}{2q}-\varepsilon,\frac{j}{2q}+\varepsilon\right)_{\mathrm{mod}}.\label{fiqite}
\end{align}
See Figure \ref{figfin} for example, where $\alpha_i=\alpha_i(\tilde{\varepsilon})$ is defined by \eqref{dealphai}.

\begin{figure}[H]

\tikzset{every picture/.style={line width=0.75pt}} 

\begin{tikzpicture}[x=0.75pt,y=0.75pt,yscale=-1,xscale=1]

	\draw    (130.67,341) -- (528.67,341) ;
	\draw [shift={(530.67,341)}, rotate = 180] [color={rgb, 255:red, 0; green, 0; blue, 0 }  ][line width=0.75]    (10.93,-3.29) .. controls (6.95,-1.4) and (3.31,-0.3) .. (0,0) .. controls (3.31,0.3) and (6.95,1.4) .. (10.93,3.29)   ;
	\draw    (130.73,201.2) -- (130.67,23) ;
	\draw [shift={(130.67,21)}, rotate = 89.98] [color={rgb, 255:red, 0; green, 0; blue, 0 }  ][line width=0.75]    (10.93,-3.29) .. controls (6.95,-1.4) and (3.31,-0.3) .. (0,0) .. controls (3.31,0.3) and (6.95,1.4) .. (10.93,3.29)   ;
	\draw [color={rgb, 255:red, 208; green, 2; blue, 27 }  ,draw opacity=1 ]   (130.67,100) -- (522.67,99) ;
	\draw [color={rgb, 255:red, 208; green, 2; blue, 27 }  ,draw opacity=1 ]   (130.67,313) -- (522.67,312) ;
	\draw    (130.67,183) -- (522.67,182) ;
	\draw    (130.67,231) -- (522.67,230) ;
	\draw    (130.67,113) -- (522.67,112) ;
	\draw    (130.67,83) -- (522.67,82) ;
	\draw    (130.67,302) -- (522.67,301) ;
	\draw   (130.78,195.06) -- (130.73,201.2) -- (137.83,202.63) -- (123.58,205.23) -- (137.78,208.08) -- (123.53,210.69) -- (137.74,213.54) -- (123.49,216.15) -- (137.69,219) -- (123.44,221.61) -- (130.54,223.03) -- (130.49,229.17) ;
	\draw    (130.49,229.17) -- (130.67,341) ;
	\draw    (130.67,327) -- (522.67,326) ;
	\draw (90,91) node [anchor=north west][inner sep=0.75pt]   [align=left] {\scriptsize$\frac{j_0}{2q}+\varepsilon$};
	\draw (80,107) node [anchor=north west][inner sep=0.75pt]   [align=left] {\scriptsize $\alpha_{i_0}+3\tilde{\varepsilon}$};
	\draw (80,175) node [anchor=north west][inner sep=0.75pt]   [align=left] {\scriptsize $\alpha_{i_0}-3\tilde{\varepsilon}$};
	\draw (80,224) node [anchor=north west][inner sep=0.75pt]   [align=left] {\scriptsize $\alpha_{i_1}+3\tilde{\varepsilon}$};
	\draw (80,293) node [anchor=north west][inner sep=0.75pt]   [align=left] {\scriptsize $\alpha_{i_1}-3\tilde{\varepsilon}$};
	\draw (90,305) node [anchor=north west][inner sep=0.75pt]   [align=left] {\scriptsize$\frac{j_0}{2q}-\varepsilon$};
	\draw (70,320) node [anchor=north west][inner sep=0.75pt]   [align=left] {\scriptsize $\alpha_{{i_1+1}}-3\tilde{\varepsilon}$};
	\draw (70,76) node [anchor=north west][inner sep=0.75pt]   [align=left] {\scriptsize $\alpha_{{i_0-1}}+3\tilde{\varepsilon}$};
	\draw (139,11) node [anchor=north west][inner sep=0.75pt]   [align=left] {$[\theta(nq)]$};
	\draw (539,337) node [anchor=north west][inner sep=0.75pt]   [align=left] {$n$};

\end{tikzpicture}
\caption{$\{i_0,i_0+1,\cdots,i_1\}\subset X$ and $\{i_0-1,i_1+1\}\subset Y$.}
\label{figfin}
\end{figure}
Then by  \eqref{fvc}, \eqref{fiqix} and Lemma \ref{lemmafixedpoints}, one can obtain
\begin{align}
	\frac{\sum_{n\in \cup_{i\in X} I_i}\sum_{j=0}^{q-1}V(nq+j)\sin2\pi\theta(nq+j)}{\sum_{n\in \cup_{i\in X} I_i}\frac{a+\varepsilon}{1+nq}}\leq qB_q+O(\varepsilon).\label{caseBXi1}
\end{align}

Let $i\in Y$, 
by \eqref{deofVi}, \eqref{fiqite} and $\tilde{\varepsilon}\ll \varepsilon$,
we can conclude that
\begin{align}
	\min\left\{\abs{\alpha_i-\frac{j}{2q}}:j=0,1,\cdots,2q\right\}\geq \frac{\varepsilon}{2}\gg\tilde{\varepsilon},\nonumber
\end{align}
where $\alpha_i=\alpha_i(\tilde{\varepsilon})$ is defined by \eqref{dealphai}. 
Therefore, by \eqref{fvc}, Corollary \ref{zhongyaoyinli} and Lemma \ref{caseB}, for any $i\in Y$, we have
\begin{align}
	\frac{\sum_{n\in I_i}\sum_{j=0}^{q-1}V(nq+j)\sin2\pi\theta(nq+j)}{\sum_{n\in I_i}\frac{a+\varepsilon}{1+nq}}\leq qB_q+O(\tilde{\varepsilon}),\nonumber
\end{align}
and hence,
\begin{align}
	\frac{\sum_{n\in \cup_{i\in Y} I_i}\sum_{j=0}^{q-1}V(nq+j)\sin2\pi\theta(nq+j)}{\sum_{n\in \cup_{i\in Y} I_i}\frac{a+\varepsilon}{1+nq}}\leq qB_q+O(\tilde{\varepsilon}).\label{caseBYi1}
\end{align}
By \eqref{caseBXi1} and \eqref{caseBYi1} one can obtain \eqref{mmmdelta01}.

\textbf{Case D.}
 In this case, $B=[0,1]$. Let $0<\tilde{\varepsilon}\ll\varepsilon\ll1$ and $K_3=\lfloor \frac{1}{2\tilde{\varepsilon}}\rfloor$. Then by Lemma \ref{caseD}, for any large enough $N$, we can divide $I=\{n\}_{n=n_0}^{n_0+N}$ into $2(K_3-1)$ disjoint subsets $A_i,\tilde{A}_i,i=1,\cdots,K_3-1$, such that \eqref{aiinuitheta}-\eqref{Iifractionlargetheta} hold with $\delta=\tilde{\varepsilon}$. 
		
		Denote by 
\begin{align}
X=\left\{1\leq i\leq K_3-1: U_i(\tilde{\varepsilon})\subset \cup_{j=0}^q\left(\frac{j}{2q}-\varepsilon,\frac{j}{2q}+\varepsilon\right)_{\mathrm{mod}}\right\},\nonumber
\end{align}
where 
 $U_i(\tilde{\varepsilon})$ is defined by \eqref{definitionofui},
and let $Y:=\{1,\cdots,K_3-1\}\setminus X$. We emphasize that because of $\tilde{\varepsilon}\ll\varepsilon$, by \eqref{definitionofui}, one has
\begin{align}
	U_1(\tilde{\varepsilon})\subset \left(\frac{1}{2}-\varepsilon,\frac{1}{2}+\varepsilon\right)_{\mathrm{mod}}\nonumber
\end{align}
and
\begin{align}
	U_{K_3-1}(\tilde{\varepsilon})\subset \left(-\varepsilon,\varepsilon\right)_{\mathrm{mod}}\nonumber,
\end{align}
then
	$1\in X,\ K_3-1\in X.$
Therefore,
\begin{align}
	1\notin Y,\ K_3-1\notin Y.\label{fiyno}
\end{align}
By \eqref{definitionofui} one obtains that $$U_i(\tilde{\varepsilon})\subset \left(\frac{j}{2q}-\varepsilon,\frac{j}{2q}+\varepsilon\right)_{\mathrm{mod}}$$ 
if and only if 
\begin{align}
\tilde{U}_i(\tilde{\varepsilon})\subset \left(\frac{2q-j}{2q}-\varepsilon,\frac{2q-j}{2q}+\varepsilon\right)_{\mathrm{mod}}.\nonumber
\end{align}
By \eqref{aiinuitheta} and \eqref{aiinuitildetheta} (where $\delta$ is replaced by $\tilde{\varepsilon}$), for any $n\in \cup_{i\in X} I_i=\cup_{i\in X}\left(A_i\cup \tilde{A}_i\right)$, one has that 
\begin{align}
[\theta(nq)]\in \cup_{j=0}^{2q}{\left(\frac{j}{2q}-\varepsilon,\frac{j}{2q}+\varepsilon\right)_{\mathrm{mod}}}.\nonumber
\end{align}
Then applying Lemma \ref{lemmafixedpoints} and by \eqref{fvc}, one can obtain \eqref{caseBXi1}, where $I_i=A_i\cup\tilde{A}_i$.

Let $i\in Y$, then $U_i(\tilde{\varepsilon})$ and $\tilde{U}_i(\tilde{\varepsilon})$ are not subsets of  $\cup_{j=0}^{2q} \left(\frac{j}{2q}-\varepsilon,\frac{j}{2q}+\varepsilon\right)_{\mathrm{mod}}$, by \eqref{definitionofui} and $\tilde{\varepsilon}\ll \varepsilon$,
we can conclude that
\begin{align}
\min\left\{\abs{\varphi_i-\frac{j}{2q}}:j=0,1,\cdots,2q\right\}\geq \frac{\varepsilon}{2}\gg\tilde{\varepsilon},\nonumber
\end{align}
and
\begin{align}
\min\left\{\abs{\tilde{\varphi}_i-\frac{j}{2q}}:j=0,1,\cdots,2q\right\}\geq \frac{\varepsilon}{2}\gg\tilde{\varepsilon},\nonumber
\end{align}
where $\varphi_i=\varphi_i(\tilde{\varepsilon})$ and $\tilde{\varphi}_i=\tilde{\varphi}_i(\tilde{\varepsilon})$ are defined by \eqref{definitionofthetai}. Then by \eqref{fiyno}, $Y\subset\{2,3,\cdots,K_3-2\}$.
Hence, applying \eqref{fvc}, Lemma  \ref{lemmaduichencancel} and  Lemma \ref{caseD}, one can obtain for any $i\in Y$ that
\begin{align}
\frac{\sum_{n\in A_i}\sum_{j=0}^{q-1}V(nq+j)\sin2\pi\theta(nq+j)+\sum_{n\in \tilde{A}_i}\sum_{j=0}^{q-1}V(nq+j)\sin2\pi\theta(nq+j)}{\sum_{n\in A_i}\frac{a+\varepsilon}{1+nq}+\sum_{n\in \tilde{A}_i}\frac{a+\varepsilon}{1+nq}}\nonumber\\
\leq qB_q+O(\tilde{\varepsilon}).\nonumber
\end{align}
Namely,
\begin{align}
\frac{\sum_{n\in I_i}\sum_{j=0}^{q-1}V(nq+j)\sin2\pi\theta(nq+j)}{\sum_{n\in I_i}\frac{a+\varepsilon}{1+nq}}\leq qB_q+O(\tilde{\varepsilon}).\nonumber
\end{align}
Hence one has \eqref{caseBYi1}.
By \eqref{caseBXi1} and \eqref{caseBYi1} one can obtain \eqref{mmmdelta01}.
\end{proof}

		\section{Critical Cases}\label{sec9}
	In this section,  we consider in turn the cases where $k=\frac{p}{q}$ with $q\geq 3$ odd, where $q\geq 2$ even, and where $k$ is irrational.

	Suppose that $\{N_n\}_{n=1}^\infty\subset\mathbb{N}$ with $N_n>0\ (n\geq 1)$ are given. Define $\{T_n\}_{n=1}^\infty$ by 
\begin{align}
T_{n+1}=T_n+m_nN_n,\label{deTn}
\end{align}
where the positive integer $T_1$ is given, and $\{m_n\}_{n=1}^\infty\subset \mathbb{N}$ depends on $\{N_n\}_{n=1}^{\infty}$ so that
\begin{align}
\sum_{n=1}^{\infty}n\left(\frac{N_n}{T_n}\right)^{\frac{1}{n}}<\infty.\label{sumnNnTnleqinfty}
\end{align}
We have the following lemma.
\begin{lemma}\label{lemmabasiccritical} Let $\{N_n\}_{n=1}^\infty\subset\mathbb{N}$ with $N_n>0\ (n\geq 1)$ be given. Assume that $\{T_n\}_{n=1}^\infty$ defined by \eqref{deTn} satisfy \eqref{sumnNnTnleqinfty}.
		Suppose that $F=F(n)>0\ (n\in\mathbb{N})$
satisfies, for every $n=1,2,\cdots,$ and  any $m=1,2,\cdots, {m}_{n}$, 
\begin{align}
\ln F(T_n+mN_n)^2\leq \ln F(T_n)^2-\left(1+\frac{1}{n}\right)\ln\frac{T_n+mN_n}{T_n},\label{crilnF}
\end{align}
and as $n\to\infty$,
\begin{align}
\frac{F(T_n+mN_n+j)}{F(T_n+mN_n)}=O(1)\label{criFTn}
\end{align}
uniformly in $j\in [1,N_n]$ and $m\in [0,m_n-1]$. Then $F(\cdot)\in\ell^2(\mathbb{N})$.
\end{lemma}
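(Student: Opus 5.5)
Since $F>0$ is only controlled at the block points $T_n+mN_n$, the plan is to write $\sum_k F(k)^2$ as a sum over the blocks $[T_n,T_{n+1})$. Each block $[T_n,T_{n+1})$ is cut into the $m_n$ subintervals $[T_n+mN_n,\,T_n+(m+1)N_n)$, $0\le m\le m_n-1$, each of length $N_n$. On such a subinterval, \eqref{criFTn} gives $F(T_n+mN_n+j)^2\le C\,F(T_n+mN_n)^2$ for $1\le j\le N_n$, with $C$ uniform once $n$ is large; the finitely many small $n$ contribute a finite amount. Hence
\[
\sum_{k=T_n}^{T_{n+1}-1}F(k)^2\le C N_n\sum_{m=0}^{m_n-1}F(T_n+mN_n)^2 .
\]

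Next we iterate \eqref{crilnF} to control $F(T_n)$ itself. Applying \eqref{crilnF} with $m=m_n$ gives $F(T_{n+1})^2\le F(T_n)^2 (T_n/T_{n+1})^{1+1/n}$. Since the exponents $1+1/n$ are all $\ge 1$ and $T_n\le T_{n+1}$, induction yields $F(T_n)^2\le F(T_1)^2\,T_1/T_n$; thus $F(T_n)^2\le C'/T_n$. Inserting this into \eqref{crilnF} for general $m$ gives
\[
F(T_n+mN_n)^2\le \frac{C'}{T_n}\Big(\frac{T_n}{T_n+mN_n}\Big)^{1+1/n}.
\]

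We then sum over $m$ by comparison with an integral. Setting $s=mN_n/T_n$, we get
\[
N_n\sum_{m=0}^{m_n-1}\frac{1}{T_n}(1+mN_n/T_n)^{-1-1/n}\le \frac{N_n}{T_n}+\int_0^\infty (1+s)^{-1-1/n}\,ds=\frac{N_n}{T_n}+n ,
\]
where the first term bounds the $m=0$ summand. This is the main obstacle: the naive bound is only $O(n)$ per block, which diverges when summed, so it must be sharpened using the factor $(N_n/T_n)^{1/n}$ appearing in \eqref{sumnNnTnleqinfty}.

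To sharpen it, split off the $m=0$ term, which contributes $N_n/T_n$. For $m\ge1$, use $(1+mN_n/T_n)^{-1-1/n}$ and compare the sum with $\int_{N_n/T_n}^\infty$ rather than $\int_0^\infty$, since the Riemann sum with step $N_n/T_n$ is dominated by the integral starting at $s=N_n/T_n$ (the function is decreasing):
\[
\frac{N_n}{T_n}\sum_{m\ge1}(1+mN_n/T_n)^{-1-1/n}\le \int_{0}^{\infty}(1+N_n/T_n+s)^{-1-1/n}ds = n\,(1+N_n/T_n)^{-1/n}.
\]
This alone is still of size $n$, so an additional gain is needed: the decay must be measured relative to $T_n$ scaled by $N_n$. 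Write $(T_n+mN_n)/T_n\ge mN_n/T_n$. Then the $m\ge1$ terms are bounded by
\[
\frac{N_n}{T_n}\Big(\frac{T_n}{N_n}\Big)^{1+1/n}\sum_{m\ge1}m^{-1-1/n}\le \Big(\frac{T_n}{N_n}\Big)^{1/n}(1+n).
\]
This has the wrong orientation, which indicates that the correct use of $F(T_n)$ must combine the block decay with the accumulated decay across earlier blocks. The fix is to use the exponent $1+1/j$ on earlier blocks to get $F(T_n)^2\lesssim T_n^{-1}$ with a gain, and to apply \eqref{crilnF} with the ratio $\big((T_n+mN_n)/T_n\big)^{-1/n}\le (N_n/T_n)^{1/n}\,m^{-1/n}\cdot(\ldots)$, aiming for a final per-block bound $\lesssim n(N_n/T_n)^{1/n}$. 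The details of this last matching are where I expect the real work to lie. Once the per-block bound $O\big(n(N_n/T_n)^{1/n}+N_n/T_n\big)$ is in hand, summing over $n$ with \eqref{sumnNnTnleqinfty} gives $F\in\ell^2(\mathbb N)$.
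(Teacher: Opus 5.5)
There is a genuine gap. Your reduction to $\sum_n N_n\sum_{m=0}^{m_n-1}F(T_n+mN_n)^2<\infty$ via \eqref{criFTn} is the same as the paper's, but the proposal stops exactly at the step that carries the content of the lemma.

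The trouble starts in your second paragraph. By replacing every exponent $1+\frac1l$ by $1$ you keep only $F(T_n)^2\lesssim T_n^{-1}$, and this is intrinsically too weak. With it, your bound for the $n$th block is
\[
C\frac{N_n}{T_n}\sum_{m=0}^{m_n-1}\Bigl(1+\frac{mN_n}{T_n}\Bigr)^{-1-\frac1n}\ \ge\ C\,n\bigl(1-(T_n/T_{n+1})^{1/n}\bigr)\ \ge\ c\min\{n,\ln(T_{n+1}/T_n)\},
\]
and these bounds sum to $\infty$ whenever $T_n\to\infty$. So no estimate that uses only $F(T_n)^2\lesssim T_n^{-1}$ together with the decay inside the $n$th block can close the argument. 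Your last attempted inequality also points the wrong way. The quantity $\bigl((T_n+mN_n)/T_n\bigr)^{-1/n}$ is bounded by $1$ and by $(T_n/(mN_n))^{1/n}$, never by a small multiple of $(N_n/T_n)^{1/n}$, as you observed one step earlier. The ``final matching'' is then left unspecified.

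The missing idea is that the gain must come from the decay accumulated \emph{before} the $n$th block, which you discarded. Apply \eqref{crilnF} with $m=m_l$ for $l=1,\dots,n-1$ and sum by parts:
\[
\ln F(T_n)^2\le \ln F(T_1)^2+2\ln T_1-\Bigl(1+\frac{1}{n-1}\Bigr)\ln T_n-\sum_{l=2}^{n-1}\frac{\ln T_l}{l(l-1)}.
\]
Since $T_l\ge 1$, the last sum can be dropped, which gives $F(T_n)^2\le C\,T_n^{-1-\frac{1}{n-1}}\le C\,T_n^{-1-\frac1n}$. This is what the paper achieves by collapsing $\prod_{l=1}^{n-1}(T_{l+1}/T_l)^{-(1+\frac1l)}$ into a bounded product of negative powers of the $T_l$.

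Inserting this into \eqref{crilnF} gives $F(T_n+mN_n)^2\le C\,(T_n+mN_n)^{-1-\frac1n}$, and hence
\[
N_n\sum_{m=0}^{m_n-1}F(T_n+mN_n)^2\le C\Bigl(N_nT_n^{-1-\frac1n}+\int_{T_n}^{\infty}x^{-1-\frac1n}\,dx\Bigr)=C\bigl(N_nT_n^{-1-\frac1n}+n\,T_n^{-\frac1n}\bigr).
\]
Since $N_n\ge 1$, and $N_n<T_n$ for large $n$ (forced by \eqref{sumnNnTnleqinfty}), this is at most $C(1+n)(N_n/T_n)^{1/n}$, which is summable by \eqref{sumnNnTnleqinfty}. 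The decisive factor is $T_n^{-1/n}$, inherited from earlier blocks, not any gain found inside the $n$th block.
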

\begin{proof}
By \eqref{criFTn}, we only need to prove 
\begin{align}
\sum_{n=1}^{\infty}N_n\sum_{m=0}^{m_n-1}F(T_n+mN_n)^2<\infty.\label{Finl2}
\end{align}
	Let $m={m}_{n}$, \eqref{crilnF} implies that
\begin{align}
F(T_{n+1})^2\leq F(T_n)^2\left(\frac{T_{n+1}}{T_n}\right)^{-\left(1+\frac{1}{n}\right)}.\nonumber
\end{align}
Then
\begin{align}
F(T_{n+1})^2=O(1)\prod_{l=1}^n\left(\frac{T_{l+1}}{T_l}\right)^{-(1+\frac{1}{l})}.\nonumber
\end{align}
By \eqref{crilnF}, for any $0\leq m \leq m_n-1$, 
\begin{align}
F(T_n+mN_n)^2=O(1)\prod_{l=1}^{n-1}\left(\frac{T_{l+1}}{T_l}\right)^{-\left(1+\frac{1}{l}\right)}\left(\frac{T_n+mN_n}{T_n}\right)^{-\left(1+\frac{1}{n}\right)}.\nonumber
\end{align}
Therefore, 
\begin{align}
			N_n\sum_{m=0}^{{m}_{n}-1}{F}({T}_{n}+mN_n)^2= &O(1)N_n\prod_{l=1}^{n-1}\left(\frac{{T}_{l+1}}{{T}_l}\right)^{-(1+\frac{1}{l})}\sum_{m=0}^{{m}_{n}-1}\left(\frac{{T}_{n}+{m}{N_n}}{{T}_{n}}\right)^{-\left(1+\frac{1}{n}\right)}\nonumber\\
			=&O(1)N_n\prod_{l=1}^{n}{T}_l^{-\frac{1}{l^2}}\sum_{m=0}^{{m}_{n}-1}\left({{T}_{n}+{m}{N_n}}\right)^{-\left(1+\frac{1}{n}\right)}\nonumber\\
			=&O(1)\int_{0}^{{m}_{n}}\left(\frac{{T}_{n}}{{N_n}}+x\right)^{-(1+\frac{1}{n})}dx\nonumber\\
			=&O\left(n\left(\frac{N_n}{T_n}\right)^{\frac{1}{n}}\right).\nonumber
		\end{align}
Then by \eqref{sumnNnTnleqinfty} we can obtain \eqref{Finl2}. Hence, one obtains $F(\cdot)\in \ell^2(\mathbb{N})$.
\end{proof}

	Let us consider the case $k=\frac{p}{q}$  with odd $q\geq3$ first. 
	\begin{proof}[\textbf{Proof of Theorem \ref{main2} for odd $q$.}] Let
		\begin{align}
		a=\frac{\sin\pi k}{B_q}\label{oddaBq}	
		\end{align}	
and $\{T_n\}_{n=1}^{\infty}$ be defined by \eqref{deTn} with $T_1=1$,
$N_n\equiv q$ and 
\begin{align}
m_n=\left\lfloor\frac{e^{(n+1)^3}-e^{n^3}}{q}\right\rfloor,\ n\geq 1.\nonumber
\end{align}
Then one directly obtains \eqref{sumnNnTnleqinfty}.
Define the potential $V$ by $V(0)=0$ and on each $[{T}_{n},{T}_{n+1})$,
		\begin{align}
			V(T_n+j)=\frac{a+\delta_{n}}{1+
T_n+j}{\rm{sgn}}(\sin2\pi\theta(T_n+j)), \ 0\leq j<T_{n+1}-T_n,\label{devoddqcir}
		\end{align}
		where $\{\delta_{n}\}_{n=1}^{\infty}\subset \mathbb{R}$ with 
		$\lim_{n\to\infty}\delta_{n}=0$
		will be defined later. Therefore, one has
		$$\limsup_{n\to \infty}\abs{nV(n)}=\frac{\sin\pi k}{B_q}.$$ Then by \eqref{lr}, one has for any $1\leq j\leq q$,
\begin{align}
\frac{R(n+j)}{R(n)}=O(1)\label{oddO1}
\end{align}
as $n\to\infty$.
	By \eqref{lr} and \eqref{devoddqcir},  for any $n=1,2,\cdots,$ and $0\leq m \leq {m}_{n}-1$, one has
\begin{align}
&\ln R(T_n+(m+1)q)^2\nonumber\\
&=\ln R(T_n+mq)^2-\sum_{j=0}^{q-1}\frac{V(T_n+mq+j)}{\sin\pi k}\sin2\pi\theta(T_n+mq+j)+\frac{O(1)}{1+T_n^2+m^2}\nonumber\\
&=\ln R(T_n+mq)^2-\sum_{j=0}^{q-1}\frac{(a+\delta_n)\abs{\sin2\pi\theta(T_n+mq+j)}}{(1+T_n+mq)\sin\pi k}+\frac{O(1)}{1+T_n^2+m^2}.\label{criticaloddlnr}
\end{align}
Applying \eqref{bhgx}, \eqref{permupqlj} and \eqref{definitionofP_qtheta} one has
\begin{align}
\sum_{j=0}^{q-1}\abs{\sin2\pi\theta(T_n+mq+j)}=&\sum_{j=0}^{q-1}\abs{\sin2\pi\left(\theta(T_n+mq)+\frac{p}{q}j\right)}+\frac{O(1)}{1+T_n+m}\nonumber\\
=&\sum_{j=0}^{q-1}\abs{\sin2\pi\left(\theta(T_n+mq)+\frac{p}{q}{l_j}\right)}+\frac{O(1)}{1+T_n+m}\nonumber\\
=&\sum_{j=0}^{q-1}\abs{\sin2\pi\left(\theta(T_n+mq)+\frac{j}{q}\right)}+\frac{O(1)}{1+T_n+m}\nonumber\\
=&P(\theta(T_n+mq))+\frac{O(1)}{1+T_n+m}.\nonumber
\end{align}
By \eqref{perioPvarphi} and \eqref{Bq}, we can obtain 
\begin{align}
\sum_{j=0}^{q-1}\abs{\sin2\pi\theta(T_n+mq+j)}\geq qB_q+\frac{O(1)}{1+T_n+m}.\nonumber
\end{align}
Then by \eqref{criticaloddlnr} we conclude that
\begin{align}
\ln R(T_n+(m+1)q)^2\leq\ln R(T_n+mq)^2-\frac{(a+\delta_n)qB_q}{(1+T_n+mq)\sin\pi k}+\frac{O(1)}{1+T_n^2+m^2}\nonumber.
\end{align}
Therefore, there exist $\{\varepsilon_{n,m}\}_{n\geq 1, 0\leq m\leq m_n-1}\subset\R$ with 
\begin{align}
	\varepsilon_{n,m}=\frac{O(1)}{1+T_n+m},\nonumber
\end{align}
such that
\begin{align}
\ln R(T_n+(m+1)q)^2\leq \ln R(T_n+mq)^2-\frac{(a+\delta_n)(qB_q+\varepsilon_{n,m})}{(1+T_n+mq)\sin\pi k}.\nonumber
\end{align}
This implies
\begin{align}
\ln R(T_n+(m+1)q)^2\leq& \ln R(T_n)^2-\sum_{i=0}^{m}\frac{(a+\delta_n)(qB_q+\varepsilon_{n,i})}{(1+T_n+iq)\sin\pi k}\nonumber\\
\leq &\ln R(T_n)^2-\frac{(a+\delta_n)(B_q+\tilde{\varepsilon}_n)}{\sin\pi k}\ln \frac{T_{n}+(m+1)q}{T_n},\label{fllmad}
\end{align}
where $\tilde{\varepsilon}_n=\frac{O(1)}{1+n}$. Therefore, by \eqref{oddaBq} there exist $\{\delta_n\}_{n=1}^\infty\subset\R$ with $\delta_n=\frac{O(1)}{1+n}$ such that
\begin{align}
\frac{(a+\delta_n)(B_q+\tilde{\varepsilon}_n)}{\sin\pi k}\geq 1+\frac{1}{n}.\nonumber
\end{align}
By \eqref{oddO1}, \eqref{fllmad} and applying Lemma \ref{lemmabasiccritical} one can obtain that $R(\cdot)\in\ell^2(\mathbb{N})$. 
		We finish the proof for odd $q.$
		
	\end{proof}	
	
	Now let us consider the case $k=\frac{p}{q}$ with even $q\geq2$. 
	
	\begin{lemma}\cite[Lemma 3.3]{LiuDiscrete}
\label{l44} For even $q\geq2$, we have
		\begin{align}
			A_q&=\frac{1}{q}\max_{\varphi\in \left[0,\frac{1}{q}\right]}\sum_{j=0}^{q-1}\abs{\sin\left(\frac{2\pi}{q}j+2\pi\varphi\right)}=\frac{1}{q}\sum_{j=0}^{q-1}\abs{\sin\left(\frac{2\pi}{q}j+\frac{\pi}{q}\right)},\nonumber
		\end{align}
where $A_q$ is defined by \eqref{aqa}. 
	\end{lemma}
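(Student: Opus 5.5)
For even $q$ we compute the function $P(\varphi)$ of \eqref{definitionofP_qtheta} in closed form on one period, in the same way as Lemma \ref{lsscc} does for odd $q$, and then simply maximize it. By \eqref{perioPvarphi}, and since $P$ is continuous (so $P(1/q)=P(0)$), it suffices to treat $\varphi\in\left[0,\frac1q\right)$.

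\textbf{Step 1 (sign pattern).} Fix $\varphi\in\left[0,\frac1q\right)$. Because $q$ is even, the shift $j\mapsto j+\frac q2$ adds $\pi$ to the angle $\frac{2\pi}{q}j+2\pi\varphi$. By $\sin(t+\pi)=-\sin t$, the indices $j$ and $j+\frac q2$ therefore contribute equal absolute values. This gives
\[
P(\varphi)=2\sum_{j=0}^{\frac q2-1}\abs{\sin\left(\frac{2\pi}{q}j+2\pi\varphi\right)}.
\]
For $0\le j\le\frac q2-1$ the angles lie in $\left[2\pi\varphi,\ \pi-\frac{2\pi}{q}+2\pi\varphi\right]$. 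This interval is contained in $[0,\pi)$, since $2\pi\varphi<\frac{2\pi}{q}$. Hence every sine in the sum is nonnegative and the absolute values can be dropped.

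\textbf{Step 2 (summation).} Apply \eqref{sintheta+ix} with $x=\frac{2\pi}{q}$ and $n=\frac q2-1$. Then $\left(n+\frac12\right)x=\pi-\frac{\pi}{q}$, and $\cos\left(2\pi\varphi-\frac{\pi}{q}+\pi\right)=-\cos\left(2\pi\varphi-\frac{\pi}{q}\right)$. Substituting these, we obtain
\[
P(\varphi)=\frac{2\cos\left(2\pi\varphi-\frac{\pi}{q}\right)}{\sin\frac{\pi}{q}},\qquad \varphi\in\left[0,\tfrac1q\right).
\]

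\textbf{Step 3 (maximization).} On $\left[0,\frac1q\right]$ we have $2\pi\varphi-\frac{\pi}{q}\in\left[-\frac{\pi}{q},\frac{\pi}{q}\right]$. So the cosine is maximal exactly at $\varphi=\frac{1}{2q}$, where $P\left(\frac1{2q}\right)=\frac{2}{\sin\frac{\pi}{q}}$. Dividing by $q$ gives
\[
\frac1q\max_{\varphi\in\left[0,\frac1q\right]}P(\varphi)=\frac{2}{q\sin\frac{\pi}{q}}=A_q
\]
by \eqref{aqa}. Since the maximizer is $\varphi=\frac{1}{2q}$, i.e.\ $2\pi\varphi=\frac{\pi}{q}$, the maximum equals $\frac1q\sum_{j=0}^{q-1}\abs{\sin\left(\frac{2\pi}{q}j+\frac{\pi}{q}\right)}$, which is the second equality.

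There is no real obstacle here. The only point needing care is the sign bookkeeping in Step 1, namely checking that exactly the first $\frac q2$ terms are nonnegative on the whole interval $\left[0,\frac1q\right)$. This is the even-$q$ analogue of the case split in Lemma \ref{lsscc}, but simpler, because the half-period pairing removes the second case.
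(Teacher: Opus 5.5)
Your proof is correct: the sign bookkeeping in Step 1, the evaluation via \eqref{sintheta+ix} in Step 2, and the maximization in Step 3 all check out, giving $P(\varphi)=2\cos\left(2\pi\varphi-\frac{\pi}{q}\right)/\sin\frac{\pi}{q}$ on a period with maximum $2/\sin\frac{\pi}{q}$ at $\varphi=\frac{1}{2q}$. The paper does not reprove this lemma (it is quoted from \cite{LiuDiscrete}), but your argument is the even-$q$ counterpart of the closed-form computation in Lemma \ref{lsscc}, with the half-period pairing collapsing that lemma's two-case split into one formula. One small slip: $P(1/q)=P(0)$ follows from the periodicity \eqref{perioPvarphi}, not from continuity, though this does not affect the argument.
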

Recall that $(l_0,l_1,\cdots, l_{q-1})$ is a permutation of $(0,1,\cdots,q-1)$ such that for any $j=0,1,\cdots,q-1,$ 
\begin{align}
	\left[\frac{p}{q}l_j\right]=\frac{j}{q}.\nonumber
\end{align} For even $q\geq 2$,
we have
\begin{align}
\sin\left(\frac{\pi}{q}+\frac{2\pi}{q}pl_j\right)>0,\ j\in\left\{0,1,\cdots,\frac{q}{2}-1\right\},\label{sgz}
\end{align}
and
\begin{align}
\sin\left(\frac{\pi}{q}+\frac{2\pi}{q}pl_j\right)<0,\ j\in\left\{\frac{q}{2},\frac{q}{2}+1,\cdots,q-1\right\}.\label{slz}
\end{align}

	\begin{proof}[\textbf{Proof of Theorem \ref{main2} for even $q\geq2$.}]
Let 
	\begin{align}
		a=\frac{\sin\pi k}{A_q}\label{sinpikaqa}	
		\end{align}	
and $\{T_n\}_{n=1}^{\infty}$ be defined by \eqref{deTn} with $T_1=1$,
$N_n\equiv q$ and 
\begin{align}
m_n=\left\lfloor\frac{e^{(n+1)^3}-e^{n^3}}{q}\right\rfloor,\ n\geq 1.\nonumber
\end{align}
Then one directly obtains \eqref{sumnNnTnleqinfty}.

		Let $n_0$ be a large fixed positive integer. Define $V(n)=0$ for any $0\leq n\leq T_{n_0}-2$. Let $V(T_{n_0}-1)$ be such that 
\begin{align}
			2\pi\theta(T_{n_0})=\frac{\pi}{q} \mod 2\pi.\nonumber
		\end{align}
Define $V(n)$ on $[T_{n}+mq,T_{n}+(m+1)q)$ for any $n\geq n_0,0\leq m\leq m_n-1$, by
\begin{align}
V(T_n+mq+j)=\begin{cases}
				\frac{a+\delta_{n}}{1+{T}_{n}+mq+j}{\rm{sgn}}(\sin2\pi\theta({T}_{n}+mq+j)), & j\in\{l_0,l_1,\cdots,l_{\frac{q}{2}-1}\},\\
				\frac{a+\delta_{n}+\varepsilon_{n,m}}{1+{T}_{n}+mq+j}{\rm{sgn}}(\sin2\pi\theta({T}_{n}+mq+j)), & j\in\{l_{\frac{q}{2}},\cdots,l_{q-1}\},\label{deVeq}
			\end{cases}
\end{align}
where $\{\delta_n\}_{n=1}^{\infty}\subset\R$ with $\lim_{n\to\infty}\delta_n=0$, and $\{\varepsilon_{n,m}\}_{n\geq n_0;0\leq m\leq m_n-1}\subset\R$  with 
\begin{align}
	\varepsilon_{n,m}=\frac{O(1)}{1+T_n+m}\label{varepsilonnm}
\end{align}
will be defined later. Therefore, one has
$$\limsup_{n\to \infty}\abs{nV(n)}=\frac{\sin\pi k}{A_q}.$$
Then by \eqref{lr}, one has for any $1\leq j\leq q$,
\begin{align}
	\frac{R(n+j)}{R(n)}=O(1)\label{evenO1}
\end{align}
as $n\to\infty$.

We claim that for arbitrary $\{\delta_n\}_{n=1}^{\infty}\subset\R$ with $\lim_{n\to\infty}\delta_n=0$, we can always choose $\{\varepsilon_{n,m}\}_{n\geq n_0;0\leq m\leq m_n-1}\subset\R$ which satisfy \eqref{varepsilonnm} 
such that for any $n\geq n_0,\ 0\leq m\leq m_n-1$,
\begin{align}
2\pi\theta(T_{n}+mq)=\frac{\pi}{q}\mod2\pi.\label{tpm2pf}
\end{align}
Indeed, without loss of generality, suppose that 
\begin{align}
	2\pi \theta(T_n+mq)=\frac{\pi}{q}\label{tptef}
\end{align}
for some $n\geq n_0$ and $0\leq m\leq m_n-1$. 
Then by \eqref{bhgx}, there exists some constant $C>0$ such that for any $0\leq j\leq q-1$,
\begin{align}
	\abs{2\pi\theta(T_n+mq+j)-\left(\frac{\pi}{q}+\frac{2\pi}{q}pj\right)}\leq \frac{C}{1+T_n+m}\label{fi2pt}
\end{align}
Since $T_n\geq n_0q$ is large and $\sin \left(\frac{\pi}{q}+\frac{2\pi}{q}pj\right)\neq 0$ for any $0\leq j\leq q-1$, by \eqref{fi2pt} one has $$\mathrm{sgn}(\sin2\pi\theta(T_n+mq+j))=\mathrm{sgn}\left(\sin\left(\frac{\pi}{q}+\frac{2\pi}{q}pj\right)\right).$$ 
Then by \eqref{sgz}, \eqref{slz} and \eqref{deVeq},
\begin{align}
V(T_n+mq+j)=\begin{cases}
				\frac{a+\delta_{n}}{1+{T}_{n}+mq+j}, & j\in\{l_0,l_1,\cdots,l_{\frac{q}{2}-1}\},\\
				-\frac{a+\delta_{n}+\varepsilon_{n,m}}{1+{T}_{n}+mq+j}, & j\in\{l_{\frac{q}{2}},\cdots,l_{q-1}\}.\label{devcrieq}
			\end{cases}
\end{align}
Then by \eqref{n1n} and \eqref{bhgx}, one can obtain
\begin{align}
\theta(&T_n+(m+1)q)\nonumber\\
=&\theta(T_n+mq)+ kq+\sum_{j\in\left\{l_0,l_1,\cdots,l_{\frac{q}{2}-1}\right\}}\sin^2\pi\left(\theta(T_n+mq)+\frac{p}{q}j\right)\frac{a+\delta_n}{\pi\sin\pi k(1+T_n+mq)}\nonumber\\
&-\sum_{j\in\left\{l_{\frac{q}{2}},\cdots,l_{q-1}\right\}}\sin^2\pi\left(\theta(T_n+mq)+\frac{p}{q}j\right)\frac{a+\delta_n+\varepsilon_{n,m}}{\pi\sin\pi k(1+T_n+mq)}+\frac{O(1)}{1+T_n^2+m^2}.\nonumber
\end{align}
Applying \eqref{permupqlj} and \eqref{tptef} we can conclude that
\begin{align}
2\pi&\theta(T_n+(m+1)q)\nonumber\\
=&\frac{\pi}{q}+2\pi kq+2\sum_{j=0}^{\frac{q}{2}-1}\sin^2\pi\left(\frac{1}{2q}+\frac{j}{q}\right)\frac{a+\delta_n}{\sin\pi k(1+T_n+mq)}\nonumber\\
&-2\sum_{j=\frac{q}{2}}^{q-1}\sin^2\pi\left(\frac{1}{2q}+\frac{j}{q}\right)\frac{a+\delta_n+\varepsilon_{n,m}}{\sin\pi k(1+T_n+mq)}+\frac{O(1)}{1+T_n^2+m^2}.\nonumber
\end{align}
By the fact that
%
		\begin{align}
			\sum_{j=0}^{\frac{q}{2}-1}\sin^2\pi\left(\frac{1}{2q}+\frac{j}{q}\right)=\sum_{j=\frac{q}{2}}^{q-1}\sin^2\pi\left(\frac{1}{2q}+\frac{j}{q}\right)=\frac{q}{4}\nonumber
		\end{align}
there exists a family $\{\varepsilon_{n,m}\}_{n\geq n_0;0\leq m\leq m_n-1}\subset\R$ satisfying \eqref{varepsilonnm}, such that we have \eqref{tpm2pf}.  

By \eqref{lr}, \eqref{bhgx}, \eqref{tptef} and \eqref{devcrieq}, for any $n\geq n_0$ and $0\leq m\leq m_n-1$, one has
		\begin{align}
			\ln &R({T}_{n}+(m+1)q)^2\nonumber\\
=&\ln R({T}_{n}+mq)^2-\sum_{j=0}^{q-1}\frac{V({T}_{n}+mq+j)}{\sin\pi k}\sin\left(\frac{\pi}{q}+\frac{2\pi}{q}pj\right)+\frac{O(1)}{1+{T}_{n}^2+m^2}\nonumber\\
=&\ln R({T}_{n}+mq)^2-\sum_{j\in\left\{l_0,l_1,\cdots,l_{\frac{q}{2}-1}\right\}}\frac{(a+\delta_n)\sin\left(\frac{\pi}{q}+\frac{2\pi}{q}pj\right)}{(1+T_n+mq)\sin\pi k}\nonumber\\
&+\sum_{j\in\left\{l_{\frac{q}{2}},\cdots,l_{q-1}\right\}}\frac{(a+\delta_n+\varepsilon_{n,m})\sin\left(\frac{\pi}{q}+\frac{2\pi}{q}pj\right)}{(1+T_n+mq)\sin\pi k}+\frac{O(1)}{1+{T}_{n}^2+m^2}.\nonumber
		\end{align}
Then by \eqref{sgz}, \eqref{slz} and Lemma \ref{l44}, there exist $\{\tilde{\varepsilon}_{n,m}\}_{n\geq n_0;0\leq m\leq m_n-1}\subset \R$ with
$$\tilde{\varepsilon}_{n,m}=\frac{O(1)}{1+{T}_{n}+m},$$
such that
\begin{align}
	\ln R({T}_{n}+(m+1)q)^2=&\ln R({T}_{n}+mq)^2-\frac{q(a+\delta_n)(A_q+\tilde{\varepsilon}_{n,m})}{(1+T_n+mq)\sin\pi k}.\nonumber
\end{align}
Therefore,
\begin{align}
\ln R(T_n+(m+1)q)^2\leq& \ln R(T_n)^2-\sum_{i=0}^{m}\frac{q(a+\delta_n)(A_q+\tilde{\varepsilon}_{n,i})}{(1+T_n+iq)\sin\pi k}\nonumber\\
\leq &\ln R(T_n)^2-\frac{(a+\delta_n)(A_q+\tilde{\varepsilon}_n)}{\sin\pi k}\ln \frac{T_{n}+(m+1)q}{T_n},\nonumber
\end{align}
where $\tilde{\varepsilon}_n=\frac{O(1)}{1+n}$. Therefore, by \eqref{sinpikaqa} there exist $\{\delta_n\}_{n=1}^\infty\subset\R$ with $\delta_n=\frac{O(1)}{1+n}$ such that
\begin{align}
\frac{(a+\delta_n)(A_q+\tilde{\varepsilon}_n)}{\sin\pi k}\geq 1+\frac{1}{n}.\nonumber
\end{align}
By \eqref{evenO1} and applying Lemma \ref{lemmabasiccritical} one can obtain that $R(\cdot)\in\ell^2(\mathbb{N})$. 
		We finish the proof for even $q$.
	\end{proof}

	Next let us consider the case with irrational $k$. 
	\begin{lemma}\cite[{Lemma 3.4}]{LiuDiscrete} \label{l43}
		Let $\theta(n)$ be determined by \eqref{ctt} with irrational $k$. Then for any positive integer $n$, there exists $N_{n}\in\mathbb{N}$ such that for any $e_{n}\in\mathbb{N}$ large enough,
		\begin{align}
			\abs{\left(\frac{1}{N_{n}}\sum_{j=e_n}^{e_n+N_{n}-1}\abs{\sin2\pi\theta(j)}\right)-A_0}\leq\frac{1}{{n}},
		\end{align}
		where $A_0$ is defined by \eqref{da0}.	
		\end{lemma}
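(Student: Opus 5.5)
The plan is to reduce Lemma \ref{l43} to unique ergodicity of the irrational rotation, treating the correction to $\theta(j)$ caused by $V$ as a small perturbation on a window of fixed length. Since $k$ is irrational, the rotation $\varphi\mapsto\varphi+k \pmod{\mathbb Z}$ is uniquely ergodic. The function $\varphi\mapsto|\sin 2\pi\varphi|$ is continuous on $\mathbb R/\mathbb Z$ and has mean $\int_0^1|\sin2\pi t|\,dt=\frac{2}{\pi}=A_0$. By Weyl equidistribution, the averages
$$\frac1N\sum_{j=0}^{N-1}|\sin 2\pi(\varphi+jk)|$$
converge to $A_0$ uniformly in $\varphi$. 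So, given $n$, I first fix $N_n$ such that
$$\sup_{\varphi}\left|\frac{1}{N_n}\sum_{j=0}^{N_n-1}|\sin2\pi(\varphi+jk)|-A_0\right|\le\frac{1}{2n}.$$
This choice depends only on $n$ and $k$, not on $e_n$.

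Next I compare the actual Pr\"ufer angle with the free rotation on the window $[e_n,e_n+N_n-1]$. By \eqref{n1n}, together with $V(j)=O(1)/(1+j)$ from \eqref{vno1}, each step satisfies $\theta(j+1)=\theta(j)+k+O(1/(1+j))$. Summing over at most $N_n$ steps, as in \eqref{bhgx}, gives
$$\theta(e_n+j)=\theta(e_n)+jk+O\!\left(\frac{N_n}{1+e_n}\right),\qquad 0\le j\le N_n-1.$$
Since $|\sin2\pi\cdot|$ is Lipschitz with constant $2\pi$, the average of $|\sin2\pi\theta(j)|$ over the window differs from the free-rotation average with $\varphi=\theta(e_n)$ by $O(N_n/e_n)$. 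For fixed $N_n$ this error is at most $\frac{1}{2n}$ once $e_n$ is large enough. Combining the two estimates gives the bound $\frac1n$.

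The one point needing care is that the uniform-in-$\varphi$ convergence has to be invoked before $e_n$ is chosen, since the phase $\theta(e_n)$ is arbitrary and depends on $V$. Uniformity is exactly what unique ergodicity supplies for continuous observables: approximate $|\sin 2\pi\varphi|$ by trigonometric polynomials and use $\frac1N\sum_{j<N}e^{2\pi i\ell jk}\to0$ for $\ell\neq0$. Apart from this, the argument is routine.
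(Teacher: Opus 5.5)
Your proof is correct and follows essentially the route the paper relies on. The paper cites \cite{LiuDiscrete} for this lemma, and its introduction attributes the result to unique ergodicity of $\varphi\mapsto\varphi+k$ with convergence uniform in $\varphi$: you fix $N_n$ from the uniform Birkhoff estimate, then absorb the $O(N_n/e_n)$ drift coming from \eqref{bhgx} by taking $e_n$ large. Your threshold on $e_n$ depends only on $N_n$ and a bound for $|(1+j)V(j)|$, which is exactly the uniformity needed when the lemma is applied to the self-referential sign potential in Section~\ref{sec9}.
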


	\begin{proof}[\textbf{Proof of Theorem \ref{main2} for irrational $k$ ($q=0$).}]
Let
	\begin{align}
			a=\frac{\sin\pi k}{A_0}\label{aA0sinpik}
		\end{align}
and $\{N_n\}_{n=1}^{\infty}\subset \mathbb{N}$, $\{e_n\}_{n=1}^\infty\subset\mathbb{N}$ be defined as in Lemma \ref{l43}.  Let $V(n)=0$ for any $n\in[0,T_1)$. Define $V$ on  $[T_{n},T_{n+1})$ for any $n\geq 1$ by
		\begin{align}
			V(T_n+j)=\frac{a+\delta_{n}}{1+T_n+j}{\rm{sgn}}(\sin2\pi\theta(T_n+j)),\ \ 0\leq j<T_{n+1}-T_n,\label{defvcriirra}
		\end{align}
		where $\{\delta_n\}_{n=1}^\infty\subset\R$ with $\lim_{n\to\infty}\delta_n=0$ will be defined later. Therefore, one has
		$$\limsup_{n\to \infty}\abs{nV(n)}=\frac{\sin\pi k}{A_0}.$$ 
		Here  $\{T_{n}\}_{n=1}^{\infty}\subset \mathbb{N}$ is defined by
\eqref{deTn}
with $T_1\geq e_1, T_1\gg N_1$, $\{m_n\}_{n=1}^\infty\subset\mathbb{N}$ satisfy \eqref{sumnNnTnleqinfty} and
for any $n>1$,
		\begin{align}\label{Tomeganomega}
			T_{n}\geq e_{n}, 
		\end{align}
and
		\begin{align}\label{NomegaTomega}
			\frac{N_{n}}{T_{n}}\leq \frac{1}{n^3}.
		\end{align}
By \eqref{lr} and \eqref{defvcriirra}, for any $n\geq 1,$ and any $0\leq m\leq m_n-1$, $1\leq j\leq N_n$, one has
\begin{align}
\ln R(T_n+mN_n+j)^2=&\ln R(T_n+mN_n)^2+\sum_{i=0}^{j-1}\frac{O(1)}{1+T_n+mN_n}\nonumber\\
=&\ln R(T_n+mN_n)^2+O\left(\frac{N_n}{T_n}\right).\nonumber
\end{align}
Hence, by \eqref{NomegaTomega} one obtains that as $n\to\infty$,
\begin{align}
\frac{R(T_n+mN_n+j)}{R(T_n+mN_n)}=O(1)\label{irraO1}
\end{align} 
uniformly in $j\in [1,N_n]$ and $m\in [0,m_n-1]$.
By \eqref{lr} and \eqref{defvcriirra} again, for any $0\leq m\leq m_{n}-1$, one has
		\begin{align}
			\ln R(T_{n}+(m+1)N_{n})^2=& \ln R(T_{n}+mN_{n})^2-\sum_{j=0}^{N_{n}-1}\frac{(a+\delta_{n})\abs{\sin2\pi\theta(T_{n}+mN_{n}+j)}}{(1+T_{n}+mN_{n}+j)\sin\pi k}\nonumber\\
&+\sum_{j=0}^{N_{n}-1}\frac{O(1)}{1+T_{n}^2+m^2N_n^2+j^2}.\nonumber
		\end{align}
Then by \eqref{Tomeganomega}, \eqref{NomegaTomega} and Lemma \ref{l43}, there exist $\{\varepsilon_{n}\}_{n=1}^\infty
\subset\R$ with $\varepsilon_{n}=\frac{O(1)}{1+{n}}$ such that  for any $0\leq m\leq m_{n}-1$,
\begin{align}
\ln R(T_{n}+(m+1)N_{n})^2\leq \ln R(T_{n}+mN_{n})^2-\frac{N_{n}(a+\delta_{n})(A_0+\varepsilon_{n})}{(1+T_{n}+mN_{n})\sin\pi k}.\nonumber
\end{align}
Thus, we can conclude that
		\begin{align}
			\ln R(T_{n}+(m+1)N_{n})^2\leq&\ln R(T_{n})^2-\sum_{i=0}^{m}\frac{N_{n}(a+\delta_{n})(A_0+\varepsilon_{n})}{(1+T_{n}+iN_{n})\sin\pi k}\nonumber\\
			\leq &\ln R(T_{n})^2-\frac{(a+\delta_{n})(A_0+\tilde{\varepsilon}_{n})}{\sin\pi k}\ln\frac{T_{n}+(m+1)N_{n}}{T_{n}},\nonumber
		\end{align}
where $\tilde{\varepsilon}_n=\frac{O(1)}{1+n}$. By \eqref{aA0sinpik}, there exist $\{\delta_{n}\}_{n=1}^{\infty}\subset\R$ with  $\delta_n=\frac{O(1)}{1+n}$ such that 
		\begin{align}
			\frac{(a+\delta_{n})(A_0+\tilde{\varepsilon}_{n})}{\sin\pi k}\geq1+\frac{1}{n}.\nonumber
		\end{align}
	By \eqref{irraO1} and applying Lemma \ref{lemmabasiccritical} one can obtain that $R(\cdot)\in\ell^2(\mathbb{N})$. 
		We finish the proof for irrational $k$.
	\end{proof}

\appendix
\section{Proof of the Continuous Analogue}\label{appena}

\begin{proof}[\textbf{Proof of Theorem \ref{thmcontinuous}}]
	For simplicity, assume that $0<b<c<1$ and $\delta\ll \min\{b,1-c\}.$ 
	Since 
	\begin{align}
	b=\liminf_{t\to\infty}f(t),\ \ c=\limsup_{t\to\infty}f(t),\nonumber
	\end{align}
we can assume that $t_\delta\gg\frac{1}{\delta}$ with $f(t_\delta)\in (c-\delta,c+\delta)$, and for any $t\geq t_\delta$,
	\begin{align}
		f(t)\in (b-\delta,c+\delta).\nonumber
	\end{align}
Let $M$ be sufficiently large, to be specified later, and define
\begin{align}
	\mathcal{I}_i=\{t_\delta\leq t\leq& t_\delta+M: c-i\delta\leq f(t)<c+\delta\},\ i=1,2,\cdots,K_1-2.\label{ap1}
\end{align}Clearly, we have
\begin{align}
	\mathcal{I}_1\subset \mathcal{I}_2\subset\cdots\subset \mathcal{I}_{K_1-2}.\label{apsub}
\end{align}
Then we can prove the result by letting
\begin{align}
	&I_1=\mathcal{I}_1,\label{ap2}\\
	I_i=\mathcal{I}_i\setminus &\mathcal{I}_{i-1},\ i=2,3,\cdots,K_1-2,\label{ap3}
\end{align}
and
\begin{align}
	I_{K_1-1}=[t_\delta,t_\delta+M]\setminus \mathcal{I}_{K_1-2}.\label{ap4}
\end{align}

Indeed, by \eqref{deofVi} and \eqref{ap1}-\eqref{ap4} one can obtain \eqref{appeniinvia}. Since $\int_{x}^yf'(t)dt=f(y)-f(x)$, one concludes that
\begin{align}
	\int_{\mathcal{I}_i}f'(t)dt=f(t_i)-f(t_\delta),\nonumber\ i=1,2,\cdots,K_1-2,
\end{align}
where $t_i=\sup\mathcal{I}_i$. It follows that
\begin{align}
	\abs{\int_{\mathcal{I}_i}f'(t)dt}\leq 1.\nonumber
\end{align}
Hence, by \eqref{apsub} and \eqref{ap3} we can obtain \eqref{appenee} for $i=1,2,\cdots,K_1-2$. Similarly, one can prove \eqref{appenee} for $i=K_1-1$.

By the definition, we know that
\begin{align}
	\{t_\delta\leq t\leq t_\delta+M:c-i\delta \leq f(t)<c-(i-1)\delta\}\subset I_i,\ i=1,2,\cdots,K_1-1.\label{apfi}
\end{align}
Therefore, by \eqref{apfprO}, each $I_i$ contains many intervals $I$ such that
\begin{align}
	\delta=\abs{\int_If'(t)dt}\leq O(1)\int_I\frac{1}{1+{t}}dt.\nonumber 
\end{align}
Then by letting $M$ be large so that each $I_i$ contains enough such intervals, one can prove \eqref{appenIifractionlargecasec}.
\end{proof} 

\section{Behavior Under Sign-Type Potentials}
Assume that $k=\frac{p}{q}$ with odd $q\geq 3$. We prove that under the sign-type potential, the Pr\"ufer angles $[\theta(nq)]$ will be close to $\frac{j}{2q}$ for some $j\in\{0,\cdots,2q\}$ so that $P(\theta(nq))$ attains its minimum \eqref{Bq}, from which we understand why the classical sign-type potential cannot yield a better bound.
\begin{example}
	Let $a>0$ be fixed and $\theta(n)$ be determined by \eqref{ctt} with
	\begin{align}
		V(n)=V_0(n)=\frac{a}{1+n}\mathrm{sgn}(\sin2\pi\theta(n)).\label{apdev}
	\end{align}
	Then either there exist two disjoint subsets $N_1,N_2$ with $\mathbb{N}=N_1\cup N_2$ such that 
	\begin{align}
		\lim_{N_1\ni n\to\infty}[\theta(nq)]=0,\ \lim_{N_2\ni n\to\infty}[\theta(nq)]=1,\nonumber
	\end{align}
	or $\lim_{n\to\infty}[\theta(nq)]$ exists and 
	\begin{align}
		\lim_{n\to\infty}[\theta(nq)]\in\left\{\frac{j}{2q}\right\}_{j=0}^{2q}.
	\end{align}
	In particular, by \eqref{perioPvarphi} and \eqref{Bq}, $\lim_{n\to\infty}P([\theta(nq)])=qB_q$.
\end{example}
\begin{proof} We argue by contradiction. Suppose that the conclusion fails.
	By Corollary \ref{corangle}, there exists $\varphi_0\in (0,1)\setminus\left\{\frac{j}{2q}\right\}_{j=0}^{2q}$, $\{n_i\}_{i=1}^\infty\subset \mathbb{N}$ with $\lim_{i\to\infty}n_i=\infty$, and 
	\begin{align}
		\delta_0=\frac{1}{10}\min\left\{\abs{\varphi_0-\frac{j}{2q}}:j=0,1,\cdots,2q\right\},\label{apd0v0}
	\end{align}
	 such that for any $i\in\mathbb{N}$,
	\begin{align}
		[\theta(n_iq)]\in (\varphi_0-\delta_0,\varphi_0+\delta_0).\label{apthetain}
	\end{align}
	Suppose that  
	\begin{align}
		 \varphi_0\in \left(\frac{j_0}{q},\frac{j_0}{q}+\frac{1}{2q}\right)\cup \left(\frac{j_0}{q}+\frac{1}{2q},\frac{j_0+1}{q}\right)\label{apvarphi}
	\end{align}for some $j_0\in\{0,1,\cdots,q-1\}$. 
	Let  $m$ be large enough with 
	\begin{align}
		[\theta(mq)]\in \left(\frac{j_0}{q}+\delta_0,\frac{j_0}{q}+\frac{1}{2q}-\delta_0\right)\cup\left(\frac{j_0}{q}+\frac{1}{2q}+\delta_0,\frac{j_0+1}{q}-\delta_0\right).\label{aptmv}
	\end{align}
	By \eqref{thetan+jn} and \eqref{aptmv}, we have for each $j=0,1,\cdots,q-1$, that
	\begin{align}
	\mathrm{sgn}(\sin2\pi\theta(mq+j))=\mathrm{sgn}\left(\sin2\pi\left(\theta(mq)+kj\right)\right).\label{apsg}
	\end{align}
By \eqref{n1n}, \eqref{thetan+jn}, \eqref{apdev} and \eqref{apsg},
\begin{align}
	\theta((m+1)q)=&\theta(mq)+kq+\sum_{j=0}^{q-1}\sin^2\pi\left(\theta(mq)+kj\right)\frac{V(mq+j)}{\pi\sin\pi k}+\frac{O(1)}{1+m^2}\nonumber\\
	=&\theta(mq)+kq+a\sum_{j=0}^{q-1}\sin^2\pi\left(\theta(mq)+kj\right)\frac{\mathrm{sgn}(\sin2\pi\theta(mq+j))}{(1+mq)\pi\sin\pi k}+\frac{O(1)}{1+m^2}\nonumber\\
\label{aptmm}=&\theta(mq)+kq+a\sum_{j=0}^{q-1}\sin^2\pi\left(\theta(mq)+kj\right)\frac{\mathrm{sgn}(\sin2\pi(\theta(mq)+kj))}{(1+mq)\pi\sin\pi k}+\frac{O(1)}{1+m^2}.
\end{align}
Recall that  $(l_0,\cdots, l_{q-1})$ is a permutation of $(0,1,\cdots,q-1)$ so that for any $j=0,\cdots,q-1,$ we have
\begin{align}
	\left[\frac{p}{q}l_j\right]=\frac{j}{q}.\nonumber
\end{align}
Then by \eqref{aptmm}, one has
\begin{align}
	\theta((m+1)q)=&\theta(mq)+kq+a\sum_{j=0}^{q-1}\sin^2\pi\left(\theta(mq)+\frac{j}{q}\right)\frac{\mathrm{sgn}(\sin2\pi(\theta(mq)+\frac{j}{q}))}{(1+mq)\pi\sin\pi k}\nonumber\\
	&+\frac{O(1)}{1+m^2}\nonumber\\
	=&\theta(mq)+kq+\frac{aT(\theta(mq))}{(1+mq)\pi\sin\pi k}+\frac{O(1)}{1+m^2},\label{appttmmt}
\end{align}
where $T(\cdot)$ is defined by \eqref{definitionofT_qtheta}. 

By \eqref{bhgx}, there exists $\delta_1>0$ such that for any  $[\theta(nq)]\in \left(\frac{j_0}{q}+\delta_0,\frac{j_0+1}{q}-\delta_0\right)$ with large enough $n$,
\begin{align}
	\abs{ [\theta((n+1)q)]-[\theta(nq)]}\leq \frac{\delta_1}{1+nq}.\label{apptnqd1}
\end{align}

By \eqref{perioTvarphi}, \eqref{ssccs} and \eqref{tqpdy0}, there exists $\delta_2>0$ such that for any $[\theta(nq)]\in \left(\frac{j_0}{q}+\delta_0,\frac{j_0}{q}+\frac{1}{2q}-\delta_0\right)$ with large enough $n$,
\begin{align}
	T(\theta(nq))\geq \delta_2,\label{apptnd1}
\end{align}
and for any $[\theta(nq)]\in\left(\frac{j_0}{q}+\frac{1}{2q}+\delta_0,\frac{j_0+1}{q}-\delta_0\right)$ with large enough $n$,
\begin{align}
	T(\theta(nq))\leq-\delta_2.\label{apptnd2}
\end{align}

By 
\eqref{appttmmt}, \eqref{apptnd1} and \eqref{apptnd2}, there exists $\delta_3>0$, such that if 
$$[\theta(mq)]\in \left(\frac{j_0}{q}+\delta_0,\frac{j_0}{q}+\frac{1}{2q}-\delta_0\right),$$ 
then for large enough $m$,
\begin{align}
	[\theta((m+1)q)]-[\theta(mq)]\geq \frac{\delta_3}{1+mq},\label{apfitd2}
\end{align}
and if
$$[\theta(mq)]\in \left(\frac{j_0}{q}+\frac{1}{2q}+\delta_0,\frac{j_0+1}{q}-\delta_0\right),$$
then for large enough $m$,
\begin{align}
	[\theta((m+1)q)]-[\theta(mq)]\leq-\frac{\delta_3}{1+mq}.\label{apfitd3}
\end{align}

Combining with \eqref{apptnqd1}, \eqref{apfitd2} and \eqref{apfitd3}, we know that for any large $m$ with \eqref{aptmv}, there exists $M\in\mathbb{N}$ such that for any $i>M$,
\begin{align}
	[\theta((m+i)q)]\in \left[\frac{j_0}{q}+\frac{1}{2q}-\delta_0,\frac{j_0}{q}+\frac{1}{2q}+\delta_0\right],\nonumber
\end{align}
which contradicts \eqref{apthetain}. This completes the proof.

\vspace{2mm}\noindent \textbf{Data availability}
No data was used for the research described in the article.

\vspace{2mm}\noindent \textbf{Conflict of interest} The authors report no conflict of interest.

\end{proof}

\end{document}